\begin{document}

\thispagestyle{plain}
\newcommand{\HRule}{\rule{\linewidth}{0.5mm}}

\begin{titlingpage} 
	\begin{center}

		\includegraphics[width=0.2\textwidth]{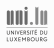}

		{\small PhD-FSTC-2019-11}\\
		The Faculty of Sciences, Technology and Communication\\

		\vspace{0.5cm}

		\begin{minipage}{0.8\textwidth}
			\begin{center}
				{\Large\textbf{Dissertation}}\\
				\vspace{0.5cm}
				{\small Defense held on 23/01/2019 in Esch-sur-Alzette\\
				to obtain the degree of}\\
				\vspace{0.5cm}
				{\Large Docteur de l'Universit\'{e} du Luxembourg\\
				\vspace{0.2cm}
				En Informatique}
			\end{center}
		\end{minipage}

		\vspace{0.2cm}
		by\\
		\vspace{0.2cm}
		\large
		\textbf{Arash Atashpendar}\\

		\vspace{0.5cm}

		\HRule \\[0.5cm]
		{\Large \bfseries \prettyprinttitle}\\[0.4cm]
		\HRule \\[0.5cm]

		\vspace{1cm}

	\end{center}

		\begin{minipage}[t]{0.8\textwidth}
			\begin{flushleft}
				\textbf{Dissertation Defense Committee}\\
				\vspace{.5em}
				Prof.~Dr.~Peter Y. A. Ryan, dissertation supervisor \\
				\emph{University of Luxembourg} \\
				\vspace{.5em}
				Prof.~Dr.~Sjouke Mauw, Chairman \\
				\emph{University of Luxembourg} \\
				\vspace{.5em}
				Prof.~Dr.~Cas Cremers \\
				\emph{CISPA-Helmholtz Center in Saarbr\"{u}cken} \\
				\vspace{.5em}
				Prof.~Dr.~Jintai Ding \\
				\emph{University of Cincinnati, Cincinnati, USA} \\
				\vspace{.5em}
				Dr.~Peter B. R\o nne, Vice Chairman \\
				\emph{University of Luxembourg}
			\end{flushleft}
		\end{minipage}

		\vfill

\end{titlingpage}

\thispagestyle{plain}

\begin{center}
    \Large
    \textbf{\printtitle}

    \vspace{0.4cm}

    \begin{minipage}{0.8\textwidth}
			\begin{center}
				{\small A Thesis Submitted in Partial Fulfillment of the Requirements
					for the Degree of\\ Docteur de l'Universit\'{e} du Luxembourg en Informatique}
			\end{center}
		\end{minipage}

    \vspace{0.4cm}
    \small
    by

    \vspace{0.4cm}
    \textbf{\printauthor}

\end{center}

\begin{abstract}
Research questions, originally rooted in quantum key exchange (QKE), have branched off into independent lines of inquiry ranging from information theory to fundamental physics. In a similar vein, the first part of this thesis is dedicated to information theory problems in deletion channels that arose in the context of QKE. From the output produced by a memoryless deletion channel with a uniformly random input of known length $n$, one obtains a posterior distribution on the channel input. The difference between the Shannon entropy of this distribution and that of the uniform prior measures the amount of information about the channel input which is conveyed by the output of length $m$.
We first conjecture on the basis of experimental data that the entropy of the posterior is minimized by the constant strings $\texttt{000}\ldots$, $\texttt{111}\ldots$ and maximized by the alternating strings $\texttt{0101}\ldots$, $\texttt{1010}\ldots$. Among other things, we derive analytic expressions for minimal entropy and propose alternative approaches for tackling the entropy extremization problem. We address a series of closely related combinatorial problems involving binary (sub/super)-sequences and prove the original minimal entropy conjecture for the special cases of single and double deletions using clustering techniques and a run-length encoding of strings. The entropy analysis culminates in a fundamental characterization of the extremal entropic cases in terms of the distribution of embeddings. We confirm the minimization conjecture in the asymptotic limit using results from \textit{hidden word statistics} by showing how the analytic-combinatorial methods of Flajolet, Szpankowski and Vall\'ee, relying on generating functions, can be applied to resolve the case of fixed output length and $n\rightarrow\infty$.

In the second part, we revisit the notion of deniability in QKE, a topic that remains largely unexplored. In a work by Donald Beaver it is argued that QKE protocols are not necessarily deniable due to an eavesdropping attack that limits key equivocation. We provide more insight into the nature of this attack and discuss how it extends to other prepare-and-measure QKE schemes such as QKE obtained from uncloneable encryption. We adopt the framework for quantum authenticated key exchange developed by Mosca et al. and extend it to introduce the notion of coercer-deniable QKE, formalized in terms of the indistinguishability of real and fake coercer views. We also elaborate on the differences between our model and the standard simulation-based definition of deniable key exchange in the classical setting. We establish a connection between the concept of covert communication and deniability by applying results from a work by Arrazola and Scarani on obtaining covert quantum communication and covert QKE to propose DC-QKE, a simple construction for coercer-deniable QKE. We prove the deniability of DC-QKE via a reduction to the security of covert QKE. We relate deniability to fundamental concepts in quantum information theory and suggest a generic approach based on entanglement distillation for achieving information-theoretic deniability, followed by an analysis of other closely related results such as the relation between the impossibility of unconditionally secure quantum bit commitment and deniability. Finally, we present an efficient coercion-resistant and quantum-secure voting scheme, based on fully homomorphic encryption (FHE) and recent advances in various FHE primitives such as hashing, zero-knowledge proofs of correct decryption, verifiable shuffles and threshold FHE.
\end{abstract}

\chapter*{Declaration of Academic Honesty}
\thispagestyle{plain}

I hereby declare that the work submitted in this document is my own and based on my research, and any work that is not my own has been cited and acknowledged in the Bibliography section.

\bigskip

\bigskip

\noindent Luxembourg, 25th February 2019

\noindent Arash Atashpendar

\vfill

\newpage

\chapter*{Acknowledgments}
\thispagestyle{plain}

First and foremost, I would like to thank my supervisor, Peter Y. A. Ryan, for giving me the freedom to explore and the benefit of the doubt and the opportunity to work on challenging and interesting problems. I would also like to thank him for his patience, for being a mentor and a good friend who was always supportive throughout the duration of my research.

I would like to thank Bill (A. W.) Roscoe for his collaboration, for taking the time to explain his ideas, both in person and remotely, for inviting me to the university of Oxford, and for being a great source of inspiration with a unique and highly insightful approach to solving mathematical puzzles.

I would also like to thank my jury members, Sjouke Mauw, Cas Cremers, Jintai Ding and Peter B. R\o nne, for agreeing to evaluate my thesis. I thank both Cas and Sjouke for our yearly meetings and their constructive feedback. I thank Jintai both for agreeing to be on my jury and for the exchange of ideas and helpful discussions we have had during his visits over the past few years. Finally, I thank Peter B. R\o nne for being a good friend and collaborator.

Research endeavors rarely happen without collaboration and cross pollination of ideas. I would like to thank my collaborators and coauthors, Peter Y. A. Ryan, Bill Roscoe, David Mestel, Marc Beunardeau, Aisling Connolly, R\'{e}mi G\'{e}raud, Peter B. R\o nne, G. Vamsi Policharla and Kristian Gj\o steen. I would also like to express my gratitude to David Naccache for all the friendly chats and discussions over the years and for making the collaboration with our friends at ENS Paris possible. I also thank all my friends and colleagues for all the fun and laughs we shared together.

Finally and most importantly, I would like to thank my family and my partner for their unconditional love and support.

\newpage

\setcounter{secnumdepth}{2}
\setcounter{tocdepth}{2}
\tableofcontents
\newpage

\listoffigures
\newpage

\listoftables
\newpage

\chapter{Introduction}\label{chp:intro}

\printminichaptoc{\minichaptocenabled}

We rely on cryptography\footnote{Cryptography and cryptology are often used interchangeably in the literature. However, the former roughly translates into \csquote{secret writing}, e.g. encryption, whereas the latter captures, more broadly, the study of secrets or hidden messages.}, among other things, for protecting the confidentiality of our information by keeping it secret from malicious agents. The use of mathematical techniques, albeit initially simple, for hiding and sending secret messages, can be traced back to ancient times \cite{kahn1996codebreakers}.

Modern cryptography is a mathematical discipline encompassing a wide\footnote{Depending on the level of abstraction, be it at the level of implementation or theoretical design and analysis, cryptography can be considered to reside at the intersection of disciplines such as mathematics, computer science, physics, electrical and software engineering.} range of techniques aimed at ensuring secure communication in the presence of adversarial interference. While umbrella terms such as encryption, authentication, message integrity and secure key exchange cover a plethora of different, specialized protocols and constructions, they do provide a coarse-grained characterization of some of the overarching themes in information security.

Due to the necessity of shared secret randomness for achieving various cryptographic tasks such as authentication and encryption, the notion of secure key exchange lies at the heart of cryptography. As such, key exchange protocols can be viewed as fundamental building blocks that enable other cryptographic primitives.

Authenticated key exchange (AKE) protocols allow two or more parties to remotely compute a shared secret, in the presence of an adversary who has complete control over their insecure communication channel. More precisely, a secure AKE allows Alice and Bob to establish a shared session key in such a way that at the end of a session, the two security guarantees of authenticity and secrecy are satisfied. This means that Alice and Bob can be sure that they share a fresh, random session key with each other, such that Eve cannot distinguish the key from a uniformly sampled key of the same length.

Quantum key exchange (QKE) schemes represent a class of AKE protocols that, given a public and untrusted quantum channel and an authenticated\footnote{Typically achieved using information-theoretically secure authentication algorithms, requiring only a logarithmic amount (in the length of the input) of pre-shared key, along with a constant size one-time pad for masking the authentication tag.} classical\footnote{Throughout, we use the term ``classical'' to refer to non-quantum schemes, i.e., constructions that depend solely on classical information and probability theory.} channel, allow two parties to agree on an information-theoretically secure key of arbitrary length, i.e., without relying on any computational hardness assumptions.

In the first part of this thesis, we focus on a series of information theory problems that originally arose as a result of some analysis in the context of QKE, which revolve around outputs in deletion channels. The second part deals with the subtle notion of deniability in quantum cryptography, and specifically in QKE, where we touch upon several aspects of QKE that are relevant for deniability.

In this chapter, we first provide some background on information theoretic and computational security, key exchange protocols and the implications of quantum computation for security in Section \ref{sec:intro-background}. We then focus on quantum key exchange in Section \ref{sec:intro-qke} and discuss its role in the realm of information-theoretically secure key agreement protocols. We then go over the most relevant results in quantum cryptography to pave the path towards a better understanding of our results and subsequent discussions. In Section \ref{sec:thesis-motivations}, we mention the original sources of inspiration and the main motivations for this work. Finally, in Section \ref{sec:contributions-and-structure}, we outline the contributions of this thesis to describe how our results are structured.

\section{Background and Context}\label{sec:intro-background}

The beginning of modern cryptography goes back to the early seminal works of Claude E. Shannon on his mathematical formulation of the theory of communication \cite{shannon2001mathematical} and the communication theory of secrecy systems \cite{shannon1949communication}. Among other things, the former introduced the fundamental notion of information entropy\footnote{The entropy of the random variable $X$ is given by $H(X) = - \sum_{x \in \mathcal{X}} P_X(x) \cdot \text{log}_2 P_X(x)$.} and the latter that of \emph{perfect secrecy}.

These developments were arguably the earliest works that gradually ushered in a new era of cryptography. Later advances made in this general area ultimately led to ground-breaking results such as the discovery of public-key cryptography in the 1970's, or the adoption of the current standard for symmetric-key encryption, namely the Advanced Encryption Standard (AES) \cite{daemen1999aes}, by the National Institute of Standards and Technology (NIST) at the beginning of the 21st century.

In this section we go through some of the most relevant discoveries to provide some background and context for subsequent discussions. An important distinction in what follows has to do with two fundamental notions of security, i.e., computational security vs. information-theoretic security. A given construction providing computational security typically depends on the assumption that a particular mathematical problem $P$ cannot be solved efficiently (in polynomial time) by relating the security of the said scheme to the difficulty of solving $P$, e.g., the RSA problem involving large integer factorization or the discrete logarithm problem. In contrast, an information-theoretically secure scheme roughly translates into a setting wherein the adversary simply does not possess enough information to break the security of the system, regardless of their computational power, e.g., one-time pad encryption.

Finally, note that although the notion of \emph{unconditional security} may crop up quite often in related contexts, there is a subtle distinction that is worth pointing out. Unconditional security refers to the fact that the provided security guarantees do not rely on unproven computational hardness assumptions, but it does not imply information-theoretic security. Simply put, a computationally secure scheme could in principle be proved to be unconditionally secure if the underlying problem that is assumed to be hard were actually proved to be so, e.g., a proof showing that factoring large integers cannot be done using a classical Turing machine in polynomial time.

\subsection{Information-Theoretic and Computational Security}

Perfect secrecy captures the intuitive idea that a ciphertext $c$ provides no information about its underlying plaintext $m$, or equivalently, that $m$ and $c$ are statistically independent, meaning that their mutual information is zero, $I(m,c)=0$\footnote{$I(X;Y) \equiv H(X) - H(X|Y) = H(Y) - H(Y|X)$.}. Yet another statement of the same property can be made in terms of the conditional entropy of the message, given an observation of the corresponding ciphertext, being the same as the entropy of the message alone, that is, $H(m) = H(m|c)$. This amounts to the posterior distribution remaining unchanged w.r.t. the prior distribution despite having knowledge of the ciphertext, that is,
\[
\forall (p \in \mathcal{P} \wedge c \in \mathcal{C}): \prob{P=p|C=c} = \prob{P=p}.
\]

This property can be also expressed in terms of perfect \emph{key equivocation}, an important feature that we will revisit in the second part of this thesis where we focus on the notion of deniability in quantum cryptography, more specifically in the context of quantum key exchange.

Shannon provided an example of a perfect cipher using the so-called one-time pad (OTP), proposed earlier by Vernam \cite{vernam1926cipher}. The idea simply consists of performing a bit-wise XOR operation, which masks each bit $m_i$ with a bit of a random binary secret key of the same length: $c_i = m_i \oplus k_i$. However, there is an important caveat to this approach, namely the strong requirement that the secret key should be at least as long as the message, i.e., $H(k) \ge H(m)$, and that it cannot be reused. Despite the promise of perfect secrecy, obtaining a truly random sequence is expensive and the difficulty of secure and efficient key distribution makes one-time pad encryption a highly impractical solution.

In the total span of cryptography's known history, a relatively recent revolution was the invention of public-key or asymmetric cryptography by Diffie and Hellman in 1976. More precisely, their well-known Diffie-Hellman key exchange protocol \cite{diffie1976new} introduced the idea of two parties agreeing on a secret key over an insecure channel by relying on the difficulty of solving a number theoretic problem by the adversary, namely the discrete logarithm problem over carefully chosen groups. Traditionally, up until that point, key exchange was thought possible only using a secure physical channel, e.g., by meeting in person or via a trusted courier, allowing both parties to share the same secret key that would then for example be used for both encryption and decryption. In effect, before the advent of public-key cryptography, a sender and receiver wanting to exchange secret messages over an insecure channel had to resort to a symmetric cipher that would allow them to encrypt and decrypt message using the same pre-shared key.

It should be pointed out that the original source of inspiration for the Diffie-Hellman key exchange protocol was a construction discovered by Ralph Merkle \cite{merkle1978secure} in 1974 with a provable quadratic security guarantee, generally known as \emph{Merkle puzzles}. His solution allows two parties to agree on a secret key by communicating over an authenticated, but otherwise insecure channel. The key insight revolves around generating and sharing puzzles, cryptograms designed to be solved\footnote{Ideally, the cryptanalysis of a secure cryptogram should be intractable, but here the idea is to introduce a computational gap between the receiver and the adversary.} or broken in an asymmetric manner such that the adversary, faced with a higher computational complexity, would have to expend more computational resources than the legitimate receiver to solve the problem.

Merkle conceived the generation of a puzzle using an encryption function $f$ such that the difficulty of breaking $f$ could be controlled by adjusting the size of the key space, while making sure that the only way to solve the puzzle would be via an exhaustive search of the key space.

In its original form, the main idea consists of the sender generating and sending $n$ puzzles - typically modelled as a one-way function\footnote{This can be instantiated using a hash function.} in a black box model - to the receiver. The size of the key space is chosen such that solving each puzzle requires $\bigO{n}$ calls to the encryption function, i.e., a key space of size $C \cdot n$, for a constant $C$. The receiver selects a random puzzle $p := f(k_p,id,k,R)$ and solves it. She then announces the $id$ and uses the decrypted puzzle key $k$ for securing subsequent communications, where $k_p$ is the key chosen from the restricted key space to generate the puzzle, and $k$ is the actual secret key with $R$ a constant allowing the solver to verify that they have correctly solved $p$.

Since the adversary does not know this mapping, her best strategy is to try puzzles at random until she hits the correct one. Given that on average she would have to try $(\frac{1}{2}n)$ puzzles, with each instance requiring $\bigO{n}$ queries to break, she must, on average, make $\bigO{n^2}$ queries to determine the key. However, both the sender and the receiver require only $\bigO{n}$ calls: the former to generate them, the latter to break one selected at random.

Compared to the conjectured exponential security of public-key schemes such as the Diffie-Hellman key exchange and RSA \cite{rivest1978method}, the quadratic security of Merkle's construction is largely viewed as a source of theoretical interest due to the fact that it offers provable security, at least in the black box query model. In retrospect, it is perhaps somewhat curious that in his original paper, Merkle concludes by conjecturing that Merkle puzzles with exponential security exist. It took more than 30 years for the optimality of the quadratic security offered by Merkle's construction to be proved by Barak and Mahmoody-Ghidari \cite{barak2009merkle} in the random oracle model, where the creation of each puzzle is considered to be done in unit or constant time via a query call. The security of Merkle puzzles against quantum adversaries has also been considered in a series of works by Brassard et al., see \cite{belovs2017provably} and references therein for more information.

\subsection{The Dawn of the Quantum Age}

The advantages offered by public-key cryptography come at the cost of introducing computational hardness assumptions such as the difficulty of solving discrete logarithms, or factoring large integers as relied upon in the famous RSA cryptosystem \cite{rivest1978method}. To a large extent, much of modern cryptography relies on similar computational assumptions, a state of affairs wherein the possibility of the emergence of an efficient algorithm for solving the underlying hard problems could put the vast majority of secure information systems at risk. While problems such as integer factoring have undergone intense scrutiny, as long as one relies on the conjectured difficulty of solving a particular hard problem, the possibility that someday an efficient solution might be discovered holds interest both for purely theoretical as well as practical reasons.

The challenges for computationally secure cryptosystems gradually took on a new dimension ever since the possibility of building quantum computers capable of harnessing the properties of quantum mechanics was suggested by Richard Feynman \cite{feynman1982simulating} in 1982\footnote{The Soviet mathematician Yuri Manin suggested, independently, similar ideas in 1980 \cite{manin1980computable}.}. Feynman's original motivation revolved around the idea that simulating natural processes using a classical Turing machine would be computationally intractable and that it can only be achieved using a quantum computer. In short, the central question entertained the idea of being able to simulate\footnote{Feynman was interested in an exact simulation of the quantum evolution of physical systems such that the computer would do exactly the same as nature, as opposed to classical approximations, e.g. differential equations.} a physical process by a universal computer such that for example doubling the number of particles would not result in an exponential blowup in terms of computational resources, i.e. memory and time. Instead, the requirement is that the number of computing elements be proportional to the space-time volume of the physical system. The primary interest of Feynman was to discover something new about physics by learning about the fundamental limitations of computing. This seminal paper marks the beginning of the field of quantum computing and quantum information processing.

For a while, this area of research was regarded as being purely theoretical until the idea of sufficiently stable quantum computers capable of efficiently solving computationally hard problems, beyond the reach of classical digital computers, became reality. More precisely, the ground-breaking quantum algorithm by Peter W. Shor \cite{shor1999polynomial} provides a solution for prime factorization and computing discrete logarithms in polynomial time using a quantum computer. The potential repercussions of this possibility spawned a new field of research focusing on what is often referred to as post-quantum cryptography, which is largely concerned with addressing security concerns that become relevant in the presence of adversaries capable of running quantum algorithms using cryptographically relevant quantum computers. This discovery was the beginning of a series of new developments in the design and analysis of quantum algorithms. A very recent work by Roetteler et al. \cite{roetteler2017quantum}, on estimating quantum resources needed for computing elliptic curve discrete logarithms, shows that running Shor's algorithm for a $3072$-bit modulus requires $6146$ logical\footnote{A larger number of physical qubits is needed to effectively yield a given number of logical qubits. Roughly speaking, this is prompted by the need for performing quantum error correction to deal with the \emph{decoherence problem}, which refers to the fragile nature of quantum information and the tendency of quantum systems to interact with their environment, thereby becoming noisy, potentially to the point of rendering their encoding irrecoverable.} qubits and $1.86 \cdot 10^{13}$ Toffoli gates.

Within the realm of public-key cryptography, almost any construction that in some way involves the Abelian hidden subgroup problem is potentially vulnerable to Shor's algorithm, if executed by a cryptographically relevant quantum computer. Regarding symmetric schemes, the most well-known threat is posed by Grover's algorithm \cite{grover1996fast}, which in its original form was designed to find the unique input $a$ to a black box function $f: \{0, \ldots, N\} \rightarrow \{0,1\}$ such that $f(a)=1$, using $\bigO{\sqrt{N}}$ calls to the oracle. The best classical solutions require $\bigO{N}$ calls, thus constituting a prime example that demonstrates a provable separation between quantum and classical computing. Roughly speaking, this improvement in the time complexity of finding an item in an unstructured database has repercussions for brute-force strategies, which is less dramatic than the potential impact of Shor's algorithm given that Grover-type attacks can be dealt with by increasing the key size. Note that while an efficient quantum algorithm for factoring large integers and solving discrete logarithm is known, the existence of an efficient classical algorithm simply remains unknown. We encourage the reader to refer to a recent survey by Montanaro \cite{montanaro2016quantum} on quantum algorithms and a classification of how they apply to breaking various cryptographic primitives.

\section{Quantum Cryptography and Key Exchange}\label{sec:intro-qke}

Faced with the emergence of quantum computers, coupled with further developments in the design of efficient quantum algorithms, and the threats they pose to the security of classical cryptosystems, one can either resort to constructions that are not known to be vulnerable to quantum algorithms such as lattice-based cryptography \cite{regev2004quantum} and learning with errors (LWE) \cite{regev2009lattices}, or consider information-theoretically secure solutions that do not depend on unproven assumptions. The latter case is where quantum key exchange (QKE)\footnote{Often referred to as quantum key distribution (QKD).} enters the picture, a solution that allows two parties to agree on a random secret key with information-theoretic security, that is, without relying on any computational assumptions.

The potential repercussions of the possibility of stable and scalable quantum computers spawned a new field of research focusing on what is often referred to as post-quantum cryptography, which tries to address security concerns that become relevant in the presence of adversaries capable of running quantum algorithms.

\subsection{QKE and Information-Theoretic Key Agreement}

Information-theoretic key agreement enables two or more parties to establish a common secret key over an insecure broadcast channel in the presence of an unbounded adversary Eve, such that the total amount of information Eve gains about the final key can be made arbitrarily small. This is in contrast with computationally secure schemes that rely on computational complexity hardness assumptions. Simply put, the security guarantee of achieving negligibly small information leakage to the adversary holds in an information-theoretic sense.

Given an authenticated classical channel and an insecure quantum channel, quantum key exchange (QKE) allows two parties to agree on a random secret key with information-theoretic security. The existence of inexpensive\footnote{In terms of computational resources and preshared randomness.} and information-theoretically secure authentication algorithms for obtaining an authenticated classical channel was first confirmed by Carter and Wegman \cite{CarterWegman77}, a work that led to a long list of follow-up work \cite{CarterWegman77,stinson1991universal,gemmell1993codes,krawczyk1994lfsr,krawczyk1995new}. In short, these schemes enable information-theoretic authentication, requiring a short preshared initial key, where the length of the key grows logarithmically in the length of the input message that is to be authenticated. In fact, as shown by Renner and Wolf, even only weakly correlated and partially secret information would suffice \cite{renner2003unconditional,renner2004exact}. Due to the requirement of a preshared key for authentication, QKE is sometimes also referred to as quantum key \emph{expansion}.

The possibility of QKE with public-key (computational) authentication has also been considered. Although doing so would introduce computational assumptions into the mix, and thereby reduce the overall information-theoretic security of QKE to the security of the underlying hardness assumption of the authentication mechanism, it still gives rise to a unique and useful property known as \emph{everlasting security}. This feature captures the idea that the adversary has a limited window of opportunity for breaking the authentication, namely during a given session, and if she does not succeed, the resulting key will retain its information-theoretic security. This property is mainly due to the notion of non-attributability \cite{ioannou2011new} in QKE, which is related to the mathematical independence of the final secret key from the classical communication, a property that we will further elaborate on in Chapter \ref{chp:coercer-deniable-qke} in the context of deniable QKE.

The most-well known QKE scheme is the BB84 protocol due to Charles Bennett and Gilles Brassard \cite{bennett1984quantum}, which makes use of conjugate coding, a primitive that was proposed earlier in a work by Stephen Wiesner \cite{wiesner1983conjugate} on unforgeable bank notes\footnote{Wiesner's paper was initially rejected.}. Conjugate coding works by encoding a sequence of random classical bits into quantum states prepared in one of two orthogonal bases chosen at random. This taps into crucial properties of quantum mechanics such as Heisenberg's \cite{heisenberg1927uber} well-known \emph{uncertainty principle} as well as the no-cloning theorem \cite{wootters1982single}. The former captures the fact that measuring in one basis irrevocably destroys information about the encoding in its conjugate basis, that is, the impossibility of precisely measuring two non-commuting self-adjoint operators (or complementary variables). The latter states that an arbitrary unknown quantum state cannot be cloned or copied perfectly.

The BB84 protocol falls within the realm of so-called prepare-and-measure schemes, which have the advantage of not requiring quantum computation and storage. Another equally important contribution was a quantum key exchange protocol using entanglement proposed independently by Artur Ekert \cite{ekert1991quantum}, where the security of QKE rests upon Bell's theorem presented in his seminal work \cite{bell2001einstein} published in 1964 on the famous EPR paradox by Einstein, Podolsky and Rosen \cite{einstein1935can}. Here the core idea is to relate the security of QKE to the violation of a Bell-type inequality, which is based on the unique property that certain statistical correlations can be violated only by systems that exhibit a certain degree of quantum entanglement, thus drawing a clear line between classical and quantum correlations. Einstein\footnote{Einstein passed away in 1955 almost a decade before Bell's result.}, Podolsky and Rosen raised the question of whether or not the theory of quantum mechanics could be considered to be complete, and if a phenomenon such as quantum entanglement can be explained by a lack of knowledge, the so-called hidden variable theory. This paradox was settled by John Bell \cite{bell2001einstein} by showing that there exist correlations that are simply impossible to achieve classically, thereby invalidating Einstein's local hidden variable theory.

In a nutshell, QKE protocols provide information-theoretically secure key agreement by leveraging these fundamental properties of quantum mechanics. In the absence of prior knowledge about the preparation configuration of a state, the impossibility of observing or measuring an unknown quantum system without disturbing its state represents a unique and inherently quantum property that allows the detection of eavesdropping, a feat that is classically impossible to achieve. This property is explained by the fact that an adversary trying to eavesdrop\footnote{Observing, reading, measuring and even trying to copy a quantum state are fundamentally the same thing in quantum information theory.} on a quantum state is bound to introduce errors (or noise) into the measurement results obtained by the legitimate parties, allowing them to detect the adversary's presence\footnote{We cannot tell the difference between inherent channel noise and noise caused by the adversary, which is why we attribute all the detected errors to the adversary.}. Naturally there is a limit on the amount of noise that can be tolerated. If the detected error rate exceeds a certain predefined threshold, the protocol aborts, otherwise, with the help of error correction and randomness distillation\footnote{In the parlance of QKE, these are traditionally referred to as information reconciliation and privacy amplification.} the parties can convert their post-measurement partially correlated bit strings into mutually shared secret keys.

It is worth pointing out that within the realm of information-theoretic security, another restrictive assumption made in Shannon's model is that the adversary is considered to have access to exactly the same information as the legitimate receiver. Although it may seem like a fair assumption given the natural and justifiably pessimistic view traditionally adopted in cryptography, it was shown by Maurer \cite{maurer1993secret} that by relaxing and modifying the model such that the adversary cannot obtain exactly the same information as the legitimate receiver, it is still possible to agree on a secret key.

In some earlier works by Wyner \cite{wyner1975wire} and Csisz{\'a}r and K\"{o}rner \cite{csiszar1978broadcast}, a similar consideration was made in that the adversary is assumed to receive messages over a noisier channel than that of the legitimate receiver, an assumption that may be deemed rather unrealistic. However, Maurer \cite{maurer1993secret} resolves this issue by allowing the parties access to a public and insecure, yet authenticated, broadcast channel. The need for guaranteeing authenticity and data integrity is identical to the requirement for QKE.

While the requirements for information-theoretic key agreement, in its modern form, are considered standard, a noteworthy subtlety in terms of early definitions was uncovered by Maurer and Wolf \cite{maurer2000information}. The authors suggested a replacement for the hitherto definition of information-theoretic key agreement with a slightly modified version, referred to as strong secrecy. The main difference between the old definition, labelled \emph{weak secrecy}, and strong secrecy lies in the fact that the former simply requires that the adversary's side information be arbitrarily small, not in an absolute sense, but rather in terms of an information rate, defined as the ratio between the information quantity of interest and the number of independent repetitions of the random experiment.

More formally, let $X$, $Y$ and $Z$ denote the random variables associated with strings of length $N$ belonging to Alice, Bob and the adversary, Eve, respectively, obtained over an insecure, but authenticated, public communication channel. Moreover, let $S$ and $S'$ be the final secret keys computed by Alice and Bob, such that we have $S=S'$ with probability at least $1-\epsilon$ and
\[
\frac{1}{N}I(S;CZ) \le \epsilon
\]
where $C$ denotes the messages exchanged over the insecure channel by Alice and Bob, and $\epsilon$ the security parameter.

The problem with this definition is that although the adversary's side information may be arbitrarily small in terms of the information rate, it is neither necessarily bounded, nor negligibly small in an absolute sense, thus allowing a potentially substantial amount of information leakage. To rectify this, in a more natural definition, namely that of strong secrecy, the privacy requirements are strengthened in that the negligibility of the adversary's knowledge is no longer expressed w.r.t. an information rate, but rather in an absolute sense as follows
\[
I(S;CZ) \le \epsilon,
\]
and additionally, the final key is required to be perfectly-uniformly distributed.

\subsection{The BB84 Protocol: an Overview}

Here we provide a rough, high-level description of the BB84 protocol \cite{bennett1984quantum} and defer a formal version to Section \ref{sec:qke-and-ue} in Chapter \ref{chp:qip}. Abstractly speaking, the core idea is that a sender prepares a sequence of random quantum states\footnote{For simplicity, assume logical qubits (two-level quantum systems) physically implemented using the polarization of a photon.} using conjugate coding and sends them over to the receiver, who measures each incoming qubit in a random basis, where the encoding and measurement are done according to two orthogonal bases, referred to as the \emph{rectilinear} and the \emph{diagonal} basis. More precisely, the interactions involved in the BB84 protocol play out as follows.

During the initial quantum phase of the protocol, the parties establish a pair of \emph{raw keys}: the sender, Alice, generates $n$ random classical bits $X = (x_1,\ldots,x_n)$ and encodes each of them into a qubit by choosing either the rectilinear or the diagonal basis at random and sends it to the receiver, Bob. Upon receiving each qubit, Bob chooses between the rectilinear and the diagonal basis at random to perform a measurement, resulting in $n$ random classical bits $Y = (y_1,\ldots,y_n)$.

The second phase of the protocol purely consists of classical post-processing to obtain a fresh secret key from the partially correlated variables $X$ and $Y$. First, Alice and Bob announce and compare their basis choices to discard incompatible measurements\footnote{An incompatible measurement simply means measuring in the conjugate basis and thus obtaining a different classical bit with probability $\frac{1}{2}$.}, on average ending up with a \emph{sifted key} of length $\ell = n/2$. Next, they perform \emph{error estimation} to get an estimate\footnote{The exact amount will be discovered in the error correction step.} for the error rate or the amount of discrepancy between their sifted keys, which translates into the level of noise. If the fraction of indices for which their partially correlated keys disagree ($|\{i \in [1,\ell] : x^{\mathrm{sifted}}_i \neq y^{\mathrm{sifted}}_i\}|$) exceeds a predefined threshold (tolerated error rate), they abort the protocol. A high error rate might indicate the presence of an eavesdropper.

If the estimated error rate lies below the threshold, error correction\footnote{The error correction can be interactive or non-interactive using forward error-correcting codes.} will be applied to obtain two keys of equal length, about which the adversary may have gained some partial knowledge via eavesdropping. Finally, given an upper bound on the amount of information leakage to the adversary, a final secret key distillation step will take place to convert the error corrected keys into shorter but secure keys about which the adversary has no knowledge. This last step is referred to as privacy amplification, involving the use of a two-universal hash function. The two parties perform key confirmation to ensure their final secret keys are equal with high probability.

\subsection{QKE and Proofs of Security}

Although one of the main selling points of QKE is the fact that its security relies on fundamental laws of quantum mechanics\footnote{This was simply claimed in the original paper by Bennett and Brassard and remained a folklore theorem for quite some time thereafter.}, as opposed to conjectured computational hardness assumptions, a formal proof of security remained an open problem for quite a while.

One of the earlier contributions was a work by Jeroen van de Graaf \cite{van1997towards} on the formalization of definitions for the security of quantum protocols. Attempts at providing a security proof for QKE resulted in an extensive series of works, e.g. by Lo and Chau \cite{lo1999unconditional}, Mayers \cite{mayers2001unconditional} and Biham et al. \cite{biham2006proof}, all of which were relatively complex and in one way or another depended on a reduction to an entanglement-based variant, and required the use of quantum computation.

This situation changed when Shor and Preskill \cite{shor2000simple} presented their \csquote{simple proof of security of the BB84 quantum key distribution protocol}, which among other things, offered a crucial insight by establishing a link between the CSS \cite{steane1996multiple,calderbank1996good} quantum error correcting codes and entanglement-based QKE. Their idea was to first give an entanglement-based QKE protocol that is proven secure using methods developed by Lo and Chau \cite{lo1999unconditional}, and to then show that this implies the security of BB84 via a reduction using CSS codes. Due to the irrelevance of phase errors and their decoupling from bit flips in CSS codes, the analysis is reduced to classical error correction and as a result, removes the use of quantum computation.

However, it should be pointed out that this proof did have some noteworthy requirements such as the fact that it does not hold for imperfect devices and requires that the sources be single-photon sources, shortcomings that have been dealt with in the meantime, e.g. see the work of Gottesman et al. \cite{gottesman2004security} on the security of QKE with imperfect devices.

Another important detail worth pointing out is that the proof of security of Shor and Preskill relies on the theoretical existence of classical error correcting codes that satisfy the dual-containing property, as shown in Section \ref{subsec:bb84-shor-preskill}, thus requiring explicit and efficiently decodable codes. In a work by Luo and Devetak \cite{luo2007efficiently}, it is shown that by using efficiently decodable non-dual-containing modern classical codes such as LDPC \cite{gallager1962low} and turbo codes \cite{berrou1993near}, this constraint can be relaxed at the cost of turning QKE into a key expansion construction that increases the size of a pre-shared key by a constant factor.

An important contribution in this general area was by Renato Renner \cite{Renner2008} on the security of QKE and relaxing common independence conditions in quantum information theory such as repeating an experiment independently many times or considering systems comprised of independent parts. By introducing new uncertainty measures such as \emph{smooth min-entropy} and a quantum version of the de Finetti's representation theorem, new techniques for a generic proof of the security of QKE were developed, which rely on breaking down the proof into information-theoretic considerations involving error-correction and privacy amplification. Finally, the same work also considers the \emph{universally composable} security of QKE, meaning that keys generated by QKE can be used in an arbitrary context or application. Follow-up works along these lines were an information-theoretic proof for QKE by Renner et al. \cite{renner2005information} and a comprehensive security analysis by Tomamichel et al. \cite{tomamichel2015rigorous}.

These efforts continued and QKE security proofs were gradually extended to less restrictive models that, among other things, did not assume perfect devices. These include the works of L\"{u}tkenhaus \cite{lutkenhaus1999estimates} and Scarani et al. \cite{scarani2009security} on the security of practical QKE. The security of finite-key QKE has also been considered by Scarani et al. \cite{scarani2008quantum,scarani2008security} and Tomamichel et al. \cite{tomamichel2012tight}. As briefly mentioned before, the requirement of an initial shared key for achieving information-theoretic authentication can be lifted and replaced with a computational assumption. This can be done by using a public-key authentication scheme to obtain everlasting security such that information-theoretic security is guaranteed as long as the computational assumption remains unbroken during a limited time (duration of a session). Everlasting security of QKE, against a bounded adversary, was formally proved by Unruh \cite{unruh2013everlasting} and Mosca et al. \cite{mosca2013quantum}.

We return to some of these works in the second part of the thesis where we make use of some of their key insights to ground the notion of deniability in quantum information theory.

\subsection{Quantum Cryptography and Bell-type Violations}

Quantum information assurance encapsulates the various ways we can exploit quantum mechanical effects to achieve security goals. Quantum key exchange protocols represent a subset of computational and cryptographic tasks that can be achieved using quantum information processing (QIP). QKE is arguably the most widely-known application of QIP to cryptography. Moreover, prepare-and-measure QKE variants can be implemented using current technology. Due to its popularity, QKE is often erroneously referred to as quantum cryptography. There is a considerable number of surveys on BB84 and other variants of QKE such as the one by Gisin et al. \cite{gisin2002quantum}.

Quantum cryptography encapsulates a wide variety of primitives ranging from bit commitment\footnote{Quantum bit commitment with unconditional security was shown to be impossible \cite{mayers1997unconditionally}.}, to randomness amplification \cite{colbeck2012free}, device-independent quantum cryptography starting with the work of Mayers and Yao \cite{mayers98deviceindependence} and later on in a work by Barrett et al. \cite{barrett2005no}, which even additionally considers the security of QKE if certain principles in quantum mechanics ceased to be valid. Similar, fundamental ideas related to the violation of Bell-type inequalities permeate the field of device-independent cryptography, randomness amplification, and \emph{self-testing} quantum devices.

In essence, in the realm of device-independent cryptography, the natural setting assumes that devices are untrusted, and may have even been prepared by the adversary. The common underlying idea is to test for the degree of quantum randomness (correlations) present in the devices by subjecting them to an experiment that violates a Bell-type inequality, which means exhibiting correlations that are strictly stronger than the limit of what can be achieved using classical information. The CHSH game \cite{clauser1969proposed} is arguably the most well-known example of such an experiment, wherein a minimal amount of quantum randomness must be present in the outputs provided by the parties in order to violate the CHSH inequality. In a work by Pironio et al. \cite{pironio2010random}, the authors provide a quantification of the relation between the violation of the CHSH inequality and the amount of entropy obtained in the measurements of quantum states, followed by another important result by Reichardt et al. \cite{reichardt2013classical} showing that winning the CHSH game with optimal probability indicates the presence of EPR pairs (maximally entangled quantum states) in the entanglement between the parties' devices. We will briefly revisit the CHSH game in Chapter \ref{chp:qip}, Part \ref{part:two} and elaborate on the relevance of Bell-type violations for deniability in Chapter \ref{chp:entanglement-distillation}.

Finally, we suggest the survey by Broadbent and Schaffner \cite{broadbent2016quantum} for a comprehensive overview of quantum constructions beyond QKE.

\section{Motivations}\label{sec:thesis-motivations}

Although the motivations for Part \ref{part:one} and \ref{part:two} of this thesis are quite different, quantum key exchange and (quantum) information theory constitute the common denominator in both cases. We now elaborate on the motivations for each part separately.

\subsection{QKE and Offshoots in Information Theory}

Research in QKE spawned different avenues of research that branched off in various directions leading to new insight in domains such as information theory and coding theory. These include studying problems such as entropy measures in the realm of unconditional security in cryptography \cite{Cachin97}, secret key agreement over public channels \cite{maurer1993secret}, generalized privacy amplification \cite{Bennett95}, linking information reconciliation and privacy amplification \cite{CachinMaurer97}, and the relation between quantum entanglement and secret key distillation \cite{devetak2005distillation} to name but a few. In a similar vein, the main focus of the first part of this thesis will be on a collection of combinatorial and information theory problems that originally arose in the context of some analysis in quantum key exchange.

While the original motivation was to obtain an upper bound on the amount of leakage resulting from some modifications to how error estimation in QKE is done \cite{ryan2013enhancements}, the analysis gave rise to a multitude of new problems that we will address in the first part of this thesis. In fact, the original problem is treated as an independent information theory problem with a concrete formulation in terms of open problems in the context of deletion channels. More specifically, the original problem of upper-bounding information leakage will be studied in terms of entropy extremizing outputs in deletion channels.

The underlying combinatorial problem that governs the shape of the probability distribution, in turn responsible for the trend in entropy, is the source of a long list of well-known open problems in coding theory, combinatorics of subsequences and supersequences, DNA sequencing and intrusion detection, to name a few. Apart from the original motivation for addressing information leakage in terms of entropy extremization, the same problem gives rise to a series of closely related combinatorial objects that shed light into the combinatorial structure of the weight distribution behind the computation of entropy.

\subsection{Deniability in Quantum Information Theory}

Deniability is a fundamental privacy-related notion in cryptography that can be roughly defined as the ability for the sender of a message to deny either its content or the fact that they have sent that message. The ability to deny a message or an action is of central importance in various scenarios such as off-the-record communication, anonymous reporting, whistleblowing and coercion-resistant secure electronic voting. Apart from its direct use cases, there are deep connections between deniability and fundamental concepts such as secure multiparty computation \cite{goldwasser1997multi} and incoercible multiparty computation \cite{canetti1996incoercible}.

Surprisingly, this fundamental notion has been largely ignored by the quantum cryptography and quantum information processing community at large. The motivation for investigating deniability in quantum cryptography came from a joint analysis of quantum protocols, model checking techniques for the analysis of quantum processes, and the notion of coercion-resistance in the context of secure e-voting protocols. This was prompted by the fact that many cryptographic tasks can be achieved only via the application of quantum phenomena and quantum information processing. Fundamentally, most of these tasks rely upon principles dictated by the main postulates of quantum mechanics, including the measurement principle and the no-cloning theorem. These postulates enable feats that are simply impossible using classical means, e.g., detection of eavesdropping in quantum channels or the impossibility of making perfect copies of unknown quantum states due to the no-cloning theorem. Achieving coercion resistance in the context of voting protocols is closely related to the notion of deniability. Yet, the role of existing quantum primitives and their effectiveness for attaining coercion resistance are still poorly understood.

In short, despite being an important concept in cryptography, it has received very little attention from the quantum cryptography community. To put things into perspective, while the seminal works of Canetti et al. on deniable encryption \cite{canetti1997deniable} and Dwork et al. on deniable authentication \cite{dwork2004concurrent} have been cited by hundreds of papers focusing on various aspects of deniability, there exists a single paper by Donald Beaver \cite{beaver2002deniability} dealing with deniability in quantum key exchange, which seems to have gone practically unnoticed by the QIP community. In this paper, Beaver shows that there exists an eavesdropping attack that can detect attempts at denial in the most well-known QKE protocol, i.e. BB84, and claims that BB84 is binding and thus undeniable. It is further claimed that existing QKE schemes, including other variants of BB84 as well as entanglement-based QKE protocols, are not necessarily deniable. In the context of deniable quantum key exchange, apart from this paper, there exists virtually no other studies on this topic. Hence, deniability in the quantum regime represents a rich and promising, yet almost entirely unexplored avenue of research.

Given the unconditional security provided by QKE against an adversary with unbounded computational power, it is natural to consider the possibility of deriving similar results for deniability using quantum information. This can range from overcoming known negative results for deniability in the classical literature to showing that information-theoretic deniability can be obtained using quantum information processing. Be it in the form of quantum protocols that achieve various forms of deniability, or in terms of proving the existence of fundamental obstructions to achieving deniability, regardless of the outcome, the possibility of gaining deeper insight into some of the fundamental properties of quantum information via studying deniability in the quantum setting is of independent theoretical interest.

The central idea in the field of post-quantum cryptography is to devise protocols that remain secure against quantum adversaries. This means considering an adversary that can perform probabilistic-polynomial-time computations on a quantum computer. As pointed out in \cite{ioannou2011new}, assuming that we live in a quantum universe, at a minimum, we should require that the secret key generated by an authenticated key exchange (AKE) protocol be secure against a quantum adversary. Consequently, it is only natural for us to expect that deniability enjoys the same considerations. Indeed, deniability has not been considered within the framework of post-quantum cryptography either.

From a theoretical point of view, studying deniability in quantum cryptography could lead to new fundamental insights in physics and information theory, with historical examples such as Shannon’s work that started the field of information theory developed in parallel with his work on cryptography or the development of quantum key exchange leading to the discovery of information reconciliation and privacy amplification protocols, which gave new insights in network information theory. Similarly, the search for security proofs for quantum key distribution against general attacks led to the development of new techniques based on symmetry, the Quantum De Finetti theorem, and the entropic formulation of the uncertainty principle in quantum mechanics. Finally, work on device independent quantum cryptography \cite{reichardt2013classical} led to the discovery of the strong mathematical constraints that are imposed on any system attempting to achieve a maximal violation of the CHSH inequality \cite{clauser1969proposed} and other related Bell-type inequalities.

From a practical point of view, research in this area could lead to the development of novel deniable QKE schemes and secure, coercion-resistant e-voting schemes.

In summary, the lack of research on the notion of deniability in the quantum regime is in stark contrast to its classical counterpart, which has been an important object of study for the past two decades.

\section{Contributions and Outline of Thesis}\label{sec:contributions-and-structure}

We now describe the structure of this thesis and provide a high-level overview of the main contributions. The contributions of this thesis are twofold. The first part, largely based on our results in \cite{atashpendar2015information, atashpendar2018clustering, atashpendar2018proof}, primarily focuses on a number of combinatorial and information theory problems that were originally encountered in the context of quantum key exchange. The second part, largely based on our results in \cite{atashpendar2018deniability,atashpendar2018voting}, deals with the notion of deniability in quantum cryptography, and more specifically with deniability in quantum key exchange, as well as a coercion-resistant and quantum-secure e-voting scheme, which uses fully homomorphic encryption to achieve linear-time tallying and quantum resistance. The contributions detailed in this section are largely based on our results in the works listed below.
\begin{enumerate}
  \item Arash Atashpendar, A. W. Roscoe and Peter Y. A. Ryan. ``Information Leakage Due to Revealing Randomly Selected Bits''. In: Security Protocols XXIII. Springer, 2015, pp. 325–341.
  \item Arash Atashpendar, Marc Beunardeau, Aisling Connolly, R\'{e}mi G\'{e}raud, David Mestel, A. W. Roscoe and Peter Y. A. Ryan. ``From Clustering Supersequences to Entropy Minimizing Subsequences for Single and Double Deletions''. In: arXiv preprint arXiv:1802.00703 (2018). (Submitted to the journal of IEEE Transactions on Information Theory on Sep. 17th 2017, first review received on Sep. 22nd 2018, revised version submitted on Dec. 20th 2018)
  \item Arash Atashpendar, David Mestel, A. W. Roscoe and Peter Y. A. Ryan. ``A Proof of Entropy Minimization for Outputs in Deletion Channels via Hidden Word Statistics''. In: arXiv preprint arXiv:1807.11609 (2018).
  \item Arash Atashpendar, G. Vamsi Policharla, Peter B. R\o nne and Peter Y. A. Ryan. ``Revisiting Deniability in Quantum Key Exchange via Covert Communication and Entanglement Distillation''. In: NordSec 2018, 23rd Nordic Conference on Secure IT Systems. Springer. 2018, pp. 104–120.
  \item Peter B. R\o nne, Arash Atashpendar, Kristian Gj\o steen and Peter Y. A. Ryan. ``Coercion- Resistant Voting in Linear Time via Fully Homomorphic Encryption - Towards a Quantum-Safe Scheme''. In: 23rd International Conference on Financial Cryptography and Data Security 2019, FC 2019, International Workshops, CIW, VOTING, and WTSC, 2019. Springer.
\end{enumerate}

\subsection{Information Theory Problems in Deletion Channels}

The first part of the thesis is dedicated to a series of studies focusing on combinatorial and information theory problems that arose as a result of an analysis of prepare-and-measure quantum key exchange protocols \cite{ryan2013enhancements}, which suggested some simple changes aimed at reducing leakage of key material and improving the final key rate. These changes included a modification of the quantum bit error rate (QBER) estimation that gave rise to an information theory problem in which the object of study was to account for bit strings leading to minimal and maximal information leakage, as described below.

From the output produced by a memoryless deletion channel from a uniformly random input of known length $n$, one obtains a posterior distribution on the channel input.  The difference between the Shannon entropy of this distribution and that of the uniform prior measures the amount of information about the channel input which is conveyed by the output of length $m$, and it is natural to ask for which outputs this is extremized. This question was posed in a previous work, where it was conjectured on the basis of experimental data that the entropy of the posterior is minimized and maximized by the constant strings $\texttt{000\ldots}$ and $\texttt{111\ldots}$ and the alternating strings $\texttt{0101\ldots}$ and $\texttt{1010\ldots}$ respectively.

In the first part of this thesis, we focus on this entropy extremization problem and a series of related combinatorial and information theory problems, largely based on our results in \cite{atashpendar2015information, atashpendar2018clustering, atashpendar2018proof}. For the most part, we will abstract away from the details of the original context and simply state the problem as an analysis of entropy extremizing outputs in deletion channels.

\subsubsection{Binary Sequences and Information Leakage}

The original problem statement seems very natural and is easy to state but has not to our knowledge been addressed before in the information theory literature: suppose that we have a random bit string $y$ of length $n$ and we reveal $m$ bits at random positions, preserving the order but without revealing the positions, how much information about $y$ is revealed? in other words, the quantity of interest is the conditional entropy of $Y$ given an observation $x$, that is, $H(Y|X=x)$.

Chapter \ref{chp:deletion-channel-framework} first provides the problem statement for the content of Part \ref{part:one}. We then introduce our framework, which includes the necessary background theory, along with the main concepts and building blocks needed throughout. We then provide a survey of the literature on some areas of research that either directly, or indirectly, depend on the same underlying combinatorial problem, namely coding theory and deletion channels, study of (sub/super)-sequences, the distribution of subsequence embeddings, and efficient dynamic programming algorithms used in DNA sequencing.

In Chapter \ref{chp:information-leakage}, we show that while the cardinality of the set of compatible $y$ strings depends only on $n$ and $m$, the amount of leakage does depend on the exact revealed $x$ string. We observe that the maximal leakage, measured as decrease in the Shannon entropy of the space of possible bit strings, corresponds to the $x$ string being all zeros or all ones and that the minimum leakage corresponds to the alternating $x$ strings. We derive a formula for the maximum leakage (minimal entropy) in terms of $n$ and $m$. We discuss the relevance of other measures of information, in particular min-entropy, in a cryptographic context. Finally, we describe a simulation tool to explore these results.

\subsubsection{Combinatorial Structures}

In Chapter \ref{chp:combinatorial-structures}, we focus on combinatorial objects encountered as a result of studying the entropy problem. We present an algorithm for counting the number of subsequence embeddings using a run-length encoding of strings. We then describe two different ways of clustering the space of supersequences and prove that their cardinality depends only on the length of the received subsequence and its Hamming weight, but not its exact form. Then, we consider supersequences that contain a single embedding of a fixed subsequence, referred to as singletons, and provide a closed form expression for enumerating them using the same run-length encoding. We prove an analogous result for the minimization and maximization of the number of singletons, by the alternating and the uniform strings, respectively.

\subsubsection{Entropy Extremization for Outputs in Deletion Channels}

In Chapter \ref{chp:finite-deletions}, we prove the original minimal entropy conjecture for the special cases of single and double deletions using similar clustering techniques and the same run-length encoding, which allow us to characterize the distribution of the number of subsequence embeddings in the space of compatible supersequences to demonstrate the effect of an entropy decreasing operation.

\subsubsection{Characterizing Entropy Extremization via Analytic Combinatorics}

The entropy analysis culminates in Chapter \ref{chp:hws}, where we confirm the minimization conjecture in the asymptotic limit using results from \textit{hidden word statistics}. We show how the analytic-combinatorial methods of Flajolet, Szpankowski and Vall\'ee \cite{flajolet2001hidden,flajolet2006hidden} for dealing with the \textit{hidden pattern matching} problem can be applied to resolve the case of fixed output length and $n\rightarrow\infty$, by obtaining estimates for the entropy in terms of the moments of the posterior distribution and establishing its minimization via a measure of autocorrelation.

\subsubsection{A Software Library for the Analysis of Binary Sequences}

In the course of exploring the various mathematical problems encountered in the context of the entropy extremization problem, we developed an extensive software library with a wide range of tools.

This data analysis toolkit has been developed not only to confirm and validate our analytic results, but also to help us gain a better understanding of numerous, otherwise poorly understood, combinatorial objects to discover new properties and results. The source code and its documentation, along with post-processed data sets, will be available in the appendix. We provide a brief overview of the main utilities provided by our toolkit in Section \ref{sec:software} of Chapter \ref{chp:deletion-channel-framework}.

\subsection{Deniability in Quantum Cryptography}

In the second part of the thesis, we shift our focus to the notion of deniability in quantum cryptography. More specifically, we consider deniability in quantum key exchange, a topic that remains largely unexplored. As mentioned earlier, in the only work on this subject by Donald Beaver, it is argued that QKE is not necessarily deniable due to an eavesdropping attack that limits key equivocation. In addition to studying deniability in QKE, we also consider the relation between covert quantum communication and deniability. Moreover, we investigate the feasibility of information-theoretic deniability via quantum entanglement. We then discuss the relation between the impossibility of quantum bit commitment and deniability, as first explicitly pointed out by Beaver. Finally, we go beyond QKE and consider deniability in the context of other quantum cryptography protocols. The content of Part \ref{part:two} is largely based on our results in \cite{atashpendar2018deniability} and \cite{atashpendar2018voting}.

\subsubsection{Deniability in Classical Cryptography}

In Chapter \ref{chp:deniability-intro}, in addition to providing a brief introduction to the notion of deniability and its evolution in cryptography, we also survey the literature on deniability and focus on the most relevant results in classical cryptography in Section \ref{sec:deniability-related-work}, followed by an overview of the state-of-the-art in quantum e-voting in Section \ref{sec:quantum-e-voting}.

\subsubsection{Preliminaries in Quantum Information Processing and Cryptography}

In Chapter \ref{chp:qip}, we will provide some background knowledge by reviewing some of the most relevant concepts in quantum information theory that will be needed throughout Part \ref{part:two}. We then focus on quantum key exchange, in particular on the BB84 protocol, and uncloneable encryption, followed by an overview of important concepts in authenticated key exchange protocols. We then close by giving a quick primer on fully homomorphic encryption.

\subsubsection{Coercer-Deniable Quantum Key Exchange}

\textbf{Analysis, Modelling and Definitions}: In Chapter \ref{chp:coercer-deniable-qke} we revisit the notion of deniability in QKE and provide more insight into the eavesdropping attack aimed at detecting attempts at denial described in \cite{beaver2002deniability}. Having shed light on the nature of this attack, we show that while coercer-deniability can be achieved by uncloneable encryption (UE) \cite{gottesman2002uncloneable}, QKE obtained from UE remains vulnerable to the same attack. We briefly elaborate on the differences between our model and simulation-based deniability \cite{di2006deniable}. To provide a firm foundation, we adopt the framework and security model for quantum authenticated key exchange (Q-AKE) developed by Mosca et al. \cite{mosca2013quantum} and extend it to introduce the notion of coercer-deniable QKE formalized in terms of the indistinguishability of real and fake coercer views.

\subsubsection{Covert Quantum Communication}

We establish a connection between the concept of covert communication and deniability in Chapter \ref{chp:dcqke}, which to the best of our knowledge has not been formally considered before. More precisely, we apply results from a recent work by Arrazola and Scarani on obtaining covert quantum communication and covert QKE via noise injection \cite{AS16} to propose DC-QKE, a simple construction for coercer-deniable QKE. We prove the deniability of DC-QKE via a reduction to the security of covert QKE. Compared to the candidate PQECC protocol suggested in \cite{beaver2002deniability} that is claimed to be deniable, our construction does not require quantum computation and falls within the more practical realm of prepare-and-measure protocols.

\subsubsection{Perfect Deniability via Quantum Entanglement Distillation}

In Chapter \ref{chp:entanglement-distillation} we consider how quantum entanglement distillation can be used not only to counter eavesdropping attacks, but also to achieve information-theoretic deniability. We relate deniability to fundamental concepts in quantum information theory and suggest a generic approach to show how entanglement distillation can be used to achieve information-theoretic deniability, followed by a discussion of the relevance of other closely related results such as the relation between the impossibility of unconditional quantum bit commitment and deniability.

\subsubsection{Coercion-Resistant and Quantum-Secure Voting in Linear Time via Fully Homomorphic Encryption}

Finally, in Chapter \ref{chp:cr-and-qsafe-voting} we present an approach for performing the tallying work in the coercion-resistant JCJ voting protocol, introduced by Juels, Catalano, and Jakobsson, in linear time using fully homomorphic encryption (FHE). The suggested enhancement also paves the path towards making JCJ quantum-resistant, while leaving the underlying structure of JCJ intact. The pairwise comparison-based approach of JCJ using plaintext equivalence tests leads to a quadratic blow-up in the number of votes, which makes the tallying process rather impractical in realistic settings with a large number of voters. We show how the removal of invalid votes can be done in linear time via a solution based on recent advances in various FHE primitives such as hashing, zero-knowledge proofs of correct decryption, verifiable shuffles and threshold FHE. We conclude by touching upon some of the advantages and challenges of such an approach, followed by a discussion of further security and post-quantum considerations.

\subsubsection{Future Work and Open Questions}

We conclude by presenting some open questions in Chapter \ref{chp:conclusions}. It is our hope that this work will rekindle interest, more broadly, in the notion of deniable communication in the quantum setting, a topic that has received very little attention from the quantum cryptography community.

\stopminichaptoc{\minichaptocenabled}

\part{Information Theory Puzzles in Deletion Channels}\label{part:one}

\chapter{Problem Statement, Framework and Related Work}\label{chp:deletion-channel-framework}

\printminichaptoc{\minichaptocenabled}

The combinatorial and information theory problems addressed in Chapters \ref{chp:information-leakage}, \ref{chp:combinatorial-structures}, \ref{chp:finite-deletions}, and \ref{chp:hws}, were originally motivated by an analysis of prepare-and-measure based quantum key exchange (QKE) protocols \cite{ryan2013enhancements}, which suggested some simple changes aimed at reducing leakage of key material and improving the final key rate. These changes also included a modification of the quantum bit error rate (QBER) estimation that gave rise to an independent information theory problem and a series of related mathematical problems that we studied in \cite{atashpendar2015information, atashpendar2018clustering, atashpendar2018proof}. After having provided a short description of this change, for the remainder of this thesis, we abstract away from the details of the original context and simply state the problem as an analysis of entropy extremizing outputs in deletion channels. We now briefly describe the suggested modification, but we do not go into the details of the motivating context here, more detail can be found at \cite{ryan2013enhancements}.

For the moment we just remark that in QKE protocols it is typical for the parties, after the quantum phase, to compare bits of the fresh session key at randomly sampled positions in order to obtain an estimate of the Quantum Bit Error Rate (QBER). This indicates the proportion of bits that have been flipped as the result of either noise or eavesdropping on the quantum channel. This serves to bound the amount of information leakage to any eavesdropper, and as long as this falls below an appropriate threshold the parties continue with the key reconciliation and privacy amplification steps.

Usually, the sample set is agreed and the bits compared using unencrypted but authenticated exchanges over a classical channel, hence the positions of the compared bits are known to a potential eavesdropper and these bits are discarded. In \cite{ryan2013enhancements}, it is suggested that the sample set be computed secretly by the parties based on prior shared secrets. They still compare the bits over an unencrypted channel, but now an eavesdropper does not learn where the bits lie in the key stream. This prompts the possibility of retaining these bits, but now we must be careful to bound the information leakage and ensure that later privacy amplification takes account of this leakage.

In \cite{ryan2013enhancements}, it is suggested that further advantages of the above approach are that it provides implicit authentication at a very early stage and it ensures fairness in the selection of the sampling, i.e. neither party controls the selection.

In practice it would probably be judged too risky to retain these bits on forward secrecy grounds: leakage of the prior secret string at a later time would compromise these bits. Nonetheless, the possibility does present the rather intriguing mathematical challenge that we address in Part \ref{part:one} of this thesis.

\section{Problem Statement}\label{sec:deletion-channel-prob-statement}

Given an alphabet $\Sigma = \{ 0, 1 \}$, $\Sigma^n$ denotes the set of all $\Sigma$-strings of length $n$. Consider a bit string $y$ of length $n$ chosen at random from the space of all possible bit strings of length $n$, i.e. $y \in \Sigma^n$. More precisely, we assume that the probability distribution over the n-bit strings is flat. We assume that the bits are indexed 1 through $n$ and a subset $\pi$ of $\{1,....,n\}$ of size $m$ $(m \le n)$ is chosen at random and we reveal the bits of $y$ at these indices, preserving the order of the bits but without revealing $\pi$. Call the resulting, revealed string $x$. We assume that $\pi$ is chosen with a flat distribution over the set of subsets of $\{1,\ldots,n\}$ of size $m$, thus every subset of size $m$ is equally probable. As an example, suppose that for $n=12$ and $m=4$ we have:
\[
y=\texttt{011000011001}
\]
and we choose $\pi=\{2, 4, 5, 8\}$, then $x=\texttt{1001}$.

The question now is, what is the resulting information leakage about $y$? We assume that the ``adversary'' knows the rules of the game, i.e. she knows $n$ and she knows that the leaked string preserves the order but she does not know the chosen $\pi$ mask. In particular, can we write the leakage as a function purely of $m$ and $n$ or does it depend on the exact form of $x$? If it does depend on $x$, can we bound this?

To illustrate: if you reveal 0 bits then obviously you reveal nothing about the full string. If you reveal just one bit ($m=1$) and suppose that it is a 0, then essentially all you have revealed about the full string is that the all 1 string is not possible. At the other extreme, if you reveal all the bits $(m=n)$ then obviously you reveal all $n$ bits of the original string. For $m=n/2$, we see that from Theorem \ref{theorem:upsilon} the number of possible $y$ strings is $(2^n)/2$, which for a flat distribution would correspond to exactly 1 bit of leakage. However, in our problem the posterior distribution departs from flat so the leakage is in fact a little more than 1 bit. So intuitively the function starts off very shallow but rises very fast as $m$ approaches $n$.

In terms of terminology, the concepts presented here are closely related to the notions of subsequences, here denoted by $x$ strings, and supersequences ($y$ strings), in formal languages and combinatorics on words.

More formally, the mathematical problem encountered in the aforementioned analysis can be described as follows. From the output produced by a memoryless deletion channel from a uniformly random input of known length $n$, one obtains a posterior distribution on the channel input.  The difference between the Shannon entropy of this distribution and that of the uniform prior measures the amount of information about the channel input which is conveyed by the output of length $m$, and it is natural to ask for which outputs this is extremized. In this chapter, we conjecture on the basis of experimental data that the entropy of the posterior is minimized and maximized by the constant strings $\texttt{000\ldots}$ and $\texttt{111\ldots}$, and the alternating strings $\texttt{0101\ldots}$ and $\texttt{1010\ldots}$, respectively.

Thus, a random bit string $y$ of length $n$ emitted from a memoryless source is transmitted via an i.i.d. deletion channel such that a shorter bit string $x$ of length $m$ ($m \le n$) is received as a subsequence of $y$, after having been subject to $n-m$ deletions. Consequently, the order in which the remaining bits are revealed is preserved, but the exact positions of the bits are not known. Given a subsequence $x$, the question is to find out how much information about $y$ is revealed. More specifically, the quantity that we are interested in is the conditional entropy \cite{cover2012elements} computed over the set of candidate supersequences upon observing $x$, i.e., $H(Y|X=x)$ where $Y$ is restricted to the set of compatible supersequences as explained below.

The said information leakage is quantified as the drop in entropy \cite{shannon2001mathematical} for a fixed $x$ according to a weighted set of its compatible supersequences, referred to as the \emph{uncertainty set}. The uncertainty set, denoted by $\Upsilon_{n,x}$, contains all the supersequences that could have given rise to $x$ upon $n-m$ deletions. In an alternative proof, we show that this set's cardinality is independent of the details of $x$ and that it is only a function of $n$ and $m$. Thus, for a fixed subsequence $x$ of length $m$, we consider the set of $y$ strings of length $n$ ($n \ge m$) that can contain $x$ as a subsequence embedding. The weight distribution used in the computation of entropy is given by the number of occurrences or embeddings of a fixed subsequence in its compatible supersequences, i.e., the number of distinct ways $x$ can be extracted from $y$ upon a fixed number of deletions, denoted by $\omega_x(y)$.

Despite the specific context in which the problem was first encountered, the underlying mathematical puzzle is a close relative of several well-known challenging problems in formal languages, DNA sequencing and coding theory. In fact, the distribution of the number of times a string $x$ appears as a subsequence of $y$, lies at the center of the long-standing problem of determining the capacity of deletion channels. More precisely, knowing this distribution would give us a maximum likelihood decoding algorithm for the deletion channel \cite{mitzenmacher2009survey}. In effect, upon receiving $x$, every set of $n-m$ symbols is equally likely to have been deleted. Thus, for a received sequence, the probability that it arose from a given codeword is proportional to the number of times it is contained as a subsequence in the originally transmitted codeword. More specifically, we have $p(y|x) = p(x|y)\frac{p(y)}{p(x)} = \omega_x(y)d^{n-m}(1-d)^m \frac{p(y)}{p(x)}$, with $d$ denoting the deletion probability. Thus, as inputs are assumed to be a priori equally likely to be sent, we restrict our analysis to $\omega_x(y)$ for simplicity.

\section{Framework, Definitions and Notation}\label{sec:deletion-channel-framework}

We now provide some notation and describe our framework, which includes a set of common definitions for all the concepts that will be needed throughout Part \ref{part:one}. In terms of completeness, the building blocks presented here will suffice for all Chapters in Part \ref{part:one}, except that we leave a description of the main concepts used in hidden word statistics for Chapter \ref{chp:hws}, as they are required only for that segment of our analysis.

\subsection{Binary Subsequences and Deletion Channels}

We consider a memoryless source that emits symbols of the supersequence, drawn independently from the binary alphabet $\Sigma = \{0, 1\}$. Given an alphabet $\Sigma=\{0,1\}$, $\Sigma^n$ denotes the set of all $\Sigma$-strings of length $n$. Let $p_\alpha$ denote the probability of the symbol $\alpha \in \Sigma$ being emitted, which in the binary case simplifies to $p_\alpha = 0.5$. This means that the probability of occurrence of a random supersequence $y$ is given by $P(y) = \prod_{i=1}^n p_{y_i}$. The probability of a subsequence of length $m$ is defined in a similar manner. Throughout, we use $h(s)$ to denote the Hamming weight of the binary string $s$.

\paragraph{Notation} We use the notation $[n] = \{1, 2, \dotsc, n\}$ and $[n_1,n_2]$ to denote the set of integers between $n_1$ and $n_2$; individual bits from a string are indicated by a subscript denoting their position, starting at $1$, i.e., $y = (y_{i})_{i \in [n]} = (y_1, \dotsc, y_n) $. We denote by $|S|$ the size of a set $S$, which for binary strings also corresponds to their length in bits. We also introduce the following notation: when dealing with binary strings, $\alpha^k$ means $k$ consecutive repetitions of $\alpha \in \{0, 1\}$. Throughout, we use $\sigma$ to refer to the constant strings $x=\texttt{1}^m$ and $x=\texttt{0}^m$ for succinctness.

\paragraph{Subsequences and Supersequences} Given $x \in \Sigma^m$ and $y \in \Sigma^n$, let $x = x_1 x_2 \cdots x_m$ denote a subsequence obtained from $y = y_1 y_2 \cdots y_n$ with a set of indexes $1 \le i_1 < i_2 < \cdots < i_m \le n$ such that $y_{i_1} = x_1, y_{i_2} = x_2, \dotsc, y_{i_m} = x_m$. Subsequences are obtained by deleting characters from the original string and thus adjacent characters in a given subsequence are not necessarily adjacent in the original string.

\paragraph{Projection Masks} We define $y_\pi = (y_i)_{i \in \pi} = x$ to mean that the string $y$ filtered by the mask $\pi$ gives the string $x$. Let $\pi$ denote a set of indexes $\{j_1, \dotsc, j_m\}$ of increasing order that when applied to $y$, yields $x$, i.e., $x = y_{j_1} y_{j_2} \cdots y_{j_m}$ and $1 \le j_1 < j_2 \cdots j_m \le n$.

\paragraph{Deletion Masks} A deletion mask $\delta$ represents the set of indexes that are deleted from $y$ to obtain $x$, i.e., $\delta_i \in [n] \setminus \pi$ and $|\delta| = n-m$, whereas a projection mask $\pi$ denotes indexes that are preserved. Thus, similarly, $\delta$ is a subset of $[n]$ and the result of applying a mask $\delta$ on $y$ is denoted by $y_\delta = x$.

\paragraph{Compatible Supersequences} We define the \emph{uncertainty set}, $\Upsilon_{n,x}$, as follows. Given $x$ and $n$, this is the set of $y$ strings that could project to $x$ for some projection mask $\pi$.
\begin{align*}
\Upsilon_{n,x}:=\{y \in \{0,1\}^n: (\exists \pi) [y_\pi=x] \} = \{y \in \{0,1\}^n: (\exists \delta) [y_{\delta}=x] \}
\end{align*}
It was shown by Levenshtein \cite{levenshtein1974elements} that that the cardinality of $\Upsilon_{n,x}$ is independent of the form of $x$ and that it is only a function of $n$ and $m$, i.e.,
\begin{equation}\label{eq:upsilon-cardinality}
|\Upsilon_{n,x}| = \sum_{r=m}^n \binom{n}{r}.
\end{equation}

\paragraph{Number of Masks or Embeddings} Let $\omega_x(y)$ denote the number of distinct ways that $y$ can project to $x$:
\[
\omega_x(y) := | \{ \pi \in \mathcal{P}([n]): y_\pi = x \} |= |\{\delta \in \mathcal{P}([n]): y_{\delta} = x \}|
\]
we refer to the number of masks associated with a pair $(y, x)$ as the weight of $y$, i.e., the number of times $x$ can be embedded in $y$ as a subsequence.
Moreover, let $\mu_{n,x}$ denote the number of configurations for $n$ and $x$, i.e. the number of pairs $\{y,\pi\}$ such that $y_\pi=x$. It is easy to see that this is given by:
\begin{equation}
\mu_{n,x}=\binom{n}{m}\cdot 2^{n-m}
\end{equation}

\paragraph{Initial Projection Masks or Canonical Embeddings} Given $y_\pi =x$, we define $\pi$ to be initial if there is no lexicographically earlier mask $\pi'$ such that $y_{\pi'} = x$. $\pi'$ is a lexicographically earlier mask than $\pi$ if, for some $r$, the smallest $r$ members of $\pi$ and $\pi'$ are the same, but the $(r+1)$-th of $\pi'$ is strictly smaller than that of $\pi$. Throughout, we will use $\tilde{\pi}$ to denote an initial projection mask. The first embedding of a subsequence $x$ in $y$ is also often referred to as the \emph{canonical} embedding in the literature. Note that for a fixed mask or embedding $\pi$, the members of $y$ up to the last member of $\pi$ are completely determined if $\pi$ is initial.

\paragraph{Run-Length Encodings} A substring $T$ of a string $Y=y_1 y_2 \ldots y_n$ over $\Sigma$ is called a \emph{run} of $Y$ if $T$ is a consecutive sequence of the same character (i.e., $T \in \alpha^{+}$ for an $\alpha \in \Sigma$). Let $\mathcal{R}_{x, \alpha}$ denote the set of runs of $\alpha$ in $x$. The notion of run-length encoding will be central to our analysis. Given an $n$-bit binary string $y$, its \emph{run-length encoding} (RLE) is the sequence $r_j = (a_j, b_j)$, $1 \leq j \leq m$, such that
\begin{equation*}
y = a_1^{b_1}a_2^{b_2} \cdots a_m^{b_m}, \qquad m \leq n.
\end{equation*}
with $a_j \in \{0,1\}$ and $b_j \in \{1, \dotsc, n\}$. This encoding is unique if we assume that $a_{i} \neq a_{i+1}$, at which point we only need to specify a single $a_i$ (e.g., the first one) to deduce all the others.
Thus the RLE for a string $y$ is denoted by
\begin{equation*}
y = (a_1; b_1, b_2, \dotsc, b_m).
\end{equation*}
When the value of $a_1$ is irrelevant, which will often be the case later on\footnote{Indeed, if $x=y_\pi$, then $\overline{x}=\overline{y}_\pi$ and $\omega_x(y)= \omega_{\overline{x}}(\overline{y})$.}, we will drop it form the notation. Consecutive zeros or ones in a binary string will be referred to as \emph{blocks} or \emph{runs}.
\begin{example}
Let $y = \texttt{0011010001}$; then we have $y = (0; 2, 2, 1, 1, 3, 1)$ as the first bit is zero; and we have 2 zeros, 2 ones, 1 zero, 1 one, 3 zeros, 1 one. Alternatively, $(2, 2, 1, 1, 3, 1)$ designates simultaneously $\texttt{0011010001}$ and $\texttt{1100101110}$.
\end{example}

\subsection{Measures of Information and Entropy}\label{subsec:classical-information-entropy}

Since it is not the purpose of this thesis to provide an extensive discussion on the history and development of entropy in information theory, we briefly touch upon a number of key points and encourage the reader to refer to Shannon's original paper \cite{shannon2001mathematical} and standard textbooks \cite{cover2012elements,mackay2003information,wilde2013quantum} for a more thorough coverage.

Intuitively, information entropy can be viewed as a measure of surprise upon learning the outcome of a random variable, or the expected amount of surprise that a random variable possesses. This intuitive view is consistent with everyday experience in that events with lower probability of occurrence surprise us more, while events that are more likely to occur surprise us less. This intuitive view in terms of surprise can be naturally translated into how much information one gains upon learning the value of a random variable.

While the notion of \emph{information} can be rather elusive\footnote{This simply refers to its meanings in different contexts revolving around data, knowledge, understanding, uncertainty, cognitive observer, etc.}, in his seminal work on a mathematical theory of communication \cite{shannon2001mathematical}, Shannon formalized information via the concept of entropy\footnote{Inspired by Boltzman's work in statistical mechanics, while the term \csquote{entropy} was chosen based on a suggestion by John von Neumann.}, which refers to a weighted sum of the information content\footnote{Basically equating information with counting possibilities in terms of bits of information needed for representing a variable, e.g. $\mathrm{log}_2(n)$ bits for a system taking on one of $n$ states.} of each realization $x$ of random variable $X$, namely $-\mathrm{log}_2(p_X(x))$. Thus, given a discrete random variable $X$ with probability distribution $p_X(x)$, the entropy $H(X)$ measures information in units of bits, defined as
\[
H(X) \equiv - \sum_{x} p_X(x) \cdot \mathrm{log}_2 (p_X(x)).
\]

Another equivalent way of viewing information is in terms of the amount of communication needed to convey it to a receiver, essentially a restatement of Shannon's source coding theorem equating the limit of compression of an information source to the Shannon entropy of the source.

This also provides an operational meaning for the Shannon entropy. In other words, one cannot compress a source down to a code rate that is lower than the Shannon entropy of the source. This roughly translates into the fact that $n$ independent and identically distributed (i.i.d.) random variables, each with entropy $H(X)$, cannot be compressed\footnote{This statement pertains to lossless compression.} into less than $n \cdot H(X)$ bits, as $n \rightarrow \infty$. In terms of properties, the Shannon entropy is non-negative, concave and permutation invariant\footnote{Its value is invariant w.r.t. permutations of event realizations.}, and its minimum value is attained if $X$ is deterministic (event with probability 1), while its maximum value is satisfied by a uniformly random $X$, i.e., a uniform distribution or all events being equiprobable.

Shannon's work gave rise to two fundamental results, namely the source coding and the channel coding theorem, ideas that jump-started the field of information and coding theory. The former captures the limit of data compression in terms of the entropy of an information source, while the latter deals with the limits of optimal error correction and the construction of error correcting codes for the purpose of faithfully transmitting data over a noisy channel in an efficient manner. Both ideas depend on redundancy in that the former is aimed at removing it, while the latter adds carefully prepared redundancy.

Regarding our analysis, the obvious follow-on question to the problem posed at the start of this chapter is: what is the appropriate measure of information to use? Perhaps the simplest measure is the Hartley measure, the log of the cardinality of the uncertainty set. This coincides with the Shannon measure if the probability distribution is uniform. In this case the solution is simple as we will see below: the cardinality of the uncertainty set is a simple function of $n$ and $m$. However, the probability distribution turns out to be rather far from uniform, so the Hartley measure does not seem appropriate here.

Thought of purely as an information theory puzzle, the standard commonly used measure is Shannon's \cite{shannon2001mathematical}. For this we have a number of interesting results and observations. In particular, our observations suggest that the maximum leakage for all $n$ and $m$ occurs for the all zero or all one $x$ strings and we have a formula for the leakage in these cases. However, we have not yet been able to prove this conjecture, although we do have intuitions as to why this appears to be the case.

Given the cryptographic motivation for the problem, it is worth considering whether alternative information measures are in fact
more appropriate. The Shannon measure has a very specific interpretation: the expected number of binary questions required to
identify the exact value of the variable. In various cryptographic contexts, this might not be the most appropriate interpretation. For example, in some situations it might not be necessary to pin down the exact value and a good approximation may be damaging. In our context, the session key derived from the key reconciliation phase will be subjected to privacy amplification to reduce the adversary's knowledge of the key to a negligible amount. What we really need therefore is a measure of the leakage that can be used to control the degree of amplification required. This question has been extensively studied in \cite{MaurerWolf97,RennerWolf05,maurer1993secret,Cachin97,CachinMaurer97}, and below we summarize the key results.

Various measures of entropy may be applicable depending on the parameters of the context in question, such as the scheme used for privacy amplification, e.g. universal hashing vs. randomness extractors or whether a distinction is made between passive adversaries and active adversaries \cite{MaurerWolf97}. As noted in the works of Bennett et al. \cite{CachinMaurer97,Bennett95}, the R\'{e}nyi entropy \cite{renyi1961measures,mackay2003information} provides a lower bound on the size of the secret key $s^{\prime}$ distillable from the partially secret key $s$ initially shared by Alice and Bob. Moreover, it is shown in \cite{MaurerWolf97}, that the min-entropy provides an upper bound on the amount of permissible leakage and specific constraints are derived as a function of the min-entropy of $s$ and the length of the partially secret string. More recently, Renner and Wolf show in \cite{RennerWolf05} that the Shannon entropy $H$ can be generalized and extended to two simple quantities, $H_{0}^{\varepsilon}$ and $H_{\infty}^{\varepsilon}$, called smooth R\'{e}nyi entropy values, which provide tight bounds for privacy amplification and information reconciliation in contexts such as QKE, where the assumption of having independent repetitions of a random experiment is generally not satisfied.

For the purpose of our study, we consider the following measures of information, which can be considered as special cases of the R\'{e}nyi Entropy.

\paragraph{\textbf{R\'{e}nyi Entropy of order $\alpha$.}} For $\alpha \ge 0$ and $\alpha \ne 1$, the R\'{e}nyi entropy of order $\alpha$ of a random variable $X$ is

\begin{equation}
H_{\alpha}(X) = \frac{1}{1-\alpha} \text{log}_2 \sum_{x \in \mathcal{X}} P_X(x)^{\alpha}.
\end{equation}

\paragraph{\textbf{Hartley Entropy.}} The Hartley measure corresponds to R\'{e}nyi entropy of order zero and is defined as

\begin{equation}\label{eq:hartley-entropy}
H_{0}(X) := - \text{log}_2 \left\vert{\mathcal{X}}\right\vert.
\end{equation}

\paragraph{\textbf{Second-order R\'{e}nyi Entropy.}} For $\alpha=2$, we get the collision entropy, also simply referred to as the R\'{e}nyi entropy

\begin{equation}\label{eq:renyi_two}
R(x) = H_{2}(X) := - \text{log}_2 \sum_{x \in \mathcal{X}} P_X(x)^2.
\end{equation}

\paragraph{\textbf{Shannon Entropy.}} As $\alpha \rightarrow 1$, in the limit we get the Shannon entropy of a random variable $X$

\begin{equation}\label{eq:shannon-h}
H(X) = - \sum_{x \in \mathcal{X}} P_X(x) \cdot \text{log}_2 P_X(x).
\end{equation}

\paragraph{\textbf{Min-Entropy.}} In the limit, as $\alpha \rightarrow \infty$, $H_\alpha$ converges to the min-entropy of a random variable $X$

\begin{equation}\label{eq:min-entropy}
H_{\infty}(X) := - \text{log}_2 \max_{x \in \mathcal{X}} (P_X(x)).
\end{equation}
As noted in \cite{MaurerWolf97}, the entropy measures given above satisfy

\begin{equation}\label{eq:h_relation}
H(X) \ge H_2(X) \ge H_{\infty}(X)
\end{equation}

\paragraph{\textbf{Conditional Entropy.}} Finally, another entropic measure that is central to our work is that of conditional entropy. Given two random variables $X$ and $Y$ that are not statistically independent (i.e. correlated), let $i(y|x)$ denote the conditional information content
\begin{equation}
	i(y|x) \equiv - \mathrm{log}_2\left(p_{Y|X}(y|x)\right)
\end{equation}
and the entropy $H(Y|X=x)$ of random variable $Y$ conditioned on a particular event or realization $x$ of the random variable $X$ corresponds to the expected conditional information content w.r.t. $(Y|X=x)$
\begin{equation}
	H(Y|X=x) \equiv \E_{Y|X=x}{\{ i(Y|x) \}} = - \sum_{y} p_{Y|X}(y|x)\mathrm{log}_2(p_{Y|X}(y|x)).
\end{equation}
If the expectation is with respect to both $X$ and $Y$, we have
\begin{align}
	H(Y|X) &\equiv \sum_{x} p_X(x)H(Y|X=x) \\
	&= -\sum_{x}p_X(x)\sum_{y}p_{Y|X}(y|x)\mathrm{log}_2(p_{Y|X}(y|x)) \\
	&= -\sum_{y,x}p_{Y,X}(y,x)\mathrm{log}_2(p_{Y|X}(y|x)).
\end{align}
where $p_{Y,X}(y,x)$ denotes the joint probability distribution of $Y$ and $X$. Finally, the entropy $H(X)$ is greater than or equal to the conditional entropy $H(X|Y)$, i.e. conditioning does not increase entropy
\[
H(X) \ge H(X|Y)
\]
with equality iff $X$ and $Y$ are independent.

\subsection{Entropy of Embeddings}

For a fixed subsequence $x$ of length $m$, the underlying weight distribution used in the computation of the entropy is defined as follows. Upon receiving a subsequence $x$, we consider the set of compatible supersequences $y$ of length $n$ (denoted by $\Upsilon_{n,x}$) that can project to $x$ upon $n-m$ deletions. Every $y \in \Upsilon_{n,x}$ is assigned a weight given by its number of masks $\omega_x(y)$, i.e., the number of times $x$ can be embedded in $y$ as a subsequence. We consider the conditional Shannon entropy $H(Y|X=x)$ where $Y$ is confined to the space of compatible supersequences $\Upsilon_{n,x}$. The total number of masks in $\Upsilon_{n,x}$ is given by
\begin{equation}
\label{eq:numbmask}
\mu_{n,m}=\binom{n}{m} \cdot 2^{n-m}
\end{equation}
Let $P_x$ denote the normalized weight distribution given below
\begin{equation*}\label{eq:subseq-prob-distribution1}
P_x=\{P(Y = y | X = x)\text{ for } y \in \Upsilon_{n,x} \}.
\end{equation*}
where $P(Y = y | X = x)$ is given by
\begin{align*}
P(Y = y | X = x) &= \frac{P(Y = y \wedge X = x)}{P(X = x)} = \frac{P(X = x| Y = y) \cdot P(Y = y)}{P(X = x)}
= \frac{\frac{|\{ \pi: \pi(y)=x \}|}{\binom{n}{m}}2^{-n}}{P(X =x)} \\
& = \frac{\omega_x(y)2^{-n}}{\binom{n}{m}P(X =x)} = \frac{\omega_x(y)2^{-n}}{\binom{n}{m}\sum_{y'}P(Y = y')P(X = x | Y = y')} \\
& = \frac{\omega_x(y)2^{-n}}{\binom{n}{m}\sum_{y'}\frac{\omega_x(y')}{\binom{n}{m}}2^{-n}} = \frac{\omega_x(y)}{\sum_{y'}\omega_x(y')} = \frac{\omega_x(y)}{\mu_{n,m}}
\end{align*}
We therefore have \begin{equation}\label{eq:subseq-prob-distribution}
P_x=\Bigg\{\frac{\omega_x(y_1)}{\mu_{n,m}}, \ldots, \frac{\omega_x(y_n)}{\mu_{n,m}} \Bigg\}.
\end{equation}
Finally, for simplicity we use $H_n(x)$ throughout this work to refer to the Shannon entropy of a distribution $P$ corresponding to a subsequence $x$ as defined below
\begin{equation}\label{eq:shannon-entropy}
H_n(x) = -\sum_i p_i \cdot \log_2(p_i)
\end{equation}
where $p_i$ is given by
\begin{equation*}
p_i = \frac{\omega_x(y_i)}{\mu_{n,m}}.
\end{equation*}
An example illustrating these concepts is given in \Cref{table:upsilon-distribution}. In addition to the distribution of weights, i.e., number of masks per $y$, Hamming-weight groupings of supersequences are indicated by horizontal separators.
\begin{table}[t]
	\tiny
    \caption{Clusters of Supersequences and Distribution of Subsequence Embeddings}
    \label{table:upsilon-distribution}
	\centering
		\begin{tabular}{|c|c|c|}
		\hline
		 \multicolumn{3}{|c|}{$x=\texttt{110}$}  \\ \hline
		$y$ & $\pi_i$ &$\omega_x(y)$ \\ \hline
		$\texttt{00110}$ & $\{3, 4, 5 \}$ & 1                     \\ \hline
		$\texttt{01010}$ & $\{ 2, 4, 5 \}$ & 1                     \\ \hline
		$\texttt{01100}$ & $\{2, 3, 4\}, \{2, 3, 5\}$ & 2                     \\ \hline
		$\texttt{10010}$ & $\{ 1, 4, 5 \}$ & 1                     \\ \hline
		$\texttt{10100}$ & $\{1, 3, 4\}, \{1, 3, 5\}$ & 2                     \\ \hline
		$\texttt{ 11000 }$ & $\{1, 2, 3\}, \{1, 2, 4\}, \{1, 2, 5\}$ & 3                     \\ \hline \hline \hline
		$\texttt{ 01101 }$ & $\{ 2, 3, 4 \}$  & 1                    \\ \hline
		$\texttt{ 01110 }$ & $\{2, 3, 5\}, \{2, 4, 5\}, \{3, 4, 5\}$ & 3                     \\ \hline
		$\texttt{ 10101 }$ & $\{ 1, 3, 4 \}$ & 1                     \\ \hline
		$\texttt{ 10110 }$ & $\{1, 3, 5\}, \{1, 4, 5\}, \{3, 4, 5\}$ & 3                     \\ \hline
		$\texttt{ 11001 }$ & $\{1, 2, 3\}, \{1, 2, 4\}$ & 2                     \\ \hline
		$\texttt{ 11100 }$ & $\{1, 2, 4\}, \{1, 2, 5\}, \{1, 3, 4\}, \{1, 3, 5\}, \{2, 3, 4\}, \{2, 3, 5\}$ & 6                     \\ \hline \hline \hline
		$\texttt{ 11011 }$ &  $\{ 1, 2, 3 \}$ & 1                    \\ \hline
		$\texttt{ 11101 }$ &  $\{1, 2, 4\}, \{1, 3, 4\}, \{2, 3, 4\}$ & 3                    \\ \hline
		$\texttt{ 11110 }$ &  $\{1, 2, 5 \},\{1, 3, 5 \},\{1, 4, 5 \},\{2, 3, 5 \},\{2, 4, 5 \},\{3, 4, 5 \}$ & 6                    \\ \hline
	\end{tabular}
	\begin{tabular}{|c|c|c|}
		\hline
		\multicolumn{3}{|c|}{$x=\texttt{101}$}  \\ \hline
		$y$ & $\pi_i$ & $\omega_x(y)$  \\ \hline
	  $\texttt{ 00101 }$ & $\{3, 4, 5 \}$ & 1                     \\ \hline
		$\texttt{ 01001 }$ & $\{2, 3, 5\}, \{2, 4, 5 \}$ & 2                     \\ \hline
		$\texttt{ 01010 }$ & $\{ 2, 3, 4 \}$ & 1                     \\ \hline
		$\texttt{ 10001 }$ & $\{1, 2, 5\}, \{1, 3, 5\}, \{1, 4, 5\}$ & 3                     \\ \hline
		$\texttt{ 10010 }$ & $\{1, 2, 4\}, \{1, 3, 4\}$  & 2                    \\ \hline
		$\texttt{ 10100 }$ & $\{ 1, 2, 3 \}$   & 1                   \\ \hline \hline \hline
		$\texttt{ 01011 }$ & $\{ 2, 3, 4 \}, \{ 2,3,5\}$ & 2                     \\ \hline
		$\texttt{ 01101 }$ & $\{ 2, 4, 5 \}, \{3,4,5\}$ & 2                     \\ \hline
		$\texttt{ 10011 }$ & $\{1, 2, 4\}, \{1, 2, 5\}, \{1, 3, 4\}, \{1, 3, 5\}$ & 4                      \\ \hline
		$\texttt{ 10101 }$ & $\{1, 2, 3\}, \{1, 2, 5\}, \{1, 4, 5\}, \{3, 4, 5\}$ & 4                     \\ \hline
		$\texttt{ 11001 }$ & $\{1, 3, 5\}, \{1, 4, 5\}, \{2, 3, 5\}, \{2, 4, 5\}$  & 4                    \\ \hline
		$\texttt{ 11010 }$ & $\{1, 3, 4\}, \{2, 3, 4\}$  & 2                    \\ \hline \hline \hline
		$\texttt{ 10111 }$ &  $\{1, 2, 3\}, \{1, 2, 4\}, \{1, 2, 5\}$ &  3                   \\ \hline
		$\texttt{ 11011 }$ &  $\{1, 3, 4\}, \{1, 3, 5\}, \{2, 3, 4\}, \{2, 3, 5\}$ & 4                    \\ \hline
		$\texttt{ 11101 }$ &  $\{1, 4, 5\}, \{2, 4, 5\}, \{3, 4, 5\}$ & 3                    \\ \hline
	\end{tabular}
\end{table}

\section{Related Work}

The sort of problems addressed in our work are close relatives of long-standing open problems mainly found in studies dealing with the combinatorics of subsequences and supersequences, and in the context of deletion channels. Here we give a brief survey of the most relevant and closely related works in the literature.

\subsection{Subsequences and Supersequences}

Studies involving subsequences and supersequences encompass a wide variety of problems that arise in various contexts such as formal languages, coding theory, computer intrusion detection and DNA sequencing to name a few. Despite their prevalence in such a wide range of disciplines, they remain largely unexplored and still present a considerable wealth of unanswered questions.

In the realm of stringology and formal languages, the problem of determining the number of distinct subsequences obtainable from a fixed number of deletions, and closely related problems, have been studied extensively in  \cite{chase1976subsequence,flaxman2004strings,hirschberg1999bounds,hirschberg2000tight}. It is worth noting that the same entropy minimizing and maximizing strings conjectured in the present work, have been shown to lead to the minimum and maximum number of distinct subsequences, respectively. The problems of finding shortest common supersequences (SCS) and longest common subsequences (LCS) represent two well-known NP-hard problems \cite{jiang1995approximation,middendorf1995finding,middendorf2004combined} that involve subproblems similar to our work. Finally, devising efficient algorithms for subsequence combinatorics based on dynamic programming for counting the number of occurrences of a subsequence in DNA sequencing is yet another important and closely related line of research \cite{rahmann2006subsequence,elzinga2008algorithms}.

Fundamental results can be found in the works of Levenshtein, Hirschberg and Calabi \cite{hirschberg1999bounds,hirschberg2000tight,calabi1969some,levenshtein2001efficient} who provide tight upper and lower bounds on the number of distinct subsequences. Furthermore, it was proved by Chase \cite{chase1976subsequence} that the number of distinct $m$-long subsequences is maximized by repeated permutations of an alphabet $\Sigma$, i.e. no letter appears twice without all of the other letters of $\Sigma$ intervening. Flaxman et al. \cite{flaxman2004strings} also provide a probabilistic method for determining the string that maximizes the number of distinct subsequences.

For a thorough presentation of efficient algorithms for computing the number of distinct subsequences, e.g. using dynamic programming, and related problems in the realm of DNA sequencing, we refer the reader to \cite{middendorf1994supersequences,lothaire2005applied,rahmann2006subsequence,elzinga2008algorithms}.

\subsection{Coding Theory and Deletion Channels}

In coding theory, and more specifically in the context of insertion and deletions channels, similar long-standing problems have been studied extensively, and yet many problems still remain elusive in the context of insertion and deletions channels. This includes designing optimal coding schemes and determining the capacity of deletion channels, both of which incorporate the same underlying combinatorial problem addressed in the present work.

In a deletion channel \cite{mitzenmacher2009survey}, for a received sequence, the probability that it arose from a given codeword is proportional to the number of different ways it is contained as a subsequence in that codeword. This translates into a maximum likelihood decoding for deletion channels as follows: For a received sequence, we count the number of times it appears as a subsequence of each codeword and we choose the codeword that admits the largest count. The problem of determining and bounding these particular distributions remains unexplored and presents a considerable number of open questions. Case-specific results for double insertion/deletion channels can be found in \cite{swart2003note}. Moreover, improved bounds for the number of subsequences obtained via the deletion channel and proofs for how balanced and unbalanced strings lead to the highest and lowest number of distinct subsequences are given in \cite{liron2012characterization}.

The studies in \cite{ullman1967capabilities,swart2003note,kanoria2013optimal} consider a finite number of insertions and deletions for designing correcting codes for synchronization errors and Graham \cite{graham2015binary} studies the problem of reconstructing the original string from a fixed subsequence. More recent works on the characterization of the number of subsequences obtained via the deletion channel can be found in \cite{sala2013counting,sala2015three,cullina2014improvement,liron2015characterization}. Another important body of research in this area is dedicated to deriving tight bounds on the capacity of deletion channels \cite{diggavi2007capacity,kalai2010tight,rahmati2013bounds,cullina2014improvement} and developing bounding techniques \cite{ordentlich2014bounding}.

In terms of more directly related combinatorial objects, Cullina, Kiyavash and Kulkarni \cite{cullina2012coloring} provide a graph-theoretic approach for deletion correcting codes, which among other things, extends Levenshtein's \cite{levenshtein1974elements} result on the size of $\Upsilon_{n,x}$ being only a function of $n$ and $m$, to supersequences of a particular length and Hamming weight. In another more recent work by the same authors \cite{cullina2016restricted}, this result for binary strings is extended to $q$-ary strings of a particular composition, where the composition of a $q$-ary string $x$ refers to a vector of $q$ nonnegative integers, which denote the number of times each symbol in the alphabet appears in the string. The authors \cite{cullina2016restricted} show that the number of distinct supersequences of a particular composition depends only on the composition of the original string, from which the distinct supersequences can be obtained via $n-m$ insertions.

Perhaps rather surprisingly, the problem of determining the number of occurrences of a fixed subsequence in random sequences has not received the same amount and level of attention from the various communities. The state-of-the-art in the finite-length regime remains rather limited in scope. More precisely, the distribution of the number of occurrences constitutes a central problem in coding theory, with a maximum likelihood decoding argument, which represents the holy grail in the study of deletion channels. A comprehensive survey, which among other things, outlines the significance of figuring out this particular distribution is given by Mitzenmacher in \cite{mitzenmacher2009survey}.

\subsection{Distribution of Subsequence Embeddings}

Despite being of interest to various disciplines, the problem of determining the number of occurrences or embeddings of a fixed subsequence in random sequences had not been comprehensively studied until Flajolet, Szpankowski and Vall\'ee gave a complete characterization of the statistics of this problem in the asymptotic limit \cite{flajolet2001hidden,flajolet2006hidden}. However, the state-of-the-art in the finite-length domain remains rather limited in scope, as mentioned above in the context of maximum likelihood decoding in deletion channels.

Another highly relevant area of research worth mentioning corresponds to the work of Gentleman and Mullin \cite{gentleman1989distribution} in the context of DNA sequencing, which seems to have gone largely unnoticed by the other communities. Their analysis revolves around the characterization of the distribution of the frequency of occurrence of nucleotide subsequences based on their overlap capabilities. The overlap capability of a subsequence is central to their approach for deriving the expectation and the variance of the distribution. This is very much in line with the notion of autocorrelation used by Flajolet et al. in \cite{flajolet2001hidden,flajolet2006hidden} almost fifteen years later. A similar study based on \cite{gentleman1989distribution}, also related to nucleotide subsequences, is available at \cite{wu2005distributions}.

Although the finite-length domain still remains quite elusive, here we make use of an asymptotic description of the statistics of hidden patterns given by Flajolet et al. in \cite{flajolet2001hidden,flajolet2006hidden} to establish the minimal entropy conjecture. To the best of our knowledge, an analysis focusing on a characterization of the mutual information for the deletion channel \cite{drmota2012mutual} is the only study that directly applies results from hidden word statistics to an information-theoretic analysis.

\section{A Software Toolkit for Binary Sequences}\label{sec:software}

The theoretical findings presented in this thesis were complemented by an extensive software package with a wide range of functionalities, mainly implemented in Python, the R programming language for statistical computing and $\mathrm{CSP}_M$ (tested with FDR3). This data analysis toolkit has been developed not only to confirm and validate our analytic results, but it was also aimed at helping us gain a better understanding of numerous, otherwise poorly understood, combinatorial objects and to discover new properties and results.

The source code and its corresponding documentation, along with post-processed data sets, can be found in the appendix. Here we limit ourselves to a brief overview of the main modules and their respective capabilities, as shown in Table \ref{tab:binseqpy}. Throughout Part \ref{part:one}, we will present plots and experimental results that were generated by our toolkit.

\begin{table}
	\caption{The main functionalities available in the BinSeqPy toolkit.}
	\label{tab:binseqpy}
	\centering
	\begin{tabular}{|l|l|}
	\hline
	\multicolumn{1}{|c|}{\textbf{Module}} & \multicolumn{1}{c|}{\textbf{Functionality}} \\ \hline
	Combinatorics                   & Reusable functions for combinatorial analysis.  \\ \hline
	Clustering                      & Customizable clustering techniques for (sub/super)-sequences. \\ \hline
	Entropy                               & Analysis of shifts in entropy. \\ \hline
	Distribution                     & Analysis of change in weight distribution. \\ \hline
	DynamicProg               & Efficient algorithms for computationally intensive problems. \\ \hline
	DataAnalysis                 & Generation and analysis of combinatorial structures.   \\ \hline
	HWS                                   & Algorithms for hidden word statistics. \\ \hline
	Validation              & Validation and verification of analytical results.  \\ \hline
	Plotting                              & Plotting for data visualization and statistical analysis.  \\ \hline
	\end{tabular}
\end{table}

\stopminichaptoc{\minichaptocenabled}

\chapter{Binary Subsequences and Entropy Extremization}\label{chp:information-leakage}

\printminichaptoc{\minichaptocenabled}

\section{Introduction}

In this chapter, we present our initial approach for solving the entropy problem as well as the obtained results in Section \ref{sec:info-leakage}, followed by a few discussions on privacy amplification and alternative approaches in Section \ref{sec:notes}. Section \ref{sec:simulations} describes how we use simulations to obtain experimental results and to tackle some of the problems addressed in this paper for which deriving analytic expressions proved to be difficult. Finally, we conclude by summarizing our contributions in Section \ref{sec:chp-leakage-conclusions}.

\section{Information Leakage}\label{sec:info-leakage}

In this section we show that the size of the uncertainty set only depends on $n$ and $m$ and provide an expression for computing its cardinality, followed by a proof. We then analyze the amount of information leakage and observe that the maximal leakage corresponds to the $x$ string being all zeros or all ones and that the minimum leakage corresponds to the alternating $x$ strings. We also derive closed form expressions for the maximum leakage (minimal entropy) in terms of $n$ and $m$ for the measures of entropy introduced in Section \ref{sec:deletion-channel-framework}.

\subsection{Cardinality of the Uncertainty Set}\label{subsec:upsilon}

\begin{theorem}\label{theorem:upsilon}

For given $n$ and $m$ the cardinality of $\Upsilon_{n,x}$ is independent of the exact $x$ string. Furthermore, $|\Upsilon_{n,x}|$ is given by:

\begin{equation}
	|\Upsilon_{n,x}| = \sum\limits_{r=m}^{n} \binom{n}{r}
\end{equation}

\end{theorem}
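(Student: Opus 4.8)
The plan is to prove the formula by exhibiting an explicit bijection between $\Upsilon_{n,x}$ and a set of combinatorial data that manifestly depends only on $n$ and $m$, via the \emph{canonical} (left-to-right greedy) embedding introduced above. Given $y \in \Upsilon_{n,x}$, let $\tilde\pi = (p_1 < p_2 < \dots < p_m)$ be its initial embedding: $p_1$ is the first index with $y_{p_1} = x_1$, and $p_j$ is the first index after $p_{j-1}$ with $y_{p_j} = x_j$ (set $p_0 := 0$). Greediness forces that for each $j$ no index strictly between $p_{j-1}$ and $p_j$ carries the symbol $x_j$, so over the binary alphabet every such intermediate bit equals $\overline{x_j}$. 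Hence each $y \in \Upsilon_{n,x}$ factors uniquely as
\[
y \;=\; \overline{x_1}^{\,g_1}\, x_1\, \overline{x_2}^{\,g_2}\, x_2 \cdots \overline{x_m}^{\,g_m}\, x_m \, \tau,
\]
where $g_j = p_j - p_{j-1} - 1 \ge 0$, the tail $\tau \in \{0,1\}^{g_{m+1}}$ with $g_{m+1} = n - p_m \ge 0$ is unconstrained, and $\sum_{j=1}^{m+1} g_j = n-m$.

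Conversely, I would verify that for every nonnegative tuple $(g_1,\dots,g_{m+1})$ summing to $n-m$ and every $\tau$, the string built by the displayed formula lies in $\Upsilon_{n,x}$ and has the $p_j$'s as its canonical embedding — the crucial point being that each forced block $\overline{x_j}^{\,g_j}$ genuinely contains no occurrence of $x_j$, which remains true even when $x_j = x_{j-1}$. This yields a bijection $\Upsilon_{n,x} \leftrightarrow \{(g_1,\dots,g_{m+1},\tau)\}$, so $|\Upsilon_{n,x}|$ depends only on $n$ and $m$, proving the first assertion. Counting the right-hand side by conditioning on the tail length $k := g_{m+1}$, and using that the number of compositions of $n-m-k$ into $m$ nonnegative parts is $\binom{n-1-k}{m-1}$, gives
\[
|\Upsilon_{n,x}| \;=\; \sum_{k=0}^{n-m} 2^{k}\binom{n-1-k}{m-1}.
\]

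It then remains to show $\sum_{k=0}^{n-m} 2^{k}\binom{n-1-k}{m-1} = \sum_{r=m}^{n}\binom{n}{r}$; this is a routine manipulation (expand $2^{k} = \sum_i \binom{k}{i}$ and apply the Vandermonde-type convolution $\sum_k \binom{k}{i}\binom{n-1-k}{m-1} = \binom{n}{m+i}$), which I would isolate as a lemma rather than reproduce. I expect the main obstacle to be not this identity but the careful verification of the bijection: arguing that the greedy decomposition is well defined, that distinct tuples yield distinct strings, and that no edge case (repeated adjacent symbols of $x$, or the degenerate values $m=0$ and $m=n$) breaks the correspondence. As an alternative that sidesteps this bookkeeping, one can induct on $n$: partition the strings $y \in \{0,1\}^n$ containing $x$ according to whether $y_1\cdots y_{n-1}$ already contains $x$ (contributing $2\lvert\Upsilon_{n-1,x}\rvert$) or contains only $x_1\cdots x_{m-1}$ with $y_n = x_m$ (contributing $\lvert\Upsilon_{n-1,\,x_1\cdots x_{m-1}}\rvert - \lvert\Upsilon_{n-1,x}\rvert = \binom{n-1}{m-1}$ by the induction hypothesis at lengths $m$ and $m-1$), and then check $2\sum_{r=m}^{n-1}\binom{n-1}{r} + \binom{n-1}{m-1} = \sum_{r=m}^{n}\binom{n}{r}$ via Pascal's rule.
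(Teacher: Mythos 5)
Your proof is correct, but your primary route is not the one the paper takes for this theorem. The paper's proof sets up the Pascal-type recursion $\gamma_{n,m}=\gamma_{n-1,m}+\gamma_{n-1,m-1}$ by conditioning on whether the first bit of $y$ can serve as the first element of an embedding (i.e.\ whether $y_1=x_1$), with base cases $\gamma_{n,n}=1$ and $\gamma_{n,0}=2^n$, and then solves the recursion with no binomial identity at all: since the recursion shows the count is independent of $x$, one may take $x=\texttt{0}^m$ and observe that $|\Upsilon_{n,x}|$ is just the number of $n$-bit strings with at least $m$ zeros, which is $\sum_{r=m}^n\binom{n}{r}$ on the spot. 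Your main argument instead builds the bijection $y\leftrightarrow(g_1,\dots,g_{m+1},\tau)$ through the canonical (greedy) embedding, giving $|\Upsilon_{n,x}|=\sum_{k=0}^{n-m}2^k\binom{n-1-k}{m-1}$ and then a Vandermonde-type identity; this is sound (your remark that the forced blocks $\overline{x_j}^{\,g_j}$ contain no $x_j$ even when $x_j=x_{j-1}$ is exactly the right check), and it is in fact the ``yet another proof'' the paper itself records later, in the remark following Theorem \ref{theorem:total-maximal-initials}, where the same count appears as $\sum_{\ell=m}^n\binom{\ell-1}{m-1}2^{n-\ell}$. What each buys: your bijection yields a refined decomposition by the position of the last canonically matched bit, which is precisely the structure the paper later exploits for maximal initials and clustering, at the price of an extra (routine) binomial identity and some care with the degenerate cases $m=0$, $m=n$; the paper's recursion is shorter and makes the independence-of-$x$ claim do all the work, so the closed form drops out by specializing to the all-zeros string. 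Your fallback induction (conditioning on whether $y_1\cdots y_{n-1}$ already contains $x$) simplifies to the very same recursion as the paper's, just read off the last bit rather than the first, so it is essentially the paper's argument in mirror image.
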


\begin{proof}

	$\gamma_{n,m}$ satisfies the following recursion:

	\begin{equation}
	\gamma_{n,m}=\gamma_{n-1, m}+\gamma_{n-1, m-1}
	\end{equation}
	with base cases: $\gamma_{n,n}=1$ and $\gamma_{n,0}=2^n$.

	The base cases are immediate. To see how the recursion arises, consider the following cases:

	\begin{itemize}
		\item Partition the y strings into those that have a mask overlapping the first bit of y and those that do not.

		\item For the former, we can enumerate them simply as the number of y strings of length $n-1$ with $\geq 1$ projections to the tail of x, i.e. $\gamma_{n-1,m-1}$.

		\item For the latter, the number is just that of the set of $y$ strings of length $n-1$ with $\geq1$ projection to $x$, which has length $m$, i.e. $\gamma_{n-1, m}$.
	\end{itemize}

	The solution to this recursion with the given base cases is:

	\begin{equation}
	\gamma_{n,m} = \sum\limits_{r=m}^{n} \binom{n}{r}
	\end{equation}

	This is most simply seen by observing that the recursion is independent of the exact $x$, hence we can choose the $x$ string comprising $m$ $0$s. Now we see that $|\Upsilon_{n,x}|$ is simply the number of distinct $y$ strings with at least $m$ $0$s, and the result follows immediately.

\end{proof}

If the conditional distribution over $\Upsilon_{n,x}$ given the observation of $x$ were flat, we would be done: we could compute the entropy immediately. However, it turns out the distribution is far from flat, and indeed its shape depends on the exact $x$ string. This is due to the fact that given an observed $x$, the probability that a $y$ gave rise to it is proportional to the weight of $y$, i.e. the number of ways that $y$ could project to $x$, i.e. $| \{ \pi \in \mathcal{P}([n]): y_\pi = x \} |$. This can vary between 1 and $\binom{n}{m}$.

\subsection{Shannon Entropy}

Here we will assume that the leakage is measured as the drop in the Shannon entropy of the space of possible $y$ strings. Clearly, before any observation the entropy is $n$ bits. Based on experimental data, we observe that the maximal leakage occurs when $x$ is either the all 0 or the all 1 string and we derive an expression for the corresponding entropy of $\Upsilon_{n,x}$.

\subsection{Minimal Shannon Entropy}

Assuming that the maximal leakage occurs for the all zero (or all one) $x$ string we derive the formula for the maximal leakage (minimum entropy of
$\Upsilon_{n,x}$) as follows: observe that the number of elements of $\Upsilon_{n,x}$ with $j$ 1's is $\binom{n}{j}$. Note further that for given $j$ the number of ways that a $y$ string with $j$ 1's can yield $x$ is $\binom{n-j}{m}$. Consequently, as shown in Eq. \ref{eq:subseq-prob-distribution} the probability that $y$ was a given string with $j$ 1's given the observation of $x$ is:
\begin{equation}\label{eq:max_prob}
P(y_j | x) = \frac{\binom{n-j}{m}}{\mu_{n,m}}
\end{equation}
where $\mu_{n,m}$ is the normalization, i.e. the total number of configurations that could give rise to a given $x$:
\[
\mu_{n,m}=\binom{n}{m} \times 2^{n-m}
\]
Now, inserting these terms into the formula for the Shannon entropy given in Eq. \ref{eq:shannon-entropy}, we get:

\begin{equation}
H_{n,m} = - \sum\limits_{j=0}^{n-m} \binom{n}{j} \times \frac{\binom{n-j}{m}}{\mu_{n,m}} \times \text{log}_2 \left( \frac{\binom{n-j}{m}}{\mu_{n,m}} \right)
\end{equation}

For the original cryptographic motivation of this problem, more specifically in the context of privacy amplification, it is arguably an upper bound on the maximum leakage or the amount of information that Eve has gained that we are after \cite{Bennett88}. However, it is also interesting to better understand the mean and range of the entropy for given $n$ and $m$, but coming up with analytic forms for these appears to be much harder. We switch therefore to simulations to give us a better feel for these functions.

\subsection{Minimal R\'{e}nyi Entropy}

The expression provided here is also based on the empirical results that conjecture that the minimal R\'{e}nyi entropy is attained by $0^m$ or $1^m$.

Inserting the derived expression given in Eq. \ref{eq:max_prob} corresponding to the maximal leakage into the formula of the second-order R\'{e}nyi entropy given in Eq. \ref{eq:renyi_two}, we obtain the following expression for the minimal R\'{e}nyi entropy:

\begin{equation}
R(X) = H_{2}(X) := - \text{log}_2 \sum_{j=0}^{n-m} \binom{n}{j} \cdot \left( \frac{\binom{n-j}{m}}{\mu_{n,m}} \right)^2
\end{equation}

The derived expression agrees with the experiments driven by the computer simulations presented in Section \ref{sec:simulations}.

\subsection{Min-Entropy}

The most conservative measure of information in the R\'{e}nyi family is the min-entropy, and this is of interest when it comes to privacy amplification.

This turns out to be more tractable than the Shannon entropy. In particular it is immediate that the smallest Min-Entropy is attained by the all zero or
all one $x$ strings: the largest weight of a $y$ string, and hence probability, is $\binom{n}{m}$ and this is attained by $x=0^m$ and $y=0^n$.
Thus we can derive an analytic form for the minimum Min-Entropy $H_{\infty}(X)$ by inserting the derived term for maximal probability given in Eq. \ref{eq:max_prob} into Eq. \ref{eq:min-entropy}, and thus we get:

\begin{equation}
Min(H_{\infty}(X)) := - \text{log}_2 \left( \frac{\binom{n}{m}}{\mu_{n,m}} \right)
\end{equation}

and this immediately simplifies to:

\begin{equation}
Min(H_{\infty}(X)) := n-m
\end{equation}

It is clear that this indeed corresponds to the most pessimistic bound of the leakage and can be thought of as assuming that the adversary
gets to know the exact positions of the leaked bits.

The Min-Entropy, $H_\infty(X)$, is based on the most likely event of a random variable $X$. Therefore, this term sets an upper bound on the number of leaked bits, which can be then used in the parameterization of the compression function used in privacy amplification as described in \cite{Bennett95}.

Using Eq. \ref{eq:h_relation} and  the analytic forms given above for the lower bound on the Shannon entropy as well as the min-entropy, we can effectively set loose upper and lower bounds on the R\'{e}nyi entropy.

\subsection{Maximum Entropy}

Another observation derived from empirical results obtained by simulation is that it also appears that the minimal leakage (max H) occurs when $x$ comprises alternating $0$s and $1$s, e.g. $x=101010...$, as shown in Fig. \ref{fig:renyi_orders}. We have seen that for a given $n$ and $m$, the total number of masks and the number of compatible $y$ strings are constant for all $x$ strings. Therefore, the change in entropy of the $\Upsilon$ space for different $x$ strings is solely dictated by how the masks are distributed among the compatible $y$ strings, i.e. the contribution of each $y \in \Upsilon_{n, x}$ to the total number of masks.

\section{Repercussions and Alternative Approaches}\label{sec:notes}

This section gives a brief overview of the context to which this study applies and also analyzes the presented problem from a Kolmogorov complexity point of view. We then propose an approach for estimating the expected leakage, and finally we point out a duality between our findings and similar results in the literature.

\subsection{Privacy Amplification}

PA involves a setting in which Alice and Bob start out by having a partially secret key denoted by the random variable $W$, e.g. a random $n$-bit string, about which Eve gains some partial information, denoted by a correlated random variable $V$. This leakage can be in the form of some bits or parities of blocks of bits of $W$ or some function of $W$ \cite{Bennett95}. Provided that Eve's knowledge is at most $t < n$ bits of information about $W$, i.e. $R(W|V = v) \ge n - t$, with $R$ denoting the second-order R\'{e}nyi entropy, Alice and Bob can distill a secret key of length $r = n - t - s$ with $s$ being a security parameter such that $s < n - t$. The security parameter $s$ can be used to reduce Eve's knowledge to an arbitrarily small amount, e.g. in the context of universal hash functions, it can be used to adjust the reduction size of the chosen compression function $g: \{0, 1\}^n \mapsto \{0, 1\}^r$.

The function $g$ is publicly chosen by Alice and Bob at random from a family of universal hash functions, here denoted by the random variable $G$, to obtain $K = g(W)$, such that Eve's partial information on $W$ and her complete information on $g$ give her arbitrarily little information about $K$. The resulting secret key $K$ is uniformly distributed given all her information. It is also shown by Bennett et al. in \cite{Bennett95} that $H(K | G, V = v) \geq r - 2^{-s} / \text{ln}\,2$, provided only that $R(W|V = v) \ge n - t$. The value of $s$ can be considered a fixed value and comparatively small, typically not larger than 30, as the key length increases.

It is worth noting that the measure of information used in privacy amplification for defining the bound on leakage or the minimum length of the secret key that can be extracted, may vary depending on criteria such as the algorithms used in the amplification scheme and the channel being authenticated or not. For instance, as shown in \cite{MaurerWolf97}, when randomness extractors are used instead of universal hash functions, the bound for secure PA against an active adversary is defined by the adversary's min-entropy about $W$. Various schemes for performing PA over authenticated and non-authenticated channels have been extensively studied in \cite{MaurerWolf97},\cite{CachinMaurer97},\cite{CarterWegman77}.

In QKE, privacy amplification constitutes the last sub-protocol that is run in a session and thus it takes place after the information reconciliation phase. The leakage studied in this paper deals with reduced entropy before the information reconciliation phase. However, this simply means that the leakage quantified here would in fact contribute to the $t$ bits leaked to Eve.

\subsection{Kolmogorov-Chaitin Complexity}

From a purely information theoretical point of view, quantifying the amount of information leakage in terms of various measures of entropy such as the Shannon entropy is arguably what interests us. However, from a cryptographic standpoint, a complexity analysis of exploring the search space by considering the Kolmogorov complexity, provides another perspective in terms of the amount of resources needed for describing an algorithm that reproduces a given string.

In such a context, what matters for an attacker is how efficiently a program can enumerate the elements of the search space. In other words, whether it can enumerate the space in the optimal way, to minimize the expected time to terminate successfully. To illustrate this point, consider the case of the all 0 $x$ string for which we can start with the all 0 $y$ string, then move to $y$ strings with one $1$, then two $1$s, and so on and so forth. For other generic $x$ strings, carrying out this procedure in an efficient manner becomes more involved.

\subsection{Estimating Expected Leakage}\label{subsec:estimating-expected-leakage}

Our primary goal was to compute the leakage for a given $x$ and the maximum leakage for given $n$ and $m$, however, estimating the average leakage might also be of some interest. Since an exact computation depends on a rigorous understanding of the $\Upsilon$ space and its governing probability distribution, we suggest an approach that moves the problem from the space of supersequences to that of subsequences such that further developments in the latter can enable a more fine-grained estimation of the expected leakage.

Let $Y$ be the random variable denoting the original random sequence of $n$ bits and $X$ the random variable denoting the $m$ bits of leakage from $Y$. The average leakage can be expressed in terms of the entropy of $Y$ minus the conditional entropy of $Y$ conditioned on the knowledge of $X$, i.e. $H(Y) - H(Y | X)$. While $H(Y | X)$ may seem hard to compute without the joint probability mass function of $X$ and $Y$, we can use Bayes' rule for conditional entropy \cite{cover2012elements} to reformulate the expression as follows.

\[
H(Y) - H(Y | X) = H(X) - H(X | Y)
\]

With random $Y$, $X$ is a uniformly distributed $m$-bit string and thus we have $H(X) = m$. This leaves us with $H(X | Y)$, and this reformulation allows us to define the entropy space in terms of projection weights, $\omega_{x}(y)$, assigned to the subsequences of each $Y = y$. Currently, as shown in \cite{hirschberg2000tight}, we only know the expected number of distinct subsequences given an $n$ and $t$:

\begin{equation}
E_t(n) = \sum\limits_{i=0}^{t} \binom{n-t-1+i}{i} \lambda^i.
\end{equation}
with $t$ being the number of deleted bits from the $n$-long $y$ string, i.e. $t = n-m$, and $\lambda = 1 - \frac{1}{|\Sigma|}$, which in the binary case, $\Sigma = \{0, 1\}$, would simply be $\lambda = 1 - \frac{1}{2}$. With this measure, we can get a rough estimate on the expected weight, which can then be used to estimate the average entropy, but this only gives us a very coarse-grained estimation of the expected leakage. Therefore, a better understanding of the exact number of distinct subsequences would lead to a more fine-grained estimation of the expected leakage.

\subsection{Duality: Subsequences vs. Supersequences}

An interesting observation resulting from our findings is that the two $x$ strings of interest in the space of supersequences, i.e. the all zero or all one strings (single run) $0^+|1^+$, denoted here by $\sigma$ and the alternating $x$ strings: $(\epsilon | 1)(01)^+(\epsilon | 0)$, denoted here by $\alpha$ also represent the most interesting strings in the space of distinct subsequences.

More precisely, in our study we observe that single run sequences $\sigma$ lead to the least uniform distribution of masks over the compatible supersequences, whereas the alternating sequences $\alpha$ yield the distribution of masks closest to the uniform distribution. Similarly, in the space of subsequences, $\sigma$ lead to the minimum number of $m$-long distinct subsequences and $\alpha$ generate the maximum number of $m$-long distinct subsequences.

\section{Simulations}\label{sec:simulations}

In this section we first give a brief description of how our simulator \cite{atashpendar2014qkdsimulator} carries out the experiments and then we discuss the obtained results with the help of a few plots that are aimed at describing the structure of the $\Upsilon$ spaces. We will refrain from elaborating on all the functionalities of the simulator as this would be beyond the scope of this paper. Instead, we focus on a select few sets of empirical results that were obtained from our experiments. We refer to \cite{atashpendar2014qkdsimulator} for more information and details.

The main motivation behind the computational approach driven by simulations lies in the rather complicated structure of the $\Upsilon$ spaces. As deriving analytic forms for describing the entire space seems to be hard, we rely on simulating the spaces of interest in order to explore their structure.

\subsection{Methodology}

The simulator relies on parallel computations for generating, sampling and exploring the search spaces. The experiments are carried out in two phases: First the simulator generates the $\Upsilon$ spaces that have various structures satisfying predefined constraints and then it proceeds to performing computations on the generated data sets.

The pseudo-code given in Alg. \ref{alg:entropy} provides an example that illustrates one of the main tasks accomplished by the simulator: Given an $n$ and an $x$ string, we generate the corresponding $\Upsilon$ space containing the compatible $y$ strings, compute the projection count $\omega_x(y)$ of its members, and compute its exact entropy.

\begin{algorithm}
	\caption{Compute $H_{\alpha}(\Upsilon_{n, x})$}\label{alg:entropy}
	\begin{algorithmic}[1]
		\STATE $SN \gets \text{Generate the space of bit strings of length }n$
		\STATE $\Upsilon_{n, x} \gets \text{Filter }SN \text{ and reduce it to }\{y \in \{0,1\}^n: (\exists \pi) [y_\pi=x] \}$
		\STATE $probArray \gets []$
		\FOR{$y_i$ \TO $\Upsilon_{n, x}$}
		\STATE $\omega_x(y_i) \gets computeProjectionCount(y_i, x)$
		\STATE $probArray[i] \gets \omega_x(y_i) / N$
		\ENDFOR
		\STATE $H_{\alpha} \gets computeH_{\alpha}(probArray)$
		\RETURN $H_{\alpha}$
	\end{algorithmic}
\end{algorithm}

\subsection{Results Discussion}

In this section, we present and discuss a select subset of our results with the help of plots generated by the simulator that provide a better insight into the structure of the $\Upsilon$ spaces.

As mentioned before, one of the main observations resulting from computational experiments is that the shape of the probability distributions leading to the entropy values of $x$ strings for a given $n$ and $m$, is mainly determined by how evenly the number of projecting $y$ strings are distributed across the possible projection counts for a given $n$ and $m$. This observation is illustrated in Fig. \ref{fig:projection_distribution}.

\begin{figure}
	\centering
	\includegraphics[scale=0.6]{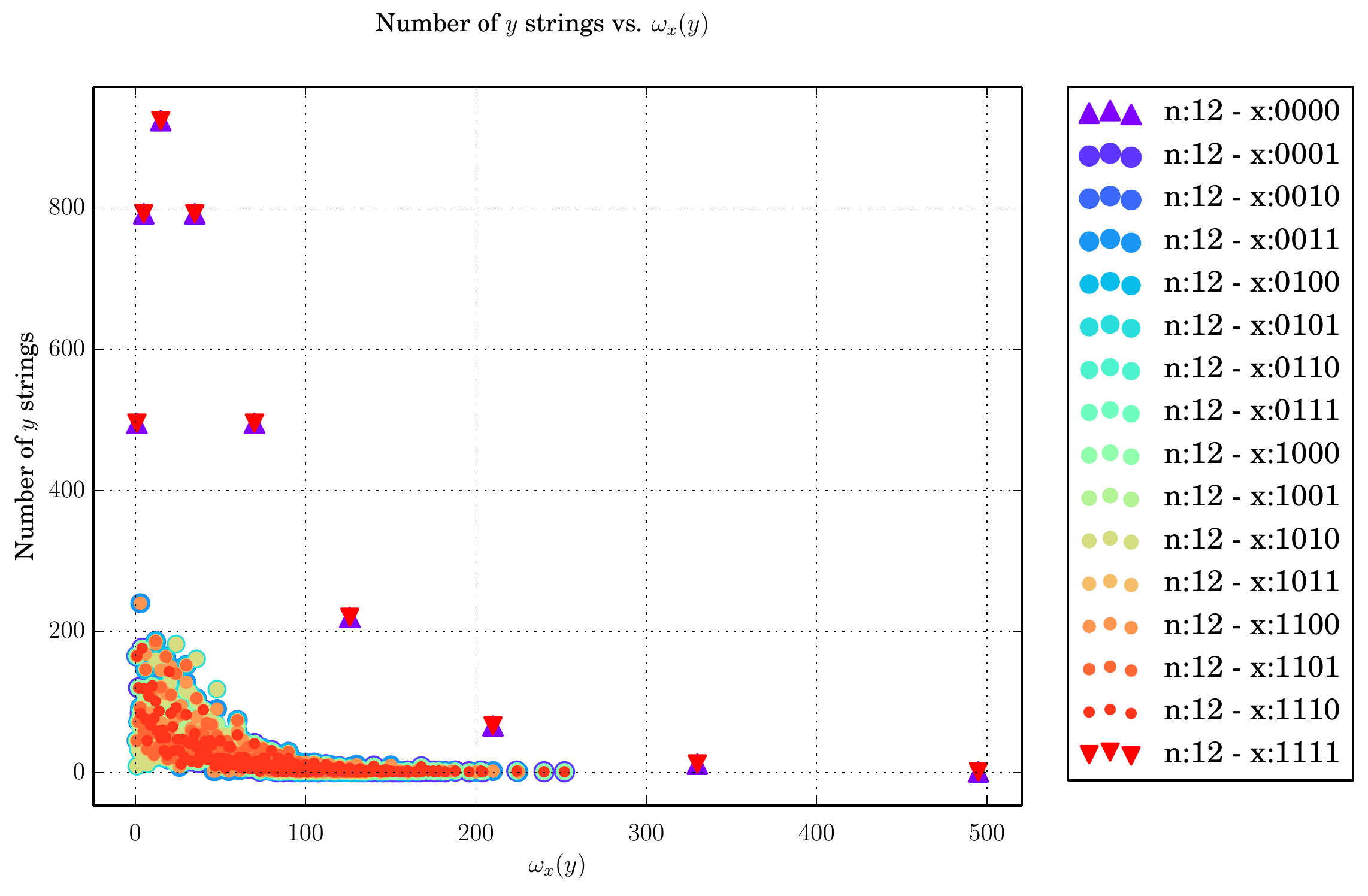}
	\caption{Count of distinct $y$ strings admitting the same $\omega_{x}(y)$}
	\label{fig:projection_distribution}
\end{figure}

Following from Theorem \ref{theorem:upsilon}, for a given $n$ and $m$, the observables computed and plotted in Fig. \ref{fig:projection_distribution} for any $x$ string satisfy the following

\begin{equation}
\sum_{i = 1}^{g(n,x)}  c_i = |\Upsilon_{n, m}|
\end{equation}

Furthermore, the sum of the product of $c_i$ and $\omega_x(y_i)$ is equal to a constant for all $x$ strings:

\begin{equation}
\sum_{i = 1}^{g(n,x)}  c_i \cdot \omega_x(y_i) = \eta_{n, m}
\end{equation}

\begin{figure}
	\centering
	\includegraphics[scale=0.6]{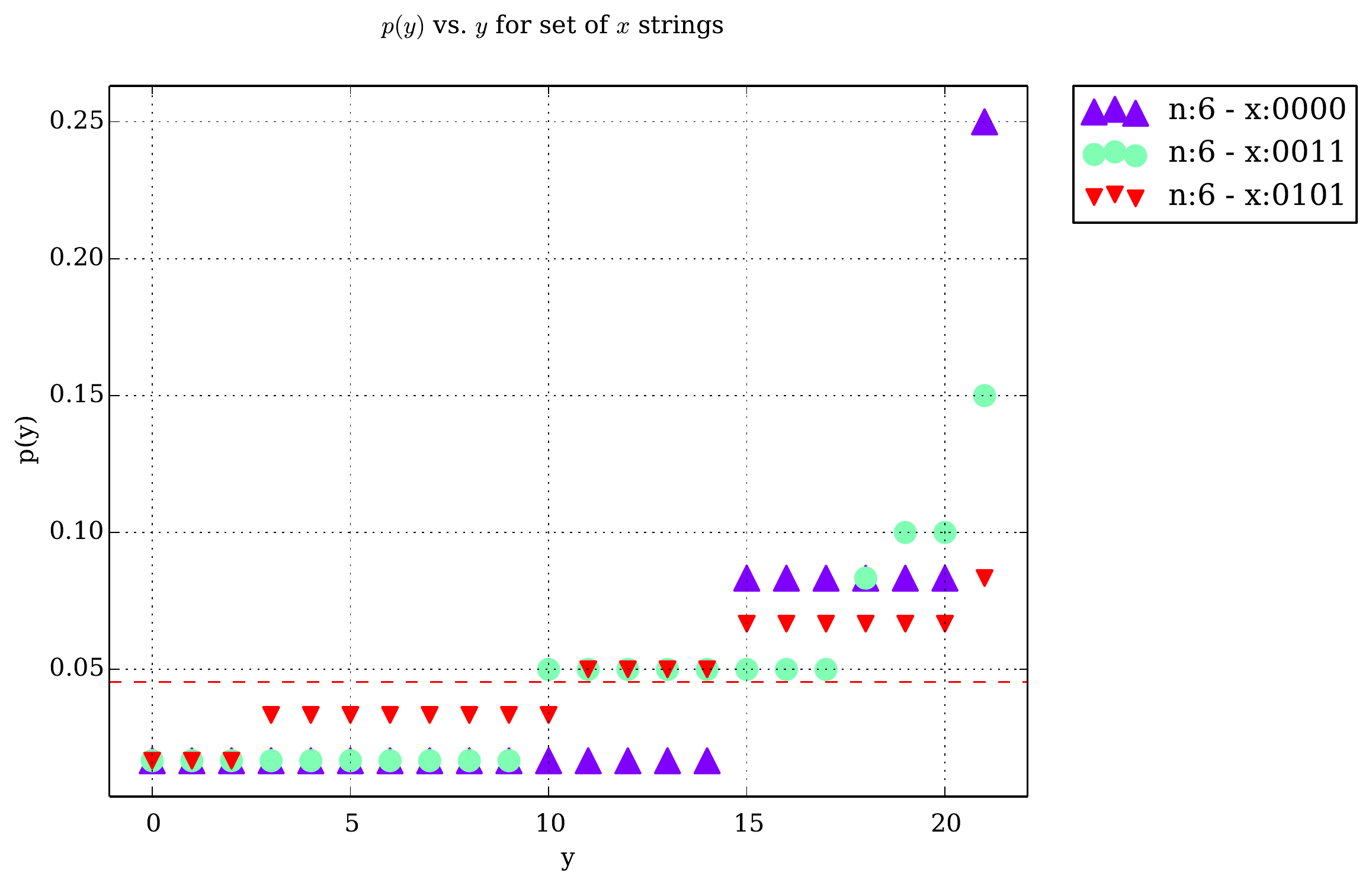}
	\caption{Probability distributions of $\Upsilon$ spaces for given $n$ and $x$ with $y$ strings enumerated by indices and the red dotted line showing the uniform distribution.}
	\label{fig:prob-dist}
\end{figure}

\begin{figure}
	\centering
	\includegraphics[width=0.8\textwidth]{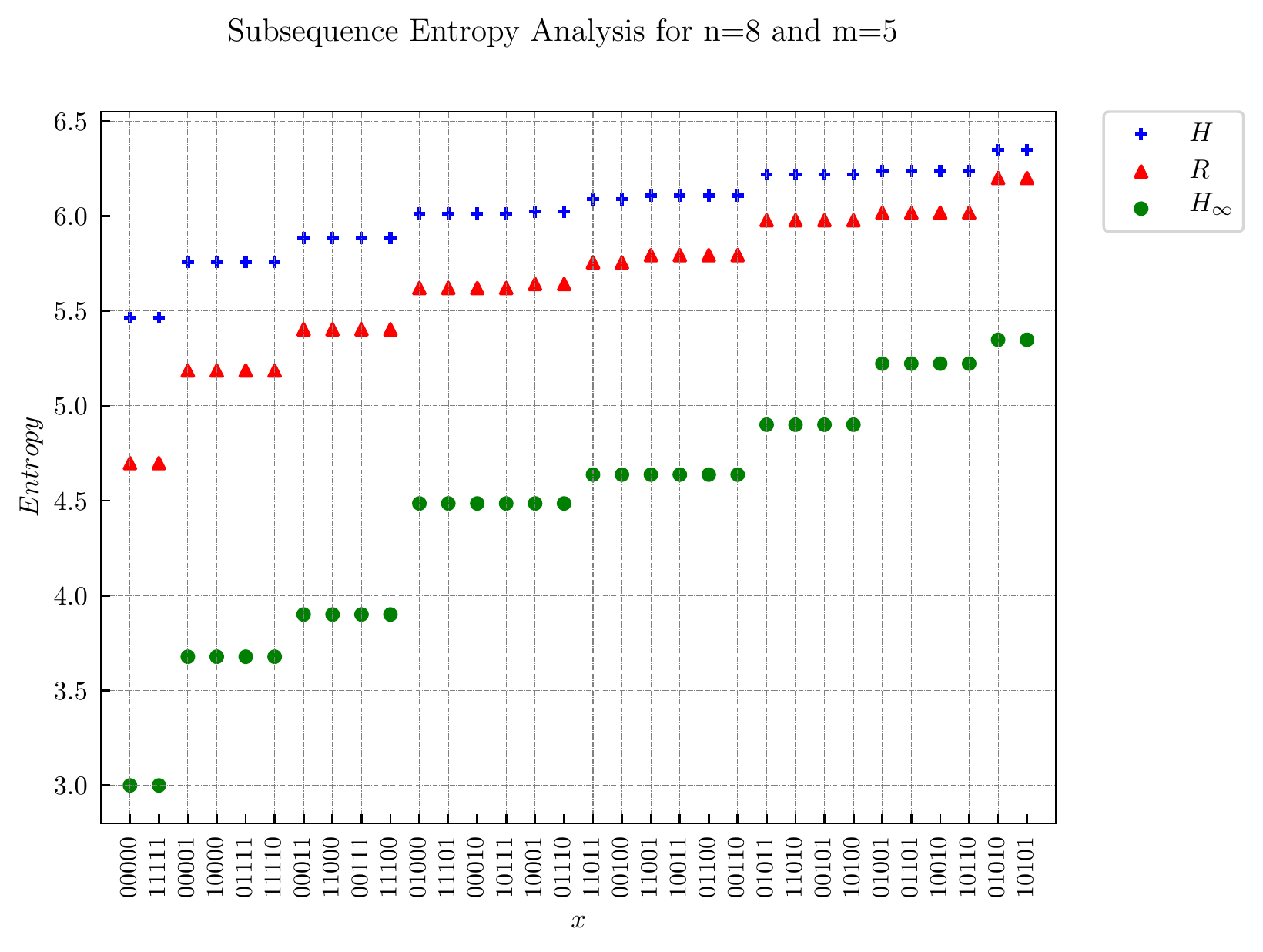}
	\caption{$H$ (Shannon), $R_2$ (second-order R\'{e}nyi entropy) and $H_\infty$ (min-entropy) vs. $x$ strings for $n=8$ and $m=5$.}
	\label{fig:renyi_orders}
\end{figure}

With $c$ denoting the values on the y-axis, i.e. the count of $y$ strings projecting $\omega_x(y)$ times, and with $\omega_x(y)$ denoting the number of distinct ways that $y$ can project onto $x$, and finally with $\eta_{n, m}$ being a constant for any given $n$ and $m$ and $g(n,x)$ being a function of $n$ and $x$ that denotes the number of data points corresponding to the distinct count of $y$ strings that have the same $\omega_{x}(y)$.

This means that for a given $n$ and $m$, the total number of projection counts in the corresponding $\Upsilon$ space is independent of the $x$ strings. We can see that for the $x$ strings yielding the maximum amount of leakage, i.e. $x= 1^m|0^m$, the lower number of data points is compensated by larger values for the distinct number of $y$ strings admitting larger projection count values, hence showing a much more biased structure in the distribution with respect to generic $x$ strings. Conversely, the distributions for the remainder of the $x$ strings are considerably dampened and noticeably closer to a flat distribution and are thus less biased compared to $x= 1^m|0^m$ strings, which in part explains the correspondingly higher entropy values. In particular, the alternating ones and zeros string admits the highest degree of dispersion in terms of the distribution of the masks and thus yields the lowest entropy.

The resulting probability distributions leading to the computed entropy values are illustrated in Fig. \ref{fig:prob-dist}. An immediate observation is that the distribution of the projecting $y$ strings for the $0^m$ or $1^m$ strings has the largest outliers. However, this alone does not capture the role of the shape of the probability distribution. Therefore, one could argue that the probability distribution that admits the largest Kullback-Leibler distance from the uniform distribution, i.e. the most biased distribution, yields the lowest entropy, and the conjecture that we put forth is that this distribution is given by the all 0 or all 1 $x$ strings.

The plot shown in Fig. \ref{fig:renyi_orders}, illustrates three measures of entropy, namely the Shannon entropy ($H$), the second-order R\'{e}nyi entropy ($R$) and the min-entropy ($H_\infty$) as a function of $n$ and $m$ for all the $2^m$ $x$ strings for $n=8$ and $m=5$. The presented empirical results validate our conjecture that the all zero or the all one strings yield the minimum entropy and that the alternating zeros and ones string gives the maximum entropy.

\section{Conclusions}\label{sec:chp-leakage-conclusions}

We have described an information theory problem that arose from some investigations into quantum key establishment
protocols. As far as we are aware, the problem, despite its seeming to be very natural and simple to state, has not been
investigated in the mathematical literature. We have shown that the maximum leakage, measured in terms of the drop
in the entropy of the space of compatible $y$ strings, corresponds to the all zero or all one observed strings.

We have presented analytic forms for the Shannon entropy, the second-order R\'{e}nyi entropy, and the min-entropy for these cases. Moreover, we have discussed the relevance of these measures specifically in the context of privacy amplification in QKE protocols. We have also noted that the simulations suggest that the minimal leakage corresponds to the $x$ strings comprising alternating zeros and ones. Moreover, we pointed out an interesting duality between our results and existing results in the literature for the space of subsequences. We have also described a simulation program to explore these results, which is available at \cite{atashpendar2014qkdsimulator}.

\stopminichaptoc{\minichaptocenabled}

\chapter{Combinatorial Structures for Binary Sequences}\label{chp:combinatorial-structures}

\printminichaptoc{\minichaptocenabled}

\section{Introduction}

In this chapter, we study several closely-related counting problems involving (super/sub)-sequences. In the context of the original entropy extremization analysis, the counting problems studied in this section are motivated by the need for gaining a better understanding of the combinatorial objects and structures involving supersequences that exhibit specific properties with respect to a fixed subsequence. Indeed, the quantities of interest in the entropy problem are precisely determined by the number of supersequences that admit a certain embedding weight for a fixed subsequence. Thus, the results in this section are aimed at providing more insight into related combinatorial objects, with similar techniques used in \Cref{sec:entropy-minimization} to cluster supersequences admitting specific weights in order to establish the entropy minimization case.

To this end, we first provide a characterization of the number of subsequence embeddings based on a run-length encoding of strings used for identifying deletion patterns that simplify the counting problem to a sequential mapping of runs from $x$ strings to $y$ strings. We then describe two different ways of clustering the space of supersequences and prove that their cardinality depends only on the length of the received subsequence and its Hamming weight, but not its exact form. We then consider the problem of counting \emph{singletons}, that is, supersequences that admit only a single embedding of $x$. We provide a closed form expression for enumerating singletons using the same run-length encoding and prove an analogous result for the minimization and maximization of the number of singletons, by the alternating and the uniform strings, respectively.

\subsection{Results and Structure}

In \Cref{sec:subseq-count-formalism}, we first present an algorithm based on a run-length encoding of strings for counting the number of embeddings of $x$ into $y$ as a subsequence. We then explore counting problems and clustering techniques in the space of supersequences including an analysis of a class of supersequences, referred to as singletons, that admit exactly a single embedding of a given subsequence and prove similar extremization results for their count.

More precisely, in Section \ref{sec:max-initials-to-hamming-clusters}, we show how similar to the way the cardinality of the set of supersequences that can project to a given subsequence, i.e., $|\Upsilon_{n,x}|$, depends only on their respective lengths, we prove that the number of supersequences that admit an initial embedding of a subsequence such that the last index of their initial embedding overlaps with their last bit, also depends only on $|y|=n$ and $|x|=m$. We then describe two clustering techniques that give rise to subspaces in $\Upsilon_{n,x}$ whose sizes depend only on $n, m$ and the Hamming weight of $x$, but not the exact form of $x$. We derive analytic expressions in Section \ref{sec:closed-form-cluster}, as well as a recurrence in Section \ref{sec:recurrence-cluster}, for the cardinality of these sets. The approach and methodology used for deriving our clustering results depend heavily on the notion of initial or canonical embeddings of subsequences in their compatible supersequences, which provide further insight into the importance of initial embeddings.

Finally, in Section \ref{sec:singletons}, we consider the problem of enumerating supersequences that admit exactly a single occurrence of a subsequence, referred to as singletons, and give an analytic expression for their count. Furthermore, we prove a similar result for the maximization and minimization of the number of singletons by the constant and alternating strings, respectively.

\section{Clustering and Counting Binary Sequences}\label{sec:counting-supersequences}

\textbf{Counting multisets:} Throughout, we use the combinatorics of counting multisets, also referred to as the method of stars and bars, to enumerate all possibilities for placing $n$ indistinguishable objects into bins marked by $m$ distinguishable separators such that the resulting configurations are distinguished only by the number of objects present in each bin, which is given by $\binom{n+m-1}{n}$.

\begin{itemize}
\item The number of $m$-tuples of non-negative integers that sum to $n$ is equal to the number of multisets of cardinality $m-1$ taken from a set of size $n+1$, which is given by
\[
\binom{n+m-1}{n}
\]
\item Graphical illustration:
\[
\bullet \bullet \bullet | \bullet | \bullet |
\]
\[
\bullet | \bullet \bullet | \bullet \bullet |
\]
\end{itemize}
The stars and bars technique computes the combinations of placing $n$ indistinguishable stars into bins separated by $m$ distinguishable bars. The resulting configurations are distinguished by the number of stars present in each bin.

\subsection{Compatible Supersequences and Subsequence Embeddings}

Recall that for a fixed subsequence $x$ of length $m$, we consider the set of $y$ strings of length $n$ ($n \ge m$), referred to as compatible supersequences, that can contain $x$ as a subsequence embedding. The set of compatible supersequences is denoted by $\Upsilon_{n,x}$. It is known that the cardinality of $\Upsilon_{n,x}$ is independent of the form of $x$ and that it is only a function of $n$ and $m$.
\begin{equation}
|\Upsilon_{n,x}| = \sum_{r=m}^n \binom{n}{r}
\end{equation}
We provided an alternative proof for this based on a simple recursion in Theorem \ref{theorem:upsilon} in Chapter \ref{chp:information-leakage} and \cite{atashpendar2015information}. The original motivation for the clustering scheme presented here was to have a more fine-grained view of the distribution of masks in the space of supersequences. This approach led to the discovery of similar structures in $\Upsilon_{n,x}$, in that their cardinality does not depend on the form of $x$, analogous to how $|\Upsilon_{n,x}|$ depends only on $n$ and $m$.

\section{Counting Embeddings via Runs}\label{sec:subseq-count-formalism}

Efficient dynamic programming algorithms for computing the number of subsequence embeddings are known in the literature, e.g., a recursive algorithm requiring $\Theta(n \times m)$ operations \cite{elzinga2008algorithms}. Here we provide an alternative algorithm, which is primarily based on the run-length encoding of strings.

Using the RLE notation, there are a few cases in which this question is easy to answer. For instance, if $y = (a;k_1, \dotsc, k_\ell)$ and $x = (a;k_1', \dotsc, k_\ell')$, with the same value of $\ell$, i.e., we have the same number of blocks in $x$ and $y$, then it is easy to see that there is a one-to-one sequential mapping of blocks between $x$ and $y$. This allows us to enumerate the different masks depending on how they map the blocks to each other as follows:
\begin{equation}\label{eq:embedding-simple-case}
\omega_x(y) = \prod_{i=1}^\ell \binom{k_i}{k'_i}.
\end{equation}
However, in the general case, the number of blocks in $x$ and $y$ can be different. If $y= (a;k_1, \dotsc, k_\ell)$ and $x= (\overline{a};k'_1, \dotsc, k'_{\ell'})$ do not start with the same character, we have to delete the first block to recover the case $y= (a;k_2, \dotsc, k_{\ell})$, and $x= (a;k'_1, \dotsc, k'_{\ell'})$. We will now suppose that $x$ and $y$ start with the same character.

Here we describe an algorithm wherein for a fixed pair of $x$ and $y$ strings, we structure and enumerate the corresponding space of masks by accounting for the number of different ways we can delete characters in order to merge blocks/runs such that we can recover the simple case given in \Cref{eq:embedding-simple-case}. In the more general case, let $y=(k_1,  \dotsc, k_{\ell})$ and $x=(k'_1, \dotsc, k'_{\ell'})$.

\begin{definition}
Let $S$ be the set of maps $f: [\ell'] \to [\ell]$ that satisfy the following properties: $f$ is strictly increasing and $f(i) \equiv i \bmod 2$. A function $f$ will define a subset of masks, by specifying blocks that will have to be completely deleted. We group the masks according to a set of functions $f$ that map indexes of blocks of $x$ to indexes of blocks of $y$. Intuitively, $f$ maps the $i$-th block of $x$ to the block of $y$ that contains the last character of the $i$-th block of $x$.
Therefore, all blocks of $y$ between $f(i) +1 $ and $f(i)$ that are not composed of matching characters have to be deleted such that we can recover the simple case in \Cref{eq:embedding-simple-case}.
\end{definition}
For the subsequent analysis, recall that $k_i$ denotes the length of the run at index $i$, whereas $k^*_i$ refers to the actual set of indexes of the $i$-th run.
\begin{definition}
Let $k^*_{i}$ denote the set of indexes belonging to the $i$-th block of $y$, i.e.,
\[\{ \sum_{j=1}^{i-1} k_j, \sum_{j=1}^{i-1} k_j + 1, \dotsc, \sum_{j=1}^{i} k_j \},
\]
and $F^{*}(i) = \{f(i-1) + 2, f(i-1) + 4, \dotsc, f(i) - 1\}$ and $\overline{F^*(i)} = \{f(i-1) + 1, f(i-1) + 3, \dotsc, f(i)\}$, then a deletion mask $\delta$ corresponds to $f$ if:
\begin{itemize}
\item $\forall i: \cup_{j \in F^*(i)} k^*_{j} \subset \delta$
\item $\forall i: k^*_{f(i)} \not\subset \delta$ (this allows us to have a partition)
\end{itemize}
We call $\omega_f$ the set of masks corresponding to $f$.
\end{definition}
\begin{theorem}
The family $(\omega_f)_{f \in S}$ defines a partition on the set of masks from $y$ to $x$.
\end{theorem}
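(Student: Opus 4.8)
The plan is to identify, for each deletion mask $\delta$ from $y$ to $x$, the unique $f\in S$ to which it should belong by reading it off the embedding, and then to verify that membership is indeed unique. Fix the reduced situation in which $y$ and $x$ begin with the same character, write $y=(k_1,\dots,k_\ell)$, $x=(k'_1,\dots,k'_{\ell'})$, and recall that block $j$ of $y$ and block $i$ of $x$ carry the same character exactly when $j\equiv i\pmod 2$; write $c_i$ for the character of the $i$-th block of $x$. Adopt the convention $f(0)=0$. For a mask $\delta$, let $\Phi(\delta)(i)$ be the index of the block of $y$ containing the position to which the last character of the $i$-th block of $x$ is matched. I will prove
\[
\delta\in\omega_f \iff f=\Phi(\delta)
\]
for every mask $\delta$ and every $f\in S$, which is precisely the assertion that the sets $\omega_f$ partition the mask set (some parts possibly empty): each $\delta$ lies in $\omega_{\Phi(\delta)}$ and in no other $\omega_f$.

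First I would check that $\Phi$ takes values in $S$ and that $\delta\in\omega_{\Phi(\delta)}$. Writing $f=\Phi(\delta)$: since the matching is order-preserving, the last character of block $i$ of $x$ is matched before that of block $i+1$, and these two characters are distinct (consecutive blocks of $x$ alternate), so they cannot lie in the same block of $y$, whence $f(i)<f(i+1)$; and a position carrying $c_i$ lies in a block of $y$ of parity $i$, so $f(i)\equiv i\pmod 2$. For $\delta\in\omega_f$: block $f(i)$ contains a matched position by construction, so $k^*_{f(i)}\not\subseteq\delta$; and any block $j\in F^*(i)$ has parity $i-1$, hence carries $c_{i-1}\neq c_i$, so a surviving position in such a $j$ would appear in $y_\delta=x$ strictly after the last character of block $i-1$ of $x$ and strictly before the last character of block $i$ of $x$, which is impossible since consecutive blocks of $x$ carry different characters — the same argument shows that all blocks beyond $f(\ell')$ are deleted. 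Hence $\delta\in\omega_f$.

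The converse, $\delta\in\omega_f\Rightarrow f=\Phi(\delta)$, is the heart of the proof and the step I expect to be hardest. Membership in $\omega_f$ forces every block of $\bigcup_i F^*(i)$, together with every block beyond $f(\ell')$, to be deleted, so all surviving positions of $\delta$ lie in the blocks of $\bigcup_i\overline{F^*(i)}$. Every block of $\overline{F^*(i)}$ has parity $i$, so its survivors all carry $c_i$, and they are consecutive in $y_\delta$ (the only blocks interleaved among those of $\overline{F^*(i)}$ are the deleted ones of $F^*(i)$, and $\overline{F^*(i)}$ is immediately followed by $\overline{F^*(i+1)}$). Thus $y_\delta$ is the concatenation over $i=1,\dots,\ell'$ of a possibly empty $c_i$-run drawn from $\overline{F^*(i)}$; comparing with $x=c_1^{k'_1}\cdots c_{\ell'}^{k'_{\ell'}}$, which has exactly $\ell'$ runs, forces every such run to be nonempty and to coincide with the corresponding block of $x$ (an empty $\overline{F^*(i)}$ would merge two equal-character runs or delete the first or last run of $x$). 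Consequently the last character of block $i$ of $x$ is matched inside $\overline{F^*(i)}$, and the condition $k^*_{f(i)}\not\subseteq\delta$ says that the rightmost block of $\overline{F^*(i)}$ — which is $f(i)$ — still carries a survivor, so that survivor is the last one: $\Phi(\delta)(i)=f(i)$ for all $i$.

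Combining the two directions yields the equivalence, hence the partition. The points requiring care in a full write-up, beyond the parity/character bookkeeping, are the run-matching step — in particular the degenerate cases $i=1$ and $i=\ell'$ and the emptiness of the tail, where one must check that an empty $\overline{F^*(i)}$ genuinely destroys a run of $x$ — and making precise the intuitively clear but easily skipped fact that membership in $\omega_f$ leaves no surviving position beyond block $f(\ell')$, which is what guarantees disjointness.
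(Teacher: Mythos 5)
Your architecture --- prove $\delta\in\omega_f$ if and only if $f=\Phi(\delta)$, where $\Phi(\delta)(i)$ is the block of $y$ receiving the last character of the $i$-th block of $x$ --- reuses the paper's coverage argument essentially verbatim: $\Phi$ is exactly the map $f(i)=\phi\bigl(\pi_{\sum_{j\leq i}k'_j}\bigr)$ the paper constructs, and your check that $\Phi(\delta)\in S$ and $\delta\in\omega_{\Phi(\delta)}$ is its second half. Where you differ is disjointness: the paper compares two distinct $f,f'$ at their first disagreement and exhibits a block that must be wholly deleted under one and must survive under the other, whereas you aim for the stronger uniqueness statement $\delta\in\omega_f\Rightarrow f=\Phi(\delta)$. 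That is a legitimate, and in principle cleaner, route.

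However, its opening step is a genuine gap, not a deferrable detail. You assert that membership in $\omega_f$ forces every block of $y$ beyond $f(\ell')$ to be deleted; this is not one of the two conditions defining $\omega_f$, and it does not follow from them even together with $y_\delta=x$. Concretely, take $y=\texttt{0100}=(0;1,1,2)$, $x=\texttt{00}=(0;2)$ and $\delta=\{2,4\}$, so that $y_\delta=x$. For $f$ with $f(1)=1$ we have $F^*(1)=\emptyset$ and $k^*_1=\{1\}\not\subset\delta$, so both defining conditions hold; yet the last character of $x$ is matched in block $3$, so $\Phi(\delta)=(3)\neq f$, and $\delta$ also satisfies the conditions for $f'(1)=3$. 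Thus, with the conditions as displayed, your implication fails and so does disjointness of the $\omega_f$ themselves; the statement only becomes true once ``all blocks of $y$ after $f(\ell')$ are entirely deleted'' is built into the definition of $\omega_f$ --- which is what the weight formula $|\omega_f|$ and the remark that the $F^*(i)$ and $\overline{F^*(i)}$ partition $[\ell]$ tacitly presuppose (and, to be fair, the paper's own pairwise argument, which appeals to $f(i+1)$, silently skips exactly the case where the first disagreement occurs at $i=\ell'$). You correctly identified the tail as the delicate point, but it cannot be ``made precise'' from the stated conditions; it must be added to them, and your subsequent run-matching argument, which is otherwise sound, rests entirely on it.
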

\begin{proof}
We first show that for $f \neq f' \in S$, every deletion mask $\delta$ corresponding to $f$ is different from every mask $\delta'$ associated with $f'$ (i.e., $\omega_f \cap \omega_{f'} = \emptyset$).
Since $f \neq f'$, we have a \emph{smallest} integer $i \in [\ell]$ such that $f(i) \neq f'(i)$. We assume without loss of generality that $ f'(i-1)=f(i-1)<f(i) < f'(i)$. Due to the condition on parity, $f(i+1) \neq f'(i)$.
We distinguish between two cases:
\begin{itemize}
\item If $f'(i-1)<f(i+1)<f'(i)$, then $k_{f(i+1)} \not\subset \delta$, and $k_{f(i+1)} \subset \delta'$ since $f(i+1) \in F'^*(i-1)$.

\item Conversely if $f(i-1)<f'(i)< f(i+1)$, then $k_{f'(i)} \not\subset \delta'$, and $k_{f'(i)} \subset \delta$ since $f'(i) \in F^*(i+1)$.
\end{itemize}
Therefore, we have $\delta \neq \delta'$. We now show that $\cup_{f \in S}\omega_f$ is the set of masks from $y$ to $x$. We will use projection masks here as they are more suitable for this proof. Let $\pi$ be a projection mask such that $y_{\pi} = x$. We let $\pi = \{\pi_1, \dotsc, \pi_m\}$, where the $\pi_i$ are in increasing order. Therefore, we have for all $i$, $y_{\pi_i} = x_i$. We define $\phi :[n] \rightarrow [\ell]$ to be the mapping that takes an index of $y$ and returns the index of the block/run it belongs to, i.e., $\phi(a)$ returns the smallest $i$ such that $\sum_{j=1}^i k_j \geq a$.
We define $f$ such that $f \in S$ and $\pi$ is in $\omega_f$, by $f(i) = \phi( \pi_{\sum_{j=1}^i k'_j})$. To prove that $f$ is in $S$, note that given $i$:
\begin{itemize}
\item We have that $f(i) \leq f(i+1)$ since the $\pi_i$ are in increasing order.
\item Moreover, $\pi_{\sum_{j=1}^i k'_j}$ and $\pi_{\sum_{j=1}^{i+1} k'_j}$ correspond to indexes (of $y$) of opposite letter (if the first one is a $1$, the second is a $0$ and vice versa) since $\sum_{j=1}^i k'_j$  and $\sum_{j=1}^{i+1} k'_j$ correspond to indexes (of $x$) of opposite letter.
\end{itemize}
Therefore, $f(i)$ and $f(i+1)$ are of opposite parity and $f(i) < f(i+1)$.

We now prove that $\pi$ corresponds to $f$. For a fixed $i \in [\ell]$, let $k^*_{f(i-1)} = b^{k_{f(i-1)}}$, i.e., the $f(i-1)$-th block of $y$ is made of letters $b$. Therefore, $k^*_{t} = b^{k_t}$ for $t \in F^*(i)$, since $t$ has the same parity as $f(i-1)$. Moreover, we have $b = x_{\sum_{j=1}^{i-1} k'_j}$ according to the definition of $f$. So for every index $h$ between $\sum_{j=1}^{i-1} k_j +1$ and $\sum_{j=1}^{i} k_j$, $x_h = y_{\pi_h} = \overline{b}$, and for $t \in F^*(i)$, we have $k_t^* \cap \pi = \emptyset$ (equivalently with the deletion mask $\delta$, $k_t^* \subset \delta$). By definition, $\pi_{\sum_{j=1}^{i} k_j} \in k_{f(i)}^*$ so $\pi \cap k_{f(i)}^* \neq \emptyset$ (equivalently with the deletion mask $\delta$, $k_{f(i)}^* \not\subset \delta$).
\end{proof}

\begin{theorem}
We have for $f\in S$, given by
\begin{equation}
	|\omega_f| = \prod_{i=0}^{\ell} \binom{\sum_{j \in \overline{F^*(i)}}k_j }{k'_i} - \binom{\sum_{j \in \overline{F^*(i)}\setminus \{f(i)\}}k_j }{k'_i}
\end{equation}
\end{theorem}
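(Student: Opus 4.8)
The plan is to build an explicit bijection between $\omega_f$ and a Cartesian product of ``local'' choices, one factor for each block of $x$, and then read off the cardinality by a one-line inclusion--exclusion applied to each factor. The starting point is to record what a mask $\delta\in\omega_f$ looks like. From the two defining conditions of ``$\delta$ corresponds to $f$'' together with $y_\delta=x$, every block of $y$ whose index lies in some $F^*(i)$, and every block of index exceeding $f(\ell')$, is deleted in its entirety; hence all surviving symbols of $y$ lie in the blocks indexed by $\bigcup_i\overline{F^*(i)}$, and the index set $[\ell]$ of $y$'s blocks splits into $\bigcup_i\overline{F^*(i)}$ (where survivors live), $\bigcup_i F^*(i)$, and the tail. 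Because $f(i)\equiv i\bmod 2$, every symbol of $y$ in the blocks $\overline{F^*(i)}$ equals the $i$-th letter of $x$; the same-letter blocks making up $\overline{F^*(i)}$ are separated only by blocks of $F^*(i)$, which are fully deleted, so their survivors coalesce into a single run; and consecutive sets $\overline{F^*(i)}$ carry opposite letters. Consequently $y_\delta=x$ holds precisely when $\delta$ omits exactly $k'_i$ symbols from the blocks $\overline{F^*(i)}$ for every $i$. Finally, the recovery rule $f(i)=\phi(\pi_{\sum_{j\le i}k'_j})$ from the preceding proof says that $\delta$ lands in $\omega_f$ rather than in some $\omega_{f'}$ exactly when, for each $i$, at least one of those $k'_i$ surviving positions lies in block $f(i)$ of $y$ (equivalently $k^*_{f(i)}\not\subset\delta$).

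This reduces the problem to counting, independently for each $i$, the subsets $P_i$ of the $\sum_{j\in\overline{F^*(i)}}k_j$ positions contained in the blocks $\overline{F^*(i)}$ with $|P_i|=k'_i$ and $P_i\cap k^*_{f(i)}\neq\emptyset$ (these being the positions that survive and spell the $i$-th block of $x$). By inclusion--exclusion on the event ``$P_i$ avoids block $f(i)$'', the number of admissible $P_i$ is $\binom{\sum_{j\in\overline{F^*(i)}}k_j}{k'_i}-\binom{\sum_{j\in\overline{F^*(i)}\setminus\{f(i)\}}k_j}{k'_i}$, and since the index sets $\overline{F^*(i)}$ are pairwise disjoint these choices combine freely, so $|\omega_f|=\prod_i\Big(\binom{\sum_{j\in\overline{F^*(i)}}k_j}{k'_i}-\binom{\sum_{j\in\overline{F^*(i)}\setminus\{f(i)\}}k_j}{k'_i}\Big)$, which is the stated identity (the product running over the blocks of $x$).

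Concretely I would proceed in three steps: (i) assemble the intrinsic description of $\omega_f$ above, most of which is already contained in the definition of ``$\delta$ corresponds to $f$'' and in the surjectivity part of the previous theorem's proof; (ii) show that the map $\delta\mapsto(P_1,\dots,P_{\ell'})$ is a bijection onto $\prod_i\{P\subseteq\bigcup_{j\in\overline{F^*(i)}}k^*_j:\ |P|=k'_i,\ P\cap k^*_{f(i)}\neq\emptyset\}$; (iii) count each factor and multiply. The main obstacle is the forward direction of step (ii): one must verify that \emph{every} tuple of local choices reassembles a genuine mask with $y_\delta=x$, i.e.\ that the reconstructed string has exactly the runs of $x$ --- no stray extra run coming from the tail, no accidental splitting or merging of runs inside a single $\overline{F^*(i)}$, and non-empty runs throughout (guaranteed by $k'_i\ge1$). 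These are precisely the adjacency-and-parity considerations already rehearsed in the proof that $(\omega_f)_{f\in S}$ is a partition, so the argument should carry over with little new work, leaving only the elementary binomial bookkeeping.
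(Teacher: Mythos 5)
Your proposal is correct and follows essentially the same route as the paper: after the forced deletion of the blocks in each $F^*(i)$, the surviving blocks of $y$ merge so that the counting becomes the same-number-of-blocks case, and within each merged block one chooses $k'_i$ positions while subtracting the choices that avoid block $f(i)$ entirely (the condition $k^*_{f(i)}\not\subset\delta$). Your write-up merely makes the underlying bijection and the per-block inclusion--exclusion more explicit than the paper's terse argument, so no further comparison is needed.
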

\begin{proof}
Upon the deletion induced by $f$, we obtain a string of the form
\[
(\sum_{j \in \overline{F^*(1)}}k_j, \dotsc , \sum_{j \in \overline{F^*(\ell)}}k_j).
\]
Therefore, we have the same number of blocks in both the $y$ string as well as the $x$ string, and the number of masks can be computed easily as shown in \Cref{eq:embedding-simple-case}. We first count the number of ways to choose $k_i'$ elements from $k_{\overline{F^*(i)}}$ and then subtract the number of combinations not using any of the $k_{f(i)}$.
\end{proof}
\begin{remark}
We can note that $\overline{F^*(i)}$ and $F^*(i)$ form a partition of $[\ell]$.
\end{remark}
Following from the preceding theorems, the total number of masks can be computed as follows
\begin{equation}
	\omega_x(y) = \sum_{f \in S} |\omega_f|.
\end{equation}
By summing over all $f \in S$, we get the total number $|\omega|$ of compatible masks. Note that it may happen that $|\omega_f| = 0$; this happens when we try to trace a large block of $x$ from a smaller block of $y$.

\textbf{The Set $S$:} We now determine the size of $S$, as a function of $\ell$ and $\ell'$. Let this size be denoted by $\sigma(\ell',\ell)$. We denote $u = \lfloor (\ell-\ell')/2 \rfloor$.
If $f(1) = 1$, then we get $\sigma(\ell'-1,\ell-1)$; if $f(1) =3$, we get $\sigma(\ell'-1,\ell-3)$, etc. We also know that $\sigma(x,x) = 1$ for all $x$, and that $\sigma(x,y) = 0$ for all $x$, $y$ such that $y<x$.
We therefore get the following recurrence:
\begin{equation*}
	\sigma(\ell',\ell) = \sum_{i=0}^{u} \sigma(\ell'-1,\ell-1-2i)
\end{equation*}
Iterating this recursion, we get
\begin{equation*}
\sigma(\ell',\ell) = \sum_{i=0}^{u}\sum_{j=0}^{u -i} \sigma(\ell'-2,\ell-2-2i-2j),
\end{equation*}
and grouping the terms yields
\begin{equation*}
	\sigma(\ell',\ell) = \sum_{i=0}^{u} (i+1)\sigma(\ell'-2,\ell-2-2i).
\end{equation*}
We now describe a direct combinatorial argument which gives a closed form formula for $\sigma(\ell',\ell) = |S|$. First note that if $\ell\not\equiv \ell'$ mod 2 then $\ell$ cannot be in the image of $f$. So let $\tilde{\ell}=\ell$ if $\ell\equiv \ell'$ mod 2 and $\tilde{\ell}=\ell-1$ if not. Now the problem is to choose $[\ell']$ elements from $[\tilde{\ell}]$ such that all the gaps have even width. Equivalently, we are interleaving the $\ell'$ chosen elements with $u=(\tilde{\ell}-\ell')/2$ gap-segments of width 2. The number of ways to do this is plainly
\begin{equation}
|S|=\sigma(\ell', \ell)=\binom{\ell'+u}{u}.
\end{equation}

\begin{example}
For	$y=	\texttt{0000111100001111}$ and $x=\texttt{0011}$, we obtain $\omega_y(x)=300$. We now compute the number of embeddings using the run-based algorithm described above. We have $\ell=4$, $\ell'=2$ and $u=(\tilde{\ell}-\ell')/2$, which means the size of $S$ is $|S|=\sigma(\ell',\ell)=\binom{l'+u}{u}=\binom{2+1}{1}=3$. The three deletions $S=\{f_1, f_2, f_3\}$ are computed as follows: $y_{f_1} = (k_1, k_2) = \texttt{00001111}$, which amounts to $\omega_{f_1}=\binom{4}{2}\binom{4}{2}=36$. Similarly, for $f_2$ and $f_3$, we get $y_{f_1} = (k_1 + k_3, k_4) = \texttt{000000001111}$ and $y_{f_1} = (k_1, k_2 + k_4) = \texttt{000011111111}$, the two of which add up to $2\times\big(\binom{8}{2}\binom{4}{2}-\binom{4}{2} \big) = 2 \times 132 = 264$. So the total is $\omega_y(x) = \sum_{f \in S} \Omega_f = 36+132+132=300$.
\end{example}

\section{From Maximal Initials to Hamming Clusters}\label{sec:max-initials-to-hamming-clusters}
\begin{definition}
	Let $\Upsilon_{n,x}^c$ be the cluster of supersequences that have $c$ extra \texttt{1}'s with respect to $x$, where $0 \le c \le n-m$.
	\[
	\Upsilon^c_{n,x} = \{ y \in \Upsilon_{n,x} \mid h(y) - h(x) = c \}.
	\]
\end{definition}
The set of compatible supersequences is thus broken down into $n-m+1$ disjoint sets indexed from 0 to $n-m$ such that strings in cluster $c$ contain $h(x)+c$ \texttt{1}'s:
\begin{equation*}
	\Upsilon_{n,x} = \bigcup\limits_{c=0}^{n-m} \Upsilon^c_{n,x}.
\end{equation*}
\begin{definition}
	Maximal initials represent $y$ strings for which the largest index of their initial mask, $\tilde{\pi}$, overlaps with the last bit of $y$. In other words, the last index of the canonical embedding of $x$ in $y$ overlaps with the last bit of $y$. Recall that we use $\tilde{\pi}$ to denote a mask $\pi$ that is initial.
	\[
	\mathcal{M}_{n,x} = \left\{ y \in \Upsilon_{n,x} \mid (\exists \tilde{\pi})[ y_{\tilde{\pi}}=x \wedge \max(\tilde{\pi}) = |y| = n ] \right\}.
	\]
	Similarly, we define a clustering for maximal initials based on the Hamming weight of the $y$ strings
	\[
	\mathcal{M}_{n,x}^c = \{ y \in \mathcal{M}_{n,x} \mid h(y) = h(x)+c \}.
	\]
\end{definition}
\begin{example}
	For example, the initial embedding of  $x=\texttt{1011}$ in $y=\texttt{110011}$ given by $\tilde{\pi}=\{1,3,5,6\}$ is maximal, whereas its initial embedding in $y'=\texttt{101011}$ given by $\tilde{\pi}'=\{1,2,3,5\}$ is not maximal as the last index of $\tilde{\pi}'$ does not overlap with the position of the last bit of $y'$.
\end{example}
A more exhaustive example illustrating these concepts is given in \Cref{table:upsilon}. In addition to the distribution of weights, i.e., number of masks per $y$, clusters and maximal initials are indicated by horizontal separators and bold font, respectively.
\begin{table}[t]
	\caption{Clusters, Maximal Initial Masks and Distribution of Embeddings}
	\label{table:upsilon}
	\centering
	\begin{tabular}{|c|c|c|}
		\hline
		\multicolumn{3}{|c|}{$x=\texttt{110}$}  \\ \hline
		$y$ & $\tilde{\pi}$ &$\omega$ \\ \hline
		$\texttt{00110}$ & $\mathbf{\{3, 4, 5 \}}$ & 1                     \\ \hline
		$\texttt{01010}$ & $\mathbf{\{ 2, 4, 5 \}}$ & 1                     \\ \hline
		$\texttt{01100}$ & $\{ 2, 3, 4 \}$ & 2                     \\ \hline
		$\texttt{10010}$ & $\mathbf{\{ 1, 4, 5 \}}$ & 1                     \\ \hline
		$\texttt{10100}$ & $\{ 1, 3, 4 \}$ & 2                     \\ \hline
		$\texttt{ 11000 }$ & $\{ 1, 2, 3 \}$ & 3                     \\ \hline \hline \hline
		$\texttt{ 01101 }$ & $\{ 2, 3, 4 \}$  & 1                    \\ \hline
		$\texttt{ 01110 }$ & $\mathbf{\{ 2, 3, 5 \}}$ & 3                     \\ \hline
		$\texttt{ 10101 }$ & $\{ 1, 3, 4 \}$ & 1                     \\ \hline
		$\texttt{ 10110 }$ & $\mathbf{\{ 1, 3, 5\}}$ & 3                     \\ \hline
		$\texttt{ 11001 }$ & $\{ 1, 2, 3 \}$ & 2                     \\ \hline
		$\texttt{ 11100 }$ & $\{ 1, 2, 4\}$ & 6                     \\ \hline \hline \hline
		$\texttt{ 11011 }$ &  $\{ 1, 2, 3 \}$ & 1                    \\ \hline
		$\texttt{ 11101 }$ &  $\{ 1, 2, 4 \}$ & 3                    \\ \hline
		$\texttt{ 11110 }$ &  $\mathbf{\{1, 2, 5 \}}$ & 6                    \\ \hline
	\end{tabular}
	\begin{tabular}{|c|c|c|}
		\hline
		\multicolumn{3}{|c|}{$x=\texttt{101}$}  \\ \hline
		$y$ & $\tilde{\pi}$ & $\omega$  \\ \hline
		$\texttt{ 00101 }$ & $\mathbf{\{3, 4, 5 \}}$ & 1                     \\ \hline
		$\texttt{ 01001 }$ & $\mathbf{\{ 2, 3, 5 \}}$ & 2                     \\ \hline
		$\texttt{ 01010 }$ & $\{ 2, 3, 4 \}$ & 1                     \\ \hline
		$\texttt{ 10001 }$ & $\mathbf{\{ 1, 2, 5 \}}$ & 3                     \\ \hline
		$\texttt{ 10010 }$ & $\{ 1, 2, 4 \}$  & 2                    \\ \hline
		$\texttt{ 10100 }$ & $\{ 1, 2, 3 \}$   & 1                   \\ \hline \hline \hline
		$\texttt{ 01011 }$ & $\{ 2, 3, 4 \}$ & 2                     \\ \hline
		$\texttt{ 01101 }$ & $\mathbf{\{ 2, 4, 5 \}}$ & 2                     \\ \hline
		$\texttt{ 10011 }$ & $\{ 1, 2, 4 \}$ & 4                      \\ \hline
		$\texttt{ 10101 }$ & $\{ 1, 2, 3\}$ & 4                     \\ \hline
		$\texttt{ 11001 }$ & $\mathbf{\{ 1, 3, 5 \}}$  & 4                    \\ \hline
		$\texttt{ 11010 }$ & $\{ 1, 3, 4\}$  & 2                    \\ \hline \hline \hline
		$\texttt{ 10111 }$ &  $\{ 1, 2, 3 \}$ &  3                   \\ \hline
		$\texttt{ 11011 }$ &  $\{ 1, 3, 4 \}$ & 4                    \\ \hline
		$\texttt{ 11101 }$ &  $\mathbf{\{1, 4, 5 \}}$ & 3                    \\ \hline
	\end{tabular}
\end{table}
\begin{theorem}\label{theorem:total-maximal-initials}
	For given $n$, the cardinality of $\mathcal{M}_{n,x}$ is independent of the exact $x$.
\end{theorem}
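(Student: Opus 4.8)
The plan is to identify $\mathcal{M}_{n,x}$ with a set difference of two uncertainty sets and then invoke Theorem~\ref{theorem:upsilon}. The key step is a characterization of maximal initials that does not mention the canonical embedding explicitly: for $y \in \Upsilon_{n,x}$, the last index of $\tilde{\pi}$ equals $n$ if and only if $x$ does \emph{not} embed into the prefix $y_1 \cdots y_{n-1}$. One direction is immediate, since if no embedding of $x$ into $y$ can avoid the last position, then in particular the canonical one uses it and $\max(\tilde{\pi}) = n$. For the converse I would use the standard fact that the canonical (greedily matched) embedding is coordinatewise minimal, i.e.\ $\tilde{\pi}_i \le \pi_i$ for every embedding $\pi$ of $x$ in $y$ and every $i$; hence if some embedding lay entirely within $y_1 \cdots y_{n-1}$ we would get $\max(\tilde{\pi}) \le n-1$, a contradiction.

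Granting this, set $A_{n,x} = \{ y \in \{0,1\}^n : y_1 \cdots y_{n-1} \in \Upsilon_{n-1,x} \}$. Extending an embedding shows $A_{n,x} \subseteq \Upsilon_{n,x}$, and the characterization above yields the partition $\Upsilon_{n,x} = \mathcal{M}_{n,x} \,\sqcup\, A_{n,x}$. Since $y \mapsto (y_1 \cdots y_{n-1},\, y_n)$ is a bijection $A_{n,x} \to \Upsilon_{n-1,x} \times \{0,1\}$, we get $|A_{n,x}| = 2\,|\Upsilon_{n-1,x}|$, and therefore
\[
|\mathcal{M}_{n,x}| = |\Upsilon_{n,x}| - 2\,|\Upsilon_{n-1,x}|.
\]
By Theorem~\ref{theorem:upsilon} both terms on the right depend only on $n$ and $m = |x|$, which proves the claim. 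As a bonus, substituting \eqref{eq:upsilon-cardinality} and applying Pascal's rule collapses the right-hand side to the closed form $|\mathcal{M}_{n,x}| = \binom{n-1}{m-1}$ (for instance $\binom{4}{2} = 6$, matching the bold entries in Table~\ref{table:upsilon}).

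The only genuine subtlety is the coordinatewise-minimality property of the canonical embedding; everything else is bookkeeping. I would either record it as a one-line lemma (induction on $i$: $\tilde{\pi}_1$ is the first occurrence of $x_1$, and given $\tilde{\pi}_i \le \pi_i$ the index $\pi_{i+1}$ is an admissible candidate in the definition of $\tilde{\pi}_{i+1}$) or cite it as folklore on greedy subsequence matching. It is also worth sanity-checking the edge cases $m = n$ (where $\mathcal{M}_{n,x} = \Upsilon_{n,x} = \{x\}$, consistent with $\binom{n-1}{n-1} = 1$) and small $m$, to confirm the prefix argument holds vacuously there.
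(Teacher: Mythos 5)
Your proof is correct, but it takes a genuinely different route from the paper's. The paper counts maximal initial embeddings directly: it observes that an initial mask $\tilde{\pi}$ determines every bit of $y$ up to $\max(\tilde{\pi})$, that every candidate mask is realized as initial by some $y$, and that under the maximality constraint $\max(\tilde{\pi})=n$ the mask determines $y$ completely; counting masks whose largest element is $n$ then yields $|\mathcal{M}_{n,x}|=\binom{n-1}{m-1}$ in one stroke. You instead characterize $\mathcal{M}_{n,x}$ as the set of $y\in\Upsilon_{n,x}$ whose length-$(n-1)$ prefix does not contain $x$, justified by the coordinatewise minimality of the greedy (lexicographically earliest) embedding, which gives the partition $\Upsilon_{n,x}=\mathcal{M}_{n,x}\sqcup A_{n,x}$ and hence $|\mathcal{M}_{n,x}|=|\Upsilon_{n,x}|-2|\Upsilon_{n-1,x}|$; independence of $x$ then follows immediately from Theorem~\ref{theorem:upsilon}, with the closed form recovered only afterwards via Pascal's rule. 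Your reduction needs no new counting beyond the minimality lemma and produces a clean recursion that is a pleasant identity in its own right; the paper's mask-centric argument, by contrast, exposes the bijection between maximal initial masks and elements of $\mathcal{M}_{n,x}$, and it is precisely this structure that is reused in Theorem~\ref{theorem:maximal-initials} (the Hamming-weight-clustered refinement) and in the remark re-deriving $|\Upsilon_{n,x}|=\sum_{\ell=m}^{n}\binom{\ell-1}{m-1}2^{n-\ell}$, so it carries more of the chapter's later weight. Do state the minimality lemma explicitly (your one-line induction suffices): the paper defines ``initial'' lexicographically, and the identification of that mask with the greedy, coordinatewise-minimal embedding is the one step your argument genuinely leans on.
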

\begin{proof}
	It is clear that every $n$-element sequence that has $x$ as an $m$-element subsequence has a unique initial mask $\tilde{\pi}$ that gives $x$. Furthermore, if we fix $\pi$, then the members of $y$ up to the last member of $\pi$ are completely determined if $\pi$ is initial. To see this, consider the case $i \in \tilde{\pi}$, then $y_i$ (the $i$-th member of $y$) must correspond to $x_j$, where $i$ is the $j$-th smallest member of $\tilde{\pi}$. If $i \notin \tilde{\pi}$, but smaller than $max(\tilde{\pi})$, then the $i$-th member of $y$ must correspond to $x_{j+1}$, where $j$ is the number of members of $\tilde{\pi}$ smaller than $i$. The latter follows because if this bit were $x_{j+1}$, then the given $\pi$ would not be initial.

	We also need to observe that for a given $\tilde{\pi}$, there always exists a $y$ that has $x$ initially in $\tilde{\pi}$: suppose that $x$ starts with a \texttt{0}, we set all the bits of $y$ before $\tilde{\pi}$ to be \texttt{1}. For a given value $\ell$ of $max(\tilde{\pi})$ - which can range from $m$ to $n$ - there are exactly $\binom{\ell-1}{m-1}$ $\tilde{\pi}$'s, one for each selection of the other $m-1$ members of $\tilde{\pi}$ among the $\ell-1$ values less than $\ell$.

	Moreover, here we have an additional constraint, namely that the initial masks should be maximal as well, i.e., $max(\tilde{\pi}) = n$. This means that $\ell = n$ and so we can count the number of distinct initials for the remaining $m-1$ elements of $x$ in the remaining $(n-1)$-long elements of $y$ strings, which is simply given by
	\begin{equation}
	|\mathcal{M}_{n,x}|=|\mathcal{M}_{n,m}|=\binom{n-1}{m-1}
	\end{equation}
	Clearly the cardinality of the set of maximal initials is independent of the form of $x$ and depends only on $n$ and $m$.
\end{proof}

\begin{remark}
	Note that if we extend the analysis in the proof of \Cref{theorem:total-maximal-initials} and let $\ell$ run over the range $[m,n]$, we can count all the distinct initial embeddings in $\Upsilon_{n,x}$, given by $\sum_{\ell=m}^n\binom{\ell-1}{m-1}$.

	Moreover, since the bits beyond $max(\tilde{\pi})$ are completely undetermined, for a given $\tilde{\pi}$, there are exactly $2^{n-max(\tilde{\pi})}$ $y$'s that have $\tilde{\pi}$ in common, which, incidentally, provides yet another proof for the fact that $|\Upsilon_{n,x}|$ is a function of only $n$ and $m$ since $|\Upsilon_{n,x}|= \sum_{\ell=m}^n\binom{\ell-1}{m-1}2^{n-\ell}$. This allows us to choose the $x$ comprising $m$ \texttt{0}'s and the result in Eq. \ref{eq:upsilon-cardinality} follows immediately.
\end{remark}

\begin{theorem}\label{theorem:maximal-initials}
	All $x$ strings of length $m$ that have the same Hamming weight, give rise to the same number of maximal initials in each cluster.
	\[
	\forall x, x' \in \Sigma^m, h(x) = h(x') \implies |\mathcal{M}^c_{n,x}| = |\mathcal{M}^c_{n, x'}|.
	\]
\end{theorem}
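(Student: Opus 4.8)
The plan is to turn the count of $\mathcal{M}^c_{n,x}$ into a pure stars-and-bars problem by encoding each maximal initial via the vector of gap lengths of its (unique, maximal) initial mask, and then to observe that the Hamming-weight constraint $h(y)=h(x)+c$ depends on $x$ only through $h(x)$.

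First I would recall the structural facts from the proof of Theorem \ref{theorem:total-maximal-initials}: every $y\in\mathcal{M}_{n,x}$ has a unique initial mask $\tilde\pi=\{p_1<\dots<p_m\}$ with $p_m=n$, and this mask \emph{determines $y$ completely} — the positions in $\tilde\pi$ carry the bits of $x$, i.e.\ $y_{p_j}=x_j$, while every position $i$ strictly between $p_{j-1}$ and $p_j$ (with the convention $p_0=0$) is \emph{forced} to carry $\overline{x_j}$, since otherwise $i$ could be absorbed into the mask and $\tilde\pi$ would not be initial. Conversely, appending $p_m:=n$ to any $(m-1)$-subset $\{p_1,\dots,p_{m-1}\}\subseteq[n-1]$ yields exactly one such $y$. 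Hence $\mathcal{M}_{n,x}$ is in bijection with the set of weak compositions $(g_1,\dots,g_m)$ of $n-m$ into $m$ non-negative parts, where $g_j:=p_j-p_{j-1}-1$ is the length of the $j$-th gap; this is precisely the bijection behind $|\mathcal{M}_{n,x}|=\binom{n-1}{m-1}$.

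Next I would compute the weight of the determined $y$ in terms of the gap vector. The $m$ mask positions contribute exactly $h(x)$ ones, and the $j$-th gap contributes $g_j$ copies of $\overline{x_j}$, hence $g_j$ ones precisely when $x_j=0$. Therefore $h(y)=h(x)+\sum_{j:\,x_j=0}g_j$, so $y\in\mathcal{M}^c_{n,x}$ if and only if the gaps sitting in front of the $\texttt{0}$-entries of $x$ have total length $c$ (and those in front of the $\texttt{1}$-entries total $n-m-c$). Splitting the weak composition of $n-m$ along the partition of $[m]$ into the $m-h(x)$ zero-slots and the $h(x)$ one-slots, and counting each part by stars and bars, gives
\[
|\mathcal{M}^c_{n,x}| \;=\; \binom{c+m-h(x)-1}{\,m-h(x)-1\,}\,\binom{n-m-c+h(x)-1}{\,h(x)-1\,},
\]
with the usual conventions in the degenerate cases $x=\texttt{0}^m$ or $x=\texttt{1}^m$ (an empty slot forces $c=n-m$, respectively $c=0$). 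Since the right-hand side depends on $x$ only through $h(x)$, the theorem follows; as a consistency check one can verify on Table \ref{table:upsilon} that this gives the cluster counts $3,2,1$ for both $\texttt{110}$ and $\texttt{101}$, and summing over $c$ recovers $\binom{n-1}{m-1}$ by Vandermonde's identity.

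I do not anticipate a real obstacle: the argument is essentially bookkeeping. The one point deserving care is the second step — establishing that the gap positions are \emph{forced} to equal $\overline{x_j}$ rather than merely allowed to, which is exactly where initiality does the work; I would cite the proof of Theorem \ref{theorem:total-maximal-initials} for this rather than re-derive it. A minor secondary nuisance is stating the degenerate-composition cases cleanly, which is easily handled by treating $h(x)\in\{0,m\}$ in a separate line.
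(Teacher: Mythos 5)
Your proof is correct and follows essentially the same route as the paper: both arguments exploit the fact that a maximal initial mask determines $y$ completely (the forced complementary bits in the gaps) and then count, by stars and bars, the ways of distributing the $c$ surplus \texttt{1}'s and $n-m-c$ surplus \texttt{0}'s into the slots determined by the \texttt{0}'s and \texttt{1}'s of $x$, arriving at exactly the formula of Eq.~\eqref{eq:exp-maximal-initial}, which depends on $x$ only through $h(x)$. Your explicit gap-vector bijection is simply a slightly more formal rendering of the paper's ``configurations distinguished by unique initials'' argument built on Theorem~\ref{theorem:total-maximal-initials}.
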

\begin{proof}
	We now describe a simple combinatorial argument for counting the number of maximal initials in each cluster indexed by $c$, i.e., a grouping of all $y \in \Upsilon^c_{n,x}$ such that $h(y) = h(x)+c$. Let $p$ and $q$ denote the number of additional \texttt{0}'s and \texttt{1}'s contributed by each cluster, respectively. Furthermore, let $a$ and $b$ denote the number of $\texttt{1}$'s and $\texttt{0}$'s in $x$, respectively.

	Similar to the method used in the proof of Theorem \ref{theorem:total-maximal-initials}, due to maximality we fix the last bit of $y$ and $x$, and consider $y'=y-tail(y)$ and $x'=x-tail(x)$ where $tail(s)$ denotes the last bit of $s$. Now the problem amounts to counting distinct initials of length $m-1$ in $(n-1)$-long elements in each cluster by counting the number of ways distinct configurations can be formed as a result of distributing $c$ \texttt{1}'s and $(n-m-c)$ \texttt{0}'s around the bars/separators formed by the $b$ $\texttt{0}$'s and $a$ $\texttt{1}$'s in $x$, respectively.

	We now need to observe that to count such strings with distinct initials, we can fix the $m-1$ elements of $x'$ as distinguished elements and count all the unique configurations formed by distributing $p$ indistinguishable $\texttt{0}$'s and $q$ indistinguishable $\texttt{1}$'s among bins formed by the fixed \texttt{1}'s and \texttt{0}'s of $x'$ such that each such configuration is distinguished by a unique initial.

	Equivalently, we are counting the number of ways we can place the members of $x'$ among $n-1$ positions comprising $p$ \texttt{0}'s and $q$ \texttt{1}'s without changing the relative order of the elements of $x'$ such that these configurations are uniquely distinguished by the positions of the $m-1$ elements.

	Intuitively, the arrangements are determined by choosing the positions of the $m-1$ bits of $x'$: by counting all the unique distributions of bits of opposite value around the elements of $x'$, we are simply displacing the elements of $x'$ in the $n-1$ positions, thereby ensuring that each configuration corresponds to a unique initial.

	Note that this coincides exactly with the multiset coefficient (computed via the method of stars and bars) as we can consider the elements of the runs of $x$ to be distinguished elements forming bins among which we can distribute indistinguishable bits of opposite value to count the number of configurations that are distinguished only by the number of $\texttt{1}$'s and $\texttt{0}$'s present in the said bins.

	Thus we count the number of unique configurations formed by distributing $p$ \texttt{0}'s and $q$ \texttt{1}'s among the $a$ \texttt{1}'s and $b$ \texttt{0}'s of $x$, respectively. The total count for each cluster $c$ is given by: $\binom{p+a-1}{p} \binom{q+b-1}{q}$, which expressed in terms of the Hamming weight of $x$ gives
	\begin{equation}\label{eq:exp-maximal-initial}
	|\mathcal{M}^c_{n,x}| = \binom{(n-m-c)+h(x)-1}{n-m-c} \binom{c+(m-h(x))-1}{c}
	\end{equation}
	With the total number of maximal initials in $\Upsilon_{n,x}$ given by
	\[
	|\mathcal{M}_{n,x}| = \sum_{c=0}^{n-m} |\mathcal{M}^c_{n,x}|=\binom{n-1}{m-1}.
	\]
\end{proof}

\begin{theorem}\label{theorem:hamming-cluster}
	The size of a cluster is purely a function of $n, m, c$ and $h(x)$
	\[
	\forall x, x' \in \Sigma^m, h(x) = h(x') \implies |\Upsilon^c_{n,x}| = |\Upsilon^c_{n, x'}|
	\]
\end{theorem}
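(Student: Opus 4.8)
The plan is to reduce the claim to \Cref{theorem:maximal-initials}, which already establishes that the number of \emph{maximal initials} in a given cluster depends only on the relevant lengths and on $h(x)$. The bridge between arbitrary supersequences and maximal initials is the canonical (initial) embedding: recall from the proof of \Cref{theorem:total-maximal-initials} that every $y \in \Upsilon_{n,x}$ has a unique initial mask $\tilde{\pi}$ with $y_{\tilde{\pi}} = x$, that the bits of $y$ at positions $1, \dotsc, \max(\tilde{\pi})$ are completely determined by $\tilde{\pi}$ and $x$, and that the remaining $n - \max(\tilde{\pi})$ bits are entirely free.

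First I would stratify $\Upsilon_{n,x}$ according to $\ell := \max(\tilde{\pi}) \in \{m, \dotsc, n\}$. For a fixed $y$ in this stratum, write $y = z w$ with $z = y_1 \cdots y_\ell$ and $w = y_{\ell+1} \cdots y_n$. Since the initial embedding is obtained greedily from left to right, $\tilde{\pi}$ is also the initial embedding of $x$ in $z$, and $\max(\tilde{\pi}) = \ell = |z|$, so $z \in \mathcal{M}_{\ell,x}$. Conversely, for any $z \in \mathcal{M}_{\ell,x}$ and any $w \in \{0,1\}^{n-\ell}$, the string $zw$ lies in $\Upsilon_{n,x}$ and its initial embedding ends at position $\ell$, because appending $w$ cannot affect a greedy choice that already terminated within $z$. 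This yields a bijection
\[
\Upsilon_{n,x} \;\longleftrightarrow\; \bigcup_{\ell = m}^{n} \mathcal{M}_{\ell,x} \times \{0,1\}^{n-\ell},
\]
which is moreover weight-additive: $h(zw) = h(z) + h(w)$.

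Next I would track the Hamming weight through this bijection. If $z \in \mathcal{M}^{c'}_{\ell,x}$, i.e. $h(z) = h(x) + c'$, and $h(w) = d$, then $zw \in \Upsilon^c_{n,x}$ precisely when $c = c' + d$. Counting the $\binom{n-\ell}{d}$ suffixes of weight $d$ and summing over all contributions gives
\[
|\Upsilon^c_{n,x}| = \sum_{\ell = m}^{n} \; \sum_{\substack{c', d \ge 0 \\ c' + d = c}} \left| \mathcal{M}^{c'}_{\ell,x} \right| \binom{n-\ell}{d}.
\]
By \Cref{theorem:maximal-initials} applied with $n$ replaced by $\ell$, each term $\bigl|\mathcal{M}^{c'}_{\ell,x}\bigr|$ is a function of $\ell, m, c'$ and $h(x)$ only; substituting \Cref{eq:exp-maximal-initial} even makes the right-hand side fully explicit. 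Since the whole sum is then determined solely by $n, m, c$ and $h(x)$, so is $|\Upsilon^c_{n,x}|$, which is the assertion.

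The main obstacle is purely the bookkeeping of the prefix/suffix split: one must verify carefully that it is a genuine bijection onto $\bigcup_{\ell} \mathcal{M}_{\ell,x} \times \{0,1\}^{n-\ell}$, in particular the two directions ``the prefix of $y$ up to $\max(\tilde{\pi})$ is itself a maximal initial of length $\ell$'' and ``an arbitrary suffix never changes where the canonical embedding of the prefix finishes''. Both follow directly from the greedy characterisation of initial embeddings recalled in the proof of \Cref{theorem:total-maximal-initials}, so no genuinely new idea is needed beyond the previous theorems; an alternative route would be a direct stars-and-bars count of length-$n$, fixed-composition supersequences of $x$, but threading canonicity through the endpoint stratification is what keeps the argument clean and lets us quote \Cref{theorem:maximal-initials} verbatim.
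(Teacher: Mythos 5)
Your proposal is correct and is essentially the paper's own argument: the paper likewise stratifies $\Upsilon^c_{n,x}$ by the endpoint $\ell=\max(\tilde{\pi})$ of the canonical embedding, applies \Cref{theorem:maximal-initials} to count maximal initials of length $\ell$ with $g$ surplus ones, and multiplies by $\binom{n-\ell}{c-g}$ for the free suffix, yielding exactly your sum (your index $c'$ is the paper's $g$, and your implicit vanishing terms are the paper's explicit bounds $\max(0,c-(n-\ell))\le g\le\min(c,\ell-m)$). Your explicit verification of the prefix/suffix bijection is a slightly more careful write-up of what the paper takes from the proof of \Cref{theorem:total-maximal-initials}.
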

\begin{proof}
	Let $\ell$ denote the position of the last bit of $y$ ranging from $|x|=m$ to $|y|=n$. Starting from a fixed $x$ string, we enumerate all $y$ strings in cluster $c$ by considering maximal initials within the range of $\ell$, i.e., $\ell \in [m, \ldots, n]$.

	Let $g$ denote the number of \texttt{1}'s belonging to the surplus bits in cluster $c$ constrained within the range of the maximal initial, $[1, \dotsc, \ell]$. For each $\ell$, compute $|\mathcal{M}^g_{\ell,x}|$ and count the combinations of choosing the remaining $c-g$ additional bits in the remaining $n-\ell$ bits. Let $UB = \min(c, \ell-m)$ and $LB = \max(0, c-(n-\ell))$ and thus we get the following:
	\begin{equation}\label{eq:exp-cluster-raw}
	|\Upsilon^c_{n,x}| = \sum_{\ell=m}^{n} \sum_{g=\max\left(0, c-\left(n-\ell\right)\right)}^{\min(c, \ell-m)} |\mathcal{M}^g_{\ell,m}|
	\binom{n-\ell}{c-g}
	\end{equation}
	Finally, inserting \Cref{eq:exp-maximal-initial} into \Cref{eq:exp-cluster-raw} gives
	\begin{equation}\label{eq:exp-cluster}
	|\Upsilon^c_{n,x}| = \sum_{\ell=m}^{n} \sum_{g=LB}^{UB} \binom{(\ell-m-g)+h(x)-1}{\ell-m-g}
	\binom{g+(m-h(x))-1}{g}
	\binom{n-\ell}{c-g}.
	\end{equation}
	As shown in \Cref{eq:exp-cluster}, $|\Upsilon^c_{n,x}|$ depends on the length and the Hamming weight of $x$, but it is independent of the exact form of $x$.
\end{proof}

\section{Simple closed form expression for the size of a cluster}\label{sec:closed-form-cluster}
We have shown that $|\Upsilon^c_{n,x}|$ is independent of the form of $x$. We can now derive a more simplified analytic expression for this count by considering an $x$ string of the following form: $x =\mathtt{11...11}_a\mathtt{00...0}_{m}$, i.e., $a$ \texttt{1}'s followed by $b$ \texttt{0}'s, with $a > 0$ and $b=m-a$.

The $y$ strings in each cluster are precisely the strings of length $n$ that have $a+c$ \texttt{1}'s in them (and $n-a-c$ \texttt{0}'s) where the $a$-th \texttt{1} (i.e., the last one in an initial choice for $x$) occurs before at least $b$ \texttt{0}'s. Clearly there are $\binom{n}{a+c}$ strings with exactly $a+c$ \texttt{1}'s, but some of these will violate the second principle. To find an expression for counting the valid instances, we sum over the positions of the $a$-th \texttt{1}, which must be between $a$ and $a+z$, where $z = n-a-b-c$ is the number of added \texttt{0}'s. Thus we get the following expression
\begin{equation}
|\Upsilon^c_{n,x}|=\sum_{p=h(x)}^{h(x)+z} \binom{p-1}{h(x)-1} \binom{n-p}{c}.
\end{equation}
With $z=n-m-c$ and $p$ denoting the index of the $a$-th 1, we thus count the number of ways of picking \texttt{1}'s before $p$ and the $c$ \texttt{1}'s after $p$. Note that for $h(x)=0$, the cardinality of cluster $c$ is simply given by $\binom{n}{c}$.

\section{Recursive expression for the size of a cluster}\label{sec:recurrence-cluster}
We present a recurrence for computing the size of a cluster by considering overlaps between the first bits of $x$ and $y$, respectively. Let $\bullet$ and $\varepsilon$ denote concatenation and the empty string, respectively. Moreover, let $x'$ be the tail of $x$ (resp. $y'$ the tail of $y$).
\begin{itemize}
	\item $\Upsilon^c_{n,\texttt{0}\bullet x} = \Upsilon^c_{n-1,x} + \Upsilon^{c-1}_{n-1,\texttt{0}\bullet x}$
	\begin{itemize}
		\item First term: first bit of $y$ is 0, find $x'$ in $y'$
		\item Second term: first bit of $y$ is 1 (part of cluster), so we reduce $c$ and find $x$ in $y'$
	\end{itemize}
	\item $\Upsilon^c_{n,\texttt{1}\bullet x} = \Upsilon^c_{n-1,x} + \Upsilon^{c}_{n-1,\texttt{1}\bullet x}$
	\begin{itemize}
		\item Same arguments as above, but for $x$ starting with 1
	\end{itemize}
	\item Base cases:
	\begin{itemize}
		\item $\Upsilon^0_{n,\texttt{0}\bullet x} = \Upsilon^0_{n-1,x}$
		\item $\Upsilon^0_{n,\texttt{1}\bullet x} = \Upsilon^0_{n-1,x} + \Upsilon^{0}_{n-1,\texttt{1}\bullet x}$
		\item $\Upsilon^c_{n,\varepsilon} = \binom{n}{c}$
		\item if $c+|x| > n$ then return 0 else $\Upsilon^c_{n,x}$
	\end{itemize}
\end{itemize}
It is worth pointing out that since this recursion depends on the form of $x$, i.e., whether or not $x$ starts with a 0 or 1, it does not explicitly capture the bijection between clusters of $x$ strings that have the same Hamming weight, as proved in \Cref{theorem:hamming-cluster}. An implementation of this recurrence and other related algorithms, in $\mathrm{CSP}_{\mathrm{M}}$, can be found in Section \ref{app:csp-code} of the Appendix. The code has been tested using version 3.0 of the FDR\footnote{\url{https://www.cs.ox.ac.uk/projects/fdr/}} tool, a refinement checker for the process algebra CSP.
\section{Enumerating Singletons via Runs}\label{sec:singletons}

Let \emph{singletons} define supersequences in $\Upsilon_{n,x}$ that admit exactly a single mask for a fixed subsequence $x$ of length $m$, i.e., they give rise to exactly a single occurrence of $x$ upon $n-m$ deletions. We use $\mathcal{S}_{n,x}$ to denote this set.
\begin{equation*}
	\mathcal{S}_{n,x} = \{ y \in \Upsilon_{n,x} | \; \omega_x(y) = 1 \}.
\end{equation*}
To compute the cardinality of $\mathcal{S}_{n,x}$, we describe a counting technique based on splitting runs of \texttt{1}'s and \texttt{0}'s in $x$ according to the following observations: $(i)$ inserting bits of opposite value to either side of the framing bits in $x$, i.e., before the first or after the last bit of $x$, does not alter the number of masks. $(ii)$ splitting runs of $0$'s and $1$'s in $x$, i.e., insertion of bits of opposite value in between two identical bits, does not modify the count. This amounts to counting the number of ways that singletons can be obtained from a fixed $x$ string via weight preserving insertions.

The number of possible run splittings corresponds to the number of distinct ways that $c$ \texttt{1}'s and $(n-m-c)$ \texttt{0}'s can be placed in between the bits of the runs of \texttt{0}'s and \texttt{1}'s in $x$, respectively. Again, this count is given by the multiset number $\binom{a+b-1}{a}$, where we count the number of ways $a$ indistinguishable objects can be placed into $b$ distinguishable bins. Note that the number of singletons depends heavily on the number of runs in $x$ and their corresponding lengths. The counting is done by summing over all $n-m$ clusters and computing the configurations that lead to singletons as a function of the runs in $x$ and the number of additional \texttt{1}'s and \texttt{0}'s contributed by each cluster at index $c$.

In order to do this computation, we first count the number of insertions slots in $x$ as a function of its runs of \texttt{1}'s and \texttt{0}'s, given by $\rho_0(x)$ and $\rho_1(x)$, respectively. Let $r_i^j$ be a run with $i$ and $j$ denoting its first and last index and let $\rho_\alpha(x)$ denote the number of insertion slots in $x$ as a function of its runs of $\alpha$. To compute $\rho_\alpha(x)$, we iterate through the runs of $\alpha$ and in $x$ and count the number of indexes at which we can split runs as follows
\begin{align}
	\rho_\alpha(x) & = \sum_{r \in \mathcal{R}_{x,\alpha}} f(r)
\end{align}
where
\begin{equation}
f(r)=
\begin{cases}
|r_i^j|+1,& \text{if } i=1 \wedge j=n\\
|r_i^j|,& \text{if } (i=1 \wedge j<n) \vee (i>1 \wedge j=n)\\
|r_i^j| - 1,              & \text{otherwise}
\end{cases}
\end{equation}
Note that if either the first bit or the last bit of a run overlaps with the first or last bit of $x$, the number of bars is equal to the length of the run. If the said indexes do not overlap with neither the first nor the last bit of $x$, the count is equal to the length of the run minus 1, and finally if both indexes overlap with the first and last bit of $x$ the count is equal to the length of the run plus 1.

We can now count the total number of singletons for given $n$ and $x$ as follows. Let $c$ and $b$ ($b=n-m-c$) denote the number of \texttt{1}'s and \texttt{0}'s contributed by the $c$-th cluster, and the total number of singletons is given by
\begin{align}
|\mathcal{S}_{n,x}| = \binom{n-m + \rho_1(x) - 1}{n-m} &+ \sum_{c=1}^{n-m-1} \binom{b+\rho_1(x)-1}{b} \binom{c+\rho_0(x)-1}{c} \\
&+ \binom{n-m + \rho_0(x)-1}{n-m} \nonumber
\end{align}
The first and last terms correspond to the number of singletons in the first and last cluster, respectively, where we insert either \texttt{1}'s or \texttt{0}'s, but not both. The summation over the remaining clusters counts the configurations that incorporate both additional \texttt{1}'s and \texttt{0}'s.
The final result can be simplified to the identity below
\begin{equation}
   |\mathcal{S}_{n,x}| = \binom{n-m+\rho(x)_1+\rho(x)_0-1}{n-m}.
\end{equation}
\begin{theorem}\label{thm:singleton-extremization}
	The constant (i.e., $x=\texttt{11...1}$ or $x=\texttt{00...0}$) and the alternating $x$ strings ($x=\texttt{1010...}$ or $x=\texttt{0101...}$) maximize and minimize the number of singletons, respectively.
\end{theorem}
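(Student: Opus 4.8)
The plan is to observe that the theorem is an almost immediate consequence of the closed-form identity
\[
|\mathcal{S}_{n,x}| = \binom{n-m+\rho_1(x)+\rho_0(x)-1}{n-m}
\]
established just above. Writing $\rho(x) := \rho_0(x) + \rho_1(x)$, the right-hand side equals $\binom{(n-m)+\rho(x)-1}{n-m}$, which, for fixed $n-m$, is a strictly increasing function of $\rho(x)$ provided $n > m$ (if $n = m$ then $\mathcal{S}_{n,x} = \{x\}$ for every $x$ and the statement holds trivially). Hence it suffices to show that, among all $x \in \Sigma^m$, the quantity $\rho(x)$ is maximized exactly by the two constant strings and minimized exactly by the two alternating strings.

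First I would express $\rho(x)$ purely in terms of the run structure of $x$. Suppose $x$ decomposes into $k$ maximal runs of lengths $\ell_1, \dotsc, \ell_k$ with $\sum_{i=1}^{k}\ell_i = m$. If $k = 1$, the unique run spans both the first and the last bit of $x$, so by the definition of $f$ it contributes $\ell_1 + 1 = m+1$ to $\rho_\alpha(x)$ for the relevant symbol $\alpha$, while the other $\rho$-term is $0$; hence $\rho(x) = m+1$. If $k \ge 2$, then the first run touches only the left boundary and the last run touches only the right boundary, each contributing its full length, while every one of the $k-2$ internal runs contributes its length minus $1$. Summing over the runs of both symbols,
\[
\rho(x) = \ell_1 + \ell_k + \sum_{i=2}^{k-1}(\ell_i - 1) = \Big(\sum_{i=1}^{k}\ell_i\Big) - (k-2) = m - k + 2 .
\]
Note this formula also returns $m+1$ at $k = 1$, so in all cases $\rho(x) = m - k + 2$, depending on $x$ only through its number of runs $k$.

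The conclusion is then immediate: $\rho(x) = m - k + 2$ is strictly decreasing in $k$, and $k$ ranges over $\{1, 2, \dotsc, m\}$, attaining $k = 1$ exactly for the constant strings $\texttt{00\dots0}$ and $\texttt{11\dots1}$ and $k = m$ exactly for the alternating strings $\texttt{0101\dots}$ and $\texttt{1010\dots}$. By strict monotonicity of the binomial coefficient in $\rho(x)$, the count $|\mathcal{S}_{n,x}|$ is therefore maximized precisely by the constant strings and minimized precisely by the alternating strings. I do not expect a genuine obstacle here, since all the substantive combinatorics is already packed into the derivation of the closed form for $|\mathcal{S}_{n,x}|$; the only point requiring a little care is the boundary bookkeeping when computing $\rho(x)$ — in particular isolating the $k=1$ case and verifying that the general formula $m-k+2$ subsumes it, together with the (trivial) edge cases $n=m$ and $m=1$ where the constant and alternating strings coincide.
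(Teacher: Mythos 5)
Your proposal is correct and follows essentially the same route as the paper's own proof: both reduce the claim to extremizing the number of insertion slots $\rho_0(x)+\rho_1(x)$ in the closed form $|\mathcal{S}_{n,x}|=\binom{n-m+\rho_1(x)+\rho_0(x)-1}{n-m}$, with the constant string (a single run) giving the most slots and the alternating string (all runs of length one) giving the fewest. Your explicit identity $\rho_0(x)+\rho_1(x)=m-k+2$ in the number of runs $k$, together with the monotonicity of the binomial coefficient and the $n=m$ edge case, just makes the paper's terser ``follows from extremizing the runs'' argument fully precise.
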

\begin{proof}
	This follows immediately from a maximization and minimization of the number of runs in $x$, i.e., $\rho_\alpha(x)$. In the case of the all \texttt{1}'s $x$ string, which comprises a single run, every index in $x$ can be used for splitting. Conversely, the alternating $x$ has the maximum number of runs $|\mathcal{R}|=m$, where $\forall. r \in \mathcal{R}_{x}: |r|=1$, thus splittings are not possible, i.e., no operations of type $(ii)$, and the insertions are confined to pre-pending and appending bits of opposite values to the first and last bit of $x$, respectively.
\end{proof}

\subsection{Further Characterizations of Singletons}

Suppose that for a given string $x$ (length $m$) there is a $y$ string of size 1 (i.e., there is a long string $y$ (length $n$) which has exactly one mask that gives $x$). Then the initial mask is unique. This tells us a great deal about $y$ and $x$.

Firstly, all bits of $y$ following the final bit of the mask must be of opposite value to the last bit of $x$. Secondly, for every $\texttt{10}$ or $\texttt{01}$ consecutive bits in $x$, the corresponding mask bits must also be consecutive.

Thirdly, for every $\texttt{11}$ or $\texttt{00}$ consecutive bits in $x$, if the corresponding mask bits are not consecutive then all the bits of $y$ between the two mask bits must be of the opposite value (this does not use the assumption of uniqueness).

Fourthly, every bit preceding the mask must be opposite to the first bit of $x$ (Again not using uniqueness).

It follows that for a given $x$, the singleton $y^\prime$ can be constructed by taking the number $d=n-m$ (i.e. the difference in length of the two strings) and partitioning it into a sum of however many pairs of consecutive bits there are in $x$ (say $cons(x)$), plus 2, integers (the two comes from the beginning and the end).

So for example if $x=\texttt{101001}$, and $d=6$ then it would be the number of ways in which you can give a sequence of three natural numbers summing to $6:\langle 6,0,0 \rangle, \langle 5,1,0 \rangle, \langle 5,0,1 \rangle, \langle 4,2,0 \rangle, \langle 4,1,1 \rangle, \langle 4,0,2 \rangle$, etc. The singleton corresponding to $\langle 4,1,1 \rangle$ would be $\langle 0,0,0,0,*1,*0,*1,*0,1.*0,*1,0 \rangle$ (*s are in the mask) where four $\texttt{0}$'s are placed before the first bit in the mask, one $\texttt{1}$ is placed between the consecutive $\texttt{0}$s of $x$, and one $\texttt{0}$ is after the mask.

Basically this means that the more equal consecutive bits there are, the more singletons there are.  This is very positive evidence for both conjectures.

Note that the number of these sums is just the $d+1$th $t$-dimensional simplex number (i.e. triangular numbers for $t=2$, tetrahedral number for $t=3$, where $t=cons(x)+1$.

Thus every possible $x$ has at least $d+1$ singletons, with this rising to $(d+2)(d+1)/2$ when there is one pair of consecutive equal bits in $x$, and if there are $k$ pairs of consecutive bits this becomes $\binom{d+t+1}{t+1}$ singletons.
Note that this is minimized for $x$ being our conjectured max entropy case, and minimized for our max entropy case, and maximized for a given weight of $x$ by $\texttt{00000011111}$. One can consider similar but more complex results for doubletons and triples.

\section{Distribution of Embeddings in Clusters}

The number of configurations/beans with $h(x) + a$ ones (where $h(x)$ is the number of ones in the short string $x$) is exactly
\[
\binom{n}{m} \cdot \binom{n-m}{a}
\]
and this depends solely on the number of ones outside $x$, which is intuitively another reason why we can partition the masks into levels $a=0 \ldots n-m$ whose sizes do not depend on anything other than $n,m$ and $a$.

\begin{example}
For $n=7$, $m=5$ and $x=\texttt{00000}$ we get 84 masks in total partitioned into levels as follows:

\begin{itemize}

\item For $a=0$, we get $\binom{7}{5} \cdot \binom{2}{0} = 21$ masks.

\item For $a=1$, we get $\binom{7}{5} \cdot \binom{2}{1} = 42$ masks.

\item For $a=2$, we get $\binom{7}{5} \cdot \binom{2}{2} = 21$ masks.

\end{itemize}
\end{example}

Each $y$ string is a subset of a single level/cluster. Transforming the masks of $x$ to the beans of $x^\prime$ by complementing an $x$-masks selected bits where the corresponding bits of $x$ and $x^\prime$ differ exactly preserves the level of the mask, thus any level $r$ masks of $x$ is transformed to a level $r$ mask of $x^\prime$.  So in seeing the effect of such transformations (e.g. the ones in the conjecture above) we are effectively investigating separate restructuring of the $y$ string at each level individually.

Looking at our conjectured extreme cases and how the different levels' $y$ strings are structured is interesting. If $x$ consists of all zeros then the levels are

\begin{itemize}

\item $a=0$: one $y$ string of size $\binom{n}{m}$.

\item In general $\binom{n}{a}$ $y$ strings of size $\binom{n-a}{m}$.

\item $a=n-m$: $\binom{m}{m}$ $y$ strings of size 1.

So you get extremal behaviour in terms of supersequence structure for the two end levels.

\end{itemize}

\begin{example}

For $n=7$, $m=5$ and $x=\texttt{00000}$ we get:

\begin{itemize}

\item For $a=0$, i.e. all $\texttt{0}$ long string, we get 1 $y$ string of size 21.

\item For $a=1$, i.e. long ($y$) strings with a single 1 in them, we get 7 $y$ strings of size 6

\item For $a=2$, i.e. long ($y$) strings with 2 $\texttt{1}$s in them, we get 21 $y$ strings of size 1.

\end{itemize}
\end{example}

Clearly $x$ all ones is symmetric. If $x = \texttt{0101010101}$, then the supersequence structure for adding $a$ $\texttt{1}$ is exactly the same as that for adding all $n-m-a$ ones (i.e, $m$ $\texttt{0}$s).

\section{Weight Distribution for Finite Deletions}

Here we provide a characterization of the weight distribution in terms of the number of supersequences with a given projection count, for single and double deletions, i.e. $n=m+1$ and $n=m+2$. We do so by identifying various types of insertions that can take a subsequence of length $m$ into a supersequence of length $n$ using $n-m$ insertions of $\texttt{1}$'s and $\texttt{0}$'s.

\subsection{Classifying Supersequences via Single Insertions}

In the case of a single deletion, there are $2n$ projection masks (i.e. $\mu_{n,x} = 2n$), shared among $n+1$ supersequences or $y$ strings.

\subsubsection{Constant Subsequences}

In the case where the short string are all 0 ($x=\texttt{0}^m$) or all 1 ($\texttt{1}^m$), then one $y$ string (namely the string of all 0 or 1 as appropriate) has $n$ masks in, and there are $n$ supersequences with one mask each, namely the strings with a single $\texttt{1/0}$ among $m$ $\texttt{0/1}s$. In other words, this is initial and consequently has the minimum entropy possible.

\subsubsection{Alternating Subsequences}

In the case where the short string $x$ alternates, then the $y$ strings, all obtained via a single bit inserted into $x$, take two forms:

\begin{itemize}

\item There are $m$ supersequences with one of the members doubled. Each of these supersequences admits two masks.

\item There are two supersequences in which the alternating sequence has been extended (with one starting with $\texttt{0}$ and the other with $\texttt{1}$), which just have one mask.

\end{itemize}

\subsection{Classifying Supersequences via Double Insertions}

There are $n \times (n-1)/2 + n + 1$ supersequences and $n \times (n-1) \times 2$ masks.

\subsubsection{Constant Subsequences}

For $x$ all $\texttt{1}$s, there is one supersequence with $n \times (n-1)/2$ masks, that number of supersequences (the ones labelled with a string with two $\texttt{0}$s) with one mask, and $n$ buckets (ones with one $\texttt{0}$) with $n-1$ masks each.

\subsubsection{Alternating Subsequences}

For the alternating subsequence, the $y$ strings are formed by adding 2 bits into the $m$-bit long alternating string. There are five general types of $y$ strings:

\begin{itemize}

\item The unique one which is alternating and formed by adding a bit at each end. There is only one mask in that.

\item The unique $y$ string that is alternating and formed by adding an alternating pair at one end or somewhere in the middle (same result). There are $n-1$ masks in that, as any consecutive pair can be dropped.

\item The $y$ strings formed by adding a single bit at one end or the other to keep alternation, but then destroy alternation by adding a bit somewhere which is equal to an adjacent one.  There are $2 \times (n-1)$ such $y$ strings. Two of these (where it is the just-added bit that is doubled) have one masks, and all the rest have two.

\item The $y$ strings formed by turning one of the originals into a run of three. These each have three masks. There are $m$ of them.

\item The $y$ strings formed by doubling two of the original bits. $m \times (m-1)/2$ of these and each has 4 masks.

\end{itemize}

\stopminichaptoc{\minichaptocenabled}

\chapter{Towards Entropy Minimizing Subsequences for Finite Deletions}\label{chp:finite-deletions}

\printminichaptoc{\minichaptocenabled}

In this chapter, we revisit the original entropy extremization question and prove the minimal entropy conjecture for the special cases of single and double deletions, i.e., for $n=m+1$ and $n=m+2$. We also prove the maximal entropy conjecture for single deletions using a combinatorial counting of long strings (supersequences) and the corresponding weight distribution. In the case of minimization, the entropy result is obtained via a characterization of the number of strings with specific weights, along with an entropy decreasing operation. This is achieved using clustering techniques and a run-length encoding of strings: we identify groupings of supersequences with specific weights by studying how they can be constructed from a given subsequence using different insertion operations, which are in turn based on analyzing how runs of \texttt{1}'s and \texttt{0}'s can be extended or split. The methods used in the analysis of the underlying combinatorial problems, based on clustering techniques and the run-length encoding of strings may be of interest in their own right.

It is worth pointing out that while questions on the combinatorics of random subsequences requiring closed-form expressions are already quite challenging, the problem tackled in this work and first raised in
\cite{atashpendar2015information}, is further complicated by the dependence of entropy on the distribution of subsequence embeddings, i.e., the number of supersequences having specific embedding weights. To put this in contrast, in a related work \cite{swart2003note}, a closed-form expression is provided for computing the number of distinct subsequences that can be obtained from a fixed supersequence for the special case of two deletions, whereas here we need to account for the entire space of supersequences and characterize the number of times a given subsequence can be embedded in them in order to address the entropy question. Moreover, one would have to work out how these weights (number of embeddings) get shifted across their compatible supersequences when we move from one subsequence to another. To the best of our knowledge, other than our original statement of the problem \cite{atashpendar2015information} and the conjectured limiting entropic cases, proving the entropy extremization conjecture has not been addressed before.

\section{Towards Entropy Minimization}\label{sec:entropy-minimization}

We now prove the minimal entropy conjecture for the special cases of one and two deletions. Our approach incorporates two key steps: first we work out a characterization of the number of $y$ strings that have specific weights $\omega_x(y)$. We then consider the impact of applying an entropy decreasing transformation to $x$, denoted by $g(x)$, and prove that this operation shifts the weights in the space of supersequences such that it results in a lowering of the corresponding entropy. This is achieved using clustering techniques and a run-length encoding of strings: we identify groupings of supersequences with specific weights by studying how they can be constructed from a given subsequence using different insertion operations, which are in turn based on analyzing how runs of \texttt{1}'s and \texttt{0}'s can be extended or split. The methods used in the analysis of the underlying combinatorial problems, based on clustering techniques and the run-length encoding of strings may be of interest in their own right. It is thus our hope that our results will also be of independent interest for analyzing estimation and coding problems involving deletion channels.

\subsection{Entropy Decreasing Transformation}

\begin{definition}\label{def:transformation-g}
We now define the transformation $g$ on strings of length $m$ as follows:
\begin{equation}
g((k_1, \dotsc, k_\ell)) =
\begin{cases}
(k_1+k_2, k_3, \dotsc, k_\ell) & \text{if $\ell > 1$} \\
g(\sigma) = \sigma
\end{cases}
\end{equation}
\end{definition}

Hence $g$ is a ``merging'' operation, that connects the two first blocks together. As we shall see, $g$ decreases the entropy. Thus, one can start from any subsequence $x$ and apply the transformation $g$ until the string becomes $\sigma$, i.e., $x=\texttt{0}^m$ or $x=\texttt{1}^m$. As a result, $\sigma$ exhibits minimal entropy and thus the highest amount of leakage in the original key exchange problem.
Note that, as indicated implicitly in the definition above, this transformation always reduces the number of runs by one by flipping the first run to its complement.

Thus we avoid cases where merging two runs would lead to connecting to a third neighboring run, thereby resulting in a reduction of runs by two. For example, $g$ transforms the string $x = \texttt{1001110} = (1;1,2,3,1)$ into $x = \texttt{0001110} = (0;3,3,1)$, as opposed to $x = \texttt{1111110} = (1;6,1)$.

The plots shown in \Cref{figure:g_transform_impact} illustrate the impact of the transformation $g$ on the weight distribution as we move from $x=\texttt{101010}$ to $x'=\texttt{000000}$, i.e., $\texttt{101010} \rightarrow \texttt{001010} \rightarrow \texttt{111010} \rightarrow \texttt{000010} \rightarrow \texttt{111110} \rightarrow \texttt{000000}$.
\begin{figure}[tp]
	\centering
	\includegraphics[width=0.8\textwidth]{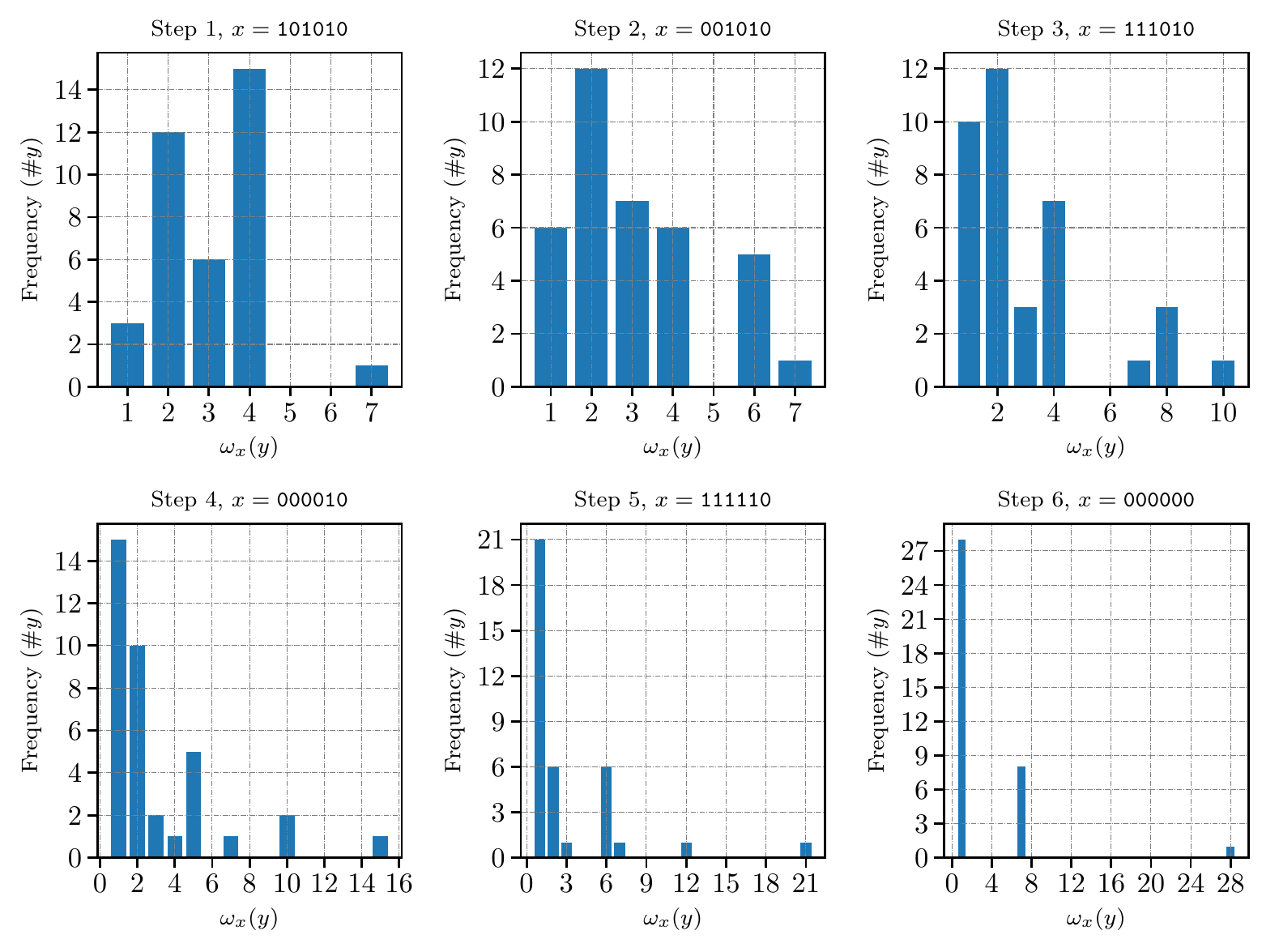}
	\caption{Impact of the transformation $g$ on the weight distribution for converting $x=\texttt{101010}$ to $x'=\texttt{000000}$, with $n=8, m=6$.}
	\label{figure:g_transform_impact}
\end{figure}

\section{Entropy Extremization for Single Deletions}\label{sec:single-deletion}

In this section we consider the case of a single deletion. Let $x$ be a fixed string of length $m$. We study the space of $y$ strings of length $n = m+1$ that can be masked to yield $x$, i.e., $Y_1 = \{y \in \{0,1\}^n \mid \exists \delta \in \mathcal{P}([n]), y_\delta = x \text{ and } |\delta| = 1\}$. Recall that we associate a weight $\omega_x(y)$ to each $y \in Y_1$, defined as the number of ways that $y$ can be masked into $x$. Finally, we define the entropy associated to $x$ as the Shannon entropy of the variable $Z \in \{0,1\}^n$ having distribution
\begin{equation*}
\Pr[Z=y] = \frac{1}{\mu_1} \omega_x(y).
\end{equation*}
where $\mu_1 = \sum_{y \in \Upsilon_{n,x}}$, which for the case $m=n-1$ gives $\binom{n}{m}2^{n-m} = \binom{n}{n-1}2^{n-(n-1)} = 2n$.

\subsubsection{Clustering Supersequences via Single Insertions}
Let $x = (k_1, \dotsc, k_\ell)$. A string $y \in Y_1$ can take only one of the following forms:
\begin{enumerate}
\item $y = (k_1, \dotsc, k_{i-1}, k_i+1, k_{i+1}, \dotsc  k_\ell) $ for some $i \in [\ell]$;
\item $y= (k_1, \dotsc, k_{i-1}, k'_i, 1, k''_i, k_{i+1}, \dotsc, k_\ell)$, for some $i \in [\ell]$ and where $k'_i + k''_i = k_i$ and $k'_i \neq 0$ and $k''_i \neq 0$;
\item $y=(1, k_1, \dotsc k_\ell)$;
\item $y= (k_1, \dotsc, k_\ell, 1)$.
\end{enumerate}
The first case will be referred to as a ``block-lengthening insertion'', denoted by $1/0$, which corresponds to extending runs/blocks. The last three cases will be referred to as ``block-splitting insertions'', and denoted by $0/1$, corresponding to splitting runs or adding a new run of length $1$. For the remainder of our discussion, $a/b$ means: ``$a$ block-lengthening insertions and $b$ block-splitting insertions''.

\begin{lemma}\label{lem:single-deletion-clustering}
$Y_1$ is composed of\footnote{A sanity check can be done to verify that we do not miss any strings, since $\binom{m+1}{m} + \binom{m+1}{m+1} = \ell + m - \ell +2$.}:
\begin{itemize}
	\item $\ell$ block-lengthening insertions, resulting in strings of respective weights $ k_1 + 1, k_2 + 1, k_3+1, \dotsc, k_\ell+1$; and
	\item $m - \ell +2$ block-splitting insertions, i.e., strings of weight 1.
\end{itemize}
\end{lemma}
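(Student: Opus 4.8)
The plan is to read off both the cardinalities and the weights directly from the four possible forms of $y$ listed above, using the block-product formula \eqref{eq:embedding-simple-case} for the block-lengthening case and a short uniqueness argument for the block-splitting case. It is convenient to parametrise a single insertion by a position $p\in\{0,1,\dots,m\}$ (insert one symbol immediately after $x_p$, treating $x_0$ and $x_{m+1}$ as absent) together with the value $b$ of the inserted symbol; the inserted symbol lengthens a block exactly when $b$ coincides with one of its present neighbours (this is case (1) above), and otherwise it forms a fresh block of length $1$ (cases (2)--(4)).

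For the counting, a block-lengthening insertion into run $i$ yields $y^{(i)}=(k_1,\dots,k_{i-1},k_i+1,k_{i+1},\dots,k_\ell)$; since run-length encodings are unique and $k_i+1\neq k_i$, these are $\ell$ pairwise distinct strings, all with the same $\ell$ runs as $x$. A block-splitting insertion at position $p$ is admissible iff $p\in\{0,m\}$ or $x_p=x_{p+1}$, in which case $b$ is forced; distinct admissible positions yield distinct strings, as can be seen by comparing the first place where $y$ deviates from $x$. Using $\#\{p\in\{1,\dots,m-1\}:x_p=x_{p+1}\}=\sum_i(k_i-1)=m-\ell$, this gives $m-\ell+2$ strings, each having $\ell+1$ or $\ell+2$ runs and hence disjoint from the block-lengthening family. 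Summing, $|Y_1|=\ell+(m-\ell+2)=m+2=\binom{m+1}{m}+\binom{m+1}{m+1}=|\Upsilon_{m+1,x}|$ by Theorem \ref{theorem:upsilon}, which certifies that the list above is exhaustive.

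For the weights, each $y^{(i)}$ has the same number of runs as $x$ and the same first symbol, so \eqref{eq:embedding-simple-case} applies directly and gives $\omega_x(y^{(i)})=\binom{k_i+1}{k_i}\prod_{j\neq i}\binom{k_j}{k_j}=k_i+1$, as claimed. For a block-splitting insertion at an interior position $p$ (so $x_p=x_{p+1}=a$ and $y=x_1\cdots x_p\,\bar a\,x_{p+1}\cdots x_m$), deleting the inserted symbol recovers $x$, so $\omega_x(y)\ge 1$; for the reverse inequality I would note that any single-symbol deletion $\delta=\{q\}$ with $q\neq p+1$ leaves the inserted $\bar a$ in place and shifts it to a slot where $x$ reads $a$ (slot $p$ if $q\le p$, slot $p+1$ if $q\ge p+2$), so $y_\delta\neq x$; the boundary positions $p\in\{0,m\}$ are handled identically by inspecting the first or last symbol. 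Hence the block-splitting strings all have weight $1$. As a cross-check, the singleton count of Section \ref{sec:singletons} specialised to $n=m+1$ equals $\rho_0(x)+\rho_1(x)=m-\ell+2$, and since the $\ell$ block-lengthening strings have weight $\ge 2$ they are not singletons, so the $m-\ell+2$ singletons are exactly the block-splitting strings.

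The only genuinely delicate point is the uniqueness half of the weight-$1$ claim: it is intuitively clear that one should delete the symbol just inserted, but ruling out a distant deletion that coincidentally reproduces $x$ requires keeping track of the surviving inserted symbol $\bar a$ and the slot it lands in. Everything else reduces to bookkeeping once single insertions are encoded by a position and a symbol value.
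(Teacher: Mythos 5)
Your proof is correct and follows essentially the same route as the paper: the paper also classifies the elements of $Y_1$ by the four insertion forms, obtaining $\ell$ block-lengthening strings of weight $k_i+1$ via the run-product formula and $m-\ell+2$ block-splitting strings of weight $1$, with the binomial identity $\binom{m+1}{m}+\binom{m+1}{m+1}=m+2$ serving as the exhaustiveness check. Your additional care about distinctness and the uniqueness half of the weight-$1$ claim simply makes explicit what the paper leaves implicit.
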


\subsection{Proof of Minimal Entropy For Single Deletions}

\begin{lemma}\label{lemma:g-single-deletion}
The transformation $g$ decreases the entropy $H_n(x)$ for single deletions, i.e., $m=n-1$.
\end{lemma}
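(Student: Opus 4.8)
The plan is to compute $H_n(x)$ explicitly from the weight profile given by \Cref{lem:single-deletion-clustering} and show that applying $g$ lowers it. By that lemma, for $x = (k_1,\dots,k_\ell)$ the multiset of weights in $Y_1$ consists of the $\ell$ values $k_1+1,\dots,k_\ell+1$ (one string each) together with $m-\ell+2$ strings of weight $1$; the normalisation is $\mu_1 = 2n = 2(m+1)$, and indeed $\sum_i (k_i+1) + (m-\ell+2) = m + \ell + m - \ell + 2 = 2m+2 = \mu_1$, as it must. Writing $\phi(t) = -\,t\log_2 t$, the (unnormalised) entropy sum is therefore
\begin{equation*}
\mu_1 H_n(x) = \log_2(\mu_1)\cdot\mu_1 - \sum_{i=1}^{\ell}\phi'\!-\text{type terms},
\end{equation*}
but it is cleaner to keep it as $H_n(x) = \log_2\mu_1 - \frac{1}{\mu_1}\sum_{y}\omega_x(y)\log_2\omega_x(y)$, so that minimising $H_n(x)$ is equivalent to \emph{maximising} $\Phi(x) := \sum_{y\in Y_1}\omega_x(y)\log_2\omega_x(y)$. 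Since the weight-$1$ strings contribute nothing to $\Phi$, we get the clean formula $\Phi(x) = \sum_{i=1}^{\ell} (k_i+1)\log_2(k_i+1)$.

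Next I would pin down the effect of $g$ on $\Phi$. By Definition \ref{def:transformation-g}, $g$ sends $(k_1,k_2,k_3,\dots,k_\ell)$ to $(k_1+k_2,k_3,\dots,k_\ell)$, so
\begin{equation*}
\Phi(g(x)) - \Phi(x) = (k_1+k_2+1)\log_2(k_1+k_2+1) - (k_1+1)\log_2(k_1+1) - (k_2+1)\log_2(k_2+1).
\end{equation*}
The claim reduces to showing this difference is $\ge 0$ whenever $k_1,k_2\ge 1$ (with equality excluded, or handled, in the degenerate base case $g(\sigma)=\sigma$). This is a convexity/superadditivity statement about the function $\psi(t) = (t+1)\log_2(t+1)$: I want $\psi(a) + \psi(b) \le \psi(a+b)$ for $a,b\ge 1$ — equivalently, writing $s = t+1$, that $s\log_2 s$ is superadditive on $[2,\infty)$. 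This follows because $t\mapsto t\log_2 t$ is convex and vanishes at $t=1$, so $u\log_2 u \ge (u-1)\log_2 u'$-style Chebyshev/convexity arguments apply; concretely, for $a,b\ge 1$ one has $\psi(a+b) - \psi(a) - \psi(b) = \int$-comparison of $\psi'$ which is increasing, giving the inequality cleanly, with strict inequality when both $a,b\ge 1$. I would record this as a short standalone sub-lemma (superadditivity of $t\log t$ shifted by $1$), since the same estimate will be reused for the double-deletion case.

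Finally I would assemble the pieces: $H_n(g(x)) - H_n(x) = \frac{1}{\mu_1}\bigl(\Phi(x) - \Phi(g(x))\bigr) \le 0$, since $\mu_1 = 2n$ depends only on $n$ and is unchanged by $g$ (both $x$ and $g(x)$ have the same length $m$). The main obstacle I anticipate is not the inequality itself but two bookkeeping subtleties: first, verifying that \Cref{lem:single-deletion-clustering} genuinely enumerates $Y_1$ without double-counting (the block-splitting insertions at the two ends versus internal splits of a length-$\ge 2$ run must be disjoint, and a run of length $1$ admits no internal split — this is exactly where the count $m-\ell+2$ rather than something larger comes from, and it should be checked carefully); and second, the edge case where $k_1$ or $k_2$ equals a value making $g$ collapse runs — but Definition \ref{def:transformation-g} is specifically designed (by flipping the first run rather than merging into a third) to avoid reducing the run count by two, so $g(x)$ always has exactly $\ell-1$ runs and the formula for $\Phi(g(x))$ above is valid. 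Once those are nailed down, the entropy comparison is immediate from the superadditivity sub-lemma.
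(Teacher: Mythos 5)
Your proposal is correct and follows essentially the same route as the paper's proof: both invoke Lemma \ref{lem:single-deletion-clustering} for the weight profiles of $x$ and $g(x)$, note that all contributions cancel except those involving $k_1,k_2$ (the normalisation $\mu_1=2n$ being unchanged), and reduce the claim to $(k_1+k_2+1)\log_2(k_1+k_2+1) > (k_1+1)\log_2(k_1+1)+(k_2+1)\log_2(k_2+1)$ for $k_1,k_2\ge 1$, which is exactly the paper's statement $\Delta_1>0$. One minor caveat: your parenthetical rephrasing as ``superadditivity of $s\log_2 s$ on $[2,\infty)$'' is weaker than what is needed (with $s_i=k_i+1$ the required inequality is $f(s_1+s_2-1)\ge f(s_1)+f(s_2)$, not $f(s_1+s_2)\ge f(s_1)+f(s_2)$), but the inequality you actually state for $\psi(t)=(t+1)\log_2(t+1)$ is the right one, and your convexity/increasing-derivative sketch does establish it.
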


\begin{proof}
The proof consists of computing the difference between the entropy before and after applying $g$, i.e., $\Delta_1 = H_{n}(x) - H_n(g(x))$, and showing that this difference is positive.
From Lemma \ref{lem:single-deletion-clustering}, after applying $g$,
\begin{itemize}
\item The block-lengthening insertions give $\ell - 1$ strings of respective weights $ k_1 + 1 + k_2 + 1 -1, k_3+1, \dotsc, k_\ell+1$.
\item The block-splitting insertions give $m+2 - (\ell - 1)$ strings of weight 1.
\end{itemize}
We now compute the difference of the entropy thanks to the analyses of
\[
(k_1, k_2, k_3, \dotsc, k_\ell)
\]
and
\[
(k_1+k_2, k_3, \dotsc, k_\ell),
\]
which after simplification gives
\begin{equation*}
\Delta_1(k_1, \dotsc, k_\ell) = (k_1+1) \log \frac{1}{k_1+1} + (k_2+1) \log \frac{1}{k_2+1} - (k_1+k_2+1) \log \frac{1}{k_1+k_2 +1}.
\end{equation*}
This is positive, since
\[
\log\frac{1}{k_1+1},\log\frac{1}{k_2+1} > \log\frac{1}{k_1+k_2+1} \; \text{and} \; (k_1+1)+(k_2+1) > k_1+k_2+1.
\]
\end{proof}

\begin{corollary}\label{cor:entropy-single-deletion}
For all $n$ and any subsequence $x$ of length $m = n-1$, we have
\[H_n(x) \geq H_n\left( \sigma \right),\]
with equality only if $x\in \{\texttt{0}^m, \texttt{1}^m\}$.
\end{corollary}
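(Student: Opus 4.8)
The plan is to derive the corollary from Lemma~\ref{lemma:g-single-deletion} by iterating the merging transformation $g$ until the string collapses to a single run. First I would record the structural fact noted after Definition~\ref{def:transformation-g}: $g$ sends a string with $\ell$ runs to a string with exactly $\ell-1$ runs, because it flips the first run to the complementary value (so the first two runs coalesce into one run of length $k_1+k_2$) and the value of that merged run equals the original second-run value, which is opposite to the third-run value, so no secondary merge occurs. Hence, starting from $x=(k_1,\dots,k_\ell)$ of length $m=n-1$ and setting $x^{(0)}=x$, $x^{(i+1)}=g(x^{(i)})$, after exactly $\ell-1$ steps we reach a string $x^{(\ell-1)}$ with a single run; being of length $m$, it is either $\texttt{0}^m$ or $\texttt{1}^m$, i.e.\ it equals $\sigma$.

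Second, I would chain the inequalities. Every $x^{(i)}$ has length $m=n-1$, so Lemma~\ref{lemma:g-single-deletion} applies at each step and yields $H_n(x^{(i)}) \ge H_n(x^{(i+1)})$. Composing,
\begin{equation*}
H_n(x) = H_n(x^{(0)}) \ge H_n(x^{(1)}) \ge \dots \ge H_n(x^{(\ell-1)}) = H_n(\sigma),
\end{equation*}
where at the final step I also invoke $H_n(\texttt{0}^m) = H_n(\texttt{1}^m)$: by the complement symmetry $\omega_x(y)=\omega_{\overline{x}}(\overline{y})$ recalled earlier, the two weight distributions coincide up to relabeling of $\Upsilon_{n,x}$, hence have equal Shannon entropy.

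Third, for the equality characterization I would go back to the one-step gap computed inside the proof of Lemma~\ref{lemma:g-single-deletion},
\begin{equation*}
\Delta_1(k_1,\dots,k_\ell) = (k_1+1)\log\frac{1}{k_1+1} + (k_2+1)\log\frac{1}{k_2+1} - (k_1+k_2+1)\log\frac{1}{k_1+k_2+1},
\end{equation*}
and note that it is \emph{strictly} positive whenever $\ell>1$: each of $\log\frac{1}{k_1+1}$ and $\log\frac{1}{k_2+1}$ strictly exceeds $\log\frac{1}{k_1+k_2+1}$, while $(k_1+1)+(k_2+1) > k_1+k_2+1$. Thus the chain above is strict at step $i$ unless $x^{(i)}$ already has a single run, so $H_n(x)=H_n(\sigma)$ forces $\ell=1$ at $i=0$, i.e.\ $x\in\{\texttt{0}^m,\texttt{1}^m\}$; the converse inclusion is immediate.

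The residual work is purely bookkeeping: the only point that genuinely needs care is the collapse claim — that iterating $g$ actually lands on $\sigma$ and does not stall — which is precisely why the ``reduces the run count by exactly one, with no secondary merge'' property of $g$ must be stated cleanly before the iteration argument. Beyond that there is no real obstacle, since the analytic content (the sign of $\Delta_1$) has already been established in the preceding lemma.
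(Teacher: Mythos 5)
Your proposal is correct and follows essentially the same route as the paper: iterate the merging operation $g$ until the string collapses to $\sigma$ and invoke Lemma~\ref{lemma:g-single-deletion} at each step. You additionally spell out details the paper leaves implicit (the run count dropping by exactly one per application of $g$, the symmetry $H_n(\texttt{0}^m)=H_n(\texttt{1}^m)$, and the strict positivity of $\Delta_1$ giving the ``equality only if'' clause), but the underlying argument is the same.
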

\begin{proof}
Given any $x \neq \sigma$ of length $m = n-1$, it can be transformed into the string $\sigma$ by a series of consecutive $g$ operations, as defined in \Cref{def:transformation-g}. Each such operation can only decrease the entropy, as shown in Lemma \ref{lemma:g-single-deletion}, and thus we get a proof for the fact that $H_n(x) \geq H_n\left(\texttt{0}^m\right)$.
\end{proof}
\begin{remark}
It is worth pointing out that for the special case of single deletions, the minimization of entropy by the constant string, $x=[m]$, can also be proved using a simple combinatorial argument as follows. For $m=n-1$, in cluster $c=1$ we get a single $y$ string with maximum weight, $\omega_y(x)=\binom{n}{m}$, corresponding to $y=[n]$ and $x=[m]$, and the remaining strings in cluster $c=0$ are all singletons, $\omega_x(y)=1$. This is clearly the most concentrated distribution and hence the least entropic one. However our analysis shows how we will deal with the more complicated case of double deletions. In the case of single deletions, we can also illustrate the utility  of our approach by deriving a stronger result for the R\'enyi entropy.
\end{remark}

\begin{definition}[R\'enyi Entropy]
For any $\alpha>0$ and $\alpha \neq 1$, the R\'{e}nyi entropy of order $\alpha$ of a distribution $P$ is defined by
\begin{equation*}
 H_{\alpha} = \frac{1}{1-\alpha} \log_2
 \sum_{i=1}^n p_i^{\alpha}
\end{equation*}
\end{definition}

\begin{theorem}
For all $n$ and any subsequence $x$ of length $m = n-1$, and $\alpha>0$, $\alpha \neq 1$, we have that $\sigma$ exhibits the lowest R\'enyi entropy $H_\alpha$
\[H_{\alpha}(x) \geq H_{\alpha}\left( \sigma\right),\]
with equality only if $x\in \{\texttt{0}^m, \texttt{1}^m\}$.
\end{theorem}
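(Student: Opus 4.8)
The plan is to replay the single-deletion argument used for Shannon entropy (\Cref{lemma:g-single-deletion} and \Cref{cor:entropy-single-deletion}) with $H_\alpha$ in place of $H$, i.e.\ to show that the merging map $g$ of \Cref{def:transformation-g} strictly decreases $H_\alpha$ whenever the input string has more than one run, and then iterate $g$ until the string collapses to $\sigma$. First I would record the shape of the weight distribution: by \Cref{lem:single-deletion-clustering}, if $x=(k_1,\dots,k_\ell)$ then for $m=n-1$ the distribution $P_x$ is the multiset consisting of the $\ell$ probabilities $\tfrac{k_i+1}{2n}$, $i\in[\ell]$, together with $m-\ell+2$ copies of $\tfrac{1}{2n}$ (and $\mu_1=\sum_i(k_i+1)+(m-\ell+2)=2n$, as required). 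Since $g$ sends $(k_1,k_2,k_3,\dots,k_\ell)$ to $(k_1+k_2,k_3,\dots,k_\ell)$, passing from $x$ to $g(x)$ replaces the two entries $\tfrac{k_1+1}{2n},\tfrac{k_2+1}{2n}$ of $P_x$ by $\tfrac{k_1+k_2+1}{2n}$ and one more copy of $\tfrac{1}{2n}$, and leaves everything else untouched.

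The core of the proof is then a single one-variable inequality. Writing $a=k_1+1\ge2$ and $b=k_2+1\ge2$, the quantity $\sum_i p_i^\alpha$ changes only through the replacement of $a^\alpha+b^\alpha$ by $(a+b-1)^\alpha+1$ (after clearing the common factor $(2n)^{-\alpha}$). I would prove that for $a,b\ge2$ the sign of $(a+b-1)^\alpha+1-a^\alpha-b^\alpha$ is exactly the sign of $\alpha-1$, and is never zero: this is the statement that spreading the two arguments of the strictly convex (for $\alpha>1$) resp.\ strictly concave (for $0<\alpha<1$) function $t\mapsto t^\alpha$ apart, namely from $(a,b)$ to $(1,a+b-1)$ --- a legitimate mass transfer since $1\le a\le b\le a+b-1$ and the sums agree --- strictly increases resp.\ strictly decreases the sum of values. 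Because the prefactor $\tfrac1{1-\alpha}$ is negative for $\alpha>1$ and positive for $0<\alpha<1$, in both regimes one gets $H_\alpha(g(x))<H_\alpha(x)$ whenever $\ell\ge2$. (Equivalently: $(1,a+b-1)$ strictly majorizes $(a,b)$ and $H_\alpha$ is strictly Schur-concave for every $\alpha>0$, $\alpha\neq1$; the elementary computation above is just this fact made explicit.)

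To finish, exactly as in \Cref{cor:entropy-single-deletion}: any $x$ of length $m=n-1$ with $x\notin\{\texttt{0}^m,\texttt{1}^m\}$ has $\ell\ge2$ runs, and a finite sequence of $g$-steps (each reducing the number of runs by one, by \Cref{def:transformation-g}) takes it to $\sigma$, each step strictly lowering $H_\alpha$; together with $g(\sigma)=\sigma$ and the fact that $H_\alpha$ depends only on the multiset of embedding weights --- which is the same for $\texttt{0}^m$ and $\texttt{1}^m$ by the complementation symmetry $\omega_x(y)=\omega_{\overline x}(\overline y)$ --- this yields $H_\alpha(x)\ge H_\alpha(\sigma)$ with equality iff $x\in\{\texttt{0}^m,\texttt{1}^m\}$. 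I do not expect a real obstacle here: the rigidity of the single-deletion structure reduces the entire claim to the two-term comparison of the second paragraph, in sharp contrast to the double-deletion setting where the weight distribution is much harder to control.
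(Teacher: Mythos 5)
Your proposal is correct and takes essentially the same route as the paper: show that the merging map $g$ strictly decreases $H_\alpha$ in the single-deletion setting, using the explicit weight distribution of Lemma \ref{lem:single-deletion-clustering}, and then iterate $g$ down to $\sigma$, handling equality via the complementation symmetry. Your two-term comparison is in fact the careful version of the paper's displayed computation: by keeping the extra unit weight and comparing $a^\alpha+b^\alpha$ with $(a+b-1)^\alpha+1$ via strict convexity/concavity of $t\mapsto t^\alpha$ (equivalently, strict majorization), you obtain the sign dichotomy in $\alpha$ cleanly, whereas the paper's stated equivalence omits that term.
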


\begin{proof}
Similar to the proof of the Shannon entropy minimization, we just have to show that $g$ decreases the R\'enyi entropy as well such that here almost all the terms also disappear and we end up with
\begin{equation*}
H_{\alpha}(x) - H_{\alpha}\left( g(x) \right) =
\frac{\alpha}{1-\alpha}\left( (k_1+1)^{\alpha} + (k_2+1)^{\alpha}  - (k_1+k_2+1)^{\alpha} \right)
\end{equation*}
This is positive since $(k_1+1)^{\alpha} + (k_2+1)^{\alpha}  - (k_1+k_2+1)^{\alpha} \geq 0 \Leftrightarrow
\alpha < 1$.
\end{proof}

\subsection{Maximal Entropy For Single Deletions}

We now show that the case of maximal entropy for single deletions follows from a simple combinatorial breakdown of the weights and their corresponding $y$ strings.

\begin{theorem}\label{thm:max-entropy-single-deletion}
	For all $n$ and any subsequence $x$ of length $m = n-1$, we have
	\[H_n(x) \leq H_n\left( \texttt{1010}... \right),\]
	with equality only if $x\in \{\texttt{1010}..., \texttt{0101}...\}$.
\end{theorem}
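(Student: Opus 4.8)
The plan is to recast the maximization of $H_n(x)$ as an extremal problem over integer compositions, dual to the minimization argument already given. Since $m=n-1$, the total number of masks $\mu_1=\binom{n}{m}2^{n-m}=2n$ is independent of $x$, and $\sum_{y\in Y_1}\omega_x(y)=\mu_1$; hence
\begin{equation*}
H_n(x)=-\sum_{y\in Y_1}\frac{\omega_x(y)}{2n}\log_2\frac{\omega_x(y)}{2n}=\log_2(2n)-\frac{1}{2n}\Psi(x),\qquad \Psi(x):=\sum_{y\in Y_1}\omega_x(y)\log_2\omega_x(y),
\end{equation*}
so maximizing $H_n(x)$ is equivalent to minimizing $\Psi(x)$. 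Writing $x=(k_1,\dots,k_\ell)$ in run-length encoding and invoking Lemma~\ref{lem:single-deletion-clustering}, the multiset of weights occurring in $Y_1$ consists of $k_1+1,\dots,k_\ell+1$ (the block-lengthening insertions) together with $m-\ell+2$ weights equal to $1$ (the block-splitting insertions). The weight-$1$ terms contribute $1\cdot\log_2 1=0$, so $\Psi(x)=\sum_{i=1}^{\ell}f(k_i)$ with $f(t):=(t+1)\log_2(t+1)$; in particular $\Psi(x)$ depends only on the composition $(k_1,\dots,k_\ell)$ of $m$ and not on whether $x$ begins with $\texttt 0$ or $\texttt 1$, consistent with $\omega_x(y)=\omega_{\overline x}(\overline y)$.

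The analytic heart of the argument is that $f(t)=(t+1)\log_2(t+1)$ is strictly convex on $[0,\infty)$, with $f''(t)=1/((t+1)\ln 2)>0$, and vanishes at $t=0$. A strictly convex function vanishing at the origin is strictly superadditive, $f(a+b)>f(a)+f(b)$ for $a,b>0$, which I would verify by comparing $f(a)$ and $f(b)$ to the convex combinations $\tfrac{a}{a+b}f(a+b)+\tfrac{b}{a+b}f(0)$ and $\tfrac{b}{a+b}f(a+b)+\tfrac{a}{a+b}f(0)$. Consequently, replacing a part $k_i=a+b$ of a composition of $m$ by two parts $a,b\ge 1$ strictly decreases $\Psi$. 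Iterating this ``splitting'' move drives any composition down to the finest one, $(1,1,\dots,1)$ with $\ell=m$, which is therefore the unique minimizer of $\Psi$; this composition is exactly the run-length encoding of $\texttt{1010}\ldots$ and $\texttt{0101}\ldots$. Unwinding the reduction yields $H_n(x)\le H_n(\texttt{1010}\ldots)$, with equality precisely when every run of $x$ has length $1$, i.e. $x\in\{\texttt{1010}\ldots,\texttt{0101}\ldots\}$. Equivalently, this is the statement that the ``block-splitting'' transformation $s$ (break off a length-$1$ run from any block of length $\ge 2$, raising the run count by one) strictly increases $H_n$, exactly mirroring how repeated applications of the merging map $g$ from Definition~\ref{def:transformation-g} drive $x$ down to $\sigma$.

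I do not expect a genuine obstacle here: once the problem is recast as minimizing $\sum_i f(k_i)$ over compositions, the conclusion is elementary convexity. The two points deserving care are (i) making sure Lemma~\ref{lem:single-deletion-clustering} enumerates all of $Y_1$, so that the only weights omitted from $\Psi$ are the weight-$1$ ones, which contribute nothing, and (ii) carrying strictness through the superadditivity step, since the ``equality only if'' clause requires strict convexity of $f$ rather than mere convexity. The same superadditivity-type reasoning, with the sign of $\alpha-1$ controlling the direction, should also give the analogous maximization statement for the R\'enyi entropies $H_\alpha$, paralleling the single-deletion R\'enyi minimization result already obtained above, though I would not pursue this inside the main proof.
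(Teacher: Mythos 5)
Your proof is correct, but it follows a genuinely different route from the paper's. The paper's argument is purely combinatorial and specific to the alternating string: it reads off from Lemma \ref{lem:single-deletion-clustering} that the alternating $x$ yields exactly two singletons plus $m$ supersequences of weight $2$, invokes Theorem \ref{thm:singleton-extremization} (alternating strings minimize the number of singletons), and concludes that this flattest admissible configuration is ``final'' and hence of maximal entropy --- a short argument whose extremality step is left rather informal. You instead exploit the fact that for $m=n-1$ the normalization $\mu_1=2n$ is independent of $x$, so maximizing $H_n$ is equivalent to minimizing $\Psi(x)=\sum_{i}(k_i+1)\log_2(k_i+1)$ over compositions $(k_1,\dots,k_\ell)$ of $m$ (the weight-$1$ block-splitting terms contribute nothing), and you then settle that extremal problem by strict superadditivity of $t\mapsto(t+1)\log_2(t+1)$: splitting any run strictly decreases $\Psi$, so the all-ones composition --- i.e.\ precisely the alternating strings --- is the unique minimizer. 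This mirrors the paper's own minimization machinery (the merging map $g$ of Definition \ref{def:transformation-g} and Lemma \ref{lemma:g-single-deletion}) run in the opposite direction, rather than the paper's actual maximization proof. What your route buys is a self-contained, fully rigorous extremality argument that does not need the singleton theorem and yields the strict-inequality/equality characterization automatically; what the paper's route buys is brevity and a structural explanation (minimal number of singletons, all remaining weights equal) of why the alternating distribution is flattest. Both rest on the same enumeration of $Y_1$ in Lemma \ref{lem:single-deletion-clustering}, and your bookkeeping (total mass $2n$, weights $k_i+1$ and $1$) checks out.
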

\begin{proof}
In the case where the short string $x$ alternates, then the $y$ strings (all characterized/labelled with a single bit inserted into $x$) take two forms:
\begin{itemize}

\item There are 2 $y$ strings in which the alternating sequence has been extended (with one starting $\texttt{0}$ and the other $\texttt{1}$). These just have one projection, i.e., singletons.

\item There are $m$ supersequences ($y$ strings) with one of the members doubled. Each of these $y$ strings yields two projections, i.e., a flat distribution of weights across the $m$ strings.

\end{itemize}
We also know from Theorem \ref{thm:singleton-extremization} that the alternating strings yield the minimum number of singletons. Thus this configuration is final and hence of maximal possible entropy.
\end{proof}

\section{Entropy Minimization for Double Deletions}

This section will follow the same structure as \Cref{sec:single-deletion}. We will enumerate the different supersequences and their corresponding weights. This is summed up in Lemma \ref{lem:double-deletion-clustering}. We then apply our analysis to a string $x$, and to its image by the function the merging operator $g$ \cref{def:transformation-g}.
\smallskip

In the case of two deletions, there are three types of insertions to consider;
using the notation introduced in the previous section, these are $2/0$, $1/1$, and $0/2$ insertions. For a fixed string $x = (k_1, \dotsc, k_\ell)$, we now analyze each case to account for the corresponding number of supersequences and their respective weights in each cluster. We will then study how this distribution changes when we go from $x$ to $g(x)$ in order to prove the following lemma:
\begin{lemma}\label{lemma:g-double-deletion}
The transformation $g$ decreases the entropy $H_n(x)$ for double deletions, i.e., $m=n-2$.

\end{lemma}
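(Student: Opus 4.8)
The plan is to reprise, for $m=n-2$, the two-stage template used for single deletions in \Cref{sec:single-deletion}: (i) determine the exact weight distribution of the length-$n$ supersequences of $x$ in terms of the run-length encoding of $x$ --- this is the content of \Cref{lem:double-deletion-clustering}, stated just below --- and (ii) compute the entropy gap $\Delta_2 = H_n(x) - H_n(g(x))$ and show it is nonnegative, with equality only when $x\in\{\texttt{0}^m,\texttt{1}^m\}$.

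For step (i), fix $x = (k_1,\dotsc,k_\ell)$ and sort every supersequence $y$ according to which of the three insertion patterns $2/0$, $1/1$, $0/2$ produced it, and according to where the inserted bits fall relative to the runs of $x$. Two block-lengthening insertions into a single run $i$ give a string of weight $\binom{k_i+2}{2}$ (there are $\ell$ of these); two block-lengthening insertions into distinct runs $i<j$ give weight $(k_i+1)(k_j+1)$ (there are $\binom{\ell}{2}$ of these); a $1/1$ insertion that lengthens run $i$ and performs one splitting elsewhere contributes weight $k_i+1$; and the $0/2$ insertions account for the remaining strings. Unlike the single-deletion case, a $0/2$ (or a tightly placed $1/1$) insertion need not yield a singleton: two run-splittings adjacent to an existing run, or a pair of alternating framing bits, create a small constant number of additional embeddings, so this is the part of the classification demanding the most careful case analysis. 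The upshot is a clean list of weights: $\ell$ copies of $\binom{k_i+2}{2}$, $\binom{\ell}{2}$ copies of $(k_i+1)(k_j+1)$, a controlled multiset of values $k_i+1$, and a remainder of $1$'s --- and since $\log 1 = 0$, the $1$'s never enter $\sum_y \omega_x(y)\log\omega_x(y)$.

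For step (ii), observe that $\mu_{n,m}=\binom{n}{m}2^{n-m}$ depends only on $n,m$, so $H_n(x) = \log\mu_{n,m} - \frac{1}{\mu_{n,m}}\Phi(x)$ with $\Phi(x) := \sum_y \omega_x(y)\log\omega_x(y)$, and hence $\Delta_2 = \frac{1}{\mu_{n,m}}\bigl(\Phi(g(x)) - \Phi(x)\bigr)$; it suffices to prove $\Phi$ is nondecreasing under $g$. Applying the classification to both $x=(k_1,k_2,k_3,\dotsc,k_\ell)$ and $g(x)=(k_1+k_2,k_3,\dotsc,k_\ell)$, every weight supported on the runs $k_3,\dotsc,k_\ell$ is common to the two lists and cancels, leaving a finite expression in $k_1$, $k_2$ and $K := \sum_{j\ge 3}(k_j+1)$. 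Schematically,
\begin{align*}
\Phi(g(x)) - \Phi(x) = {}& \Bigl[\, \binom{k_1+k_2+2}{2}\log\binom{k_1+k_2+2}{2} - \binom{k_1+2}{2}\log\binom{k_1+2}{2} - \binom{k_2+2}{2}\log\binom{k_2+2}{2} \\
&\quad - (k_1+1)(k_2+1)\log\bigl((k_1+1)(k_2+1)\bigr)\,\Bigr] + K\text{-linear terms} + (1/1)\text{ terms},
\end{align*}
and each grouping is nonnegative by superadditivity and convexity of $f(t)=t\log t$: inequalities of the form $f(a)+f(b)\le f(a+b)$, $\log(ab)\le 2\log(a+b)$, and $a\log a + b\log b \le (a+b-1)\log(a+b-1)$ (valid for $a,b\ge 2$) dispatch the binomial, product and $K$-linear contributions respectively, and equality throughout forces $\ell=1$, i.e.\ $x=\sigma$. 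Conceptually this is just the statement that $\Phi$ is Schur-convex while $g$ concentrates the weight distribution, so an alternative would be to exhibit a majorization $P_{g(x)}\succ P_x$ directly.

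The main obstacle is step (i): pinning down \Cref{lem:double-deletion-clustering} exactly, especially the $0/2$ and boundary-sensitive $1/1$ sub-cases where inserted bits interact with a neighbouring run or with one another and contribute a few extra embeddings beyond the naive product count. Any off-by-a-constant in a weight propagates into $\Phi$ and could break the final inequality, so that bookkeeping must be airtight --- whereas once the distribution is correct, establishing $\Phi(g(x))\ge\Phi(x)$ is a routine, if tedious, exercise in the convexity of $t\log t$, mirroring the three-term collapse seen in \Cref{lemma:g-single-deletion}.
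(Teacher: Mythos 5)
Your overall strategy coincides with the paper's: classify the supersequences of $x$ for $m=n-2$ by insertion type ($2/0$, $1/1$, $0/2$) via the run-length encoding, obtain the exact weight distribution (\Cref{lem:double-deletion-clustering}), and then show the weighted sum $\sum_y \omega_x(y)\log\omega_x(y)$ grows under $g$. But both load-bearing steps are left with genuine gaps. First, the classification you sketch is incomplete and partly off: under the correct \emph{disjoint} classification, every $0/2$ string has weight exactly $1$ (two block-splitting insertions never create extra embeddings); the strings you worry about — a split landing adjacent to a lengthened run, so that $y$ can be read two ways — are exactly the ones the paper assigns to the $1/1$ case by a preference convention, and they carry weight $k_i+k_{i+1}+1$ rather than $k_i+1$. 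Your weight list omits these terms entirely, and you yourself flag the bookkeeping as "the main obstacle" without resolving it, so the key counting lemma on which everything rests is not actually established in your proposal.

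Second, step (ii) is not the routine convexity exercise you claim. Your assertion that after cancellation the difference $\Phi(g(x))-\Phi(x)$ is a function of $k_1$, $k_2$ and $K=\sum_{j\ge 3}(k_j+1)$ alone is false: because the weights enter through $t\log t$, terms such as $(k_2+1)\sum_{i\ge 3}(k_i+1)\log(k_i+1)$ and the paired products $(k_1+1)(k_i+1)\log\bigl((k_1+1)(k_i+1)\bigr)$ versus $(k_1+k_2+1)(k_i+1)\log\bigl((k_1+k_2+1)(k_i+1)\bigr)$ survive and depend on each $k_i$ individually. Consequently the sign of the difference cannot be dispatched by the three schematic inequalities you list; the paper's own verification (Lemma \ref{TH:6.3}, proved in \Cref{app:deletions}) requires evaluating the difference at the base point $k_2=\cdots=k_\ell=1$ and then showing that every partial derivative in the directions $i\ge 2$ is nonnegative — a multi-page computation, not a three-term collapse mirroring \Cref{lemma:g-single-deletion}. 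The majorization/Schur-concavity aside would indeed give the result if $P_{g(x)}$ majorized $P_x$, but that majorization is asserted, not proved, and is not how the paper proceeds. As it stands, the proposal is a plan in the right direction with the decisive combinatorial bookkeeping and the decisive inequality both left open.
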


Note that while this technique could be applied to a higher number of insertions, the complexity of the analysis blows up already for two deletions, as the next section will show.

\subsection{Clustering Supersequences via Double Insertions}\label{sec:clustering-supersequences-double-insertion}

\paragraph{Case $2/0$}
The case $2/0$ corresponds to the situation where the insertions do not create new blocks. This happens when both bits are added to the same block, or when they are added to two different blocks, as follows.

The former corresponds to
\[
y = (k_1, \dotsc, k_{i-1},k_i+2, k_{i+1}, \dotsc, k_\ell)
\]
for some $i \in [\ell]$, which has weight $\omega_x(y) = \binom{k_i+2}{2}$. There are $\ell$ strings of this type.

The latter corresponds to
\[
y = (k_1, \dotsc, k_{i-1},k_i+1, k_{i+1}, \dotsc, k_{j-1},k_j+1, k_{j+1}, \dotsc, k_\ell)
\]
for $1 \leq i < j \leq \ell$, and has weight $\omega_x(y) = (k_i+1)(k_j+1)$. There are $\frac{\ell(\ell -1)}{2}$ strings with this weight. In total, there are $\frac{\ell(\ell+1)}{2}$ strings for the case $2/0$.

\paragraph{Case $0/2$}
In the $0/2$ case, there are only block-splitting insertions, hence all strings have weight 1. block-splitting insertions may happen in a single block, or in two separate blocks. To ease notation, we introduce
\begin{equation*}
\widetilde k_i =
\begin{cases}
k_i - 1 & \text{if $i \in [2, \ell-1]$} \\
k_i & \text{if $i = 1$ or $i = \ell$}
\end{cases}
\end{equation*}
The different treatments for ``endpoints'' $1$ and $\ell$ correspond to cases $(1, k_1, \dotsc, k_{\ell})$ and $(k_1, \dotsc, k_{\ell}, 1)$, whereas a block-splitting insertion in the $i$-th block can happen at only $k_i-1$ places.
\begin{itemize}
\item If we insert into the first or the last block, we choose respectively $k_1$ and $k_\ell$ positions, i.e., there are respectively $k_1$ and $k_\ell$ different strings.

\item If we insert into any other block $i$, we choose amongst $k_i-1$ positions, which yields $k_i-1$ different strings.
\item If we insert in different blocks, we apply the same analysis twice, independently, which gives $\widetilde{k_i}\widetilde{k_j}$ different strings.
\item If we insert twice in the same block, we get $\binom{\widetilde{k_i} + 1}{2}$ different strings.
\end{itemize}
In the end, the total number of $0/2$ insertions is
\begin{align*}
\sum_{1 \leq i < j \leq \ell}\widetilde{k_i}\widetilde{k_j} + \sum_{i=1}^\ell \binom{\widetilde{k_i} + 1}{2}
\end{align*}
\begin{example}
If $k_1 = \cdots = k_\ell = 1$, so that $\widetilde k_1 = \widetilde k_\ell = 1$ and $\widetilde k_2 = \cdots = \widetilde k_{\ell-1} = 0$, we count 3 strings.
\end{example}
\begin{example}[]
For example, for $1<i<j<l$ we get for all $a_1, a_2, b_1, b_2 > 0$ such that $a_1 + a_2 = k_i$ and $b_1 + b_2 = k_j$, the string
\[
k_1 \dots k_{i-1} a_1 1 a_2 k_{i+1} \dots k_{j-1} b_1 1 b_2 k_{j+1} \dots k_l.
\]
The number of such strings is $(\widetilde{k_i})(\widetilde{k_j})$.

Another example: for the particular cases $i=j=1$ we get for all $a_1,a_2,a_3 \in \mathbb{N}$ with $a_2$ strictly positive such that $a_1 + a_2 + a_3 = k_1$, the string $a_1 1 a_2 1 a_3 k_2 k_l$ or (case $a_2 = 0$) $a_1 2 a_3 k_2 k_l$. The number of such strings is $\binom{\widetilde{k_1} + 1}{2}$.
\end{example}

\paragraph{Case $1/1$}
As in the previous case, we choose a block in which we apply a block-lengthening insertion, yielding $k_i + 1$ masks; then we choose a block for a block-splitting insertion, yielding $\widetilde k_i$ strings. However, one must be careful:
to see why, consider the following string $x = \texttt{000111} = (\texttt{0};3,3)$.
\begin{itemize}
\item If we insert a block-lengthening \texttt{\good{0}} in the first block, and then a block-splitting \texttt{\bad{1}} in the last-but-one position of the first block, we get the string $y = \texttt{000\bad{1}\good{0}111} = (\texttt{0};3,1,1,3)$.
This string is of weight $(3+1) + (3+1) -1$, since we can delete the $\texttt{\good{0}}$ then one of the four $\texttt{1}$, or the $\texttt{\bad{1}}$ then one of the four $\texttt{0}$, and we remove one so that we do not double count the deletion of $\texttt{\bad{1}\good{0}}$.
\item If we insert a block-lengthening \texttt{\good{1}} in the second block, followed by a block-splitting \texttt{\bad{0}} in the second position of the second block, we obtain the same string $y = \texttt{000\good{1}\bad{0}111} = (\texttt{0}; 3,1,1,3)$.
\end{itemize}
Hence there are two ways to get each $y$. We will therefore exercise a preference toward the first situation, where we perform a block-lengthening insertion in the first block, followed by a block-splitting insertion in the first block's last-but-one position. Let $i \in [\ell]$.
\begin{itemize}
\item If $i = 1$, we get $\sum_{j=1}^{\ell}\widetilde{k_j} (=m -\ell +2)$ strings of weight $k_1 + 1$, as well as a string of weight $k_1+1+k_2$. In total, we get $m-\ell + 3$ strings.
\item If $1 < i < \ell$, we perform a block-lengthening insertion in the block $i$, the number of strings we will get is $(\sum_i \widetilde{k_i}) $. Indeed the string
\[
(k_1, \dotsc, k_{i-1}, 1, 1, k_i, k_{i+1}, \dotsc, k_\ell)
\]
will be counted for the case $i-1$. Each of these strings has weight $k_i + 1 $, except one $(k_1, \dotsc, k_i, 1, 1, k_{i+1}, \dotsc, k_\ell)$ which has weight $k_i + 1 + k_{i+1}$ (the string that we will not count for $i+1$).
\item If $i=\ell$, we can keep the same formula by introducing $k_{\ell +1} = 0$ for the weight of the string $(k_1, \dotsc,  k_\ell,1,1)$.
\end{itemize}

\begin{lemma}\label{lem:double-deletion-clustering}
$Y_2$ is composed of:
\begin{itemize}
	\item case $2/0$: for all $i$ in $[\ell]$ we have one supersequence of weight $\binom{k_i+2}{2}$ and $\forall 1 \leq i < j \leq [\ell]$ we have one supersequence of weight $(k_i +1)(k_j+1)$
	\item case $0/2$: we have $\sum_{1\leq i < j \leq \ell}\widetilde{k_i}\widetilde{k_j} + \sum_{i=1}^{\ell} \binom{\widetilde{k_i}+1}{2}$ supersequences of weight $1$.
	\item case $1/1$: for all $i$ in $[\ell]$ we have $m-\ell+2$ supersequence of weight $k_i+1$ and one of weight $k_i+k_{i+1}+1$ with the convention that $k_{\ell+1}=0$.
\end{itemize}
\end{lemma}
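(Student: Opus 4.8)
The plan is to present this lemma as a clean consolidation of the case-by-case count carried out in \Cref{sec:clustering-supersequences-double-insertion}. As with single deletions (\Cref{lem:single-deletion-clustering}), every $y \in Y_2$ is obtained from $x = (k_1, \dots, k_\ell)$ by two insertions, each of which is either \emph{block-lengthening} (the inserted bit extends an existing run of $x$) or \emph{block-splitting} (the inserted bit has the opposite value to its neighbours in $x$, so it either splits a run into two or creates a fresh run of length $1$ at an endpoint). Writing $a/b$ for a pair consisting of $a$ block-lengthening and $b$ block-splitting insertions, this trichotomy yields exactly the families $2/0$, $1/1$ and $0/2$, and the proof treats each one separately. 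For each family I would: (i) enumerate the distinct supersequences it produces, placing the inserted bits among the runs of $x$ via a stars-and-bars count; and (ii) compute each weight $\omega_x(y)$ by deleting the block-splitting insertions so that $y$ is reduced to a string with the same block structure as $x$, and then applying the product formula \Cref{eq:embedding-simple-case}.

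The $2/0$ and $0/2$ families are the routine ones. For $2/0$: both insertions in the same block $i$ gives $y = (\dots, k_i + 2, \dots)$ of weight $\binom{k_i + 2}{2}$ (there are $\ell$ such strings), while insertions in two distinct blocks $i < j$ give weight $(k_i + 1)(k_j + 1)$ (there are $\binom{\ell}{2}$ such strings). For $0/2$: every weight equals $1$ because no runs are merged, so the only work is counting distinct placements; this is where the auxiliary quantity $\widetilde{k_i}$ enters, since an internal block $2 \le i \le \ell - 1$ offers $k_i - 1$ splitting slots whereas each endpoint block offers $k_i$ slots (the extra one being the prepend/append position), yielding $\sum_{1 \le i < j \le \ell} \widetilde{k_i}\widetilde{k_j} + \sum_{i=1}^\ell \binom{\widetilde{k_i} + 1}{2}$ strings of weight $1$.

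The $1/1$ family is the crux, and I expect the double-counting it harbours to be the main obstacle. As the example $x = \texttt{000111}$ shows, one supersequence $y$ can be produced both by lengthening block $i$ and then splitting next to the join, and by lengthening block $i+1$ and splitting next to that join; moreover, when the two inserted bits land adjacently the weight is not the naive $(k_i + 1) + (k_{i+1} + 1)$ but carries a $-1$ correction (the inserted pair must not be double-deleted), so it equals $k_i + k_{i+1} + 1$. The remedy is a canonical choice: always attribute such a $y$ to the block-lengthening insertion in the \emph{earlier} of the two blocks, with the block-splitting bit placed in that block's last-but-one position. With this convention, each block index $i \in [\ell]$ contributes $m - \ell + 2$ strings of weight $k_i + 1$ (note $m - \ell + 2 = \sum_j \widetilde{k_j}$) together with exactly one ``straddling'' string of weight $k_i + k_{i+1} + 1$, where one adopts the convention $k_{\ell+1} = 0$ so that the endpoint case $i = \ell$ fits the same formula and the string with two adjacent splits at the $i/(i{+}1)$ boundary is counted once, under index $i$ rather than $i+1$.

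Assembling the three families gives precisely the decomposition in the statement. I would close with the sanity check that the three counts sum to $|Y_2|$, the number of length-$(m+2)$ supersequences of $x$, which by the same reasoning as in the single-deletion case is a function of $n$ and $m$ only; this confirms that the enumeration is both exhaustive and non-redundant. The one genuinely delicate point throughout is the $1/1$ bookkeeping: making the canonical choice precise, checking that it neither omits a supersequence nor counts one twice, and verifying that the $-1$ weight correction is applied to exactly the straddling strings and to no others.
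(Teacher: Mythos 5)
Your proposal follows essentially the same route as the paper: the same classification of the two insertions into block-lengthening and block-splitting operations giving the $2/0$, $0/2$ and $1/1$ families, the same stars-and-bars style counts with the endpoint-adjusted $\widetilde{k_i}$, the same canonical attribution (earlier block, last-but-one splitting slot) to resolve the $1/1$ double counting with the $k_i+k_{i+1}+1$ straddling weight and the convention $k_{\ell+1}=0$, and the same closing sanity check on totals. No gaps beyond the bookkeeping you yourself flag, which the paper handles exactly as you describe.
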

Since the analysis is quite convoluted, we make two sanity checks on the number of supersequences and the sum of all the weights.

\begin{remark}[Sanity check for the number of supersequences]
\label{REMARK:NUMSTRING-DOUBLE}
We check the result of \Cref{lem:double-deletion-clustering} against Eq. \ref{eq:upsilon-cardinality}.

We give an algebraic proof in Appendix \ref{app:2} that if $(k_i)_{i \in \{1, \dotsc, \ell \} }$ are
positive integers such that $m = \sum_{i=1}^{\ell} k_i$, then we have
\[
\frac{\ell(\ell+1)}{2} + \sum_{1 \leq i < j \leq \ell}\widetilde{k_i}\widetilde{k_j} + \sum_{i=1}^\ell \binom{\widetilde{k_i} + 1}{2} + 1 + \ell(m-\ell -2) = \binom{m+2}{m} + \binom{m+2}{m+1} + \binom{m+2}{m+2},
\]
to make sure we have not missed or double-counted any strings.
\end{remark}

\begin{remark}[Sanity check for the sum of all weights]
\label{REMARK:NUMWEIGHT:DOUBLE}
We check the result of \Cref{lem:double-deletion-clustering} against \Cref{eq:numbmask}.
Similarly, to ensure that we have not missed or double-counted any weights, we give an algebraic proof in \Cref{app:3} showing that if there exist positive integers $(k_i)_{i \in \{1, \dotsc, \ell \} }$ such that $m = \sum_{i=1}^{\ell} k_i$, then we have
\begin{align*}
 &\sum_{i=1}^{\ell}\binom{k_i+2}{2} + \sum_{1 \leq i < j \leq \ell}(k_i +1)(k_j+1) + \sum_{1\leq i < j \leq \ell}\widetilde{k_i}\widetilde{k_j} + \sum_{i=1}^{\ell} \binom{\widetilde{k_i}+1}{2} \\ &+ \sum_{i=1}^{\ell}\left[(m-\ell+2)\times (k_i+1)+k_i+k_{i+1}+1\right] = 4 \binom{m+2}{m}.
\end{align*}
\end{remark}

\subsection{Proof of Minimal Entropy For Double Deletions}\label{sec:double-deletion}
As in \Cref{sec:single-deletion}, we analyze the effects of the merging operation $g$ on entropy. For this, we consider the impact of $g(x) = (k_1+k_2,k_3, \dotsc, k_{\ell})$ on the clustering results developed in \Cref{sec:clustering-supersequences-double-insertion}. We will omit the analyses when no insertions are made in the first or second block, since we will get the same weight and this will disappear in the difference.

\paragraph{Case $2/0$}
For $x$, we had $\frac{\ell(\ell+1)}{2}$ strings of this type, we now have $\frac{\ell(\ell-1)}{2}$, there are $\ell$ less strings and $\ell-1$ that grow bigger. The rest remains the same.

\paragraph{Case $0/2$}
Similar to $x$, we have a certain number of strings with weight 1 counted as before \[\sum_{3 \leq i \leq j \leq \ell} \widetilde{k_i}\widetilde{k_j} + \sum_{3 \leq i \leq \ell} \binom{\widetilde{k_i} + 1}{2}\]
However, a part of the formula changes:
\begin{equation}\label{eq:1}
\binom{k_1+k_2 + 1}{2} + (k_1+k_2) \sum_{3 \leq i \leq \ell} \widetilde{k_i}
\end{equation}
Then, for the part of the analysis of $g(x)$ equivalent with that of $x$ we get
\begin{equation}\label{eq:2}
\binom{k_1 + 1}{2} + \binom{k_2}{2} + (k_1 + k_2-1) \times \sum_{3 \leq i \leq \ell} \widetilde{k_i} + k_1(k_2-1)
\end{equation}
now we take the difference between \Cref{eq:1} and \Cref{eq:2}
\begin{equation*}
\sum_{3 \leq i \leq \ell} \widetilde{k_i} + \binom{k_1 + k_2 + 1}{2} - \left(\binom{k_1 + 1}{2} + \binom{k_2}{2} + k_1(k_2-1)\right)
\end{equation*}
After simplifications, we obtain $\sum_{1 \leq i \leq \ell} \widetilde{k_i} + 1$.

\paragraph{Case $1/1$}
In the case of $x$, we had
\[(\ell-1) \sum_{1 \leq i \leq \ell}(\widetilde{k_i} - 1) + \sum_{1 \leq i \leq \ell}\widetilde{k_i}.\]
We now have
\[(\ell-2) (\sum_{1 \leq i \leq \ell}(\widetilde{k_i}-1)+1) + \sum_{1 \leq i \leq \ell}\widetilde{k_i}+1.\]
Taking the difference between now and before we get $\sum_{1 \leq i \leq \ell}{\widetilde{k_i}} + 1 - l$. We have $(\sum_{1 \leq i \leq l}(\widetilde{k_i}-1)+1)$ weights (the block-lengthening insertion in the first block) that grow bigger, the rest stays the same.

\begin{remark}[Sanity check]
We can check that the numbers of strings is constant:
\begin{itemize}
\item Case 0/2: $\sum_{1 \leq i \leq \ell} \widetilde{k_i} + 1$ more strings
\item Case 1/1: $(\sum_{1 \leq i \leq \ell}{\widetilde{k_i}} + 1 - \ell)$ less strings
\item Case 2/0: $\ell$ less strings.
\end{itemize}
and
$  \sum_{1 \leq i \leq \ell} \widetilde{k_i} + 1 -  (\sum_{1 \leq i \leq \ell}{\widetilde{k_i}} + 1 - \ell) -  \ell = 0$.
\end{remark}
We can now compute the difference of the two entropies.
Note that instead of working with the probabilities, we will multiply everything by $4\binom{m+2}{m}$ (i.e., the total number of masks). We can focus on the very few strings that show a change in weight (when an insertion is made in the first or second block).
\\
\textbf{Case $2/0$:} For $x$, we have $1$ string for each of the weights
\begin{align*}
& (k_1+1)(k_2+1), (k_1+1)(k_3+1), \dotsc, \\
& (k_1+1)(k_l+1),(k_2+1)(k_3+1), (k_2+1)(k_4+1), \dotsc, \\
& (k_2+1)(k_l+1), \binom{k_1+2}{2}\binom{k_2+2}{2}
\end{align*}
For $g(x)$, we still have $1$ string for each of the following weights:
\begin{equation*}
(k_1+k_2+1)(k_3+1), (k_1+k_2+1)(k_4+1), \dotsc, (k_1+k_2+1)(k_l+1), \binom{k_1+k_2+2}{2}
\end{equation*}
\\
\textbf{Case $0/2$:} For $g(x)$, we have $\sum_{1 \leq i \leq \ell}\widetilde{k_i}+1$.
\\
\textbf{Case $1/1$:} For $x$, the remaining strings are:
\begin{center}
\small
\begin{tabular}{cc}\toprule
Multiplicity & Weight \\\midrule
$\sum_{i=1}^{\ell}\widetilde{k_i}$ & $k_1+1$ \\
$\sum_{i=1}^{\ell}\widetilde{k_i}-1$ & $k_2+1$\\
$1$ & $k_1+k_2+1$\\
$1$ & $k_2+k_3+1$
\\\bottomrule
\end{tabular}
\end{center}
There remains, for $g(x)$, one string for each of the following weights $k_3+1,k_4+1, \dotsc, k_l+1$
and $\sum_{1 \leq i \leq \ell}\widetilde{k_i}+1$ strings of weight $k_1+k_2+1$ along with $1$ string of weight $k_1+k_2+k_3+1$. The difference of entropies is equal to the difference between $A$ and $B$ defined in the following equations:
\begin{align*}
A ={}
& \sum_{2 \leq i \leq \ell} (k_1+1)(k_i+1) \log\frac{1}{(k_1+1)(k_i+1)} +
 \binom{k_1+2}{2}\log \frac{1}{ \binom{k_1+2}{2}} + \binom{k_2+2}{2}\log \frac{1}{ \binom{k_2+2}{2}} \\
& + \sum_{1 \leq i \leq \ell}\widetilde{k_i}(k_1+1)\log\frac{1}{(k_1+1)} +(k_1+k_2+1)\log\frac{1}{(k_1 +k_2 +1)} \\
& + (\sum_{1 \leq i \leq \ell}\widetilde{k_i}-1)(k_2+1)\log\frac{1}{(k_2 +1)}\\
& + (k_2 + k_3+1)\log\frac{1}{(k_2 +k_3+1)}\\
B ={}
& \sum_{3 \leq i \leq l}(k_i+1) \log\frac{1}{(k_i+1)} + (\sum_{1 \leq i \leq l}\widetilde{k_i}+1)(k_1+k_2+1)\log\frac{1}{(k_1+k_2 +1)}\\
& + (k_1+ k_2 + k_3+1)\log\frac{1}{(k_1 + k_2 +k_3+1)}\\
&  + \sum_{3 \leq i \leq l} (k_1+k_2+1)(k_i+1) \log\frac{1}{(k_1+k_2+1)(k_i+1)} \\
& + \binom{k_1 +k_2+2}{2}\log \frac{1}{ \binom{k_1+k_2+2}{2}}
\end{align*}
where $A$ corresponds to $x$, and $B$ corresponds to $g(x)$.
We are now in a position to conclude the proof of Lemma \ref{lemma:g-double-deletion}.
\begin{lemma}\label{TH:6.3}
The transformation $g$ decreases the entropy $H_n(x)$ for double deletions, i.e., $m=n-2$.
\end{lemma}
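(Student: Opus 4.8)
The plan is to reduce the lemma to the single scalar inequality $A\ge B$ between the two explicit expressions displayed just above, and then to establish that inequality by pairing the terms according to the way $g$ rearranges the block structure of $x$, using only two elementary properties of the function $f(t):=t\log_2\tfrac1t=-t\log_2 t$ on $[0,\infty)$ (with $f(0)=0$): its \emph{superadditivity}, $f(a)+f(b)\ge f(a+b)$, which is just the identity $f(a)+f(b)-f(a+b)=a\log_2\tfrac{a+b}{a}+b\log_2\tfrac{a+b}{b}\ge 0$; and the fact that $f$ is \emph{strictly decreasing} on $[1,\infty)$, since there $f'(t)=-\log_2 t-\log_2 e<0$. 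First I would note that, by Remark~\ref{REMARK:NUMWEIGHT:DOUBLE}, the normalisation $\mu_{n,m}=4\binom{m+2}{m}$ coincides for $x$ and for $g(x)$, so the $\log_2\mu_{n,m}$ terms cancel and $\mu_{n,m}\bigl(H_n(x)-H_n(g(x))\bigr)$ equals exactly $A-B$; moreover every string whose weight is unaffected by $g$ cancels out, in particular all the weight-$1$ strings produced by $0/2$ insertions, for which $f(1)=0$. Hence it suffices to show $A\ge B$, with equality precisely when $g(x)=x$, i.e.\ $x\in\{\texttt{0}^m,\texttt{1}^m\}$.

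Next I would organise the matching of terms, guided by Lemma~\ref{lem:double-deletion-clustering} and by the fact that $g$ fuses the first two runs $k_1,k_2$ and leaves $k_3,\dots,k_\ell$ untouched. In the $2/0$ family this produces two kinds of groups: (i) the two ``triangular'' strings of $x$ of weights $\binom{k_1+2}{2}$ and $\binom{k_2+2}{2}$ against the single merged string of $g(x)$ of weight $\binom{k_1+k_2+2}{2}$; and (ii) for each outer run $i\ge 3$, the two strings of $x$ of weights $(k_1+1)(k_i+1)$ and $(k_2+1)(k_i+1)$ against the single merged string of $g(x)$ of weight $(k_1+k_2+1)(k_i+1)$. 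In group (i) the total $x$-weight never exceeds the single $g(x)$-weight, by the elementary identity $\binom{k_1+2}{2}+\binom{k_2+2}{2}=\binom{k_1+k_2+2}{2}-(k_1k_2-1)$; in group (ii) it strictly exceeds it, since $(k_1+1)(k_i+1)+(k_2+1)(k_i+1)=(k_1+k_2+2)(k_i+1)>(k_1+k_2+1)(k_i+1)$. Either way, superadditivity followed by monotonicity gives $f(a)+f(b)\ge f(a+b)\ge f(\text{merged weight})$, a nonnegative net contribution to $A-B$. The $1/1$ family is handled identically: the $x$-strings of weights $k_1+1$, $k_2+1$, $k_1+k_2+1$, $k_2+k_3+1$, with the multiplicities recorded in Lemma~\ref{lem:double-deletion-clustering}, are matched against the $g(x)$-strings of weights $k_1+k_2+1$ (with larger multiplicity), $k_1+k_2+k_3+1$, and $k_3+1,\dots,k_\ell+1$, and in each group every surviving $g(x)$-weight is a merge of $x$-weights of at least as large total mass, so the same two inequalities close the gap. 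Strictness for $x\neq\sigma$ follows from $k_1,k_2\ge 1$ together with the strictness of $f$ on $[1,\infty)$.

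The genuine difficulty is not analytic but combinatorial: one must verify that the grouping above is an honest partition of the terms occurring in $A$ and in $B$ — no string used twice, none left over — which amounts to tracking precisely how many strings of each weight class are created and destroyed as the first two runs merge. This is exactly the role of the two sanity-check identities (Remarks~\ref{REMARK:NUMSTRING-DOUBLE} and~\ref{REMARK:NUMWEIGHT:DOUBLE}), which pin down both the number of supersequences and the total embedding weight on each side and thereby force the matching to balance; I expect the bulk of the written proof to be spent making this bijective accounting explicit in the $1/1$ case, where several boundary strings (the cases $i=1$, $i=\ell$, and the convention $k_{\ell+1}=0$) have to be treated separately.

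Finally, as a cleaner alternative worth recording, the weight comparisons above are precisely the statement that the weight multiset of $g(x)$ \emph{majorises} that of $x$ (they share the sum $4\binom{m+2}{m}$ and the same cardinality), so Schur-concavity of the Shannon entropy yields $H_n(g(x))\le H_n(x)$ at once, with equality iff the two multisets agree, i.e.\ iff $x\in\{\texttt{0}^m,\texttt{1}^m\}$. Iterating $g$ until the string becomes $\sigma$ (as in \Cref{def:transformation-g}) then gives $H_n(x)\ge H_n(\sigma)$ for every length-$(n-2)$ subsequence $x$, in the spirit of Corollary~\ref{cor:entropy-single-deletion}.
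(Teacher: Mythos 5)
Your reduction to the scalar inequality $A\ge B$ is sound and matches the paper (the normalisation is the same for $x$ and $g(x)$, and the weight-$1$ strings drop out since $f(1)=0$), but the central step of your pairing argument is broken. The chain $f(a)+f(b)\ge f(a+b)\ge f(\text{merged})$ requires the merged weight to be \emph{at least} $a+b$, because $f$ is decreasing on $[1,\infty)$; this holds in your group (i), but in group (ii) you yourself note that $(k_1+1)(k_i+1)+(k_2+1)(k_i+1)=(k_1+k_2+2)(k_i+1)$ strictly exceeds the merged weight $(k_1+k_2+1)(k_i+1)$, so monotonicity gives $f(a+b)\le f(\text{merged})$ — the wrong direction — and the chain collapses. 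Worse, the groupwise inequality $f(a)+f(b)\ge f(\text{merged})$ is genuinely false there: for $k_1=k_2=k_i=1$ one gets $f(4)+f(4)=-16 < -6\log_2 6 \approx -15.51 = f(6)$, so that group contributes \emph{negatively} to $A-B$; positivity of $A-B$ only emerges from compensation across the families (notably the many weight-$(k_1+1)$ and $(k_2+1)$ strings of the $1/1$ case), so no local pairing of this simple form can close the proof. The same defect infects your $1/1$ matching (each merge there also loses mass, $(k_1+1)+(k_2+1)\to k_1+k_2+1$), and your bookkeeping is not the honest partition you call the crux — e.g. the $2/0$ string of $x$ with weight $(k_1+1)(k_2+1)$ is never assigned to any group. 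Consequently the majorization claim in your closing paragraph, whose only justification is these comparisons, is unproven; it may well be true (and would indeed give a clean Schur-concavity proof, iterating $g$ as you describe), but it is a global partial-sum statement that needs its own argument, not a corollary of merge-by-merge comparisons that in fact fail.

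For contrast, the paper does not argue by pairing or majorization at all: after simplifying $A-B$ in terms of $e(t)=-t\log_2 t$, it verifies $A-B>0$ at the boundary configuration $k_2=\cdots=k_\ell=1$ (for all $k_1\ge1$, $\ell\ge2$) and then computes the gradient, showing $\partial_i(A-B)>0$ for every direction $i\ge2$, so that increasing any run length only increases $A-B$. If you want to salvage your route, you must either prove the majorization (dominance of all partial sums of the sorted weight multisets) directly, or keep your grouping but quantify the superadditivity slack $a\log_2\frac{a+b}{a}+b\log_2\frac{a+b}{b}$ against the loss $f(\text{merged})-f(a+b)$ incurred because the merged weight falls short of $a+b$ — which, in effect, is what the paper's derivative computation does implicitly.
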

\begin{proof}
To prove this, it suffices to show that for $\ell \geq 2$, $k_i \geq 1$, $A - B > 0$. The proof mostly consists of computing partial derivatives to show that the function is increasing. We refer the reader to \Cref{app:deletions} for details.
\end{proof}

\begin{corollary}\label{thm:entropy-two-deletions}
For all $n$ and any subsequence $x$ of length $m = n-2$, we have
\[H_n(x) \geq H_n\left( \sigma \right),\]
with equality only if $x\in \{\texttt{0}^m, \texttt{1}^m\}$.
\end{corollary}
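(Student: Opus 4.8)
The plan is to derive \Cref{thm:entropy-two-deletions} as an almost immediate consequence of \Cref{lemma:g-double-deletion}, in exact parallel with the way \Cref{cor:entropy-single-deletion} follows from \Cref{lemma:g-single-deletion}. The one structural fact I would record first is that the merging operator $g$ of \Cref{def:transformation-g} sends a string of length $m$ with $\ell$ runs to a string of the same length $m$ with exactly $\ell-1$ runs, so iterating $g$ a total of $\ell-1$ times from any $x \in \Sigma^m$ terminates at the constant string $\sigma \in \{\texttt{0}^m, \texttt{1}^m\}$.

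I would then run an induction on the number of runs $\ell$ of $x$. In the base case $\ell = 1$ we have $x = \sigma$ and $H_n(x) = H_n(\sigma)$, consistent with the claimed equality case. For the inductive step $\ell \ge 2$, apply $g$ once; $g(x)$ again has length $m = n-2$ but only $\ell-1$ runs, so the induction hypothesis gives $H_n(g(x)) \ge H_n(\sigma)$, and combining this with the strict decrease $H_n(x) > H_n(g(x))$ from \Cref{lemma:g-double-deletion} (equivalently \Cref{TH:6.3}) yields $H_n(x) > H_n(\sigma)$. For the equality characterisation: if $x \notin \{\texttt{0}^m, \texttt{1}^m\}$ then $\ell \ge 2$, so at least one $g$-step occurs in the reduction and strictly lowers the entropy, forcing strict inequality; conversely equality is trivial when $x = \sigma$. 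This settles the corollary in a few lines.

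The real work is in the machinery the corollary invokes, i.e. \Cref{lemma:g-double-deletion}, and this is where I expect the difficulty to lie. First I would fix the complete clustering of the supersequence set $Y_2$ by insertion type ($2/0$, $1/1$, $0/2$) together with the exact weight and multiplicity of each member, namely \Cref{lem:double-deletion-clustering}, and immediately cross-check it against $\sum_{r=m}^{n}\binom{n}{r} = \binom{m+2}{m} + \binom{m+2}{m+1} + \binom{m+2}{m+2}$ and against the total mask count $\mu_{n,m} = 4\binom{m+2}{m}$ via the identities in \Cref{REMARK:NUMSTRING-DOUBLE} and \Cref{REMARK:NUMWEIGHT:DOUBLE}, so that no supersequence or mask is missed or double-counted. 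The delicate point is the $1/1$ case, where a block-lengthening insertion followed by a block-splitting insertion can produce the same $y$ as a different such pair; a fixed tie-breaking convention (lengthening in the earlier block, then splitting in its last-but-one position) is needed for a clean enumeration.

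With the clustering in hand, I would compute $H_n(x) - H_n(g(x))$ by comparing the analyses of $(k_1,k_2,k_3,\dots,k_\ell)$ and $(k_1+k_2,k_3,\dots,k_\ell)$, cancelling every supersequence whose weight is unchanged (those from insertions disjoint from the first two blocks), and scaling by $4\binom{m+2}{m}$ so that the difference becomes the explicit quantity $A-B$ displayed just before \Cref{TH:6.3}. The analytic crux, and the step I expect to be the main obstacle, is then to show $A - B > 0$ for all $\ell \ge 2$ and all integers $k_i \ge 1$; I would treat $A-B$ as a function of the real variables $k_i$, establish monotonicity in each coordinate via partial derivatives (the computation deferred to \Cref{app:deletions}), and thereby reduce to boundary configurations with small $\ell$ and $k_i$ near $1$, which can be verified directly.
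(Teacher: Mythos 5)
Your proposal is correct and follows essentially the same route as the paper: the corollary is obtained by iterating the merging operation $g$ of \Cref{def:transformation-g} until the constant string $\sigma$ is reached, with each step strictly decreasing the entropy by \Cref{lemma:g-double-deletion} (whose proof indeed rests on the $2/0$, $1/1$, $0/2$ clustering of \Cref{lem:double-deletion-clustering}, the sanity checks of \Cref{REMARK:NUMSTRING-DOUBLE} and \Cref{REMARK:NUMWEIGHT:DOUBLE}, and the positivity of $A-B$ established via partial derivatives in \Cref{app:deletions}). Your explicit induction on the number of runs and the handling of the equality case are just a slightly more detailed write-up of the same argument.
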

\begin{proof}
Given any $x \neq \sigma$ of length $m = n - 2$, it can be transformed into the string $\sigma$ by a series of consecutive $g$ operations (cf. \Cref{def:transformation-g}). Each such operation can only decrease the entropy, as proved in Lemma \ref{lemma:g-double-deletion}, and thus we get a proof for the fact that $H_n(x) \geq H_n\left(\texttt{0}^m\right)$.
\end{proof}

\section{Concluding Remarks}\label{sec:chp-finite-length-deletion-conclusion}

From the original cryptographic motivation of the problem, the minimal entropy case corresponding to maximal information leakage is arguably the case that interests us the most. While our results shed more light on various properties of the space of supersequences and the combinatorial problem of counting the number of embeddings of a given subsequence in the set of its compatible supersequences, the original entropy maximization conjecture remains an open problem. Finally, proving the entropy minimization conjecture for an arbitrary number of deletions as well as a more general characterization of the distribution of the number of subsequence embeddings in supersequences of finite-length present some further open problems.

\stopminichaptoc{\minichaptocenabled}

\chapter{Entropy Minimization via Hidden Word Statistics}\label{chp:hws}

\printminichaptoc{\minichaptocenabled}

\section{Introduction}

In the previous chapter, we provided an analysis of the same information theory problem proving the entropy minimization conjecture for the special cases of single and double deletions, i.e., $m=n-1$ and $m=n-2$. While the methodology used in \cite{atashpendar2018clustering} and described in Chapter \ref{chp:finite-deletions} depended on showing that any bit string can be transformed into the uniform bit string by successively applying an operation that strictly decreases the entropy, here we adopt an entirely different approach based on some key theorems proven in the works of Flajolet, Szpankowski and Vall\'ee \cite{flajolet2006hidden} on \textit{hidden word statistics}. More precisely, we rely on the fact that the distribution of subsequence embeddings asymptotically tends to a Gaussian to obtain estimates for the entropy based on the moments of the posterior distribution. A crucial quantity for establishing the limiting case of entropy minimization is a measure of autocorrelation that is used in estimating the variance.

The entropy minimization result ultimately follows from a maximization of this autocorrelation coefficient by the uniform string. The number of runs and their respective lengths in $x$ strings play a central role in the distribution of subsequence embeddings, and in turn, in the corresponding entropy. While this property was already hinted at in \cite{atashpendar2015information}, and directly used in the entropy minimization proof for single and double deletions in \cite{atashpendar2018clustering}, our experimental data in this work indicates that the autocorrelation coefficient captures this run-dependent entropy ordering perfectly.

The common thread shared between the present work and the previous papers in this series \cite{atashpendar2015information,atashpendar2018clustering} can be described as a characterization of the limiting entropic cases of the distribution of subsequence embeddings over candidate bit strings transmitted via a deletion channel. This problem is directly linked to that of enumerating the occurrences of a fixed pattern as a subsequence in a random text, also known as the \textit{hidden pattern matching} problem \cite{flajolet2006hidden}. Moreover, the distribution of the number of times a string $x$ appears as a subsequence of $y$, lies at the center of the long-standing problem of determining the capacity of deletion channels: knowing this distribution would give us a maximum likelihood decoding algorithm for the deletion channel \cite{mitzenmacher2009survey}. In effect, upon receiving $x$, every set of $n-m$ symbols is equally likely to have been deleted. Thus, for a received sequence, the probability that it arose from a given codeword is proportional to the number of times it is contained as a subsequence in the originally transmitted codeword. More specifically, we have $p(y|x) = p(x|y)\frac{p(y)}{p(x)} = \omega_x(y)d^{n-m}(1-d)^m \frac{p(y)}{p(x)}$, with $d$ denoting the deletion probability. Thus, as inputs are assumed to be a priori equally likely to be sent, we restrict our analysis to $\omega_x(y)$ for simplicity.

In this chapter, we confirm the entropy minimization conjecture in the asymptotic limit using results from hidden word statistics. To do so, we relate our study to the hidden pattern matching problem investigated in the works of Flajolet et al. \cite{flajolet2006hidden}. We show how their analytic-combinatorial methods can be applied to resolve the case of fixed output length and $n\rightarrow\infty$, by obtaining estimates for the entropy in terms of the moments of the posterior distribution.

\subsection{Results}

We consider the random variable $\Omega_n$, the number of ways of embedding a given output string into a uniformly random input string. Results from hidden word statistics derived by Flajolet et al. \cite{flajolet2006hidden} establish a Gaussian limit law for $\Omega_n$ by showing that the moments of $\Omega_n$ converge to the appropriate moments of the standard normal distribution and determine the mean and variance of the number of embedding occurrences. We use these results to establish the limiting case of the random variable $\Omega_n$ in terms of its variance via an approach that depends intricately on the form of $x$ by incorporating a measure of autocorrelation of $x$. We then relate these results to the original entropy problem to prove the case of maximal information leakage for large $n$.

\subsection{Structure}

In Section \ref{sec:framework}, we introduce some notation and describe the main definitions, models, and building blocks used in our study. We then relate our work to the hidden pattern matching problem in Section \ref{sec:hws-entropy-estimation} and use results from hidden word statistics to prove the entropy minimization conjecture. Finally, we conclude by presenting some open problems in Section \ref{sec:hws-conclusions}.

\section{Framework and Hidden Word Statistics}\label{sec:framework}

In this section, while we rely on the same fundamental ingredients introduced in Chapter \ref{chp:deletion-channel-framework}, we briefly elaborate on a few additional concepts and definitions, along with notation and terminology that will be needed throughout. We will then review some of the building blocks used in hidden word statistics that will be required for obtaining our results.

\subsection{Subsequence Embeddings and Entropy}

\paragraph{Notation} We use the notation $[n] = \{1, 2, \dotsc, n\}$ and $[n_1,n_2]$ to denote the set of integers between $n_1$ and $n_2$; individual bits from a string are indicated by a subscript denoting their position, starting at $1$, i.e., $y = (y_{i})_{i \in [n]} = (y_1, \dotsc, y_n) $. We denote by $|S|$ the size of a set $S$ and the length of a binary string. We also introduce the following notation: when dealing with binary strings, $[a]^k$ means $k$ consecutive repetitions of $a \in \{0, 1\}$. Throughout, we use $h(s)$ to denote the Hamming weight of the binary string $s$.

\paragraph{Probabilistic Model and Alphabet} We consider a memoryless i.i.d. source that emits symbols of the input string (supersequence), drawn independently from the binary alphabet $\Sigma = \{0, 1\}$. Let $\Sigma^n$ denote the set of all $\Sigma$-strings of length $n$ and $p_\alpha$ the probability of the symbol $\alpha \in \Sigma$ being emitted. For a given input length $n$, a random text is drawn from the binary alphabet according to the product probability on $\Sigma^n$: $p(y) \equiv p(y_1 \ldots y_ n) = \prod_{i=1}^n p_{y_i} = 2^{-n}$. The probability of a subsequence of length $m$ is defined in a similar manner.

\paragraph{Number of Masks or Subsequence Embeddings} Recall that $\omega_x(y)$ denotes the number of distinct ways that $y$ can project to $x$:
\[
\omega_x(y) := | \{ \pi \in \mathcal{P}([n]): y_\pi = x \} |
\]
we refer to the number of masks associated with a pair $(y, x)$ as the weight of $y$, i.e., the number of times $x$ can be embedded in $y$ as a subsequence. Moreover, we use $\Omega_n(x)$ to denote the number of occurrences of a given subsequence $x$ in a random text of length $n$ generated by a memoryless source.

\subsection{Generating Functions}

Generating functions constitute an essential component in the following analysis as they play a central role in determining the combinatorial and statistical properties of hidden patterns. Here we briefly discuss the utility of generating functions and how combinatorial problems can be translated into generating functions. For more details, we refer the reader to standard textbooks such as ``generatingfunctionology`` by Herbert S. Wilf \cite{wilf2005generatingfunctionology} and ``Analytic Combinatorics'' by Philippe Flajolet and Robert Sedgewick \cite{flajolet2009analytic}.

Roughly speaking, generating functions turn problems about sequences into problems about functions and thus enable access to the mathematical machinery available for manipulating functions. In other words, a generating function is a power series that encodes an infinite sequence $(a_n)$ into its coefficients, e.g., the infinite sequence $(a_0, a_1, a_2, a_3, \ldots)$ is given by the \emph{ordinary generating function}
\[
A(x) = a_0 + a_1 x + a_2 x^2 + a_3 x^3 + \cdots
\]
Generating functions are often referred to as ``formal'' power series to highlight the fact that the variable $x$ is actually treated as an indeterminate, which serves as a placeholder rather than a number. As such, except for certain cases, in general convergence is not a concern and the formal power series is allowed to diverge.

\subsubsection{Deriving Closed Form Formulas}

As already mentioned, an important property of generating functions is that they allow us to perform operations on sequences via their associated functions, e.g., scaling, addition and differentiation. For instance, adding two generating functions amounts to adding two sequences term by term.

Once a generating function for a sequence has been established, in many cases we can derive a closed form formula for complicated sequences such as that of the Fibonacci numbers. For instance, using the symbolic method, in particular in the context of analytic combinatorics and thanks to techniques developed by Philippe Flajolet \cite{flajolet2009analytic}, it can be shown that the Fibonacci numbers can be represented by the following simple generating function
\[
(0, 1, 1, 2, 3, 5, 8, 13, 21, \ldots) \leftrightarrow \frac{x}{1-x-x^2}.
\]
A crucial feature of generating functions lies in their effectiveness for deriving closed form expressions. More precisely, given a generating function for a sequence, we can often derive a closed form formula for the $n$-th coefficient. For example, using the generating function $F(x) = \frac{x}{1-x-x^2}$, we can derive the famous closed-form for the Fibonacci numbers, known as Binet's formula, by taking the coefficient of $x^n$ in the power series above\footnote{One way we can extract coefficients from a generating function that is a ratio of polynomials is through the method of \emph{partial fraction expansion/decomposition}.}, to compute the $n$-th Fibonacci number.

\subsection{Hidden Patterns, Constraints and Blocks}

In the terminology of hidden word/pattern statistics, the same problem of determining the number of distinct embeddings of a subsequence in a supersequence is referred to as the ``hidden pattern matching'' problem. Here we review the most relevant concepts introduced in the work of Flajolet, Szpankowski and Vall\'ee \cite{flajolet2006hidden}.

\paragraph{Hidden Patterns and Constraints} Let $\mathcal{W} = w_1, w_2, \ldots, w_m$ denote the pattern or subsequence obtained from the text $T_n = t_1, t_2, \ldots, t_n$, and let $\mathcal{D} = (d_1, \ldots , d_{m-1})$ be an element of $(\mathbb{N}^+ \cup \{ \infty \})^{m-1}$. The pattern matching problem is determined by a pair $(\mathcal{W}, \mathcal{D})$, called a ``hidden pattern'' specification, i.e., a subsequence pattern $\mathcal{W}$ along with an additional set of constraints $\mathcal{D}$ on the indices $i_1, i_2, \ldots, i_m$. If an occurrence in the form of an $m$-tuple $S=(i_1, i_2, \ldots, i_m)$ with $(1 \le i_1 < i_2 < \ldots < i_m)$ satisfies the constraint $\mathcal{D}$, i.e., $i_{j+1} - i_j \le d_j$, it is then considered to be a valid mask or a position. In essence, the notion of constraints models the existence of gaps between the embeddings of the symbols of a subsequence in a random text. In other words, the analysis considers the number of occurrences of a subsequence as embeddings that satisfy a specific set of distance constraints.

Moreover, let $\mathcal{P}_n(D)$ be the set of all positions subject to the separation constraint $\mathcal{D}$, satisfying $i_m \le n$. Let also $\mathcal{P}(D) = \bigcup_n \mathcal{P}_n(D)$. This allows us to view the number of occurrences $\Omega$ of a subsequence $w$ in text $T$ subject to the constraint $\mathcal{D}$ as a sum of characteristic variables
\begin{equation}
\Omega(T) = \sum_{I \in \mathcal{P}_{|T|}(\mathcal{D})} X_I(T), \quad \text{with} \quad X_I(T) := [[ \mathrm{w}\, \text{occurs at position } I \text{ in } T ]].
\end{equation}
with $[[B]]$ being 1 if the property $B$ holds and 0, otherwise.

The two ends of the spectrum in this model are given by the following. The \emph{fully unconstrained case} is modelled by $\mathcal{D} = (\infty, \ldots, \infty)$; whereas the \emph{constrained problem} is modelled by the case where all $d_j$ are finite. Our study is only concerned with the former, namely the fully unconstrained problem, as we allow an arbitrary number of symbols in between the gaps.

\paragraph{Blocks} A given pattern $x$ is broken down into $b$ independent subpatterns that are called blocks, $x_1, x_2, \ldots, x_b$. The quantity denoted by $b$ is defined as the number of unbounded gaps (the number of indices $j$ for which $d_j = \infty$) plus 1, which is also referred to as the number of blocks. The two extreme cases, namely the fully unconstrained and the fully constrained problem, are thus described by $b=m$ and $b=1$, respectively. For the purpose of our study, we always assume $b=m$. Collections of blocks are then used to form an \emph{aggregate}, which describes the interval of indices that marks a block, the first and last index in an interval. One of the main uses of blocks and aggregates is to model the fact that masks and occurrences of a subsequence can overlap with each other by quantifying the extent to which such overlaps can occur. However, as we are only interested in the fully unconstrained case, covering the notion of aggregates goes beyond the scope of our work. The reader is encouraged to refer to \cite{flajolet2006hidden} for a more complete and detailed presentation of these concepts.

\section{Estimating Entropy via Hidden Word Statistics}\label{sec:hws-entropy-estimation}

We now revisit the original entropy problem and provide an analysis in the asymptotic limit by considering the case of fixed output length $m$ and $n \rightarrow \infty$. This allows us to apply results from hidden pattern statistics to establish the limiting case of minimal entropy. The probabilistic aspects of the statistics of hidden patterns were quantified by Flajolet et al. in an extensive study \cite{flajolet2006hidden}, which was originally motivated by intrusion detection in computer security. Among other things, they showed that the random variable $\Omega_n$ asymptotically tends to a Gaussian. We relate our work to their study and incorporate two key theorems related to hidden patterns to establish the limiting case of minimal entropy via a notion of autocorrelation associated with subsequences.

\subsection{Hidden Word Statistics}

In \cite{flajolet2006hidden}, it is shown that for fixed short strings of length $m$ as $n \to \infty$, the dominant contribution to the moments comes from configurations where the positions of the short strings are minimally intersecting. We will briefly describe the approach used in \cite{flajolet2006hidden}.

For a position $I$ (that is, a subset of $[n]$ of size $m$), let $X_I$ denote the indicator of the event that the long string restricted to $I$ matches the short string. Let $Y_I = X_I - \mathbb{E}(X_I) = X_I - 2^{-m}$. Then $X = \Omega - E = \sum_I Y_I$, and so

\begin{equation}
\mathbb{E}(X^r) = \sum_{I_1, \ldots, I_r} \mathbb{E}(Y_{I_1} \ldots Y_{I_r}).
\end{equation}

Now let $\mathcal{O}_r$ be the combinatorial class consisting of pairs $((I_1, \ldots, I_r), T)$, where the $I_j$ are positions and $T$ is a ``text'' (i.e. a string of length $> m$), taken with weight $Y_{I_1}(T) \ldots Y_{I_r}(T)2^{-|T|}$. Now if $O_r (z)$ is the generating function of $O_r$ with this weighting, we have

\begin{equation}
[z^n]O_r(z) = \sum_{|T|=n}\sum_{I_1,\ldots,I_r} Y_{I_1}(T) \ldots Y_{I_r}(T) 2^{-n} = \mathbb{E}(X^r).
\end{equation}
Note that $[z^n]O_r(z)$ means the coefficient of $z^n$ in $O_r(z)$.

We can partition $\mathcal{O}_r$ according to the number of points covered by some $I_j$. Let $\mathcal{O}_r^{[p]}$ denote the class of elements in which the number of points covered is $rm-p$. Note that if $I_1$ does not intersect with any other $I_j$, then $Y_1$ is independent of $Y_2, \ldots, Y_r$ and so $\mathbb{E}(Y_1 \ldots Y_r) = 0$, so contributions only come from families where each position intersects some other position. Such families are called ``friendly'' and require in particular $p \ge \ceil*{r/2}$.

To obtain the generating function for $\mathcal{O}_r^{[p]}$, we apply a combinatorial isomorphism to group together all the covered points of intersection, so that we have
\begin{equation}
\mathcal{O}_r^{[p]} \cong (\{ 0,1 \}^*)^{rm-p+1} \times \mathcal{B}_r^{[p]},
\end{equation}
where $\mathcal{B}_r^{[p]}$ is the subset of $\mathcal{O}_r^{[p]}$ which is \emph{full}, that is, for which the set of covered points is contiguous. We thus have
\begin{equation}
O_r^{[p]}(z) = \Big( \frac{1}{1-z} \Big)^{rm-p+1} \times B_r^{[p]}(z).
\end{equation}
Since the analysis in \cite{flajolet2006hidden} considers a fixed short string of length $m$ as $n \to \infty$, it is enough to observe that $B_r^{[p]}(z)$ is some \emph{fixed} polynomial, because one can then easily show that the coefficient $[z^n]O_r^{[p]} = O(n^{rm-p})$. This means that however fast the coefficients of $B_r^{[p]}(z)$ grow as $p$ grows, for large enough $n$, the minimal-$p$ term will dominate.

\subsection{Entropy Minimization via Hidden Word Statistics}

We will rely on the fact that the distribution of $\Omega_n$ asymptotically tends to a Gaussian and use a measure of autocorrelation defined for subsequences to obtain estimates for the entropy in terms of the moments of the posterior distribution. Indeed, the underlying probability distribution in our original entropy analysis coincides with that of the so-called \emph{hidden pattern matching} problem in which one searches for the number of occurrences of a given \emph{pattern}\footnote{The words ``subsequence'' and ``pattern'' are used interchangeably.} $\mathcal{W}$, as a subsequence in a random text $T$ of length $n$ generated by a memoryless source. More precisely, given that $\Omega_n \sim \mathcal{N}(\mu, \sigma^2)$, we will analyze how the mean and the variance of the distribution change for different $x$ strings in order to resolve the limiting case of minimal entropy exhibited by the uniform string $[0]^m$.

The probabilistic analysis done in \cite{flajolet2006hidden} relies on a description of the structures of interest in formal languages, involving a joint use of combinatorial-enumerative techniques and analytic-probabilistic methods. This approach enables a systematic translation of the combinatorial problem into generating functions. The essential combinatorial-probabilistic features of the problem, such as variance coefficients and a notion of autocorrelation, are derived by using an asymptotic simplification made possible by the use of the singular forms of generating functions. For an extensive and complete coverage of these techniques, we refer the reader to \cite{flajolet2009analytic,sedgewick2013introduction}.

In our work, we will mainly make use of two fundamental theorems presented in \cite{flajolet2006hidden}. The first theorem states that $\Omega_n$ asymptotically tends to a Gaussian, while the second theorem provides analytic expressions for its moments, i.e., the expectation and the variance of $\Omega_n$. Another equally important result that we will use to distinguish between two different subsequences of length $m$ is a measure of autocorrelation that depends intricately on the exact form of $x$. Given that the mean (Eq. \ref{eq:omega-mean}) is constant for all $x$ strings of equal length, the autocorrelation factor, incorporated in the variance coefficient, allows us to differentiate between two subsequences in that it is the only term  that depends on the form of $x$, with all other terms in Eq. \ref{eq:omega-variance} being only a function of $n$ and $m$.

\subsection{Distribution of Embeddings in the Asymptotic Limit}

The plots given in Fig.\ref{figure:gaussian-omega} illustrate the convergence of the distribution of $\Omega_n$ to a Gaussian for the subsequence $x=\texttt{01}$ and increasing values of $n$. As already mentioned, the distribution of subsequence embeddings tending to a Gaussian in the asymptotic limit is of particular significance for our work given that $\Omega_n$ is precisely the random variable associated with the weights of the supersequences in $\Upsilon_{n,x}$ for the computation of entropy.

\begin{figure}[ht!]
	\centering
	\includegraphics[scale=0.7]{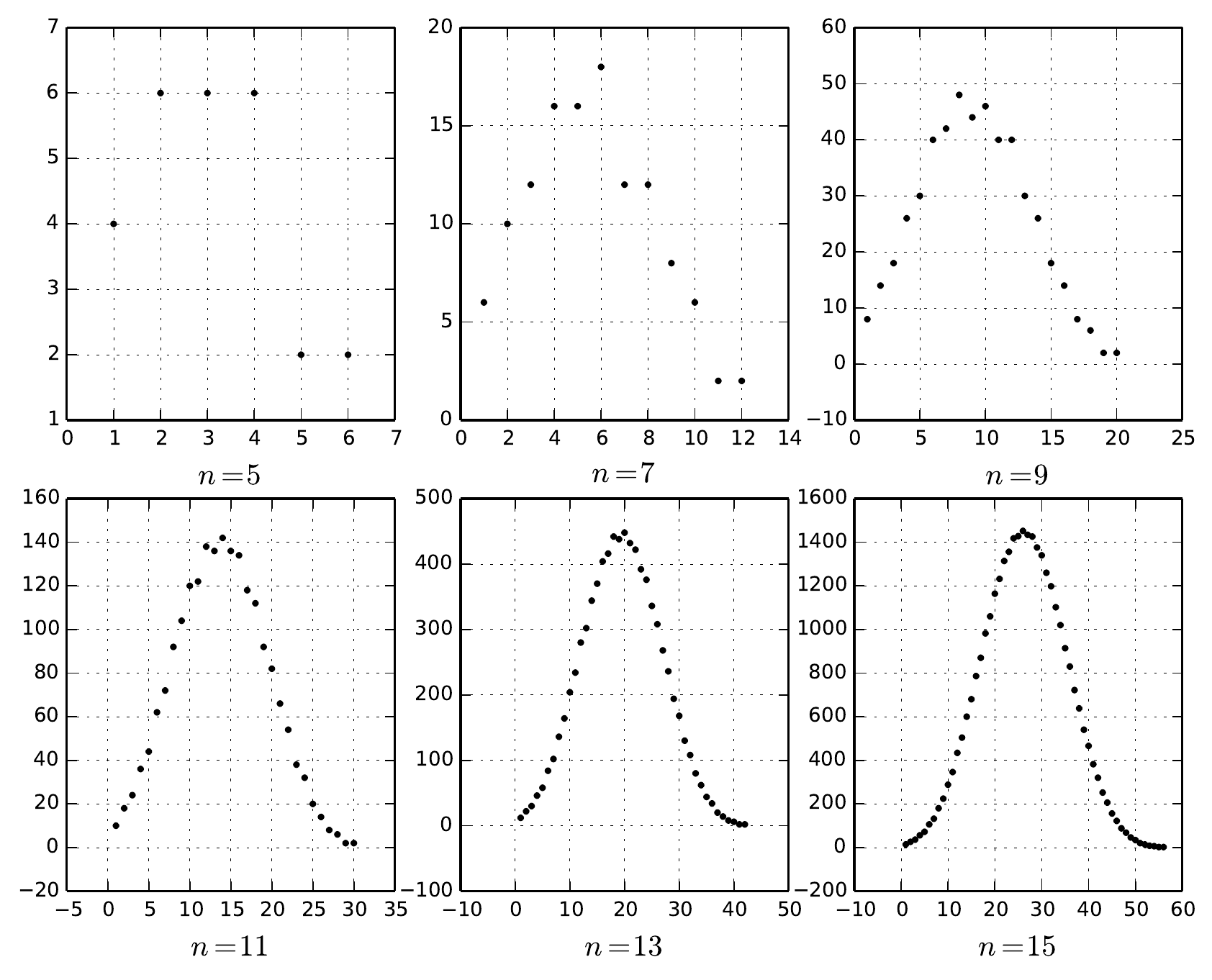}
	\caption{Frequency distribution of $\Omega_n$ converging to a Gaussian for $x=\texttt{01}$ and $n= 5\ldots15$}
	\label{figure:gaussian-omega}
\end{figure}

In the following, we first present the analytic expressions satisfying the mean and the variance of the number of occurrences $\Omega_n$ and adapt them to the parameters of our problem. We then characterize the limiting case of minimal entropy exhibited by the uniform string, i.e. $x = [0/1]^m$, via a notion of autocorrelation coefficient incorporated in the variance.

\subsubsection{Moments and Convergence}

The results provided here have been sourced from \cite{flajolet2006hidden} and adapted to the specific parameters of our problem, i.e., we consider the fully unconstrained setting, restricted to the binary alphabet. For all $x$ strings of length $m$, the mean is constant and therefore, we mainly focus on the variance.
\begin{theorem}\label{thm:hws_moments}\cite{flajolet2006hidden}
	The mean and the variance of the number of occurrences $\Omega_n$ of a subsequence $x$ for $p_\alpha = 0.5$, subject to constraint $\mathcal{D}=(\infty, \ldots, \infty)$, and thus $b=m$, are given by
	\begin{equation}\label{eq:omega-mean}
	\mathbb{E}[\Omega_n] = \frac{2^{-m}}{m!} n^m \left(1 + O\left(\frac{1}{n}\right)\right)
	\end{equation}
	\begin{equation}\label{eq:omega-variance}
	\mathbb{V}[\Omega_n] = \frac{2^{-2m}}{(2m-1)!} \kappa^2(x) n^{2m-1} \Bigg(1 + O\Bigg(\frac{1}{n}\Bigg)\Bigg),
	\end{equation}
	where the autocorrelation $\kappa^2(x)$ is defined by
	\begin{equation}\label{eq:kappa}
	\kappa^2(x) := \sum_{1 \le r,s \le m} \binom{r+s-2}{r-1} \binom{2m-r-s}{m-r} [[x_r = x_s]].
	\end{equation}
	Note that $[[P]]$ denotes the indicator function of the property $P$ (so $[[P]] = 1$ if $P$ holds and $0$ otherwise).
\end{theorem}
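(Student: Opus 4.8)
The plan is to re-derive this statement — which is the specialisation of the general moment analysis of Flajolet, Szpankowski and Vall\'ee \cite{flajolet2006hidden} to our regime (binary alphabet $\Sigma=\{0,1\}$, uniform probabilities $p_0=p_1=\tfrac12$, fully unconstrained separation vector $\mathcal{D}=(\infty,\dots,\infty)$ so $b=m$) — starting from the representation $\Omega_n=\sum_{I\in\mathcal{P}_n(\mathcal{D})}X_I$ recalled above, where $X_I$ indicates that the random text restricted to the position $I=(i_1<\cdots<i_m)$ spells out $x$. The mean is immediate from linearity: a single position imposes $m$ independent uniform bit constraints, so $\mathbb{E}[X_I]=2^{-m}$, and in the unconstrained case $|\mathcal{P}_n(\mathcal{D})|=\binom{n}{m}$; substituting $\binom{n}{m}=\frac{n^m}{m!}\bigl(1+O(1/n)\bigr)$ yields Eq.~\eqref{eq:omega-mean}.

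The variance is the substantive part, and for it I would follow the symbolic-method route recalled in the previous section. Writing $Y_I=X_I-2^{-m}$, we have $\mathbb{V}[\Omega_n]=\sum_{I,J}\mathbb{E}[Y_IY_J]$, and $\mathbb{E}[Y_IY_J]=0$ whenever the text supports of $I$ and $J$ are disjoint (they are then supported on disjoint, hence independent, sets of text bits), so only \emph{friendly} pairs contribute. Partitioning friendly pairs by $p=|I\cap J|\ge1$ and applying the combinatorial isomorphism that contracts the covered points gives $O_2^{[p]}(z)=(1-z)^{-(2m-p+1)}B_2^{[p]}(z)$ with $B_2^{[p]}$ a fixed polynomial, whence $[z^n]O_2^{[p]}(z)=O(n^{2m-p})$; consequently the variance is governed by the minimal overlap $p=1$, i.e.\ by pairs of positions that agree in exactly one text coordinate. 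For such a pair, say the $r$-th coordinate of $I$ equal to the $s$-th coordinate of $J$, the joint event $\{X_I=X_J=1\}$ prescribes $2m-1$ text bits and is satisfiable precisely when $x_r=x_s$ — this is the origin of the indicator $[[x_r=x_s]]$ in Eq.~\eqref{eq:kappa}. The number of order-types of such a pair is obtained by interleaving the $r-1$ coordinates of $I$ and the $s-1$ coordinates of $J$ lying to the left of the shared point (in $\binom{r+s-2}{r-1}$ ways) with the $m-r$ coordinates of $I$ and the $m-s$ coordinates of $J$ lying to its right (in $\binom{2m-r-s}{m-r}$ ways); each order-type is realised on $\Theta(n^{2m-1})$ choices of the distinct text indices. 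Summing these counts, weighted by the consistency indicator, over $1\le r,s\le m$ produces exactly $\kappa^2(x)$, and collecting the generating-function asymptotics — in which the leading $n^{2m}$ contributions cancel against $\mathbb{E}[\Omega_n]^2$, leaving an $n^{2m-1}$ term whose constant is read off from the $(1-z)^{-2m}$ factor via a transfer theorem — gives Eq.~\eqref{eq:omega-variance}; the diagonal terms $r=s$ alone make $\kappa^2(x)>0$ for every $x$, so the variance is genuinely of order $n^{2m-1}$ (and, as is used in the sequel, is maximised when $x$ is constant, since then $x_r=x_s$ for all $r,s$).

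The main obstacle is the asymptotic bookkeeping in the variance: one must track the constant through the cancellation of the $n^{2m}$ terms between $\mathbb{E}[\Omega_n^2]$ and $\mathbb{E}[\Omega_n]^2$, verify that the surviving $n^{2m-1}$ coefficient is captured entirely by the single-shared-coordinate configurations together with the correct accounting of the $-2^{-2m}$ centring corrections, and confirm that overlaps with $p\ge2$ contribute only at order $n^{2m-2}$ — which reduces to checking that each $B_2^{[p]}(z)$ is a fixed polynomial. For general $m$ this is precisely the computation of \cite{flajolet2006hidden}; the only new element is the restriction to $|\Sigma|=2$ and $p_\alpha=\tfrac12$, under which the symbol-dependent weights collapse to powers of $2$ and the autocorrelation reduces to the closed form in Eq.~\eqref{eq:kappa}.
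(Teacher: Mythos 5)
The paper itself offers no proof of this theorem: it is quoted from \cite{flajolet2006hidden}, and the surrounding text only recalls the generating-function/friendly-pair machinery and the interleaving interpretation of $\kappa^2$. Your plan follows exactly that route, and your treatment of the mean is fine. The genuine gap is in the variance step, and it sits precisely where you defer to ``asymptotic bookkeeping''. The centring terms do not merely cancel the $n^{2m}$ contribution: there are $\Theta(n^{2m-1})$ pairs of positions sharing exactly one text coordinate, each carrying its own $-2^{-2m}$ centring term, and pairs whose shared coordinate has $x_r\neq x_s$ contribute \emph{only} this negative term. Carrying this through, the per-pair contribution is $\mathbb{E}[Y_IY_J]=2^{-(2m-1)}[[x_r=x_s]]-2^{-2m}=2^{-2m}\left(2[[x_r=x_s]]-1\right)$, so the leading constant your computation actually produces is $\frac{2^{-2m}}{(2m-1)!}\sum_{r,s}\binom{r+s-2}{r-1}\binom{2m-r-s}{m-r}\left(2[[x_r=x_s]]-1\right)$; this is the binary-uniform specialisation of the autocorrelation of \cite{flajolet2006hidden}, whose weights are $[[x_r=x_s]]/p_{x_r}-1$ rather than the bare indicator. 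Your assertion that summing the matching interleavings ``produces exactly $\kappa^2(x)$'' and hence \eqref{eq:omega-variance} does not survive a sanity check: for $m=2$ and $x=\texttt{01}$ a direct calculation gives $\mathbb{V}[\Omega_n]=\tfrac{1}{8}\binom{n}{3}+\tfrac{3}{16}\binom{n}{2}\sim n^3/48$, whereas \eqref{eq:omega-variance} with \eqref{eq:kappa} (here $\kappa^2=4$) predicts $n^3/24$.

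Note that the discrepancy is invisible on the constant strings, where every weight equals $+1$ and both expressions reduce to $m\binom{2m-1}{m}$, and since the corrected coefficient equals $2\kappa^2(x)-m\binom{2m-1}{m}$, a strictly increasing affine function of the bare-indicator sum, the unique maximisation by $[0]^m,[1]^m$ and the downstream entropy argument are unaffected. But as a proof of the statement exactly as printed, your derivation would either have to show that the mismatched-overlap contributions vanish (they do not) or end up with the $\pm1$-weighted autocorrelation instead of \eqref{eq:kappa}. The missing piece, therefore, is the explicit accounting of the centring corrections at order $n^{2m-1}$ — that accounting is the heart of the variance computation, not routine bookkeeping, and it is exactly where the constant in \eqref{eq:omega-variance} is decided.
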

\begin{theorem}\label{thm:hws_gaussian}\cite{flajolet2006hidden}
	\begin{equation}
	X_n:=\frac{\Omega_n-\mathbb{E}(\Omega_n)}{\sqrt{\mathbb{V}(\Omega_n)}}
	\end{equation}
	converges in measure to a standard normal distribution.
\end{theorem}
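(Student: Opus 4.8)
The plan is to prove this Gaussian limit law by the \emph{method of moments}. I will show that for every fixed integer $r\ge 1$ the normalized moment $\mathbb{E}(X_n^r)$ converges, as $n\to\infty$, to the $r$-th moment of the standard normal law, namely $0$ when $r$ is odd and $(r-1)!!=(r-1)(r-3)\cdots 3\cdot 1$ when $r$ is even. Since the standard normal distribution is uniquely determined by its moments (Carleman's criterion / the Fr\'echet--Shohat theorem), convergence of all moments forces convergence in distribution, which is precisely the stated convergence of $X_n$ to $\mathcal N(0,1)$.

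The starting point is exactly the combinatorial machinery already assembled above. Writing $X=\Omega_n-\mathbb{E}(\Omega_n)=\sum_I Y_I$ with each $Y_I=X_I-2^{-m}$ \emph{centred}, we have $\mathbb{E}(X^r)=[z^n]O_r(z)$, and $\mathcal O_r$ decomposes as $\bigcup_p \mathcal O_r^{[p]}$ according to the number $rm-p$ of text positions covered by the union of the $r$ chosen masks. Because $\mathbb{E}(Y_I)=0$, any family $(I_1,\dots,I_r)$ in which some $I_j$ meets none of the others contributes nothing, so only \emph{friendly} families matter, and these force $p\ge\lceil r/2\rceil$. Together with the bound $[z^n]O_r^{[p]}(z)=O(n^{\,rm-p})$ obtained above (valid since each $B_r^{[p]}(z)$ is a fixed polynomial), this immediately handles the odd moments: for $r$ odd one has $p\ge (r+1)/2$, hence $\mathbb{E}(X^r)=O(n^{\,rm-(r+1)/2})$, whereas by Theorem~\ref{thm:hws_moments} the normalizer $\mathbb{V}(\Omega_n)^{r/2}$ grows like $n^{\,r(2m-1)/2}=n^{\,rm-r/2}$, so $\mathbb{E}(X_n^r)=O(n^{-1/2})\to 0$.

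For an even moment $r=2k$ one has $p\ge k$, and the strata with $p>k$ contribute $O(n^{\,2km-p})=o(n^{\,2km-k})=o\!\big(\mathbb{V}(\Omega_n)^k\big)$, hence disappear after normalization. The entire limit therefore comes from the minimal stratum $p=k$, where the $2k$ masks must organize into a \emph{perfect matching}: $k$ pairs, each pair sharing exactly one text position and distinct pairs disjoint. For such a configuration, centredness of the $Y_I$ together with independence of non-intersecting masks makes the text-averaged weight factorize over the $k$ pairs, so the relevant piece of $O_{2k}^{[k]}(z)$ is a sum of products of $k$ copies of the generating function that governs the variance in Theorem~\ref{thm:hws_moments}. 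Extracting $[z^n]$, the leading term is $(2k-1)!!$ times the $k$-th power of the leading coefficient $2^{-2m}\kappa^2(x)/(2m-1)!$ of $\mathbb{V}(\Omega_n)$, the double factorial counting the number of ways to split the $2k$ labelled masks into $k$ unordered intersecting pairs. Dividing by $\mathbb{V}(\Omega_n)^k$ cancels everything but $(2k-1)!!$, which is exactly the $2k$-th Gaussian moment.

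I expect the pairing step --- the $p=k$ case for even $r$ --- to be the main obstacle. The delicate points are: (i) verifying that every friendly family with $p=k$ is genuinely a disjoint union of $k$ two-element, single-point-overlap clusters, since a triple mutual intersection or a pair overlapping in $\ge 2$ positions would reduce the number of covered points and thus push the configuration into a strictly higher-$p$ stratum already shown to be negligible; (ii) justifying the factorization of $B_{2k}^{[k]}(z)$ into $k$ pair-generating functions, which rests on independence of non-intersecting masks combined with $\mathbb{E}(Y_I)=0$; and (iii) matching the resulting constant exactly to the leading coefficient of the variance so that the normalization is sharp. The remaining ingredients --- the odd moments, the negligibility of higher strata, and the passage from moment convergence to weak convergence --- are routine given the estimates imported from the preceding analysis.
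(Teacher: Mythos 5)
Your proposal is correct and follows essentially the same route as the source the paper relies on: this theorem is imported from Flajolet, Szpankowski and Vall\'ee, and both their proof and the paper's sketch of it (culminating in Lemma~\ref{lem:momconv}) proceed exactly by the method of moments, with the friendly-family stratification $\mathcal{O}_r^{[p]}$, the bound $[z^n]O_r^{[p]}(z)=O(n^{rm-p})$, vanishing odd moments, and the minimal stratum $p=r/2$ for even $r$ reducing to single-overlap pairings that yield $(r-1)!!$ times the appropriate power of the variance coefficient. No substantive divergence from the paper's approach, and no gap beyond the pairing details you already flag, which are handled the same way in the cited work.
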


We encapsulate the multiplicands in the definition of $\kappa^2(x)$ into matrices, viewing the indicator function as a mask on the matrix of binomial coefficients.  Let $\mathcal{B}$ be the matrix representing the indicator function $\mathcal{B}_{r,s}:=[[x_r=x_s]]$, and let $\mathcal{M}$ be the matrix of binomial coefficients
\[
\mathcal{M}_{r,s} = \binom{r+s-2}{r-1} \binom{2m-r-s}{m-r}.
\]
Write $\mathcal{R}=\mathcal{B}\circ\mathcal{M}$, the Hadamard or elementwise product of $\mathcal{B}$ and $\mathcal{M}$, for the result of applying the mask $\mathcal{B}$ to the matrix $\mathcal{M}$.  We then have an equivalent formulation of equation \eqref{eq:kappa}, namely
\[
\kappa^2(x) = \sum_{r=1}^m \sum_{s=1}^m \mathcal{R}_{r,s}.
\]

\subsubsection{Autocorrelation}

It is worthwhile to provide some explanation of the combinatorial meaning of the autocorrelation coefficient $\kappa^2$ derived in \cite{flajolet2006hidden}, in view of its significance in the analysis that follows.

The coefficient $\kappa^2$ is related to a generalization of the autocorrelation polynomial originally introduced for classical string matching by Guibas and Odlyzko \cite{guibas1981periods,guibas1981string}.  The variance of $\Omega_n$ is determined by the probability that a random pair of $m$-subsets of a random long string are \emph{both} matches for the short string, and how this compares to the square of the corresponding probability for a single $m$-subset.

Analytic-combinatorial methods show that the dominant contribution for large $n$ comes from pairs which overlap in only a single position, so computing the variance amounts to counting the number of triples consisting of a long string and a pair of $m$-subsets intersecting in precisely one location such that both are matches for the short string.  Grouping the chosen locations together introduces a constant factor of $\binom{n}{2m-1}2^{n-(2m-1)}$, and so it suffices to count the number of ways to interleave two copies of the short string, with a single intersection.  This quantity is the autocorrelation coefficient $\kappa^2(x)$.

Explicitly, $\mathcal{M}_{r,s}$ is the number of combinations with the $r^{th}$ location of the first set meeting the $s^{th}$ location of the second: $\binom{r+s-2}{r-1}$ is the number of interleavings of the $r-1$ and $s-1$ locations before this, and $\binom{2m-r-s}{m-r}$ the number of interleavings of the $m-r$ and $m-s$ locations after.

\subsection{Maximal Autocorrelation}

We now study the extremization of the variance of $\Omega_n$ by analyzing the extreme values of the autocorrelation $\kappa^2$. Here we consider the all-0s and all-1s strings (x=$[0]^m$ and $[1]^m$), for which the autocorrelation matrix contains $m^2$ 1's: $\forall i,j \in \{1 \ldots m\}: x_i = x_j$.
\begin{theorem}\label{theorem:maximal-variance}
	Let $x$ be a string of length $m$. Then
	\begin{equation}\label{eq:kappa-max}
	\kappa^2(x) \leq \kappa^2\left([0]^m \right)
	= \kappa^2\left([1]^m \right) = m\binom{2m-1}{m}.
	\end{equation}
\end{theorem}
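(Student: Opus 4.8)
The plan is to read the bound off directly from the matrix formulation $\kappa^2(x) = \sum_{r,s}\mathcal{R}_{r,s}$ with $\mathcal{R} = \mathcal{B}\circ\mathcal{M}$ set up just before the statement. The mask entries $\mathcal{B}_{r,s} = [[x_r = x_s]]$ lie in $\{0,1\}$, and each $\mathcal{M}_{r,s} = \binom{r+s-2}{r-1}\binom{2m-r-s}{m-r}$ is a product of binomial coefficients, hence non-negative; so termwise $0 \le \mathcal{R}_{r,s} \le \mathcal{M}_{r,s}$. For the constant strings $[0]^m$ and $[1]^m$ the equality $x_r = x_s$ holds for all $r,s$, so $\mathcal{B}$ is the all-ones matrix, $\mathcal{R} = \mathcal{M}$, and therefore $\kappa^2([0]^m) = \kappa^2([1]^m) = \sum_{r,s}\mathcal{M}_{r,s}$, while for an arbitrary $x$ we immediately get $\kappa^2(x) = \sum_{r,s}\mathcal{R}_{r,s} \le \sum_{r,s}\mathcal{M}_{r,s} = \kappa^2([0]^m)$.

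What remains is the closed-form evaluation $\sum_{r=1}^m\sum_{s=1}^m \mathcal{M}_{r,s} = m\binom{2m-1}{m}$. I would fix $s$ and evaluate the inner sum over $r$ by a Vandermonde-type convolution: substituting $a = r-1$ and $b = m-1-a$, the inner sum becomes $\sum_{a+b = m-1}\binom{a+s-1}{s-1}\binom{b+m-s}{m-s}$, which is the coefficient of $z^{m-1}$ in $(1-z)^{-s}(1-z)^{-(m-s+1)} = (1-z)^{-(m+1)}$, namely $\binom{2m-1}{m}$, independently of $s$. Adding the $m$ equal contributions over $s = 1, \dots, m$ gives $m\binom{2m-1}{m}$. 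Alternatively, the same value follows from the combinatorial reading of $\kappa^2$ recalled just above the theorem: $\sum_{r,s}\mathcal{M}_{r,s}$ counts the interleavings of two copies of an $m$-pattern into a word of length $2m-1$ overlapping in exactly one position --- choose the $m$ slots of the first copy in $\binom{2m-1}{m}$ ways, then choose which of those slots is the shared one in $m$ ways.

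There is essentially no obstacle here: the inequality itself is a one-line termwise comparison once the Hadamard-product formulation is in hand, and the only computation of substance is the identity $\sum_{r,s}\mathcal{M}_{r,s} = m\binom{2m-1}{m}$, which the generating-function argument above dispatches cleanly (sanity checks: $m = 1$ gives $1 = 1\cdot\binom{1}{1}$, and $m = 2$ gives $6 = 2\binom{3}{2}$). If one prefers to avoid generating functions, iterating the standard Vandermonde identity $\sum_k \binom{u}{k}\binom{v}{w-k} = \binom{u+v}{w}$ twice achieves the same evaluation. The characterization of equality (that $\kappa^2(x) = \kappa^2([0]^m)$ forces $x$ constant) would follow by noting that any off-diagonal disagreement $x_r \neq x_s$ kills a strictly positive term $\mathcal{M}_{r,s} > 0$, though this is not needed for the stated inequality.
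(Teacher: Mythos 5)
Your proposal is correct and follows essentially the same route as the paper: since $\mathcal{M}$ is independent of $x$ and $\mathcal{B}_{r,s}\in\{0,1\}$ with $\mathcal{M}_{r,s}\ge 0$, the constant strings (and only they) make $\mathcal{B}$ the all-ones matrix, so $\kappa^2(x)=\sum_{r,s}\mathcal{R}_{r,s}\le\sum_{r,s}\mathcal{M}_{r,s}=\kappa^2([0]^m)=\kappa^2([1]^m)$. Your additional Vandermonde/generating-function (or interleaving) evaluation of $\sum_{r,s}\mathcal{M}_{r,s}=m\binom{2m-1}{m}$ is a welcome bit of extra rigor, since the paper asserts that closed form without spelling out the computation.
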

\begin{proof}
	Since $\mathcal{M}$ is independent of the form of $x$, we focus only on the indicator matrix $\mathcal{B}$. It is clear that the constant $x$ strings comprising all 0's and all 1's are the unique strings that result in an all-ones masking matrix $\mathcal{B}$. Consequently,  $\kappa^2([0/1]^m)$ includes all of the $m^2$ terms involved in $\mathcal{M}$ and thus attains its maximal value, i.e., $\kappa^2([0/1]^m) = \sum_{1 \le r,s \le m} \mathcal{R}_{r,s} =  \sum_{1 \le r,s \le m} \mathcal{M}_{r,s}$, hence Eq. \ref{eq:kappa-max}.
\end{proof}

The alternating $x$ string $x=\texttt{1010...}$ appears to lie at the other end of the entropy spectrum. While the proof for the maximization of the autocorrelation coefficient by the all 0's string was rather straightforward, showing its minimization still escapes us. We simply state the minimization as a conjecture.

\begin{conjecture}\label{theorem:minimal-variance}
	The alternating subsequence of length $m$, i.e., $x = \texttt{1010...}$, minimizes the autocorrelation coefficient $\kappa^2$.
\end{conjecture}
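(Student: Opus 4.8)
The plan is to turn the statement into a purely combinatorial extremal problem about lattice paths. Writing $\epsilon_i=(-1)^{x_i}\in\{+1,-1\}$, we have $[[x_r=x_s]]=\tfrac12(1+\epsilon_r\epsilon_s)$, so
\begin{equation*}
\kappa^2(x)=\tfrac12\sum_{1\le r,s\le m}\mathcal{M}_{r,s}+\tfrac12\,\epsilon^{\top}\mathcal{M}\epsilon=\tfrac12\,\kappa^2([0]^m)+\tfrac12\,\epsilon^{\top}\mathcal{M}\epsilon,
\end{equation*}
using \Cref{theorem:maximal-variance} for the last equality. Since $\kappa^2([0]^m)$ depends only on $m$, minimizing $\kappa^2(x)$ is the same as minimizing the quadratic form $\epsilon^{\top}\mathcal{M}\epsilon$ over the Boolean cube $\{+1,-1\}^m$. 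Next I would use the combinatorial content of $\mathcal{M}$: the binomial identity $\mathcal{M}_{r,s}=\binom{r+s-2}{r-1}\binom{2m-r-s}{m-r}$ says precisely that $\mathcal{M}_{r,s}$ is the number of monotone north--east lattice paths from $(0,0)$ to $(m-1,m-1)$ passing through $(r-1,s-1)$. Substituting this into $\kappa^2(x)=\sum_{r,s}\mathcal{M}_{r,s}[[x_r=x_s]]$ and swapping the order of summation gives
\begin{equation*}
\kappa^2(x)=\sum_{P}\,|\{(u,v)\in P:\ x_{u+1}=x_{v+1}\}|,
\end{equation*}
the sum running over all $\binom{2m-2}{m-1}$ such paths $P$; call a point $(u,v)$ \emph{monochromatic} for $x$ when $x_{u+1}=x_{v+1}$. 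For the alternating string, $(u,v)$ is monochromatic iff $u\equiv v\pmod2$, i.e.\ iff its index along the path (numbering points from the start) is odd, so every path carries exactly $m$ monochromatic points and $\kappa^2(\texttt{1010}\dots)=m\binom{2m-2}{m-1}$. Hence the conjecture reduces to the following uniform bound, which I take as the \textbf{core lemma}: for every $x\in\{0,1\}^m$ and every monotone path $P$ from $(0,0)$ to $(m-1,m-1)$, $P$ has at least $m$ monochromatic points, with at least $m+1$ for some $P$ whenever $x$ is not alternating.

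\textbf{Attack on the core lemma.} Traverse $P=(p_1,\dots,p_{2m-1})$ and record the colour pair $(x_{u+1},x_{v+1})$ at each point. At each step one coordinate increases by $1$, so the colour pair changes only when that coordinate crosses a run-boundary of $x$; since the first coordinate sweeps $0\to m-1$ it crosses each of the $\rho-1$ boundaries exactly once, and dually for the second, giving exactly $2(\rho-1)$ changes, where $\rho$ is the number of runs of $x$. Regarding $\{(0,0),(0,1),(1,0),(1,1)\}$ as a $4$-cycle whose two monochromatic vertices are antipodal, every step joins a monochromatic vertex to a bichromatic one, so $P$ splits into $2\rho-1$ maximal blocks of constant colour pair that \emph{alternate} monochromatic/bichromatic, beginning and ending monochromatic. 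It therefore suffices to prove that the $\rho-1$ bichromatic blocks have total length at most $m-1$. Since no coordinate crosses a boundary within a block, each bichromatic block is confined to a single box $R_i\times R_j$ of the grid induced by the runs $R_1,\dots,R_\rho$ of $x$ (reading $u,v$ as $u+1,v+1\in[m]$), with $i\not\equiv j\pmod2$, hence has at most $|R_i|+|R_j|-1$ points; for $\rho\le 2$ this already settles it, since the unique bichromatic block sits in a box whose side lengths sum to $m$.

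\textbf{Main obstacle, a fallback, and the strict case.} The hard part is $\rho\ge 3$: summing the crude bound $|R_i|+|R_j|-1$ over the $\rho-1$ bichromatic boxes over-counts, because the trajectory of $P$ through the $\rho\times\rho$ grid of runs is itself monotone, forcing most bichromatic boxes to be crossed \emph{thinly} (a block entering a box along an edge strictly interior to $R_j$ cannot also leave it at the opposite corner). I expect this to be resolved by an induction that peels off the last run of $x$ (equivalently, the last row-and-column band of the grid) while carrying a strengthened hypothesis recording the entry and exit heights of $P$, or by a potential/telescoping argument lower-bounding $\sum_k(-1)^{[[x_{u_k+1}\neq x_{v_k+1}]]}=|\{\text{mono pts}\}|-|\{\text{bichro pts}\}|$ directly, exploiting that the colour changes involving the first coordinate are located on the $\rho-1$ vertical lines through the run-boundaries of $x$, and dually for the second. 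A purely algebraic fallback is also worth trying: writing $d_i=\epsilon_i\epsilon_{i+1}$ we have $\epsilon^{\top}\mathcal{M}\epsilon=\sum_r\mathcal{M}_{r,r}+2\sum_{r<s}\mathcal{M}_{r,s}\prod_{i=r}^{s-1}d_i$, so the claim is that the all-$(-1)$ configuration minimizes a one-dimensional Ising-type energy with strictly positive, lattice-path-determined couplings $\mathcal{M}_{r,s}$; verifying the relevant interlacing/log-supermodularity properties of these couplings would pin down the ground state and may be more tractable than the direct count. Finally, the strict part is easy given the core lemma: if $x$ is not alternating it has $x_i=x_{i+1}$ for some $i$, and the path hugging the main diagonal passes through all $m$ diagonal points (always monochromatic) together with the point $(i,i-1)$, which is monochromatic since $x_{i+1}=x_i$; that path thus carries at least $m+1$ monochromatic points, so $\kappa^2(x)\ge m\bigl(\binom{2m-2}{m-1}-1\bigr)+(m+1)>\kappa^2(\texttt{1010}\dots)$.
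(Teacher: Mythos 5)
You should first note that the paper itself offers no proof here: the minimization is stated explicitly as an open conjecture (the text says it ``still escapes us''), so the only question is whether your argument closes it — and it does not, because the ``core lemma'' that your whole plan reduces to is false. The lattice-path reformulation is correct and genuinely nice: $\mathcal{M}_{r,s}$ does count monotone paths through $(r-1,s-1)$, so $\kappa^2(x)=\sum_P\#\{\text{monochromatic points of }P\}$, and for the alternating string every one of the $\binom{2m-2}{m-1}$ paths carries exactly $m$ monochromatic points, giving $\kappa^2(\texttt{1010}\ldots)=m\binom{2m-2}{m-1}$, consistent with the paper's maximal value since $(2m-1)\binom{2m-2}{m-1}=m\binom{2m-1}{m}$. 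But the uniform pathwise bound you then need — every monotone path carries at least $m$ monochromatic points, for every $x$ — already fails at $m=4$: take $x=\texttt{0110}$ and the corner path $(0,0),(0,1),(0,2),(0,3),(1,3),(2,3),(3,3)$. Its only monochromatic points are $(0,0)$, $(0,3)$ and $(3,3)$ (because $x_1=x_4$ while $x_1\neq x_2,x_3$ and $x_4\neq x_2,x_3$), i.e.\ $3<m$; in your block language the two bichromatic blocks have total length $4>m-1$. More generally, for $x=\texttt{0}\texttt{1}^{m-2}\texttt{0}$ this same path has exactly $3$ monochromatic points for every $m$, so the deficit is unbounded.

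The conjecture itself is untouched by this: with the $m=4$ matrix one computes $\kappa^2(\texttt{0110})=84>80=\kappa^2(\texttt{0101})$. What the counterexample shows is that the statement is intrinsically an \emph{averaging} statement over paths — strings like $\texttt{0}\texttt{1}^{m-2}\texttt{0}$ have some paths with very few monochromatic points, compensated by many paths with more than $m$ — so no refinement of the block/induction programme you sketch for $\rho\ge 3$ can succeed; the obstacle you flag there is not an over-counting technicality, it is that the inequality you are trying to prove is simply false. The strictness argument collapses with it, since it only adds one extra monochromatic point on a single path on top of the (false) uniform bound. The Ising-type fallback is a correct identity but, as written, merely restates the problem: identifying the all-$(-1)$ configuration as the ground state requires establishing concrete structural properties of the couplings $\mathcal{M}_{r,s}$, which is essentially the open question. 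Any salvage must genuinely use the path weights — e.g.\ a weighted injection on (path, monochromatic point) incidences comparing $x$ to the alternating string, or a lower bound on $\epsilon^{\top}\mathcal{M}\epsilon$ via spectral or total-positivity properties of $\mathcal{M}$ — rather than a per-path count.
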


\subsection{Proof of Entropy Minimization}\label{sec:entropy}

We briefly review the results of the entropy analysis in which it is conjectured that the all 0's and the alternating $x$ string, minimize and maximize the entropy, respectively.

\subsubsection{Calculating Entropy From Moments of Distribution}

An equivalent formulation of Theorem \ref{thm:hws_gaussian} (and in fact the form in which it is proved) is that the moments of the (normalized) converge to the corresponding moments of the standard normal distribution:

\begin{lemma}\label{lem:momconv}The moments of the normalized version of $\Omega_n$ converge to the corresponding moments of the standard normal distribution.  That is,
	\[\mathbb{E}\left(\left(\frac{\Omega_n - \mathbb{E}(\Omega_n)}{\sqrt{\mathbb{V}(\Omega_n)}} \right)^r \right) \rightarrow \begin{cases}0 &r \mbox{ odd} \\
	(r-1)\times (r-3)\times \ldots \times 1 &r \mbox{ even}.\end{cases}\]
\end{lemma}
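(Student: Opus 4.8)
The plan is to obtain this moment convergence directly from the generating-function analysis already set up above, which is in fact the form in which Theorem~\ref{thm:hws_gaussian} is established in \cite{flajolet2006hidden}. Writing $X = \Omega_n - \mathbb{E}(\Omega_n) = \sum_I Y_I$ with $Y_I = X_I - 2^{-m}$, we have $\mathbb{E}(X^r) = \sum_{I_1,\dots,I_r}\mathbb{E}(Y_{I_1}\cdots Y_{I_r}) = [z^n]O_r(z)$, so everything reduces to extracting the order of growth in $n$ of $[z^n]O_r(z)$ for fixed $m$, together with its leading coefficient when $r$ is even, and comparing it with $\mathbb{V}(\Omega_n)^{r/2}$.

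First I would recall the partition $\mathcal{O}_r=\bigcup_p\mathcal{O}_r^{[p]}$ by the number $rm-p$ of points covered by the positions, together with the two structural facts established in the excerpt: (i) if some $I_j$ is disjoint from all the other positions then, by the i.i.d.\ property of the text, $Y_{I_j}$ is independent of the remaining factors and has mean $0$, so only ``friendly'' families contribute, which forces $p\ge\lceil r/2\rceil$; and (ii) via the isomorphism $\mathcal{O}_r^{[p]}\cong(\{0,1\}^*)^{rm-p+1}\times\mathcal{B}_r^{[p]}$ with $\mathcal{B}_r^{[p]}(z)$ a \emph{fixed} polynomial, one gets $O_r^{[p]}(z) = (1-z)^{-(rm-p+1)}B_r^{[p]}(z)$ and hence $[z^n]O_r^{[p]}(z)=O(n^{rm-p})$. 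Consequently $\mathbb{E}(X^r)=\Theta\!\big(n^{rm-\lceil r/2\rceil}\big)$, the dominant contribution coming from the minimal admissible value $p=\lceil r/2\rceil$.

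Next I would divide by $\mathbb{V}(\Omega_n)^{r/2}$, which by \eqref{eq:omega-variance} is $\Theta(n^{(2m-1)r/2})=\Theta(n^{rm-r/2})$. For $r$ odd, $\lceil r/2\rceil=(r+1)/2>r/2$, so the normalized $r$-th moment is $O(n^{-1/2})\to 0$, matching the odd case of the lemma. For $r$ even, $\lceil r/2\rceil=r/2$, and the friendly families realizing this minimum are precisely those in which $[r]$ is split into $r/2$ pairs, the two positions in each pair overlapping in a single location and positions from distinct pairs being disjoint; independence across pairs then factorizes $\mathbb{E}(Y_{I_1}\cdots Y_{I_r})$ into a product of two-factor expectations, the number of such pairings of $[r]$ is $(r-1)(r-3)\cdots 1 = (r-1)!!$, and each pair contributes, at leading order in $n$, a factor equal to $\mathbb{E}(X^2)=\mathbb{V}(\Omega_n)$ (insisting that distinct pairs avoid one another removes only an $o(1)$ proportion of configurations). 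Therefore $\mathbb{E}(X^r)\sim (r-1)!!\,\mathbb{V}(\Omega_n)^{r/2}$, and the normalized moment tends to $(r-1)(r-3)\cdots 1$, which is exactly the $r$-th moment of $\mathcal{N}(0,1)$; combined with the odd case this is the claim.

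The delicate point — and the one I would lean on \cite{flajolet2006hidden} for rather than reprove in full — is the leading-coefficient computation in the even case: one must check that the ``full'' class $\mathcal{B}_r^{[r/2]}$ genuinely factors (up to interleavings absorbed into the $(1-z)^{-1}$ factors) as an unordered product of $r/2$ independent two-copy overlap blocks with no residual combinatorial coupling, so that the $\Theta(n^{rm-r/2})$ coefficient is exactly $(r-1)!!$ times the square of the per-pair constant already appearing in $\mathbb{V}(\Omega_n)$ through $\kappa^2(x)$; equivalently, that friendly families with $p=r/2$ whose pairing has some pair overlapping in more than one point, or three positions mutually touching, are excluded by the point count. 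Everything else is routine: the asymptotics of $[z^n](1-z)^{-k}\sim n^{k-1}/(k-1)!$ and the fact, from Theorem~\ref{thm:hws_moments}, that $\mathbb{V}(\Omega_n)=\Theta(n^{2m-1})$.
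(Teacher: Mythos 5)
Your proposal is correct and follows essentially the same route as the paper, which does not give a standalone argument but observes that Lemma~\ref{lem:momconv} is precisely the form in which Theorem~\ref{thm:hws_gaussian} is established in \cite{flajolet2006hidden}, via the generating-function partition $\mathcal{O}_r=\bigcup_p\mathcal{O}_r^{[p]}$, the friendliness constraint $p\ge\lceil r/2\rceil$, and the dominance of minimally intersecting (paired, single-point-overlap) configurations. Your write-up simply makes explicit the odd/even moment comparison against $\mathbb{V}(\Omega_n)^{r/2}$ and the $(r-1)!!$ pairing count, deferring the leading-constant factorization to \cite{flajolet2006hidden} exactly where the paper does.
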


We have a distribution $\Omega$ and the goal is to estimate $\mathbb{E}(\Omega \mathrm{log} \Omega)$, given the moments of $\Omega$. Let $E = \mathbb{E}(\Omega)$, and let the pdf of $\Omega$ be $f$. By Taylor's theorem, we have

\begin{align}\label{eq:taylor}
\Omega \log\Omega = E\log E + (\log E + 1)(\Omega - E) + \frac{(\Omega - E)^2}{2E} - \frac{(\Omega-E)^3}{6E^2} +  \mathcal{R}(\Omega),
\end{align}
where $\mathcal{R}(\Omega)$ is the integral form of remainder, i.e.
\[\mathcal{R}(\Omega) = \int_E^\Omega \frac{(\Omega-t)^3}{6t^3} dt.\]

Note that $\mathcal{R}(\Omega)$ is non-negative for all $\Omega$.  Now whenever $\Omega \geq \frac{1}{2}E$, we have
\[\mathcal{R}(\Omega) \leq \int_E^\Omega \frac{8(\Omega-t)^3}{6E^3} dt = \frac{8(\Omega-E)^4}{24E^3}.\]

On the other hand if $\Omega < \frac{1}{2} E$ then
\begin{align*}\mathcal{R}(\Omega) &= \int^\Omega_{\frac{E}{2}} \frac{(\Omega-t)^3}{6t^3} dt + \int^{\frac{E}{2}}_E \frac{(\Omega-t)^3}{6t^3} dt \\
&\leq \int^\Omega_{\frac{E}{2}} \frac{1}{6} dt + \int_E^{\frac{E}{2}} \frac{8(\Omega-t)^3}{6E^3} dt \\
&\leq \frac{E}{12} + \int_E^\Omega \frac{8(\Omega-t)^3}{6E^3} dt \\
&= \frac{E}{12} + \frac{8(\Omega-E)^4}{24E^3}.
\end{align*}

Hence we have that
\begin{equation}\label{eq:rembound1}
|\E(\mathcal{R})| \leq \frac{1}{12}E\PP\left(\Omega < \frac{1}{2}E\right) + \frac{\E((\Omega-E)^4)}{3E^3}.
\end{equation}

We obtain a Chebychev bound on the first term:
\begin{align*}
\PP\left(\Omega < \frac{1}{2}E\right) &\leq \PP\left(|\Omega-E| > \frac{1}{2}{E}\right) \\
&\leq \frac{\E\left((\Omega-E)^4\right)}{\left(\frac{1}{2}E\right)^4} = \frac{16\E\left((\Omega-E)^4\right)}{E^4}
\end{align*}

Substituting this into (\ref{eq:rembound1}) gives
\begin{equation}\label{eq:rembound}|\E(\mathcal{R})| \leq \frac{5\E((\Omega-E)^4)}{3E^3}.\end{equation}

Hence taking expectations of (\ref{eq:taylor}) gives
\begin{equation}\label{eq:est}\E(\Omega\log\Omega) = E\log E + \frac{\mathbb{V}(\Omega)}{2E} - \frac{E((\Omega-E)^3)}{6E^2} + \epsilon\left(\frac{5}{3}\frac{\E((\Omega-E)^4)}{E^3}\right),
\end{equation}
where the notation $\epsilon(x)$ means an error term of magnitude at most $x$.

\subsubsection{Minimal Entropy}

We are now in a position to prove the main theorem of this Section, that (for sufficiently large $n$), the entropy is minimized uniquely by the constant strings $[0]^m, [1]^m$.

\begin{theorem}\label{theorem:fixed-k}
	For all $m$, there is some $N$ such that for all $n>N$, and any string $x$ of length $m$, we have
	\[H_n(x) \geq H_n\left([0]^m\right),\]
	with equality only if $x\in \{[0]^m, [1]^m\}$.
\end{theorem}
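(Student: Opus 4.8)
The plan is to convert the entropy $H_n(x)$ into a functional of the first few moments of the embedding-count variable $\Omega_n = \Omega_n(x)$ and then isolate the only piece of it that depends on the shape of $x$. Writing $E = \E(\Omega_n) = \mu_{n,m}2^{-n} = \binom{n}{m}2^{-m}$ and using $\sum_{y\in\{0,1\}^n}\omega_x(y) = \mu_{n,m}$ together with the convention $0\log 0 = 0$, the definition \eqref{eq:shannon-entropy} of $H_n(x)$ unwinds to
\[
H_n(x) \;=\; \log_2\mu_{n,m} \;-\; \frac{1}{\mu_{n,m}}\sum_{y}\omega_x(y)\log_2\omega_x(y) \;=\; \log_2\mu_{n,m} \;-\; \frac{1}{E}\,\E\big(\Omega_n\log_2\Omega_n\big).
\]
I would then substitute the Taylor-with-remainder estimate \eqref{eq:est} for $\E(\Omega_n\log_2\Omega_n)$; since $\log_2\mu_{n,m} - \log_2 E = \log_2 2^n = n$, this yields
\[
H_n(x) \;=\; n \;-\; \frac{\mathbb{V}(\Omega_n)}{2E^2} \;+\; \frac{\E\big((\Omega_n-E)^3\big)}{6E^3} \;+\; \epsilon\!\left(\tfrac{5}{3}\,\frac{\E\big((\Omega_n-E)^4\big)}{E^4}\right).
\]
Thus the leading $n$ and the logarithmic contributions are identical for every length-$m$ string, and all dependence on $x$ has been pushed into the central moments of $\Omega_n$.

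Next I would feed in the asymptotics from hidden word statistics. By Theorem~\ref{thm:hws_moments}, $E = \Theta(n^m)$ and $\mathbb{V}(\Omega_n) = \frac{2^{-2m}}{(2m-1)!}\,\kappa^2(x)\,n^{2m-1}\big(1+O(1/n)\big)$, so $\sigma^2 := \mathbb{V}(\Omega_n) = \Theta(n^{2m-1})$; and by Theorem~\ref{thm:hws_gaussian}, in the moment form of Lemma~\ref{lem:momconv}, the normalized fourth moment tends to $3$, whence $\E((\Omega_n-E)^4) = O(\sigma^4) = O(n^{4m-2})$ and, by H\"{o}lder, $\E((\Omega_n-E)^3) = O(\sigma^3) = O(n^{3m-3/2})$. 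Consequently the third-moment term is $O(n^{-3/2})$ and the remainder term is $O(n^{-2})$, and a short computation gives
\[
H_n(x) \;=\; n \;-\; \frac{(m!)^2}{2(2m-1)!}\cdot\frac{\kappa^2(x)}{n} \;+\; O\!\big(n^{-3/2}\big),
\]
where — crucially — the error is uniform over the $2^m$ strings of length $m$. Subtracting the same identity for $[0]^m$,
\[
H_n(x) - H_n\big([0]^m\big) \;=\; \frac{(m!)^2}{2(2m-1)!}\cdot\frac{\kappa^2\big([0]^m\big) - \kappa^2(x)}{n} \;+\; O\!\big(n^{-3/2}\big).
\]
By Theorem~\ref{theorem:maximal-variance} the numerator of the leading term is non-negative; moreover $\kappa^2$ is a sum of non-negative integer terms $\mathcal{M}_{r,s}\,[[x_r=x_s]]$ and every $\mathcal{M}_{r,s}$ is a product of binomial coefficients all at least $1$, so for a non-constant $x$ the difference $\kappa^2([0]^m) - \kappa^2(x)$ is a strictly positive integer. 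Hence for non-constant $x$ the leading term is at least $\frac{(m!)^2}{2(2m-1)!\,n}$, which dominates the $O(n^{-3/2})$ error once $n$ exceeds some threshold $N_x$, making the difference positive; for $x \in \{[0]^m, [1]^m\}$ equality holds for every $n$ by bit-complementation symmetry, $\omega_x(y) = \omega_{\overline{x}}(\overline{y})$. Taking $N = \max_x N_x$ over the finitely many length-$m$ strings completes the argument.

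The main obstacle is precisely that the entropy values of distinct length-$m$ strings differ only at order $1/n$ (the $n$-term and all the logarithmic terms cancel in the comparison), so deciding which string minimizes $H_n$ requires resolving $H_n(x)$ with error $o(1/n)$ uniformly in $x$. The real work — already carried out in the excerpt via the integral-remainder bound \eqref{eq:est} and the Gaussian moment estimates — lies in showing that at this level of precision the sole surviving $x$-dependence is through the autocorrelation $\kappa^2(x)$; once that is in place, the extremization reduces to Theorem~\ref{theorem:maximal-variance} and the integrality of $\kappa^2$. A secondary point to watch is the uniformity of all the $O(\cdot)$ terms in $x$, which is harmless here only because $m$, and hence the number of candidate strings, is fixed.
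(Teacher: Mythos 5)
Your proposal is correct and follows essentially the same route as the paper's proof: reduce $H_n(x)$ to $\log\mu - \frac{1}{E}\E(\Omega_n\log\Omega_n)$, expand via the Taylor-with-remainder estimate \eqref{eq:est}, invoke Theorem \ref{thm:hws_moments} and Lemma \ref{lem:momconv} to bound the third- and fourth-moment error terms, and conclude from the unique maximization of $\kappa^2$ in Theorem \ref{theorem:maximal-variance}. Your version is in fact slightly tidier on two points the paper glosses over — the $1/E$ normalization in $\E(\Omega\log\Omega)$ and the integrality of $\kappa^2([0]^m)-\kappa^2(x)$ giving a uniform gap of at least $1$ — but these are refinements of the same argument, not a different one.
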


\begin{proof}
	From Eq. \ref{eq:subseq-prob-distribution} and Eq. \ref{eq:shannon-entropy}, we have
	\begin{align}\label{eq:omegaest}
	\begin{split}
	H_n(x) = H(P) &= -\sum_i p_i \log p_i \\
	&= -\sum_i \left( \frac{\omega(i)}{\mu} \right) \log \left( \frac{\omega(i)}{\mu} \right) \\
	&= \frac{\log\mu}{\mu}\sum_i \omega(i) - \frac{1}{\mu}\sum_i \omega(i)\log\omega(i) \\
	&= \log\mu - \E(\Omega\log\Omega)
	\end{split}
	\end{align}
	Hence it suffices to prove that for sufficiently large n the constant strings maximize $\mathbb{E}\left(\Omega_n \log \Omega_n\right)$.

	Note that $E$ depends only on $n$, and not on the form of $x$; by Theorem \ref{thm:hws_moments} we have $E=\Theta(n^m)$.  On the other hand, by the same Theorem we have $\mathbb{V}(\Omega_n) = \frac{2^{-2m}}{(2m-1)!}\kappa^2(x)n^{2m-1}\left(1+O(1/n)\right)$.

	Now $\kappa^2$ depends only on the form of $x$ and not on $n$, and by Theorem \ref{theorem:maximal-variance}
	it is uniquely maximized by the all-1s/0s strings.  Because $\kappa^2$ is independent of $n$, we therefore also have that the change in $\mathbb{V}(\Omega_n)$ induced by moving away from these strings is $\Theta(n^{2m-1})$, and so it suffices to prove that all of the error terms in \eqref{eq:est} are $o\left(\frac{n^{2m-1}}{E}\right) = o\left(n^{m-1}\right)$.

	Now by Lemma \ref{lem:momconv} (combined with the fact that by Theorem \ref{thm:hws_moments} $\mathbb{V}(\Omega_n) = \Theta\left(n^{2m-1}\right)$), we have that $\E((\Omega-E)^3)=o(n^{3m-3/2})$, and $\E((\Omega-E)^4)=O(n^{4m-2})$.  Combining this with the fact that $E=\Theta(n^m)$ yields the required bounds on the errors, and hence the result.
\end{proof}

\subsubsection{Entropy Ordering based on Autocorrelation}\label{sec:numerical-experiments}

Although we have proved the extremal case of minimal entropy in the asymptotic limit for $n \rightarrow \infty$ and fixed output length $m$ via the autocorrelation coefficient $\kappa^2(x)$, it is worth pointing out that our computer experiments indicate that $\kappa^2(x)$ predicts the entropy ordering perfectly in the finite length domain as well, i.e., for small and comparable fixed values of $n$ and $m$. An example obtained from empirical data is presented in Table \ref{tab:kappa-entropy} to illustrate the correlation between $H_n(x)$ and $\kappa^2(x)$ for $n=8$ and $m=5$.
\begin{table}[ht!]
	\caption{Entropy Ordering Prediction via Autocorrelation Sorting of Subsequences}
	\label{tab:kappa-entropy}
	\centering
	\begin{tabular}{l*{6}{c}r}
		$x$ & $\kappa^2(x)$ $\downarrow$ & $H(x)$ \\
		\hline
		$11111$ & 630 & 5.4649 \\
		\hline
		$00000$ & 630 & 5.4649 \\
		\hline
		$00001$ & 518 & 5.7581 \\
		$\ldots$ & $\ldots$ & $\ldots$ \\
		\hline
		$11000$ & 486 & 5.8838 \\
		$\ldots$ & $\ldots$ & $\ldots$ \\
		$00010$ & 458 & 6.0132 \\
		$\ldots$ & $\ldots$ & $\ldots$ \\
		$10011$ & 398 & 6.1076 \\
		$\ldots$ & $\ldots$ & $\ldots$ \\
		$01101$ & 366 & 6.2375 \\
		$\ldots$ & $\ldots$ & $\ldots$ \\
		$01010$ & 350 & 6.3498  \\
		\hline
	\end{tabular}
\end{table}

\section{Concluding Remarks}\label{sec:hws-conclusions}

We have provided a proof for the minimization of entropy by the uniform string in the asymptotic limit, i.e., $n\rightarrow\infty$ and fixed output length $m$, using results from hidden word statistics. However, showing the entropy maximization by the alternating string remains an open problem given that a proof establishing the minimization of the autocorrelation coefficient $\kappa^2(\texttt{1010\ldots})$ still escapes us. Beyond establishing this maximization, proving the entropy ordering of $x$ strings determined by $\kappa^2(x)$ for finite $n$ and $m$ represents another open problem.

\stopminichaptoc{\minichaptocenabled}

\part{Deniability in Quantum Cryptography}\label{part:two}

\chapter{Deniability in Cryptography}\label{chp:deniability-intro}

\printminichaptoc{\minichaptocenabled}

This chapter sets the context for the following chapters in Part \ref{part:two}. We start by giving a high level description of what deniability means, in what specific scenarios it has practical relevance, followed by a few words on its history and the evolution of its formalization in classical cryptography. We then overview related work in this area, focusing on the most relevant of contributions in classical cryptography, which are closely related to our work. Finally, we close this chapter by taking some distance from abstract definitions and discussing one of the most well-known and practical applications of deniability in Section \ref{sec:quantum-e-voting}, namely the notion of \emph{coercion-resistance} in the context of secure voting protocols.

\section{Introduction}

Deniability represents a fundamental privacy-related notion in cryptography. The ability to deny a message or an action is a desired property in many contexts such as off-the-record communication, anonymous reporting, whistle-blowing and coercion-resistant secure electronic voting.

The concept of non-repudiation is closely related to deniability in that the former is aimed at associating specific actions with legitimate parties and thereby preventing them from denying that they have performed a certain task, whereas the latter achieves the opposite property by allowing legitimate parties to deny having performed a particular action. For this reason, deniability is sometimes referred to as \emph{repudiability}. It is worth pointing out that deniability is a stronger notion than privacy as it should hold against an adversarial model wherein the attacker can demand that secret information be revealed or issue instructions. Moreover, deniability is intimately related to fundamental concepts such as secure multiparty computation \cite{goldwasser1997multi} and incoercible multiparty computation \cite{canetti1996incoercible}.

The definitions and requirements for deniable exchange can vary depending on the cryptographic task in question, e.g., encryption, authentication or key exchange. Roughly speaking, the common underlying idea for a deniable scheme can be understood as the impossibility for an adversary to produce cryptographic proofs, using only algorithmic evidence, that would allow a third-party, often referred to as a judge, to decide if a particular entity has either taken part in a given exchange or exchanged a certain message, which can be a secret key, a digital signature, or a plaintext message. In the context of key exchange, this can be also formulated in terms of a corrupt party (receiver) proving to a judge that a message can be traced back to the other party \cite{di2006deniable}.

In the public-key setting, an immediate challenge for achieving deniability is posed by the need for
remote authentication as it typically gives rise to binding evidence,  e.g., digital signatures, see \cite{di2006deniable,dodis2009composability}. The formal analysis of deniability in classical cryptography can be traced back to the original works of Canetti et al. and Dwork et al. on deniable encryption \cite{canetti1997deniable} and deniable authentication \cite{dwork2004concurrent}, respectively. These led to a series of papers on this topic covering a relatively wide array of applications. Deniable key exchange was first formalized by Di Raimondo et al. in \cite{di2006deniable} using a framework based on the simulation paradigm, which is closely related to that of zero-knowledge proofs.

Despite being a well-known and fundamental concept in classical cryptography, rather surprisingly, deniability has been largely ignored by the quantum cryptography community. To put things into perspective, with the exception of a single paper by Donald Beaver \cite{beaver2002deniability}, and a footnote in \cite{ioannou2011new} commenting on the former, there are no other works that directly tackle deniable QKE.

In the adversarial setting described in \cite{beaver2002deniability}, it is assumed that the honest parties are approached by the adversary after the termination of a QKE session and demanded to reveal their private randomness, i.e., the raw key bits encoded in their quantum states. It is then claimed that QKE schemes, despite having perfect and unconditional security, are not necessarily deniable due to an eavesdropping attack. In the case of the BB84 protocol, this attack introduces a binding between the parties' inputs and the final key, thus constraining the space of the final secret key such that key equivocation is no longer possible.

Note that since Beaver's work \cite{beaver2002deniability} appeared a few years before a formal analysis of deniability for key exchange was published \cite{di2006deniable}, its analysis is partly based on the adversarial model formulated earlier in \cite{canetti1997deniable} for deniable encryption. For this reason, the setting corresponds more closely to scenarios wherein the honest parties try to deceive a coercer by presenting fake messages and randomness, e.g., deceiving a coercer who tries to verify a voter's claimed choice using an intercepted ciphertext of a ballot in the context of secure e-voting, see \cite{ryan2016} for an example involving a compromising interaction between a coercer and voter.

\section{Related Work}\label{sec:deniability-related-work}

We focus on some of the most prominent works in the extensive body of work on deniability in classical cryptography. Probably the first study on this topic was a work by Donald Beaver on plausible deniability \cite{beaver1996plausible}.

\subsection{Deniable Encryption}

The notion of deniable encryption was considered by Canetti et al. \cite{canetti1997deniable} in a setting where an adversary demands that parties reveal private coins used for generating a ciphertext. This motivated the need for schemes equipped with a faking algorithm that can produce fake randomness with distributions indistinguishable from that of the real encryption.

The original motivation for deniable encryption was to consider a hostile setting wherein the privacy of transmitted data should still remain protected if we assume that the adversary/eavesdropper has acquired a ciphertext and that she has the additional power to require that Alice and/or Bob reveal their corresponding private choices such as randomness and plaintext data used for the generation of the obtained ciphertext.

For this to be possible, the requirements were modelled as follows. For a message $m$, a fake message $m'$ used for denying, random input $r$, and an encryption algorithm $E(m, r)$ that generates a ciphertext $c$, there should be a corresponding faking algorithm $\phi$ that generates a fake random input $r' = \phi(m, r, c)$ such that $E(m', r')$ and $E(m, r)$ are computationally indistinguishable.

More precisely, Canetti et al. \cite{canetti1997deniable} define $\pi$ with sender $S$ and receiver $R$ to be a shared-key $\delta(n)$-sender-deniable encryption protocol, with security parameter $n$ if it satisfies correctness ($\prob{k_R \neq k_S} \leq \negl{n}$), security ($\forall: m_1, m_2 \in M$, and for a shared key $k$ chosen at random, $\mathrm{com}_{\pi}(m_1, k) \overset{c}{\approx} \mathrm{com}_{\pi}(m_2, k)$), and finally that there exists an efficient ``faking'' algorithm $\phi$ having the following property with respect to any $m_1, m_2 \in M$. Let $k, r_S, r_R$ be uniformly chosen shared-key and random inputs of $S$ and $R$, respectively. Let $c = \mathrm{com}_{\pi}(m_1, k, r_S, r_R)$, and let $(\tilde{k}, \tilde{r_S}) = \phi(m_1, k, r_S, c, m_2)$, then the random variables
\[(m_2, \tilde{k}, \tilde{r_S}, c) \; \text{and} \; (m_2, k, r_S, \mathrm{COM}_{\pi}(m_2, k, r_S, r_R))
\]
are $\delta(n)$-close.

The indistinguishability was formalized in terms of the computational distance between two probability distributions: for $\delta: N \rightarrow [0,1]$, two probability distributions $\mathcal{A}$ and $\mathcal{B}$ are said to be $\delta(n)$-close if for all PPT distinguisher $\dist$ and for large enough $n$, we have that $|\prob{\dist(\mathcal{A})=1} - \prob{\dist(\mathcal{B})=1}| < \delta(n)$. If $\delta(n)$ is negligible, then $\mathcal{A}$ and $\mathcal{B}$ are said to be computationally indistinguishable.

\subsection{Deniability in the Simulation Paradigm}

In a framework based on the simulation paradigm, Dwork et al. introduced the notion of deniable authentication \cite{dwork2004concurrent}, followed by the work of Di Raimondo et al. on the formalization of deniable key exchange \cite{di2006deniable}. Both works rely on the formalism of zero-knowledge (ZK) proofs, with definitions formalized in terms of a simulator that can produce a simulated view that is indistinguishable from the real one. In a subsequent work, Di Raimondo and Gennaro gave a formal definition of forward deniability \cite{di2009new}, requiring that indistinguishability remain intact even when a (corrupted) party reveals real coins after a session. Among other things, they showed that statistical ZK protocols are forward deniable. Roughly speaking, the core idea of the model used in these works is that the transcript of the protocol does not give rise to any evidence of interaction.

In an interactive zero-knowledge proof (ZKP), introduced by Goldwasser et al. \cite{goldwasser1989knowledge}, the prover reveals no knowledge other than the validity of the assertion she proves to the verifier, where knowledge amounts to having the ability to perform a task. The intuition behind this idea is that the verifier of a ZKP should not be able to perform a task after the interaction that she would not have been able to before her interaction in the ZKP. This property is captured by showing the existence of a simulator\footnote{The simulator can simply be thought of as an algorithm.} that can simulate a malicious verifier's output without having access to the prover. This implies that the adversary could have simply generated the transcript by running the simulator to obtain the same results. To be more precise, we want the distribution of the simulated transcript to be indistinguishable from the real one.

The deniability of a protocol in the simulation paradigm, as considered in \cite{di2009new} depends on whether or not the view or the transcript of the receiver (verifier) can be simulated by a machine that does not know the secret key of the sender (prover). The natural intuition behind this definition is that since the transcript of a deniable protocol cannot be distinguished from one sampled from a random distribution, it does not provide any algorithmic evidence that associates the identity of a specific sender with a transcript. In other words, in the same vein as ZKPs described above, the transcript could have been generated as a result of the adversary interacting with a simulator, thereby rendering the transcript void of any identity-binding evidence.

Formal definitions for full and partial deniability for key exchange are proposed in \cite{di2009new}, where the former enables a party to deny having participated in a given session, while the latter is limited to preventing a recipient from proving that a particular sender has communicated with them (e.g. to prove that Alice communicated with someone but not with Bob), thus resulting in peer-independent transcripts.

Let $\Sigma$ be a key-exchange protocol consisting of the following tuple $(KG, \Sigma_I, \Sigma_R)$, where $KG$ defines a key generation algorithm, with $\Sigma_I$ and $\Sigma_R$ denoting interactive machines specifying the roles of the honest initiator and responder, respectively.

\begin{definition}\cite{di2006deniable}
   ($KG, \Sigma_I, \Sigma_R$) is said to be a concurrently deniable key exchange protocol with respect to the class $\text{AUX}$ of auxiliary inputs if for any adversary $E$ for any input of public keys $\vec{pk} = (pk_1, \ldots ,pk_{\ell})$ and any auxiliary input $aux \in \text{AUX}$, there exists a simulator $\text{SIM}_{E}$ that running on the same inputs as $E$ produces a simulated view which is indistinguishable from the real view of $E$. That is, consider the following probability distributions where $\vec{pk}=(pk_1, \ldots, pk_{\ell})$ is the set of public keys of the honest parties:
	\begin{equation}
	\begin{split}
	   Real(n, aux) = [(sk_i, pk_i) \leftarrow KG(1^n); (aux, \vec{pk}, \text{View}_{E}(\vec{pk}, aux)] \\
	   Sim(n, aux) = [(sk_i, pk_i) \leftarrow KG(1^n); (aux, \vec{pk}, \text{SIM}_{E}(\vec{pk}, aux)]
	\end{split}
\end{equation}

Then, for all probabilistic polynomial time machines $\dist$ and all $aux \in \text{AUX}$
\[
	|\mathrm{P}_{x \in Real(n, aux)} [\dist(x)=1] - \mathrm{P}_{x \in Sim(n, aux)} [\dist(x) = 1] | \le \negl{\kappa}
\]

\end{definition}

Note that $E$'s interaction with $B$ does not constitute impersonation as she is not trying to impersonate $A$. Instead, her interaction is aimed at obtaining a proof that she herself interacted with $B$ and established a key with him. Therefore, intuitively, the goal of deniability is to prevent $E$ from proving to a third party that this was the case. Thus, when $E$ interacts with $B$ we assume that she will do so using a public key $pk_{E}$, which may or may not be associated with $E$'s identity. Indeed, the idea is that since the definition guarantees that even when the attacker $E$ runs the key generation algorithm to generate a public key (thus knowing the corresponding secret key), she cannot prove that $B$ talked to her, then $E$ is certainly not able to prove that $B$ talked to any other party $A$. In particular, this implies deniability with respect to eavesdroppers. In addition, while $E$ may decide to reveal her secret key $sk_{E}$ to help in proving that $B$ talked to her, she does not have to do so. In fact, $E$ may even use a public key for which she does not know the corresponding private key.

\subsection{Deniability in the CRS and RO Model}

Pass \cite{pass2003deniability} formally defines the notion of deniable zero-knowledge and presents positive and negative results in the common reference string (CRS) and random oracle (RO) model.

While ZKP in the standard model can be quite challenging, sometimes even impossible, one can resort to the CRS or the RO model. The former refers to a setting wherein a random string is accessible to the parties, while the latter provides them with an ideal random function through oracle calls.

An observation made in \cite{pass2003deniability} is that although the standard definition of ZK in the plain model satisfies deniability, the same result does not necessarily carry over to the CRS and RO models. This is due to the fact that in a real world instantiation, the CRS string or the random oracle is fixed once at the onset, which is problematic as the simulator can in principle choose the public information in any way that allows it to make the simulated transcript look indistinguishable from the real one, a property that is hampered by a fixed CRS or random oracle. This means that a party is no longer necessarily capable of simulating a transcript using predefined public information.

Pass goes on to show a series of negative results for deniable ZKP in the CRS model, including simulatability without rewinding and non-interactive ZKP to name a few. Although deniable ZK is ruled out in a number of natural settings, a weaker form of deniability in the CRS model is shown to be achievable if the communication is restricted to only arguments of knowledge that are zero-knowledge in a class of protocols where the CRS may be reused. Despite several impossibility results in the CRS model, deniable ZK protocols in the RO model are shown to be possible.

\subsection{On-line Deniability}

In \cite{dodis2009composability}, Dodis et al. establish a link between deniability and ideal authentication and further model a situation in which deniability should hold even when a corrupted party colludes with the adversary during the execution of a protocol, a stronger form of deniability referred to as \emph{on-line deniability}.

Roughly speaking, in terms of parties and interactions, their model considers a sender $S$ who may have sent a message $m$ to a receiver $R$, a judge $J$ who decides whether or not this exchange has taken place, an informant $\mathcal{I}$ who witness the exchange and tries to convince the judge, and finally a misinformant $\mathcal{M}$ who has not witnessed any message exchanges but still tries to convince the judge that one has occurred. Thus, a protocol satisfies their definition if the judge cannot distinguish a true informant - who interacts with $S$ and $R$ during the run of a protocol - from a misinformant.

The authors show that a protocol satisfies their definition of on-line deniability if and only if it achieves message authentication in the generalized universal composability (GUC) framework of Canetti et al. \cite{canetti2007universally}, which is to say that such a protocol would naturally inherit composability guarantees. It is however also shown that the same definition is impossible to satisfy in the PKI model if adaptive corruptions are allowed, even under the assumption of secure erasure. Moreover, via reductions from deniable authentication to deniable key exchange, and vice versa, the same impossibility results are shown to imply that on-line deniable key exchange w.r.t. adaptive corruptions is also impossible in this model.

Finally, in terms of feasibility results, several positive results are presented for relaxed definitions, such as restricting the model to static adversaries or allowing adaptive corruptions but in the symmetric-key setting where all parties share a key. Since the latter may not be as appealing as a construction in the PKI setting, the authors suggest a hybrid solution that establishes the symmetric keys using a weaker variant referred to as deniable key exchange with incriminating abort (KEIA). In this approach, on-line deniability is satisfied as long as the protocol terminates successfully, i.e., it is not aborted by a malicious party.

\subsection{Perfect Forward Secrecy and Peer-and-Time Deniability}

Cremers and Feltz introduce another variant for deniable key exchange referred to as peer and time deniability \cite{cremers2011one}, while also capturing perfect forward secrecy. The latter describes a post-compromise security property that is aimed at preventing the compromise of long-term secret keys from compromising past session keys.

One of the motivations of this work is the observation that secure one-round key exchange protocols in the PKI setting achieve either perfect forward secrecy or some form of deniability, but not both. This is primarily due to the interplay between explicit authentication and deniability resulting in trade-offs wherein the former typically achieves perfect forward secrecy at the cost of giving up on deniability. The authors introduce the notion of peer-and-time deniability, which is based on a stronger variant of the extended-CK model \cite{lamacchia2007stronger}, modelled by allowing the adversary to corrupt parties even after the completion of the test-session.

Peer-and-time deniability allows parties to deny that they were alive during a certain interval of time such that while a judge can can be convinced that a party has signed some self-generated data at some point, there can be no proof of the party's role, their intended peer, or if the session has terminated successfully or the time at which the signature was issued.

\subsection{Deniable AKE for Secure Messaging}

More recently, Unger and Goldberg studied deniable authenticated key exchange (DAKE) in the context of secure messaging \cite{unger2015deniable}. Secure messaging is an area of research that has been witnessing a surge of interest in recent years, with the prime example being the Signal\footnote{Signal was preceded by TextSecure, which was eventually merged with RedPhone and renamed Signal, with the source code available at \url{https://github.com/signalapp}.} protocol, which is the most widely-used secure messaging protocol providing end-to-end encryption and known for its ratcheting construction \cite{marlinspike2013ratcheting}. See a recent work by Cohn-Gordon et al. \cite{cohn2017formal} for a thorough, formal security analysis the Signal messaging protocol.

While early attempts at devising secure messaging systems either did not meet basic requirements or relies on a trusted service provider, research into systems with end-to-end confidentiality and integrity can be traced back to one of the first security protocols of this kind, namely Off-the-Record (OTR) messaging by Borisov et al. \cite{borisov2004off}. A year after its publication, a security analysis by Di Raimondo et al. \cite{di2005secure} revealed a number of vulnerabilities, which were partly due to the use of an insecure key-exchange protocol. The authors suggested some changes for improving the security of the system as well as providing features such as deniability.

In \cite{unger2015deniable}, the authors make some indirect claims about deniability in Signal (referred to as TextSecure in the paper), and present two DAKEs that are analyzed in the generalized UC framework. The first protocol is a non-interactive DAKE, called Spawn, claiming to offer forward secrecy and deniability against both offline and online judges. It is also suggested that the construction can be used to improve the deniability properties of the Signal protocol.

\subsection{Deniability in the Quantum Setting}

To the best of our knowledge, the only work related to deniability in QKE is a single paper by Donald Beaver \cite{beaver2002deniability}, in which the author suggests a negative result arguing that existing QKE schemes are not necessarily deniable.

\section{Coercion-Resistance and Deniability in Secure E-Voting}\label{sec:quantum-e-voting}

Coercion-resistance in secure voting protocols represents a concrete case in which deniability is considered to be a desired property. Roughly speaking, this corresponds to scenarios in which a coercer or vote-buyer demands that a voter reveal their cast ballot (typically in encrypted form) so that they can verify if the voter has complied with their request, e.g. to vote for a particular candidate.

However, having to provide complete or even partial coercion-resistance in addition to other properties such as individual and universal verifiability is a notoriously difficult task. In many instances, it is possible to achieve one at the cost of either completely compromising the other, or making strong assumptions. To address some of these long-standing problems, one may consider turning to the field of QIP to either overcome known obstacles in the classical setting or to confirm no-go theorems in the quantum setting.

Despite some of the inherent advantages of quantum information, developments in quantum information processing have also been met with a number of setbacks in the form of no-go theorems with results such as the impossibility of unconditional quantum bit commitment \cite{mayers1997unconditionally} and oblivious transfer \cite{lo1997insecurity}. Regarding deniability, as far as classical results are concerned, it is not clear whether existing no-go theorems from the classical literature (see e.g. \cite{dodis2009composability}) can be transformed into feasibility results. Perhaps more importantly, the lack of research in this area makes it hard to determine what other uniquely quantum tasks, relevant for deniability, might be possible.

\subsection{Quantum E-Voting}

Voting protocols represent a suitable candidate for deniable communication schemes. Since the publication of \cite{christandl2005quantum}, a number of quantum voting schemes such as \cite{hillery2006towards,vaccaro2007quantum,bonanome2011toward} have been proposed that are either directly based on or inspired by the primitives developed in \cite{christandl2005quantum}. However, most of these schemes suffer from various vulnerabilities \cite{arapinis2018comprehensive}. In \cite{christandl2005quantum}, Christiandl and Wehner show that it is possible to obtain anonymous transfer of both classical as well as quantum information, followed by further works on anonymous quantum communication \cite{bouda2007anonymous,brassard2007anonymous}. Furthermore, in \cite{christandl2005quantum}, it is proved that traceless exchange of messages is only possible using quantum information. To the best of our knowledge, quantum anonymous transfer and more specifically, the property of traceless exchange, along with constructions such as counterfactual QKD \cite{noh2009counterfactual} and uncloneable encryption \cite{gottesman2002uncloneable}, all of which with potential applications for deniability, have not been considered in the context of deniability before.

In a recent work, Arapinis et al. \cite{arapinis2018comprehensive} provide a comprehensive analysis of quantum e-voting protocols. The authors systematically classify all existing solutions in terms of their main primitives and analyze their claimed security properties. Their analysis also provides a systematization of knowledge of the body of work on quantum e-voting schemes. Moreover, it presents non-trivial attacks on almost all the major constructions, which clearly demonstrates the fragile nature of security claims with ad-hoc definitions and models and a lack of security proofs. The classification of the existing quantum e-voting schemes is done primarily in terms of four classes that involve an election authority, a tallier and a set of voters. The first three classes rely on sharing entangled states with the voters, whereas the last category makes use of conjugate coding (BB84 states).

The authors show how various security properties ranging from the correctness of the protocol to more subtle properties such as verifiability and ballot privacy can be undermined by exploiting various flaws that were either entirely unknown or if they were known, no concrete attacks had been presented yet. Finally, this work highlights the importance of providing security proofs within well-established formal models and points out the vital importance of precise definitions and formal adversarial models, areas of work that still present a considerable gap between the quantum cryptography literature and its classical counterpart.

It is worth pointing out that, despite its relative degree of maturity, the current state-of-the-art in classical cryptography when it comes to secure e-voting is such that coming up with precise and non-conflicting definitions for subtle requirements such as ballot privacy and universal/individual verifiability is still ongoing research and presents a wide range of open and challenging problems. While specific properties in classical e-voting protocols have been proved in the computational model (e.g., see \cite{khader2013proving} by Khader et al.), the classical literature contains a plethora of different constructions that lack composable proofs of security and that have not been rigorously analyzed. This difficulty, among other things, could be mitigated by the use of formal methods, more specifically by using tools as such automated provers and model checkers.

\subsection{Verifying Quantum Security Protocols}

However, when it comes to verifying quantum protocols, it remains unclear how a similar approach can be adopted in the quantum setting given that without having access to quantum computers, we are limited to a specific subset of protocols that fall within the stabilizer formalism (based on the Gottesman-Knill theorem), which for example makes it possible to perform model checking for a certain class of quantum protocols in polynomial time using a classical computer. Compared to its classical counterpart, the use of formal verification in the context of quantum information assurance has been somewhat limited in scope.

A unique aspect of quantum protocols lies in their dependence on the subtle and notoriously elusive properties of quantum information, along with the rather common juxtaposition of classical and quantum components. Given the proofs of unconditional security of the BB84 protocol, verification techniques such as model checking might seem redundant at first, yet there are other reasons in favor of using formal verification, apart from the obvious benefit of discovering attacks, flaws and fixes in constructions without any proofs of security. These motivations include the existence of a multitude of protocols with varying degrees of complexity claiming to satisfy notoriously subtle and difficult to achieve properties such as verifiability and coercion-resistance in the context of quantum e-voting schemes, along with the fact that despite there being a proof of security for a protocol, an implementation of a system based on the said protocol might still be subject to a wide variety of loopholes. Moreover, quantum protocols typically come with classical components dedicated to tasks such as error-correction and authentication, which call for a unified framework and language for reasoning about such hybrid constructions.

\subsubsection{State-Explosion meets Quantum Complexity}

Apart from the unavoidable and obvious limitation of undecidability, one of the central obstacles encountered in the design and implementation of model checkers is that of the state-explosion problem, which is due to the exponential growth of the global state space in the number of concurrent components, rooted in the combinatorial complexity inherent to exploring the state space. This problem is exacerbated in the quantum regime due to an additional inherent feature of quantum systems, namely that of the exponential growth of resources needed for simply representing quantum states using classical bits of information. In general\footnote{This refers to quantum states for which the state vector cannot be factored into a tensor product of states.}, a classical description of a quantum state consisting of $n$ qubits involves $2^n$ complex coefficients, thus requiring exponential computational resources. This feature, which in this particular case constitutes a hurdle, lies at the core of almost all approaches proposed so far. In other words, short of having a sufficiently stable and scalable quantum computer, any classical model checking approach would have to somehow account for this aspect, as otherwise the verification of quantum systems using classical computers would simply be impossible.

At its core, the stabilizer formalism is based on the Gottesman-Knill theorem by Daniel Gottesman and Emanuel Knill \cite{nielsen2002quantum,gottesman1998heisenberg,aaronson2004improved,gay2011stabilizer}, which defines a restricted model in which a subclass of quantum circuits, called stabilizer circuits, can be simulated efficiently on a classical computer\footnote{See the ``CHP: CNOT-Hadamard-Phase'' program available at \url{https://www.scottaaronson.com/chp/} for a C implementation of the stabilizer circuits based on algorithms given in \cite{aaronson2004improved}.}. This encompasses quantum computations consisting of $(i)$ state preparation in the computational basis, Clifford group operations, namely Hadamard gates, phase gates, controlled-not gates and Pauli gates $(ii)$ measurements in the computational basis, and classical conditional branching based on quantum measurement results.

\subsubsection{Prevalent Approaches}

Thus far, the attempted approaches range from relatively simple ones such as the application of classical probabilistic model checking tools \cite{nagarajan2005automated,gay2005probabilistic,papanikolaou2005techniques} such as PRISM \cite{kwiatkowska2011prism} to more sophisticated ones such as a series of works led by Rajagopal Nagarajan and Simon J. Gay, in collaboration with Papanikolaou, that resulted in the development of a tool called QMC (quantum model checker) \cite{gay2008qmc}. QMC is a verification tool for quantum protocols that to the best of our knowledge was the first of its kind, see also \cite{gay2008model,papanikolaou2009model,gay2010specification} for more details.

We note that the authors entertain the possibility of handling protocols that are not in the scope of the stabilizer formalism by using approximation techniques proposed by Bravyi and Kitaev \cite{bravyi2005universal} for performing universal quantum computation (UQC). In this approach, closely related to the Gottesman-Knill theorem, the operations are limited to ideal Clifford unitaries, and it is allowed to create a blank state $\ket{0}$, perform qubit measurements in the computational basis, and create a single (noisy) ancilla qubit in a mixed state $\rho$. By viewing $\rho$ as a parameter of the model, the goal is to determine for which $\rho$, UQC can be efficiently simulated via purification protocols that consume several copies of $\rho$ to produce, in the asymptotic limit, a pure state, referred to as a ``magic'' state. It is shown that the Clifford group operations combined with magic states are sufficient for UQC.

In a somewhat similar fashion, akin to the classical setting, the approaches employed in the quantum literature are either \textbf{process-oriented}, i.e., based on establishing process equivalence, or they are \textbf{property-oriented}, which roughly translates into verifying a set of property specifications.

The former deals with checking that a given system is equivalent in terms of its behavior to another system, whose description is provided as a specification, i.e., the ideal or intended behavior whose correctness is self-evident, see e.g., \cite{jorrand2004toward,davidson2011formal,davidson2012model,davidson2012analysis,ardeshir2013equivalence,ardeshir2014verification,larijani2018automated}. As for the latter, given a formal specification of security requirements and a description of a model, the goal is to check if the model satisfies the expected specifications expressed in a set of logical formulas, along every possible execution path, similar to some of the standard classical approaches, see e.g., \cite{gay2008qmc,gay2008model,gay2010specification}.

Another related and very recent work by Unruh \cite{unruh2018quantum} on quantum relational Hoare logic is part of a line of research dedicated to developing ``QuEasyCrypt'' for the formal verification of quantum cryptographic protocols via crypto games, i.e., the quantum equivalent of EasyCrypt \cite{barthe2011computer,barthe2014easycrypt}.

\subsection{Quantum E-Voting vs. Classical Quantum-Secure Voting}

Moreover, unless solutions based on quantum primitives offer inherently superior solutions, e.g., in terms of unconditional security or more efficient use of computational resources, they may end up being mere feasibility results without any real advantages w.r.t. existing classical solutions. Another important issue in this regard is that in constructions beyond quantum key exchange, solutions such as the ones presented in the context of quantum e-voting tend to make generous use of quantum and classical resources to the point that given such resources, one can sometimes solve the same problem classically using known solutions, e.g., large amounts of preshared classical randomness and/or preshared entanglement, anonymous channels, or the need for running a protocol for an exponential number of rounds in the number of voters.

Such constructions, based on rather ad-hoc models, are prime examples that highlight the need for formal verifications of complex protocols. We will return to coercion-resistance in voting protocols in Chapter \ref{chp:cr-and-qsafe-voting} where we propose a coercion-resistant and efficient classical voting scheme, based on fully homomorphic encryption, which we also conjecture to be quantum-secure.

\stopminichaptoc{\minichaptocenabled}

\chapter{Preliminaries in Quantum Information Processing and Cryptography}\label{chp:qip}

\printminichaptoc{\minichaptocenabled}

\section{Basics of Quantum Information}\label{sec:qi-qc}

We use standard terminology from quantum computing and cryptography. For the purpose of our work, we limit ourselves to a description of the most relevant concepts in quantum information theory. More details can be found in standard textbooks \cite{nielsen2002quantum,wilde2013quantum,rieffel2011quantum,schumacher2010quantum}.

\paragraph{Notation} For brevity, let $A$ and $B$ denote the honest parties, and $E$ the adversary. Let $s \sample \{0,1\}^n$ denote a binary string of length $n$ sampled uniformly at random. Employing the Dirac bra-ket notation, we use $\ket{\psi}$ and $\bra{\psi}$ to denote a state vector of a quantum system labelled by $\psi$, and its conjugate transpose, respectively. Moreover, let $\bra{\psi}\ket{\psi}$ denote the inner (scalar) product of $\ket{\psi}$ and $\ket{\psi}^{\dagger}$, and $\ket{\psi}\bra{\psi}$ their outer product. Let $P=\ket{\psi}\bra{\psi}$ denote the projection operator onto $\ket{\psi}$ such that for any $\ket{\phi}$, we have $P\ket{\phi} = \ket{\psi}\bra{\psi}\ket{\phi}$.

\subsection{Noiseless Qubits}

Given an orthonormal basis formed by $\ket{0}$ and $\ket{1}$ in a two-dimensional complex Hilbert space $\mathcal{H}_2$, let $(+) \equiv \{ \ket{0}, \ket{1} \}$ denote the computational basis and $(\times)$ the diagonal basis, as defined below
\[
(\times) \equiv \{ \frac{\ket{0} + \ket{1}}{\sqrt{2}}, \frac{\ket{0} - \ket{1}}{\sqrt{2}} \}.
\]

A \emph{qubit}, short for quantum bit, represents the fundamental unit of quantum information, i.e., a two-state quantum system. We say that a general noiseless qubit $\ket{\psi}$ is in a pure state expressed as a linear combination of other pure states given by
\[
\ket{\psi} = \alpha \ket{0} + \beta \ket{1},
\]
where $\alpha, \beta \in \mathbb{C}$ are \emph{probability amplitudes} with unit norm such that $|\alpha|^2 + |\beta|^2 = 1$.

Upon measuring $\ket{\psi}$ in the computational basis, we obtain $\ket{0}$ with probability $|\alpha|^2$ and $\ket{1}$ with probability $|\beta|^2$. For a system of two qubits, the state can be expressed as $\ket{\psi} = \sum_{i,j} \alpha_{ij}\ket{ij}$ where $\sum_{ij}|\alpha_{ij}|^2 = 1$, with a similar expression for a larger number of qubits, and the probability that upon measurement, the first qubit is in state $i$ and the second in $j$ is $|\alpha_{ij}|^2$. Now if we were to measure just one qubit, then the probability that for example the first qubit is 0 for state
\[
\ket{\psi} = \alpha_{00}\ket{00} + \alpha_{01}\ket{01} + \alpha_{10}\ket{10} + \alpha_{11}\ket{11},
\]
is given by
\[
\prob{\ket{\psi} = \ket{00} \vee \ket{\psi} = \ket{01}} = |\alpha_{00}|^2+|\alpha_{01}|^2,
\]
with the post-measurement state, normalized to be a unit vector, being as follows
\[
\ket{\psi} = \frac{\alpha_{00}\ket{00} + \alpha_{01}\ket{01}}{\sqrt{|\alpha_{00}|^2+|\alpha_{01}|^2}}.
\]

One may choose any orthogonal basis $\{\ket{v}, \ket{w} \}$ to measure the qubit $\ket{\psi}$ in; it suffices to rewrite the state in that basis (change of basis), $\ket{\psi}= \alpha'\ket{v}+\beta'\ket{w}$. Upon measurement, one would obtain $\ket{v}$ and $\ket{w}$ with probability $|\alpha'|^2$ and $|\beta'|^2$, respectively.

We use the tensor product to describe the quantum state of two or more qubits, e.g., $\ket{\phi} \otimes \ket{\psi}$, often denoted using the shorthand $\ket{\phi \psi}$. Finally, when dealing with multipartite quantum states shared among multiple participants, we use subscripts to refer to the qubit in possession of a given party, e.g., for the three participants Alice, Bob and Eve, denoted by $A, B$ and $E$, respectively, $\ket{0_A 0_B 0_E}$ (short for $\ket{0_A} \otimes \ket{0_B} \otimes \ket{0_E}$) tells us that Alice, Bob and Eve have access to their own separate qubits in state $\ket{0}$.

\subsection{Quantum Entanglement}\label{sec:entanglement}

If the state vector of a composite system cannot be expressed as a tensor product $\ket{\psi_1} \otimes \ket{\psi_2}$, the state of each subsystem cannot be described independently, but rather only as a whole, and we say the two qubits are \emph{entangled}. This property is best exemplified by maximally entangled qubits (\emph{ebits}), the so-called \emph{Bell states}, also referred to as EPR pairs due to the famous EPR paradox \cite{einstein1935can}:
\begin{align*}
\ket{\Phi^\pm}_{AB} = \frac{1}{\sqrt{2}}(\ket{00}_{AB} \pm \ket{11}_{AB}) \quad , \quad \ket{\Psi^\pm}_{AB} = \frac{1}{\sqrt{2}}(\ket{01}_{AB} \pm \ket{10}_{AB})
\end{align*}

Imagine a source that generates two maximally entangled particles $\ket{\Phi^+}_{AB}$ and sends one qubit to Alice and another to Bob, who can be arbitrarily spatially far apart from each other. Now suppose Alice measures her particle and observes the state $\ket{0}$. This means that the combined state will now be $\ket{00}$, and if at any point after Alice's measurement, Bob measures his particle, he will also observe $\ket{0}$.

As already mentioned, entangled states cannot be factored into tensor product states, whereas a state
\[
\frac{1}{\sqrt{2}}\ket{00}+\frac{1}{\sqrt{2}}\ket{01}
\]
can be rewritten as
\[
\ket{0} \otimes (\frac{1}{\sqrt{2}}\ket{0}+\frac{1}{\sqrt{2}}\ket{1}).
\]

If the state of a composite quantum system can be expressed as a tensor product state, then only $\textrm{dim}(\ket{\phi} + \textrm{dim}(\ket{\psi})$ complex numbers are enough to describe the composite system. However, generally speaking, most composite systems are not tensor states, but rather entangled states. Therefore, the Hilbert space of a system with $n$ qubits lives in $\mathbb{C}^{2n}$. This also explains the exponential blowup in terms of memory when simulating qubits on a classical computer. However, when we measure an $n$-qubit system, we extract only $n$ bits of classical information (rather than $2^n$ bits), a property that is captured by Holevo's theorem \cite{kholevo1973bounds}, which puts an upper bound on the amount of classical information that can be retrieved from a quantum system.

The CHSH game \cite{clauser1969proposed} is one of the most well-known examples of an application of entanglement that uses a Bell-type inequality in a two-player setting. The main takeaway is that it establishes an upper bound on the probability of winning the game using classical strategies, and then demonstrates how quantum strategies involving a maximally entangled Bell state shared between the players can surpass this limit. Thus, a variation of a violation of attainable correlations in the same vein as Bell's theorem, that clearly separates classical correlations from quantum ones.

The game consists of two players, Alice and Bob, who are spatially separated and cannot communicate during the execution of the game. Alice and Bob receive, from a referee, two random bits $x_A$ and $x_B$, drawn uniformly at random. Each party then generates a bit, say $a$ for Alice and $b$ for Bob, and sends it back to the referee. The referee decides if the players win according to the winning condition given below
\[
x_A \wedge x_B = a \oplus b
\]
Note that due to the spatial separation, Alice's response $a$ cannot depend on Bob's input bit $x_B$, with a similar constraint for Bob. The main result of the CHSH game states that the maximal winning probability with a classical deterministic algorithm is at most $\frac{3}{4}$, which is obtained if both Alice and Bob always return $a=0$ and $b=0$, regardless of the values of $x_A$ and $x_B$, whereas if they follow a quantum strategy using their shared Bell state, they can achieve a winning probability of $\mathrm{cos}^2(\pi/8) \approx 0.85$.

\subsection{Noisy Qubits and Mixed States}

A noisy qubit that cannot be expressed as a linear superposition of pure states is said to be in a \emph{mixed} state, a classical probability distribution of pure states:
\[
\{p_X(x), \ket{\psi_x}\}_{x \in X}.
\]
The \emph{density operator} $\rho$, defined as a weighted sum of projectors, captures both pure and mixed states:
\[
\rho \equiv \sum_{x \in \mathcal{X}}p_X(x) \ket{\psi_x}\bra{\psi_x}.
\]

An important state in the noisy quantum theory is the maximally mixed state $\pi$ corresponding to a uniform ensemble of orthogonal states in a $d$-dimensional Hilbert space
\[
	\pi \equiv \frac{1}{d} \sum_{x \in \mathcal{X}} \ket{\psi_x}\bra{\psi_x}
\]

Given a density matrix $\rho_{AB}$ describing the joint state of a system held by $A$ and $B$, the \emph{partial trace} allows us to compute the local state of $A$ (density operator $\rho_A$) if $B$'s system is not accessible to $A$.
To obtain $\rho_A$ from $\rho_{AB}$ (the reduced state of $\rho_{AB}$ on $A$), we trace out the system $B$: $\rho_A = \mathrm{tr}_{B}(\rho_{AB})$.

As a distance measure, we use the expected fidelity $F(\ket{\psi}, \rho)$ between a pure state $\ket{\psi}$ and a mixed state $\rho$ given by $F(\ket{\psi}, \rho) = \bra{\psi}\rho\ket{\psi}$.

Finally, another important notion is that of \emph{purification}, which allows us to view noise in a different way by modelling noisy quantum systems not in terms of our lack of information about them but rather as entanglement with another system that is simply inaccessible to us, referred to as the purifying system. For a density operator $\rho_A$, its purification is given by a pure bipartite state $\ket{\psi}_{RA} \in \mc{H}_R \otimes \mc{H}_A$, where $R$ is a reference system such that the reduced state on system $A$ is
\[
\rho_A = \mathrm{tr}_R(\ket{\psi}\bra{\psi}_{RA}),
\]
with the global state $\ket{\psi}_{RA}$ being a pure state, whereas the reduced state $\rho_A$ is not in general a pure state as it is obtained by tracing over the reference system, unless of course the global state is a pure tensor product state.

Moreover, all purifications of a given density operator $\rho_A$ are equivalent, a result that follows as a consequence of the Schmidt decomposition. The Schmidt decomposition tells us that given a bipartite pure state $\ket{\psi}_{AB} \in \mc{H}_A \otimes \mc{H}_B$ where the Hilbert spaces $\mc{H}_A$ and $\mc{H}_B$ are not necessarily of the same dimension, the state can be expressed as $\ket{\psi}_{AB} \equiv \sum_{i=0}^{d-1}\lambda_i \ket{i}_A \ket{i}_B$, where the amplitudes $\lambda_i$ (referred to as Schmidt coefficients) are real, strictly positive and $\sum_i \lambda_i^2 = 1$, and the states $\{\ket{i}_A\}$ and $\{\ket{i}_B\}$ form orthonormal bases for the systems $A$ and $B$, respectively. Thanks to the Schmidt decomposition, as long as the joint state of two systems $A$ and $B$ is a pure state, if for example $A$ is two-dimensional, regardless of how large the Hilbert space of $B$ is, one can always express the whole system using $\mc{H}_A$ and a two-dimensional subspace of $\mc{H}_B$.

\subsection{Quantum Entropy}

There are several important measures for quantifying information and correlations in quantum systems. A fundamental measure, which we will come back to in Chapter \ref{chp:entanglement-distillation}, is the von Neumann entropy, or \emph{quantum entropy}, due to John von Neumann, the discovery of which goes back to von Neumann's works in statistical physics, long before Shannon developed his information-theoretic formulation. Quantum entropy is the generalization of the Shannon entropy, which captures both classical and quantum uncertainty about a quantum state, or the extent to which a state is mixed.

For some quantum system in a state $\rho \equiv \sum_{i}p_i \ket{\psi_i}\bra{\psi_i}$, its quantum entropy $H(\rho)$, often denoted as $S(\rho)$, is defined as follows:
\begin{equation}
	S(\rho) = H(\rho) \equiv - \mathrm{tr}\{ \rho \mathrm{log} \rho  \}
\end{equation}
Suppose the density matrix $\rho$ has a matrix diagonal representation with eigenvalues $\lambda_1, \ldots, \lambda_n$
\[
	\rho \equiv \sum_j \lambda_j \ket{e_j}\bra{e_j},
\]
then, the von Neumann entropy is given by
\[
	S(\rho) = H(\rho) \equiv - \sum_j \lambda_j \mathrm{log} \lambda_j,
\]
which is analogous to the classical Shannon entropy.

Similar to the classical case, an operational meaning for quantum entropy is given by Schumacher's \cite{schumacher1995quantum} noiseless quantum coding theorem, stating that the achievable code rate is asymptotically equal to $S(\rho)$, i.e., the fundamental limit of compression for an i.i.d. quantum information source in the asymptotic limit.

Moreover, in terms of mathematical properties, similar to the ones for classical entropy discussed in Subsection \ref{subsec:classical-information-entropy}, in addition to non-negativity and concavity, its minimum value (zero) is reached when the density operator is a pure state, while its maximum is obtained by the maximally mixed state, i.e., uniformly distributed ensemble of pure states. Finally, conditional quantum entropy is defined as
\[
S(\rho | \sigma) \equiv S(\rho, \sigma) - S(\sigma)
\]
with an inequality similar to the classical case, namely $S(\rho) \ge S(\rho | \sigma)$.

While the definition of quantum entropy is analogous to its classical counterpart, it does possess certain inherently quantum characteristics such as the possibility of conditional quantum entropy being negative \cite{cerf1997negative,horodecki2005partial}, a property known as \emph{coherent information} in quantum information theory. This seemingly counterintuitive feature may be better understood when considering its operational meaning in terms of the communication cost needed for conveying partial information, similar to the cost of transmitting classical bits as explained in Subsection \ref{subsec:classical-information-entropy}: given an unknown quantum state distributed over two systems and some prior information on the receiver's end, the communication cost required for the sender to transfer the full state measures the partial information, conditioned on the prior, that is needed for the transfer.

This operational interpretation of the conditional quantum entropy is given by a task known as the \emph{quantum state merging} protocol, introduced by Horodecki et al. \cite{horodecki2005partial}. Suppose Alice and Bob share a large number $n$ of the state $\rho_{AB}$, with access to a noiseless qubit channel and a classical side channel, and the objective is to transfer Alice's shares such that by the end of their communication they are in Bob's possession. For positive $S(A|B)$, Alice uses the quantum channel $n \cdot S(A|B)$ times to share the same number of qubits, whereas for negative $S(A|B)$, they can share $\approx n \cdot S(A|B)$ ebits without using the quantum channel at all. Intuitively, from a quantum informational point of view, this captures the idea that it is possible to know less about a part of a quantum system than about it as a whole, and measure the extent to which this is true.

\subsection{Quantum Evolution, Measurements and No-Cloning}

The evolution of a closed quantum system can be described by a unitary transformation $U$, a reversible norm preserving linear operation that can be intuitively thought of as a rotation of the Hilbert space. A unitary operator over the space of a $d$-dimensional quantum system is described by a $d \times d$ matrix $U$ satisfying $U^\dagger U = UU^\dagger = I$, with $U^\dagger$ denoting the inverse of $U$, which is the complex-conjugate transpose of $U$.

Some of the most common single-qubit unitary operators are the Hadamard operator ($H$) and the Pauli matrices, i.e., $X$ for the NOT gate flipping a $\ket{0}$ to $\ket{1}$ and vice-versa, the phase flip gate $Z$ acting as a NOT gate in the Hadamard basis $+/-$, the identity operator $I$, the $Y$ operator where $Y = iXZ$, along with the two-qubit CNOT gate, which is a controlled NOT gate such that when applied to $\ket{ct}$, the target bit $t$ flips if and only if the control bit $c$ is 1, e.g., $\textrm{CNOT}(\ket{10}) \rightarrow \ket{11}$. A representation of some of these operators in matrix form is given below
\[
	    H = \frac{1}{\sqrt{2}} \begin{pmatrix} 1 & 1 \\ 1 & -1 \end{pmatrix}, \; X = \begin{pmatrix} 0 & 1 \\ 1 & 0 \end{pmatrix}, \; Z = \begin{pmatrix} 1 & 0 \\ 0 & -1 \end{pmatrix}, \; \textrm{CNOT} = \begin{pmatrix} 1 & 0 & 0 & 0 \\ 0 & 1 & 0 & 0 \\ 0 & 0 & 0 & 1 \\ 0 & 0 & 1 & 0 \end{pmatrix}
\]

\subsubsection{Quantum Measurement}

In order to access the information encoded in a quantum state, we need to measure it. A measurement is a non-unitary evolution that allows us to extract classical information from a quantum state. The measurement postulate based on the Born rule provides a probabilistic interpretation of quantum measurements. A measurement is done according to a fixed orthonormal measurement basis $\{ \ket{i_0}, \ldots, \ket{i_{d-1}} \}$ in an $d$-dimensional space.

For a given state expressed in the chosen basis $\ket{\psi} = \sum_{i=0}^{i=d-1} \alpha_i \ket{i}$, the result of the measurement is $\ket{i}$ with probability equal to the squares of the probability amplitudes $|\alpha_i|^2$, which corresponds to the overlap or projection of $\ket{\psi}$ onto $\ket{i}$. Upon measurement, the state of $\ket{\psi}$ is said to \emph{collapse} onto the basis vector $\ket{i}$ meaning that post measurement the original state vector $\ket{\psi}$ becomes $\ket{i}$. Recall the norm-constraint for probability amplitudes and we have that $\sum_{i} |\alpha_i|^2 = 1$.

More generally, any quantum measurement can be described by a Hermitian operator (called observable) $O = \sum_{i=1}^{k} \lambda_i P_i = \sum_{i=1}^{k} \lambda_i \ket{\phi_i}\bra{\phi_i}$, where the $P_i$ are projectors onto a unique subspace decomposition (eigenspaces), and the $\lambda_i$ and $\ket{\phi_i}$ the eigenvalues and eigenvectors of $O$, respectively. Measurements are not reversible and they alter the state of the quantum system such that subsequent measurements in the same basis yield the same result, thereby preventing us from gaining additional information about the probability amplitudes $\alpha_i$.

\subsubsection{The No-Cloning Theorem}

Copying bits and bytes is a natural property of classical information that is readily taken for granted. In stark contrast, a crucial distinction between quantum and classical information is captured by the well-known No-Cloning theorem \cite{wootters1982single}, which states that an arbitrary unknown quantum state cannot be copied or cloned perfectly.

More precisely, this theorem rules out the possibility of a universal copying machine capable of making an identical copy of an arbitrary input state, i.e., a device described by the unitary $U$ such that $U\ket{\psi}\ket{0} = \ket{\psi}\ket{\psi}$, where $\ket{\psi}$ denotes the input state and $\ket{0}$ a blank or ancilla state that is to become a copy of the input state.

Note that this characteristic does not entirely rule out the cloning of quantum states, it only states the impossibility of a universal copying device. It is worth pointing out that this property lies at the core of the security of quantum key exchange: an adversary trying to eavesdrop on quantum states in transit is bound to introduce detectable disturbances as reading and copying quantum information are one and the same.

Finally, another closely related result is that of the No-Deletion theorem stating the impossibility of a universal quantum deleting device $U$ capable of deleting one of two identical input states using a unitary operation.

\section{Quantum Key Exchange}\label{sec:qke-and-ue}

In this section we first describe a fundamental, and yet simple coding technique called conjugate coding, that is by far the most widely used primitive in quantum cryptography, including QKE protocols such as BB84. We then provide a detailed description of the BB84 protocol, followed by a simplified version of uncloneable encryption.

\subsection{Conjugate Coding}

Conjugate coding, also known as quantum coding, is a simple scheme for encoding classical information into conjugate quantum bases, which was originally proposed by Wiesner in his construction for unforgeable quantum bank notes \cite{wiesner1983conjugate}. Despite, or perhaps thanks to its simplicity, it is arguably the most widely used quantum primitive. The core idea consists of mapping a classical bit onto a qubit chosen uniformly at random from the set $\{ \ket{0}, \ket{1}, \ket{+}, \ket{-} \}$. More precisely, to encode a bit $b \in \{0, 1\}$, the encoder picks either the \emph{rectilinear} (computational) basis $\{ \ket{0}, \ket{1} \}$ or the \emph{diagonal} basis $\{ \ket{+}, \ket{-} \}$ and initializes their qubit in one of the basis vectors in the chosen basis. Deciding which state vector gets mapped to 1 or 0 is simply a matter of convention.

At an abstract level, conjugate coding exploits Heisenberg's uncertainty principle and the no-cloning theorem to provide the following properties: $(i)$ it allows a party with knowledge of the preparation/measurement basis to access and reliably extract the classical information encoded in quantum states. However, without knowing the encoding basis, a measurement done in the wrong basis will not only result in a random outcome, but it will also irreversibly destroy any information about the encoding in the conjugate basis, i.e., the probability amplitudes $\alpha$ and $\beta$. $(ii)$ it prevents anyone with access to a single encoded state and without knowledge of the encoding basis, i.e., in possession of an unknown quantum state, to reliably create a copy or clone of this state with high fidelity. Moreover, any such attempts are bound to introduce errors and result in disturbing the original encoding due to property $(i)$ and thus provide eavesdropping detection, a feature that is uniquely provided by quantum information.

\subsection{The BB84 Protocol}\label{subsec:bb84-shor-preskill}

QKE allows two parties to establish a common secret key with information-theoretic security using an insecure quantum channel, and a public authenticated classical channel.

In Protocol \ref{protocol:bb84} we describe the \textbf{BB84} protocol, the most well-known QKE variant due to Bennett and Brassard \cite{bennett1984quantum}. For consistency with related works, we use the well-established formalism based on error-correcting codes, developed by Shor and Preskill \cite{shor2000simple}. Let $C_1[n,k_1]$ and $C_2[n,k_2]$ be two classical linear binary codes encoding $k_1$ and $k_2$ bits in $n$ bits such that $\{0\} \subset C_2 \subset C_1 \subset \mathbf{F}^n_2$ where $\mathbf{F}^n_2$ is the binary vector space on $n$ bits. A mapping of vectors $v \in C_1$ to a set of basis states (codewords) for the Calderbank-Shor-Steane (CSS) \cite{calderbank1996good,steane1996multiple} code subspace is given by: $v \mapsto (\sfrac{1}{\sqrt{|C_2|}})\sum_{w \in C_2}\ket{v+w}$. Due to the irrelevance of phase errors and their decoupling from bit flips in CSS codes, Alice can send $\ket{v}$ along with classical error-correction information $u+v$ where $u,v \in \mathbf{F}^n_2$ and $u \in C_1$, such that Bob can decode to a codeword in $C_1$ from $(v+\epsilon)-(u+v)$ where $\epsilon$ is an error codeword, with the final key being the coset leader of $u + C_2$.

\begin{algorithm}
	\floatname{algorithm}{Protocol}
	\caption{BB84 for an $n$-bit key with protection against $\delta n$ bit errors}
	\label{protocol:bb84}
	\begin{algorithmic}[1]
		\STATE Alice generates two random bit strings $a,b \in \{0,1\}^{(4+\delta)n}$, encodes $a_i$ into $\ket{\psi_i}$ in basis $(+)$ if $b_i=0$ and in $(\times)$ otherwise, and $\forall i \in [1,|a|]$ sends $\ket{\psi_i}$ to Bob.
		\STATE Bob generates a random bit string $b' \in \{0,1\}^{(4+\delta)n}$ and upon receiving the qubits, measures $\ket{\psi_i}$ in $(+)$ or $(\times)$ according to $b'_i$ to obtain $a'_i$.

		\STATE Alice announces $b$ and Bob discards $a'_i$ where $b_i \neq b'_i$, ending up with at least $2n$ bits with high probability.

		\STATE Alice picks a set $p$ of $2n$ bits at random from $a$, and a set $q$ containing $n$ elements of $p$ chosen as check bits at random. Let $v = p \setminus q$.

		\STATE Alice and Bob compare their check bits and abort if the error exceeds a predefined threshold.

		\STATE Alice announces $u+v$, where $v$ is the string of the remaining non-check bits, and $u$ is a random codeword in $C_1$.

		\STATE Bob subtracts $u+v$ from his code qubits, $v+\epsilon$, and corrects the result, $u+\epsilon$, to a codeword in $C_1$.

		\STATE Alice and Bob use the coset of $u+C_2$ as their final secret key of length $n$.
	\end{algorithmic}
\end{algorithm}

\subsection{Uncloneable Encryption}

Uncloneable encryption (UE) enables transmission of ciphertexts that cannot be perfectly copied and stored for later decoding, by encoding carefully prepared codewords into quantum states, thereby leveraging the No-Cloning theorem. We refer to Gottesman's original work \cite{gottesman2002uncloneable} for a detailed explanation of the sketch in Protocol \ref{protocol:ue}. Alice and Bob agree on a message length $n$, a Message Authentication Code (MAC) of length $s$, an error-correcting code $C_1$ having message length $K$ and codeword length $N$ with distance $2\delta N$ for average error rate $\delta$, and another error-correcting code $C_2$ (for privacy amplification) with message length $K'$ and codeword length $N$ and distance $2(\delta+\eta)N$ to correct more errors than $C_1$, satisfying $C_2^\bot \subset C_1$, where $C_2^\bot$ is the dual code containing all vectors orthogonal to $C_2$. The pre-shared key is broken down into four pieces, all chosen uniformly at random: an authentication key $k \in \{ 0,1\}^s$, a one-time pad $e \in \{0,1\}^{n+s}$, a syndrome $c_1 \in \{0,1\}^{N-K}$, and a basis sequence $b \in \{ 0,1\}^N$.

\begin{algorithm}
	\floatname{algorithm}{Protocol}
	\caption{Uncloneable Encryption for sending a message $m\in \{0,1\}^n$}
	\label{protocol:ue}
	\begin{algorithmic}[1]
		\STATE Compute $\mathrm{MAC}(m)_k = \mu \in \{0,1\}^s$. Let $x = m || \mu \in \{0,1 \}^{n+s}$.
		\STATE Mask $x$ with the one-time pad $e$ to obtain $y = x \oplus e$.
		\STATE From the coset of $C_1$ given by the syndrome $c_1$, pick a random codeword $z \in \{0,1 \}^N$ that has syndrome bits $y$ w.r.t. $C_2^{\bot}$, where $C_2^\bot \subset C_1$.
		\STATE For $i \in [1, N]$ encode ciphertext bit $z_i$ in the basis $(+)$ if $b_i = 0$ and in the basis $(\times)$ if $b_i = 1$. The resulting state $\ket{\psi_i}$ is sent to Bob.
	\end{algorithmic}
	To perform decryption:
	\begin{algorithmic}[1]
		\STATE For $i \in [1, N]$, measure $\ket{\psi'_i}$ according to $b_i$, to obtain $z'_i \in \{0,1\}^N$.
		\STATE Perform error-correction on $z'$ using code $C_1$ and evaluate the parity checks of $C_2/C_1^{\bot}$ for privacy amplification to get an $(n+s)$-bit string $y'$.
		\STATE Invert the OTP step to obtain $x' = y' \oplus e$.
		\STATE Parse $x'$ as the concatenation $m' || \mu'$ and use $k$ to verify if $\mathrm{MAC}(m')_k = \mu'$.
	\end{algorithmic}
\end{algorithm}

\paragraph{QKE from UE.} It is known \cite{gottesman2002uncloneable} that any quantum authentication (QA) scheme can be used as a secure UE scheme, which can in turn be used to obtain QKE, with less interaction and more efficient error detection. We give a brief description of how QKE can be obtained from UE in Protocol \ref{protocol:qke-from-ue}.

\begin{algorithm}
	\floatname{algorithm}{Protocol}
	\caption{Obtaining QKE from Uncloneable Encryption}
	\label{protocol:qke-from-ue}
	\begin{algorithmic}[1]
		\STATE Alice generates random strings $k$ and $x$, and sends $x$ to Bob via UE, keyed with $k$.
		\STATE Bob announces that he has received the message, and then Alice announces $k$.
		\STATE Bob decodes the classical message $x$, and upon MAC verification, if the message is valid, he announces this to Alice and they will use $x$ as their secret key.
	\end{algorithmic}
\end{algorithm}

\section{Authenticated Key Exchange Protocols}

Here we provide an overview of some of the fundamental notions used in key exchange protocols and briefly discuss the security requirements of an authenticated key exchange protocol, but for a more thorough analysis, we refer the reader to \cite{bellare1993entity,bellare1998modular,canetti2001analysis,lamacchia2007stronger}.

Suppose two parties, Alice and Bob, wish to exchange secret messages over an insecure channel. Now if they were already in possession of a mutually shared secret key, they could simply use standard protocols such as TLS\footnote{The Transport Layer Security (TLS) protocol is the standard and widely used solution for ensuring secure communication over the internet.} that can provide them with privacy, authentication and message integrity. However, they would clearly need a shared secret key to begin with, and this is where key exchange (KE) protocols come into play. See a recent work \cite{cremers2017comprehensive} by Cremers et al. on a comprehensive security analysis of TLS 1.3.

In order for Alice and Bob to have a shared secret key, short of using an out-of-band (OOB) key establishment method, e.g., trusted couriers, which is not particularly practical, their best option is to use an \emph{authenticated key exchange} (AKE) protocol to establish a mutually authenticated shared secret key. Informally, a KE protocol allows two legitimate parties Alice and Bob, in the presence of an adversary Eve who has complete control over their communication channel, to output keys $k_A, k_B \in \{0,1\}^n$ with the correctness requirement that $k_A = k_B$. More precisely, a secure authenticated KE allows Alice and Bob to establish a shared session key in such a way that at the end of a session, the two security guarantees of authenticity and secrecy are also satisfied. This means that Alice and Bob can be sure that they share a fresh, random session key $k = k_A = k_B$ with each other, such that Eve cannot distinguish the key $k$ from a random/uniform key of length $n$.

\subsection{Key Exchange Protocols}

Key exchange protocols are message-driven protocols involving a set of parties, which are modelled as probabilistic polynomial-time (PPT) Turing machines. These parties engage in an exchange of messages with each other by running an instance of an AKE protocol (via point-to-point links) such that upon completion of a run, the output is a secret key called a \emph{session key}. A session simply refers to an instantiation of a protocol by a party Sessions are typically identified by a tuple of terms including the identity of the initiator or the party running the session and that of the responder, along with a session $id$ and a $role \in \{ \textrm{initiator}, \textrm{responder} \}$.

\subsection{Key Exchange Security}

The security of a protocol cannot be analyzed without considering a specific adversarial model, which defines the capabilities and actions of the adversary in terms of computational power and type of access to resources. The natural approach in cryptography is to adopt the most pessimistic view and to consider the most generic types of attack while remaining realistic in terms of assumptions and requirements. The latter can easily make the difference between a protocol that satisfies a certain security definition and one that clearly violates it, as we will point out for the case of deniability in this work.

The attacker is modelled as a PPT that is in complete control of the communication channel, meaning that she can eavesdrop on all exchanged messages, intercept messages to either change them or to inject new messages or to prevent them from reaching their target. In other words, the attacker can simply replace the channel with her own and act as a middle-man to deliver the messages. Furthermore, some form of information leakage to the adversary is also allowed. Here the idea is to ensure that the exposure of certain pieces of secret information has the least possible impact on the security of other secrets. For example, it stands to reason to guarantee that the leakage of some ephemeral state information of a given session will not compromise the security of other sessions.

\subsection{Session-Key Security}

In order to formalize the security of a key-exchange protocol, Canetti and Krawczyk introduced the notion of \emph{session-key security} \cite{canetti2001analysis} by building on top of earlier definitional work by Bellare et al. in \cite{bellare1993entity,bellare1998modular}. Intuitively, the idea is that the security of a given session key is guaranteed in that even by allowing the adversary to interact with the KE protocol via a set of well-defined queries or to have access to other sessions, she still cannot learn anything about the value of the session key such that she will not be able to distinguish the latter from a random key.

Another concept worth mentioning is the notion of \emph{matching sessions} \cite{lamacchia2007stronger}, which captures the idea of two corresponding sessions, i.e., two sessions between two communication partners. Two complete sessions $s$ and $s'$ are considered to be matching if all the terms belonging to their session identifier tuple are equal, except for the $role$ term.

More precisely, the security of a key exchange protocol is specified using a game-based definition in the form of a security experiment. The adversarial game consists of an interactive two player game between a probabilistic polynomial time adversary $E$, and a challenger $\mathcal{C}$ that responds to a specific set of queries made by Eve. These queries can include \textbf{send-message($s$,$m$)}, \textbf{reveal-session-state($s$)}, \textbf{reveal-ephemeral-key($s$)}, \textbf{reveal-session-key($s$)}, \textbf{corrupt($P$)} and \textbf{test-session($s$)}. As indicated by their names, invoking such queries (except for sending messages or initiating a session) results in revealing the stated secret value for the identified session, which could also be achieved via corrupting a party.

Sending any of the above ``reveal''-based queries, except for the \textbf{test-session($s$)} query, would result in a session that is considered to be exposed. Moreover, the expiration of a session implies the erasure of the session key from the owning party's memory, not necessarily the matching session's party. This is meant to model the fact that session keys can have a limited life time, after which none of the above mentioned queries can reveal the value of the session key. More details on this will be given in Section \ref{subsec:security-model}.

Note that without additional constraints, having access to these queries would allow the adversary to compromise the security of any protocol. The motivation for this is to impose a number of minimal restrictions on the set of queries the adversary can invoke for a given session in order to limit the adversary as little as possible and still leave room for protocols that can remain secure against such an adversary. In other words, we want to add just enough restrictions such that there exist protocols that can still satisfy the definition of session-key security. The notion of a \emph{fresh} session captures the idea of excluding cases that would allow the adversary to trivially win the security experiment. It is these restrictions that are formulated in terms of a so-called ``test-session'' query probing a completed session that has not been expired or exposed.

\section{A Quick Primer on Fully Homomorphic Encryption}\label{sec:fhe-primer}

A comprehensive introduction to fully homomorphic encryption would certainly go beyond the scope of this thesis. However, since in this chapter we are mainly interested in applying the machinery of FHE, we provide a concise description of its main ingredients and key ideas and refer the reader to comprehensive texts such as \cite{gentry2010computing,armknecht2015guide} for further information.

\subsection{Core Idea and Intuition}

In addition to the wealth of sources available on FHE, there exists a wide variety of intuitive explanations for what FHE is and how it works such as Gentry's analogy consisting of a distrustful owner of a jewelry store \cite{gentry2010computing} who provides their workers with an impenetrable box containing precious raw material, equipped with glove compartments that allow the worker to manipulate the content without actually ever being able to touch them. Barak Boaz\footnote{\url{https://intensecrypto.org/public/lec_15_FHE.html}} uses a somewhat similar analogy based on bags containing toxic material, that can be manipulated within a limited span of time before leaking their content and harming the user. The idea of moving the content of one bag into another using an extra bag mimics the idea of bootstrapping and lowering the noise in FHE circuits. Barak also compares the bootstrapping process to the concept of ``escape velocity'' from physics.

As mentioned in the introduction, additively or multiplicatively homomorphic cryptosystems have been known for quite a long time and are indeed used frequently in the design of secure e-voting schemes, e.g., ElGamal. For instance, consider the multiplicative property of the well-known RSA cryptosystem where given $c_1 = m_1^e \; \mathrm{mod} \; N$ and $c_2 = m_2^e \; \mathrm{mod} \; N$, we can compute the product of the plaintext messages without having access to them using only the public key, by simply working over $c_1$ and $c_2$ as follows: $c_1 \cdot c_2 = (m_1 \cdot m_2)^e \; \mathrm{mod} \; N$. Similarly, the Paillier cryptosystem is also homomorphic, but only additively.

Although being able to process data that we cannot read, may still seem rather counterintuitive, we do not take on the challenge of providing yet another intuitive explanation and instead state the main promise concisely as follows: FHE allows us to perform arbitrary computations on encrypted data.

More concretely, let $\mc{E}_{pk}(m)$ denote an FHE-encryption of a message $m \in \{0,1\}^n$ under the public key $pk$. At its core, for $b_0, b_1 \in \{0,1\}$, given $\mc{E}_{pk}(b_0)$ and $\mc{E}_{pk}(b_1)$, FHE allows us to compute $\mc{E}_{pk}(b_0 \oplus b_1)$ and $\mc{E}_{pk}(b_0 \cdot b_1)$ by working over ciphertexts alone, without having access to the secret key, thus enabling the homomorphic evaluation of any boolean circuit, i.e., computing $\mc{E}_{pk}(f(m))$ from $\mc{E}_{pk}(m)$ for any computable function $f$.

In terms of functionality, FHE is closely related to functional encryption (FE) and cryptographic obfuscation schemes, see \cite{armknecht2015guide} for a discussion surrounding this connection. Cloud computing is one of the most typical examples used for  illustrating the utility of FHE. The basic idea is that a client could hand over encrypted data to a server and the latter can perform computations on the encrypted data and return the result (still in encrypted form) back to the client without having learned anything about the client's data.

\subsection{Definition and Properties}

A public-key FHE scheme consists of a tuple of PPT algorithms
\[
\text{FHE} = (\text{FHE.Setup, FHE.Enc, FHE.Dec, FHE.Eval}).
\]
Sometimes a parameter generation $\text{FHE.KeyGen}$ algorithm is also included, which deals with generating the input/output space, key space and randomness. Let $\lambda$ and $d$ be a security parameter and circuit depth bound, respectively, and these algorithms are defined as follows.
\begin{itemize}
	\item $\text{FHE.Setup}(1^\lambda, 1^d) \rightarrow (pk, sk)$: Given the security parameter $\lambda$ and a circuit depth bound $d$ as inputs, outputs a public/private key pair $(pk,sk)$.
	\item $\text{FHE.Enc}_{pk}(m) \rightarrow c$: Encrypts a message $m \in \{0,1\}$ under $pk$ and outputs ciphertext $c$.
	\item $\text{FHE.Dec}_{sk}(c) \rightarrow m$: Decrypts a ciphertext $c \in \mc{C}$ under $sk$ and outputs either plaintext $m \in \{0,1\}$ or $\bot$.
	\item $\text{FHE.Eval}_{pk}(f,c_1,\ldots,c_n) \rightarrow c_{\text{new}}$: Evaluates a circuit (function) $f: \{0,1\}^n \rightarrow \{0,1\}$ of depth at most $d$, over ciphertexts $c_1, \ldots, c_n$ and outputs $c_{\text{new}}$.
\end{itemize}

An FHE scheme should also satisfy the properties of compactness, decryption correctness and security. Compactness captures the idea that the ciphertext size should be bounded by some fixed polynomial in the security parameter, independent of the number of inputs and the size of the evaluation circuit. Evaluation correctness requires that
\[
\prob{\text{FHE.Dec}_{sk}(\text{FHE.Eval}_{pk}(f,c_1,\ldots,c_n)) = f(m_1, \ldots, m_n)} = 1 - \negl{\lambda}
\]
and finally, security simply refers to the concept of semantic security, meaning that a secure public-key FHE scheme should satisfy ciphertext indistinguishability. This corresponds to a setting in which a challenger $\mc{C}$ flips a coin to choose a bit uniformly at random, i.e., $b \sample \{0,1\}$, such that the adversary $E$ cannot distinguish between the encryption of $m_1$ and $m_2$ given $\mc{E}_{pk}(m_b)$.

\subsection{Managing the Growth of Noise}

The security of most FHE schemes reduces to computational hardness assumptions in lattice-based cryptography such as the shortest vector problem (SVP) or the closest vector problem (CVP). Gentry's breakthrough FHE construction \cite{gentry2009fully}, based on ideal lattices, introduced the notion of \emph{bootstrapping} whose role is to limit the growth of noise\footnote{The noise serves to hide the message.} and keep it below a certain threshold, lest it corrupt the computation. In short, this is achieved by starting from a somewhat homomorphic scheme\footnote{Such a scheme would consist of applying a polynomial of small degree to ciphertexts to prevent the noise level from getting too high.} and refreshing the ciphertext using the decryption circuit homomorphically by evaluating it on the ciphertext and the encryption of the secret key. This results in a new, less noisy ciphertext for the same underlying plaintext so that FHE is obtained by continuously refreshing the ciphertext, and thus lifting the limitation on the number of homomorphic evaluations.

Brakerski et al. \cite{brakerski2014leveled} tackle the noise problem without bootstrapping via so-called ``levelled FHE'', which roughly speaking amounts to deciding on some prior parameterization in the form of reducing the modulus of the ciphertext space and the noise, thereby fixing the circuit depth. This, among other things, translates into a given number of public keys corresponding to the selected depth.

For further information, we encourage the reader to refer to Gentry's original work \cite{gentry2009fully} and subsequent works \cite{gentry2010computing,gentry2012homomorphic}, extra sources cited further below in our following discussions, as well as the guide to FHE by Armknecht et al. \cite{armknecht2015guide}.

\stopminichaptoc{\minichaptocenabled}

\chapter{Coercer-Deniable Quantum Key Exchange}\label{chp:coercer-deniable-qke}

\printminichaptoc{\minichaptocenabled}

Following the setting in \cite{beaver2002deniability}, in which it is implicitly assumed that the adversary has established a binding between the participants' identities and a given QKE session, we introduce the notion of coercer-deniability for QKE. This makes it possible to consider an adversarial setting similar to that of deniable encryption \cite{canetti1997deniable} and expect that the parties might be coerced into revealing their private coins after the termination of a session, in which case they would have to produce fake randomness such that the resulting transcript and the claimed values remain consistent with the adversary's observations.

Beaver's analysis \cite{beaver2002deniability} is briefly addressed in a footnote in a paper by Ioannou and Mosca \cite{ioannou2011new} and the issue is brushed aside based on the argument that the parties do not have to keep records of their raw key bits. It is argued that for deniability to be satisfied, it is sufficient that the adversary cannot provide binding evidence that attributes a particular key to the classical communication as their measurements on the quantum channel do not constitute a publicly verifiable proof. However, counter-arguments for this view were already raised in the motivations for deniable encryption \cite{canetti1997deniable} in terms of secure erasure being difficult and unreliable \cite{gutmann1996secure}, and that erasing cannot be externally verified. Moreover, it is also argued that if one were to make the physical security assumption that random choices made for encryption are physically unavailable, the deniability problem would disappear. We refer to \cite{canetti1997deniable} and references therein for more details.

Bindings, or lack thereof, lie at the core of deniability. Although we leave a formal comparison of our model with the one formulated in the simulation paradigm \cite{di2006deniable} as future work, a notable difference can be expressed in terms of the inputs presented to the adversary. In the simulation paradigm, deniability is modelled only according to the simulatability of the legal transcript that the adversary or a corrupt party produces naturally via a session with a party as evidence for the judge, whereas for coercer-deniability, the adversary additionally demands that the honest parties reveal their private randomness.

Finally, note that viewing deniability in terms of ``convincing'' the adversary is bound to be problematic and indeed a source of debate in the cryptographic research community as the adversary may never be convinced given their knowledge of the existence of faking algorithms.
Hence, deniability is formulated in terms of the indistinguishability of views (or their simulatability \cite{di2006deniable}) such that a judge would have no reason to believe a given transcript provided by the adversary establishes a binding as it could have been forged or simulated.

\section{Defeating Deniability in QKE via Eavesdropping in a Nutshell}\label{subsec:state-injection-attack}

We briefly review the eavesdropping attack described in \cite{beaver2002deniability} and provide further insight. Recall that $\ket{\phi_S}, \ket{\phi_R}, \ket{\phi_E}, \ket{\phi_J}, \ket{\phi_{env}}$ model the registers belonging to the sender ($S$), receiver ($R$), eavesdropper ($E$), judge ($J$), and the environment, respectively. $\ket{\phi_{env}}$ is limited to modelling the authenticated classical channel as a party broadcasting measurements to $S$, $R$ and $E$. Let $\rho(m_1)$ denote the global state for the message $m_1$ and $\rho(m_1,m_2) = \sum \ket{\phi'_S \phi'_R \phi_E \phi_{env}}$ an attempt at denial by pretending that $m_2$ was really sent, instead of the actual message $m_1$, where $\ket{\phi'_S} = D_S(m_1, m_2, \ket{\phi_S})$ and $\ket{\psi'_R} = D_R(m_1, m_2, \ket{\phi_R})$, with $D_S$ and $D_R$ being denial algorithms (not necessarily unitary).

The judge, receiving inputs for registers $\phi_S, \phi_R, \phi_E$ and $\phi_{env}$, has a final state described in registers $d$ and $J^\prime$, where $d$ is a decision bit. A coin flip $c$ in the environment determines whether denial will be attempted. If $c=0$, run $J$ on $\rho(m_1)$; if $c=1$, run $J$ on $\rho(m_1, m_2)$, i.e., apply $D_R$ and $D_S$ before submitting to the judge. The final result is of the form $\rho(m_1, m_2, c) = \sum \ket{cd} \ket{\phi_{J^\prime}} \ket{\phi_{env}}$. We trace out $J^\prime$ and the environment to get a mixture over $\ket{cd}$'s and a judge is considered to be safe if $\ket{01}$ has zero probability, i.e., $J$ never accuses the parties of attempts at denial while no such attempts were made. The author defines deniability in terms of the probability that $J$ decides $\ket{11}$ on security parameter $\kappa$, denoted by $P_{J,E}(m_1, m_2, \kappa)$, i.e., if for any $E$, any safe $J$, and for any $m_1, m_2$ we have: $P_{J,E}(m_1, m_2, \kappa) = \kappa^{-\omega(1)}$.

Now suppose Alice sends qubit $\ket{\psi}^{m,b}$ to Bob, which encodes a single-bit message $m$ prepared in a basis determined by $b \in \{+, \times\}$. Let $\Phi(E, m)$ denote the state obtained after sending $\ket{\psi}^{m,b}$, relayed and possibly modified by an adversary $E$. Moreover, let $\rho(E, m)$ denote the view presented to the judge, obtained by tracing over inaccessible systems. Now for a qubit measured correctly by Eve, if a party tries to deny by pretending to have sent $\sigma_1 = \rho(E, 1)$ instead of $\sigma_2  = \rho(E, 0)$, e.g., by using some local transformation $U_{neg}$ to simply negate a given qubit, then $F(\sigma_1, \sigma_2) = 0$, where $F$ denotes the fidelity between $\sigma_1$ and $\sigma_2$. Thus, the judge can successfully detect this attempt at denial.

Clearly, without making any further assumptions about Eve's strategy/program being known (publicly available) or about Alice and Bob deviating from the standard description of BB84, the futility of such attempts follows naturally from the properties of quantum measurement. Simply put, suppose a bit $v_i$ is encoded using a basis $b_i$ in qubit $\ket{\psi}$, and Alice tries to deny the underlying encoded value at index $i$ without knowing beforehand if Eve has prepared this particular state. Now if upon being interrogated by Eve, Alice claims to have used the basis $b'_i$ for her encoding, and $b'_i$ happens to be the same as the original one in which $\ket{\psi}$ was prepared (i.e., $b_i$), then it follows that if the measurement operator $\Pi_i$ is the same as the one claimed by Alice $\Pi'_i$, then
\[
\Pi_i = \Pi'_i \implies \frac{\Pi_i\ket{\psi}}{\sqrt{\bra{\psi}\Pi_i\ket{\psi}}} = \frac{\Pi'_i\ket{\psi}}{\sqrt{\bra{\psi}\Pi'_i\ket{\psi}}}.
\]
Therefore, claiming that $\lnot v_i$ was sent would be trivially detected.

This attack can be mounted successfully with non-negligible probability without causing the session to abort: Assume that $N$ qubits will be transmitted in a BB84 session and that the tolerable error rate is $\frac{\eta}{N}$, where clearly $\eta \sim N$. Eve measures each qubit with probability $\frac{\eta}{N}$ (choosing a basis at random) and passes on the remaining ones to Bob undisturbed, i.e., she plants a number of decoy states proportional to the tolerated error threshold. On average, $\frac{\eta}{2}$ measurements will come from matching bases, which can be used by Eve to detect attempts at denial, if Alice claims to have measured a different encoding. After discarding half the qubits in the sifting phase, this ratio will remain unchanged. Now Alice and/or Bob must flip at least one bit in order to deny without knowledge of where the decoy states lie in the transmitted sequence, thus getting caught with probability $\frac{\eta}{2N}$ upon flipping a bit at random.

Note that even in the noiseless quantum theory model, a similar attack would still succeed with non-negligible probability. In effect, as far as Eve is concerned, a single quantum state would suffice to detect attempts at denial with probability $\sfrac{1}{2}$: Eve picks a quantum state at random from the sequence of transmitted qubits and measures it in a random basis; she gets the measurement right with probability $\sfrac{1}{2}$ and if the honest parties claim to have used a different classical encoding for this particular state, they will be caught.

\section{On the Coercer-Deniability of Uncloneable Encryption}

The vulnerability described in Section \ref{subsec:state-injection-attack} is made possible by an eavesdropping attack that induces a binding in the key coming from a BB84 session. Uncloneable encryption remains immune to this attack because the quantum encoding is done for an already one-time padded classical input. More precisely, a binding established at the level of quantum states can still be perfectly denied because the actual raw information bits $m$ are not directly encoded into the sequence of qubits, instead the concatenation of $m$ and the corresponding authentication tag $\mu = \mathrm{MAC}_k(m)$, i.e., $x=m||\mu$, is masked with a one-time pad $e$ to obtain $y = x \oplus e$, which is then mapped onto a codeword $z$ that is encoded into quantum states. For this reason, in the context of coercer-deniability, regardless of a binding established on $z$ by the adversary, Alice can still deny to another input message in that she can pick a different input $x'=m'||\mu'$ to compute a fake pad $e' = y \oplus x'$, so that upon revealing $e'$ to Eve, she will simply decode $y \oplus e' = x'$, as intended.

However, note that a prepare-and-measure QKE obtained from UE still remains vulnerable to the same eavesdropping attack due to the fact that we can no longer make use of the deniability of the one-time pad in UE such that the bindings induced by Eve constrain the choice of the underlying codewords.

\section{Security Model}\label{subsec:security-model}

We adopt the framework for quantum AKEs developed by Mosca et al. \cite{mosca2013quantum}, inspired in turn by \cite{bellare1993entity,canetti2001analysis}, and mainly focus on our proposed extensions.

\subsection{Parties and Sessions}

\textbf{Parties}, including the adversary, are modelled as a pair of classical and quantum Turing machines (TM) that execute a series of interactive computations and exchange messages with each other through classical and quantum channels, collectively referred to as a \textbf{protocol}. An execution of a protocol is referred to as a \textbf{session}, identified with a unique session identifier.

An ongoing session is called an \emph{active} session, and upon completion, it either outputs an error term $\bot$ in case of an abort, or it outputs a tuple $(sk, pid, \vec{v}, \vec{u})$ in case of a successful termination. The tuple consists of a session key $sk$, a party identifier $pid$ and two vectors $\vec{u}$ and $\vec{v}$ that model public (authentication) values and secret terms, respectively.

In contrast with traditional classical AKE security models, instead of hardcoding or specifying restrictions on what secret values can be learned by the adversary into the security \emph{model} itself, in this rather more generic model, the vectors $\vec{u}$ and $\vec{v}$ in the session output specify what can and cannot be learned for a given \emph{protocol}. This avoids the need for defining a new security model every time the set of learnable values changes.

\subsection{Adversarial Model and Security Games}

We adopt an extended version of the \textbf{adversarial model} described in \cite{mosca2013quantum}, to account for coercer-deniability. Let $E$ be an efficient, i.e. (quantum) polynomial time, adversary with classical and quantum runtime bounds $t_c(k)$ and $t_q(k)$, and quantum memory bound $m_q(k)$, where bounds can be unlimited. Following standard assumptions, the adversary controls all communication between parties and carries the messages exchanged between them. We consider an authenticated classical channel and do not impose any special restrictions otherwise. Additionally, the adversary is allowed to approach either the sender or the receiver after the termination of a session and request access to a subset $
\vec{r} \subseteq \vec{v}$ of the private randomness used by the parties for a given session, i.e. set of values to be faked.

Security notions can be formulated in terms of \textbf{security experiments} in which the adversary interacts with the parties via a set of well-defined \textbf{queries}. These queries typically involve sending messages to an active session or initiating one, corrupting a party, learning their long-term secret key, revealing the ephemeral keys of an incomplete session, obtaining the computed session key for a given session, and a \textbf{test-session($id$)} query capturing the winning condition of the game that can be invoked only for a \emph{fresh} session.

Revealing secret values to the adversary is modeled via \textbf{partnering}, e.g., given a party who has a private and public value pair $(x,X)$ in memory, $\text{Partner}(X)$ returns the private value $x$ and $\text{Partner}(\Psi)$ returns the session secret key $sk$ for session $\Psi$. The notion of \emph{freshness} captures the idea of excluding cases that would allow the adversary to trivially win the security experiment. This is done by imposing minimal restrictions on the set of queries the adversary can invoke for a given session such that there exist protocols that can still satisfy the definition of session-key security. For completeness, we present the original definition of a fresh session below. We refer the reader to \cite{mosca2013quantum} for more details.

\begin{definition}[Fresh session \cite{mosca2013quantum}]\label{def:qke-fresh-session}
A session $\Psi$ owned by an honest party $P_i$ is fresh if all of the following conditions are satisfied:
\begin{enumerate}
	\item For every vector $\vec{v}_j$, $j \ge 1$, in $P_i$'s output for session $\Psi$, there is at least one element $X$ in $\vec{v}_j$ such that the adversary is not a partner to $X$.

	\item The adversary did not issue $\text{Partner}(\Psi')$ to any honest party $P_j$ for which $\Psi'$ has the same public output vector as $\Psi$ (including the case where $\Psi'$ = $\Psi$ and $P_j = P_i$).

	\item At the time of session completion, for every vector $\vec{u}_j$, $j \ge 1$, in $P_i$’s output for session $\Psi$, there was
at least one element $X$ in $\vec{u}_j$ such that the adversary was not a partner to $X$.
\end{enumerate}
\end{definition}

\begin{remark}
	Regarding the values in the output vector $\vec{v}$, note that exactly \emph{when} these values are revealed plays a crucial role in terms of preserving the freshness condition. For instance, if the adversary were to learn these values before parties' measurements, it must partner to the values, which would violate session freshness. In the case of deniability, the corruption model accounts for the coercion of honest parties after the completion of a session.
\end{remark}

\begin{remark}
	In \cite{mosca2013quantum}, Mosca et al. state the following as an open problem. Given the constraints imposed by a fresh session, which are specified by the conditions in the output vector, the authors consider the possibility of developing a QKE protocol that can retain its security properties in the short-term and long-term even if some random values were known to the adversary before the run of the protocol. Similarly, we pose a similar question in the context of coercer-deniable QKE as an open problem, namely the possibility of relaxing the freshness constraints given a suitable instantiation of the model by a particular QKE protocol. In such a case, one could potentially obtain coercer-deniability in a stronger adversarial model.
\end{remark}

\subsection{Transcript of Classical and Quantum Exchanges}

The \textbf{transcript} of a protocol consists of all publicly exchanged messages between the parties during a run or session of the protocol. The definition of ``views'' and ``outputs'' given in \cite{beaver2002deniability} coincides with that of transcripts in \cite{di2006deniable} in the sense that it allows us to model a transcript that can be obtained from observations made on the quantum channel. The \emph{view} of a party $P$ consists of their state in $\mathcal{H}_P$ along with any classical strings they produce or observe. More generally, for a two-party protocol, captured by the global density matrix $\rho_{AB}$ for the systems of $A$ and $B$, the individual system $A$ corresponds to a partial trace that yields a reduced density matrix, i.e., $\rho_A = \mathrm{Tr}_B(\rho_{AB})$, with a similar approach for any additional couplings.

\section{Coercer-Deniable QKE via Indistinguishability}

We use the security model in Section \ref{subsec:security-model} to introduce the notion of coercer-deniable QKE, formalized via the indistinguishability of real and fake views. Note that in this work we do not account for forward deniability and forward secrecy.

\subsection{Coercer-Deniability Security Experiment.}\label{sec-exp:coercer-deniable-qke}
Let $\mathrm{CoercerDenQKE}^{\Pi}_{E, \C}(\kappa)$ denote this experiment and $Q$ the same set of queries available to the adversary in a security game for session-key security, as described in Section \ref{subsec:security-model}, and \cite{mosca2013quantum}. Clearly, in addition to deniability, it is vital that the security of the session key remains intact as well. For this reason, we simply extend the requirements of the security game for a session-key secure KE by having the challenger $\C$ provide an additional piece of information to the adversary $E$ when the latter calls the \textbf{test-session()} query. This means that the definition of a fresh session remains the same as the one given in \cite{mosca2013quantum}. $E$ invokes queries from $Q \setminus \{\text{{\bfseries test-session()}}\}$ until $E$ issues \textbf{test-session()} to a fresh session of their choice. $\C$ decides on a random bit $b$ and if $b=0$, $\C$ provides $E$ with the real session key $k$ and the real vector of private randomness $\vec{r}$, and if $b=1$, with a random (fake) key $k'$ and a random (fake) vector of private randomness $\vec{r}'$.
Finally, $E$ guesses an output $b'$ and wins the game if $b = b'$. The experiment returns 1 if $E$ succeeds, and 0 otherwise. Let $Adv_{E}^{\Pi}(\kappa) = |\prob{b = b'} - \sfrac{1}{2}|$ denote the winning advantage of $E$.

\begin{definition}[Coercer-Deniable QKE]\label{def:coercer-deniable-qke}
	For adversary $E$, let there be an efficient distinguisher $\dist_E$ on security parameter $\kappa$. We say that $\Pi_{\vec{r}}$ is a coercer-deniable QKE protocol if, for any adversary $E$, transcript $\vec{t}$, and for any $k, k'$, and a vector of private random inputs $\vec{r} = (r_1, \ldots, r_{\ell})$, there exists a denial/faking program $\mathcal{F}_{A,B}$ that running on $(k, k', \vec{t}, \vec{r})$ produces $\vec{r}' = (r'_1, \ldots, r'_{\ell})$ such that the following conditions hold:
	\begin{itemize}
		\item $\Pi$ is a secure QKE protocol.
		\item The adversary $E$ cannot do better than making a random guess for winning the coercer-deniability security experiment, i.e., $Adv_{E}^{\Pi}(\kappa) \le \mathrm{negl}(\kappa)$
		\[
		\mathrm{Pr}[\mathrm{CoercerDenQKE}^{\Pi}_{E,\C}(\kappa) = 1] \le \frac{1}{2} + \mathrm{negl}(\kappa)
		\]
	\end{itemize}
	Equivalently, we require that for all efficient distinguisher $\dist_E$
	\[
	|\prob{\dist_E(\view{Real}(k, \vec{t}, \vec{r})) = 1} - \prob{\dist_E(\view{Fake}(k', \vec{t}, \vec{r'})) = 1}| \le \mathrm{negl}(\kappa),
	\]
	where the transcript $\vec{t}=(\vec{c}, \rho_E(k))$ is a tuple consisting of a vector $\vec{c}$, containing classical message exchanges of a session, along with the local view of the adversary w.r.t. the quantum channel obtained by tracing over inaccessible systems (see Section \ref{subsec:security-model}).
\end{definition}

A function $f: \mathbb{N} \rightarrow \mathbb{R}$ is negligible if for any constant $k$, there exists a $N_k$ such that $\forall  N \ge N_k$, we have $f(N) < N^{-k}$. In other words, it approaches zero faster than any polynomial in the asymptotic limit.

\begin{remark}\label{remark:randomness-compromise}
	We introduced a vector of private random inputs $\vec{r}$ to avoid being restricted to a specific set of ``fake coins'' in a coercer-deniable setting such as the raw key bits in BB84 as used in Beaver's analysis. This allows us to include other private inputs as part of the transcript that need to be forged by the denying parties without having to provide a new security model for each variant. Indeed, in \cite{mosca2013quantum}, Mosca et al. consider the security of QKE in case various secret values are compromised before or after a session. This means that these values can, in principle, be included in the set of random coins that might have to be revealed to the adversary and it should therefore be possible to generate fake alternatives using a faking algorithm.
\end{remark}

\subsection{Revealing Randomness Before and After a Run of the Protocol}

In our Definition \ref{def:coercer-deniable-qke} for coercer-deniable QKE, we introduced a vector of private random inputs $\vec{r}$ to avoid being restricted to a specific value that is expected to be forged and revealed to the adversary in a coercer-deniable setting such as the raw key bits used in Beaver's analysis. This allows us to include other private inputs as part of the transcript that need to be forged by the denying parties without having to provide a new security model for each variant.

We now highlight some observations made in \cite{stebila2009case} regarding what secret values can be learned by the adversary before and after a QKE session without compromising the security of the established session-key, which is highly relevant w.r.t. coercer-deniability.

All known QKE protocols cease to be secure if any of the following random values are leaked to the adversary before a session: the long-term secret key, the basis choices, raw data bits, information reconciliation function, or privacy amplification function (a universal-2 hash function keyed under a secret key). This means that all these values can, in principle, be included in the set of random coins that might have to be revealed to the adversary in a coercer-deniable setting and it should therefore be possible to generate fake alternatives using a faking algorithm.

Regarding secret values that can be revealed after the termination of a session, for entanglement-based QKE protocols, all random choices can be revealed. This is due to the fact that for a successfully terminated EPR-based protocol, the actual key bits are determined neither by the two parties, nor by the adversary, instead they are the result of measurements, which become uncorrelated with any of the inputs bits of any of the parties (including the adversary) after successful privacy amplification.

The same property does not hold for prepare-and-measure variants such as BB84, as the sender determines the random choice of raw data bits that will be encoded into quantum states.

\section{Deniable Key Exchange in the Simulation Paradigm}

We now compare the model for coercer-deniable QKE with the standard definition of deniability for key exchange in classical cryptography.	The notion of deniability for key exchange was formalized in a seminal work by Di Raimondo, Gennaro and Krawczyk \cite{di2006deniable}, in a framework based on the simulation paradigm, inspired by deniable authentication, which is in turn based on the simulation of transcripts used in the formalism of zero-knowledge proofs \cite{dwork2004concurrent}.

The main idea is that for a deniable authentication protocol, the receiver's view (entire transcript) of the protocol can be simulated by an efficient machine ($\ppt$) that does not have access to the sender's secret key, leading to an indistinguishability problem between the distribution of the simulated view and the real one obtained when interacting with the sender. Therefore, a possibly dishonest receiver cannot convince a judge (a third party) by presenting such a view as it will not constitute convincing evidence given that it could have just as well been generated by the receiver running the simulator.

The same definition is extended to the setting of authenticated key exchange with the additional requirement that it should be possible to simulate not only the communication during the KE session, but also the value of the session key itself. This prevents a dishonest party not only from proving to a judge that they exchanged a key with a specific party, but also from producing a proof of the contents of a communication protected with that key, be it for encryption or authentication. The latter captures the idea that a shared session key established through a deniable AKE can be used to encrypt and authenticate messages using a symmetric-key mechanism that would be deniable provided that the key cannot be associated with either party.

In short, in this model the transcript should not leave any binding traces that can give rise to algorithmic proofs of participation. Such traces could for example be due to digital signatures for authentication in the public key setting. We now recall the standard definition for deniable key exchange in the simulation paradigm and consider the possibility of extending it to capture the requirements for coercer-deniability.

\subsection{Deniable QKE in the Simulation Paradigm}

Let $aux$ denote some auxiliary input drawn from a distribution of $\text{AUX}$, which models extra (publicly obtainable) information that the adversary might have gathered in some other form via eavesdropping, over the classical and quantum channel, e.g., legal transcripts from previous runs of the protocol.

\begin{definition}\cite{di2006deniable}\label{def:sim-deniable-ake}
Let $\Pi^{(S, R)}$ be a key exchange protocol with respect to the class of auxiliary inputs $\text{AUX}$ if for any adversary $\adv$, for any input of public keys $\vec{pk} = (pk_1, \ldots, pk_{\ell})$ and any auxiliary input $aux \in \text{AUX}$, there exists a simulator $\text{SIM}_{\adv}$ that, running on the same inputs as $\adv$, produces a simulated view that is indistinguishable from the real of view of $\adv$. Consider the probability distributions
\begin{align*}
&Real(n, aux) = [(\sk_i, \pk_i)] \sample \kgen(1^n); (aux, \vec{pk}, \text{View}_{\adv}(\vec{pk}, aux))]\\
&Sim(n, aux) = [(\sk_i, \pk_i)] \sample \kgen(1^n); (aux, \vec{pk}, \text{SIM}_{\adv}(\vec{pk}, aux))]
\end{align*}
then for all $\ppt$ distinguisher $\dist$
\[
|\mathrm{Pr}_{x \in Real(n, \vec{t})}[\dist(x) = 1] - \mathrm{Pr}_{x \in Sim(n, \vec{t})}[\dist(x) = 1]| \le \mathrm{negl}(\kappa),
\]
\end{definition}

As already pointed out in \cite{di2006deniable}, in the symmetric-key setting, deniability in the simulation paradigm is satisfied as long as the secret key cannot be associated to the identities of the peers via third-party verifiable proofs, an argument that is hinted at in \cite{ioannou2011new}. In the case of QKE, this follows from two simple observations. The first has to do with the participation repudiation of the parties: note that in the inherently symmetric-key setting of QKE, MACs (authentication tags) included in all classical exchanges, are computed under a random key using information-theoretically secure symmetric-key authentication algorithms.

This means that the existence of a simulator for such transcripts is trivial given that the malicious party could have generated them by herself without interacting with the sender. Secondly, regarding the simulation of the session key itself, it follows directly from the security proofs of QKE \cite{shor2000simple, christandl2004generic, mosca2013quantum} that the resulting session key is indistinguishable from a random key. We defer an analysis of deniability for public-key authenticated QKE to future work.

A direct approach for extending Definition \ref{def:sim-deniable-ake} such that it models coercer-deniability would consist of including the fake message $m'$ and the fake vector of private randomness $\vec{r'}$ in the input. This would lead to an additional requirement, namely the existence of a denial/faking program $\phi_{S/R}$ that running on $(aux, \vec{pk}, m, \vec{r})$ generates a fake message and a vector of fake random inputs $(m', \vec{r'})$, such that the resulting distributions $Real(n, aux, m', \vec{r'})$ and $Sim(n, aux, m', \vec{r'})$ still remain indistinguishable. Simulation-based deniability would correspond to letting $\vec{r} = \vec{r'} = \vec{0}$ and $m = m' = \epsilon$. Moreover, symmetric-key authenticated QKE amounts to setting $\vec{sk} = (sk)$ (the symmetric preshared key for authentication) and null for the public key, i.e., $\vec{pk} = ()$.

Finally, it is worth stressing that fundamentally, the model for coercer-deniable QKE differs from the one framed in the simulation paradigm in that the former is specified in terms of a game-based definition, while the latter relies on a simulation-based definition. Similar to the initially unknown equivalence of IND-CPA to semantic security, whether or not the two deniability definitions can be considered to be equivalent, potentially under certain constraints, remains an open question.

\subsection{The Relevance of Non-attributability for Deniability}

Among the known key exchange schemes, one of the unique properties of QKE protocols is that of ``non-attributability'', an advantage over classical protocols that was first explicitly highlighted in a work by Ioannou and Mosca \cite{ioannou2011new}. Non-attributability captures the property that the final secret key $k$ produced by a QKE protocol is entirely independent of the classical communication and the initial pre-shared key. In other words, there is no way to mathematically link or attribute the final secret key to the publicly readable transcript of discussion between Alice and Bob. Note that this property is closely related to the definition of deniability in the simulation paradigm.

This property will be an important part of a proof of deniability in the simulation paradigm for a public-key authenticated QKE protocol, i.e., a QKE protocol in which ITS authentication is replaced by computationally secure authentication mechanisms. The reason for this is that as long as a given computationally secure authentication scheme is shown to be deniable, the decoupling of the final key from the input bits due to the non-attributability property implies the simulatability of the final key itself, thereby satisfying the two requirements for a deniable AKE protocol in the simulation paradigm.

\stopminichaptoc{\minichaptocenabled}

\chapter{Deniable QKE via Covert Communication}\label{chp:dcqke}

\printminichaptoc{\minichaptocenabled}

\section{Introduction}

In this chapter we explore the relation between covert communication and deniable exchange. To the best of our knowledge, although techniques involving some form of covert storage of information have been used in practical cases such as the now defunct disk-encryption software TrueCrypt, the use of covert communication for achieving deniability has not been formally considered in the cryptographic literature. This is precisely what we set out to address in this chapter in the quantum setting by restricting our analysis to the case of quantum communication, and more specifically, to obtaining deniable quantum key exchange.

TrueCrypt provided support for the notion of \emph{plausible deniability} via hidden volumes, the existence of which could be denied. This provides a fairly good analogy for the following discussions. The underlying idea is that by using methods of covert communication, thereby preventing the adversary to notice an exchange of information in the first place, one tackles the problem of leaving traces of binding evidence by allowing honest parties to deny having ever exchanged any messages.

We establish a connection between covert communication and deniability by providing a simple construction for coercer-deniable QKE using covert QKE. We then show that deniability is reduced to the covertness property, meaning that deniable QKE can be performed as long as covert QKE is not broken by the adversary, formalized via the security reduction given in Theorem \ref{thm:den-covert-reduction}.

Before presenting our results, we provide some background on covert communication both in the context of transmitting classical as well as quantum information.

\section{Covert Communication and Embracing Noise}\label{sec:dc-qke}

Covert communication becomes relevant when parties wish to keep the very act of communicating secret or hidden from a malicious warden. This can be motivated by various requirements such as the need for hiding one's communication with a particular entity when this act alone can be incriminating. While encryption can make it impossible for the adversary to access the contents of a message, it would not prevent them from detecting exchanges over a channel under their observation.

Steganography, the art of hiding information in an innocuous object called a \emph{covertext}, resulting in a \emph{stegotext}, constitutes one of the earliest forms of covert communication, such as the practice of communicating secret messages by hiding them on a soldier's scalp underneath their hair using tattoos, which dates back to ancient times \cite{bash2015fundamental}. There exists a considerable body of work on steganography and its properties in the asymptotic limit, which falls outside the scope of this work. For more details, we refer the reader to the works of Andrew D. Ker such as \cite{ker2006batch,ker2007capacity,ker2009square,ker2010square}, which provide fundamental results in this area.

In terms of its applicability to covert communication, steganography is typically hampered by constraints such as the assumption that stegotexts are not corrupted by a noisy channel and that they are limited to discrete finite alphabets. Moreover, for whatever reason, when communication is not allowed, for example, precisely in a situation where exchanging messages can be incriminating, using steganography becomes impossible. This is where covert communication over noisy channels gains relevance.

\subsection{Covert Classical Communication}

An example of a modern technique for achieving covert communication involves the use of spread-spectrum radiofrequency (RF) communication where the signal power is lowered below the noise floor via bandwidth expansion. Note that spread-spectrum techniques also count among the earliest methods for protecting RF communication from detection and jamming, e.g., during the two world wars.

In a series of works, Bash et al. \cite{bash2013AWGN,bash2015fundamental,bash2015bosonic,bash2015hiding,sheikholeslami2016covert} addressed covert communication in various settings such as the standard model for RF channels, namely additive white Gaussian noise (AWGN) channels. Among other things, they established a square-root law (SRL) for covert communication in the presence of an unbounded quantum adversary stating that $\bigO{\sqrt{n}}$ covert bits can be exchanged in $n$ channel uses\footnote{Note that this square root law also crops up in the context of steganography, a similarity which is due to the mathematical properties of statistical hypothesis testing. The extra $\mathrm{log}(n)$ factor in the steganographic SRL has to do with the fact that the steganographic channel between Alice and Bob is assumed to be noiseless.} with arbitrarily low probability of detection by the monitoring adversary.

An approach used in these works is to hide information in the noise of optical channels. For instance, using a quantum information-theoretic analysis, in \cite{bash2015bosonic}, Bash et al. address the fundamental limits of covert classical communication and show the possibility of covertly transmitting classical information over standard, lossy bosonic channels with thermal noise\footnote{The lossy thermal-noise bosonic channel provides the quantum mechanical description of the transmission of a single (spatio-temporal polarization) mode of the electromagnetic field at a given transmission wavelength, e.g., optical or microwave, over linear loss and additive Gaussian noise such as noise originating from blackbody radiation.}, in the presence of a quantum adversary.

The crucial role played by noise and how it enables stealthy communication can be summarized as follows. Covert communication becomes impossible \cite{bash2015bosonic} if the quantum adversary, Eve, has complete control over the noise in the channel. However, any excess noise that is not controlled by Eve (e.g., the inevitable thermal noise due to the blackbody radiation at the operating temperature) can be used to enable Alice to reliably transmit $\bigO{n}$ covert bits to Bob in $n$ bosonic modes, even assuming that Eve intercepts all the photons that do not reach Bob. The excess noise can for example come from dark counts in photon counting detectors of Eve.

While from a cryptographic point of view, here we provide sufficient details for the purpose of presenting our results, a comprehensive description of the physics behind the underlying fundamental results in this area would go well beyond the scope of our work. Therefore, we encourage the reader to refer to the main sources cited above for more details.

\subsection{Covert Quantum Communication}

Recently, Arrazola and Scarani \cite{AS16} extended covert communication to the quantum regime for transmitting qubits covertly. The authors showed the feasibility of covertly transmitting quantum information using both single photon and coherent state signals.

To achieve covert quantum communication, the authors show that sequences of qubits can be transmitted covertly in the presence of noise, originating either from the environment or from the sender's lab (i.e., the sender injects noise into the channel) and derive analytical bounds for both cases. By allowing the sender to inject noise into the channel, i.e., a model in which noise originates from the sender's lab and is thus inaccessible to Eve, security is achieved even against an adversary in complete control of the channel connecting the sender and receiver.

To explain the main idea, for simplicity consider that the sender, Alice, encodes a qubit in a single photon across two optical modes, which correspond to the polarization degrees of freedom of a single time-bin mode. The idea is that assuming that Alice and Bob have access to $N$ time-bins, each of which capable of carrying a qubit signal, covert quantum communication is performed as follows. For each of the $N$ time-bins, Alice sends a qubit signal with probability $q \ll 1$, and with probability $1-q$, she does nothing. Compared to a regular protocol, the main difference is that signals are not sent sequentially, but they are rather randomly spread out in time. Alice sends on average $Nq$ qubit signals coded in time-bins that are pre-agreed based on a shared secret key. For Eve, not knowing this secret key, each time-bin carries a signal with probability $q$, i.e., Alice sends a signal with probability $q$ for each time-bin. We state the main result below in Def. \ref{def:covert-quantum-communication}.

\begin{definition}\label{def:covert-quantum-communication}
	Covert quantum communication consists of two parties exchanging a sequence of qubits such that an adversary trying to detect this cannot succeed by doing better than making a random guess, i.e., $P_d \le \frac{1}{2} + \epsilon$ for sufficiently small $\epsilon > 0$, where $P_d$ denotes the probability of detection and $\epsilon$ the detection bias.
\end{definition}

\section{Covert Quantum Key Exchange}

We now briefly describe the original motivations for performing covert quantum key exchange and its inherent limitation, which can be mitigated using pseudo-random number generators (PRNG). We then present a security experiment capturing a game-based definition for covert QKE, which will then be used later on for relating covert QKE to deniability.

\subsection{Covert QKE using PRNGs}

Since covert communication requires pre-shared secret randomness, a natural question to ask is whether QKE can be done covertly. This was also addressed in \cite{AS16} and it was shown that covert QKE with unconditional security for the covertness property is impossible because the amount of key consumption is greater than the amount produced.

The main reason behind the impossibility of an information-theoretic bootstrapping of covert QKE can be explained as follows. Since Alice sends a signal with probability $q$ for each of the $N$ available time-bins, in the limit of large $N$, the average amount of shared bits that is needed to specify the selected time-bins is given by $N \cdot H_b(q)$, where $H_b(\cdot)$ denotes the binary entropy\footnote{The binary entropy of a variable $X$ taking on two possible values occurring with probabilities $p$ and $1-p$ (a Bernoulli process), respectively, is defined as follows: $H_b(p) = -p \mathrm{log}_2(p) - (1-p)\mathrm{log}_2(1-p)$}. However, Alice and Bob can at best obtain an average of $d = N \cdot H_b(q)$ secret bits of information from running covert QKE, which does not even account for the overhead of information reconciliation, i.e., error estimation and error correction. Thus, since $H_b(q) > q$ for $q < \frac{1}{2}$, it follows that such a scheme requires more key material than it produces.

Given this theoretical limitation, a hybrid approach involving pseudo-random number generators (PRNG) was proposed to achieve covert QKE with a positive key rate such that the resulting secret key remains information-theoretically secure, while the covertness of QKE is shown to be at least as strong as the security of the PRNG. The PRNG is used to expand a truly random pre-shared key into an exponentially larger pseudo-random output, which is then used to determine the time-bins for sending signals in covert QKE.

To show that Bob can use the weak signals sent by Alice, the channel with a parameterization that guarantees a low detection bias $\epsilon$ would have to be shown to provide a positive key rate, even if all the errors are attributed to Eve. Despite the inherent requirement of noise for achieving covert communication, the authors \cite{AS16} show that if the quantum bit error rate due to the noise is sufficiently low, performing covert QKE with a non-zero key rate while guaranteeing a low detection bias is possible.

\subsection{Covert QKE Security Experiment}\label{sec-exp:covert-qke}

Let $\mathrm{CovertQKE}^{\Pi^{cov}}_{E,\C}(\kappa)$ denote the security experiment. The main property of covert QKE, denoted by $\covpi{}$, can be expressed as a game played by the adversary $E$ against a challenger $\C$ who decides on a random bit $b$ and if $b=0$, $\C$ runs $\covpi{}$, otherwise (if $b=1$), $\C$ does not run $\covpi{}$. Finally, $E$ guesses a random bit $b'$ and wins the game if $b=b'$. The experiment outputs 1 if $E$ succeeds, and 0 otherwise.
The winning advantage of $E$ is given by $Adv_{E}^{\Pi^{cov}}(\kappa) = |\prob{b = b'} - \sfrac{1}{2}|$ and we want that $Adv^{\Pi^{cov}}_{E}(\kappa) \le \negl{\kappa}$.

\begin{definition}\label{def:covert-QKE}
	Let $G: \{0,1\}^s \rightarrow \{0,1\}^{g(s)}$ be a $(\tau,\epsilon)$-PRNG secure against all efficient distinguishers $\dist$ running in time at most $\tau$ with success probability at most $\epsilon$, where $\forall s: g(s) > s$. A QKE protocol $\covpi{G}$ is considered to be covert if the following property holds for any $\qpt$ adversary $E$:
	\begin{itemize}
		\item $\covpi{G}$ is a secure QKE protocol.
		\item The probability that $E$ guesses the bit $b$ correctly ($b' = b$), i.e., $E$ manages to distinguish between Alice and Bob running $\covpi{G}$ or not, is no more than $\frac{1}{2}$ plus a negligible function in the security parameter $\kappa$, i.e.,
		\[
		\prob{\mathrm{CovertQKE}^{\Pi^{cov}}_{E, \C}(\kappa) = 1} \le \frac{1}{2} + \negl{\kappa}
		\]
	\end{itemize}
\end{definition}
\begin{theorem}(Sourced from \cite{AS16})\label{thm:covert-QKE}
	The secret key obtained from the covert QKE protocol $\covpi{G}$ is information-theoretically secure and the covertness of $\covpi{G}$ is as secure as the underlying PRNG.
\end{theorem}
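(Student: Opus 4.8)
The plan is to prove the two assertions separately, since they concern different adversarial powers. For the first claim — that the secret key is information-theoretically secure — I would observe that although the choice of time-bins in $\covpi{G}$ is determined by a pseudo-random expansion of a short pre-shared seed, the PRNG output is used \emph{only} to select which of the $N$ time-bins carry signals, not to generate any of the quantum states, basis choices, or post-processing randomness that underpin the security proof of the underlying QKE scheme (BB84 in \cite{AS16}). Hence, conditioned on any fixed realization of the time-bin pattern, the transmitted qubits, the sifting, error-correction, and privacy-amplification steps are exactly those of a standard QKE run on the sifted positions, and the established key inherits the unconditional security guarantee of that scheme (via e.g.\ \cite{shor2000simple, mosca2013quantum}). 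The only subtlety is that an unbounded adversary could in principle recover the seed and thus learn the signal positions; but knowing \emph{which} bins are signals does not by itself give any information about the raw key bits encoded there, so the reduction to plain QKE security still goes through. I would make this precise by a short hybrid argument: replacing the PRNG output with a truly random string of the same length changes neither the distribution of the final key nor the adversary's view of it, and in that hybrid the scheme is literally a (covert-positioned) instance of an information-theoretically secure QKE.

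For the second claim — that the covertness of $\covpi{G}$ is as secure as the PRNG — I would give a direct reduction in the style of Theorem \ref{thm:den-covert-reduction}'s intended proof. Suppose an efficient adversary $E$ wins the experiment $\mathrm{CovertQKE}^{\Pi^{cov}}_{E,\C}(\kappa)$ with advantage $Adv_{E}^{\Pi^{cov}}(\kappa)$ that is non-negligible. I construct a distinguisher $\dist$ against the $(\tau,\epsilon)$-PRNG $G$ as follows: $\dist$ receives a challenge string $w \in \{0,1\}^{g(s)}$ which is either $G(\text{seed})$ for a uniform seed or uniformly random; it then simulates the challenger $\C$ of the covertness game, using $w$ to select the time-bins for a simulated covert QKE run (when it internally sets $b=0$), and outputs whatever bit $E$ outputs as its guess. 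When $w$ is pseudo-random, $\dist$ perfectly simulates the real covert QKE experiment, so $E$'s success probability is $\tfrac{1}{2} + Adv_{E}^{\Pi^{cov}}(\kappa)$. When $w$ is truly random, the information-theoretic square-root-law analysis of \cite{AS16} guarantees that, with the parameters chosen to give detection bias $\epsilon$, no adversary can distinguish ``Alice transmits on the secret bins'' from ``Alice is silent'' with advantage more than $\epsilon$; hence $E$'s success probability is at most $\tfrac{1}{2} + \epsilon$ in that case. Therefore $\dist$ distinguishes the two PRNG distributions with advantage at least $Adv_{E}^{\Pi^{cov}}(\kappa) - \epsilon$, contradicting the $(\tau,\epsilon)$-security of $G$ provided $E$'s running time respects the bound $\tau$. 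This establishes $Adv_{E}^{\Pi^{cov}}(\kappa) \le \negl{\kappa}$.

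The main obstacle I anticipate is the clean separation of the two reductions: one must argue carefully that the PRNG seed, even if fully recovered by an unbounded adversary in the \emph{key-security} part, is genuinely decoupled from the raw key material, so that information-theoretic security is not silently downgraded to computational security — this is exactly the delicate point that makes the hybrid step (PRNG output $\rightarrow$ uniform positions) necessary rather than cosmetic. A secondary technical point is bookkeeping the parameter regime: the reduction only works when the covert-QKE channel parameters (signal probability $q$, number of modes $N$, tolerated error rate) are set so that the truly-random-string case provably has detection bias at most $\epsilon$ \emph{and} simultaneously yields a positive key rate; I would invoke the analysis of \cite{AS16} for the existence of such a regime rather than re-deriving it. Everything else is a routine simulation argument, so I would keep the write-up short and lean on Theorem \ref{thm:covert-QKE} and the cited QKE security proofs for the heavy lifting.
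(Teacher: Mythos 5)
The paper itself does not prove this theorem---it is imported verbatim from \cite{AS16}---so there is no in-paper proof to compare against, and your sketch is essentially the argument given in that source: the key remains information-theoretically secure because the PRNG-derived randomness only fixes the signal time-bins (so even full knowledge of the positions reduces the analysis to standard QKE security with all errors attributed to Eve), while covertness reduces to PRNG indistinguishability via the hybrid in which truly random bin selection is $\epsilon$-covert by the square-root-law analysis. Your outline is correct; the only point worth making explicit is that the distinguisher you construct against $G$ must itself be an efficient quantum machine (it runs the QPT adversary $E$ and simulates the quantum channel), which is consistent with Definition \ref{def:covert-QKE} but should be stated when invoking the $(\tau,\epsilon)$-security of the PRNG.
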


\section{Deniable Covert Quantum Key Exchange}

We are now in a position to describe DC-QKE, a simple construction shown in Protocol \ref{protocol:dc-qke}, which preserves unconditional security for the final secret key, while its deniability is as secure as the underlying PRNG used in $\covpi{\vec{r},G}$. In terms of the Security Experiment \ref{sec-exp:coercer-deniable-qke}, $\covpi{\vec{r},G}$ is run to establish a real key $k$, while non-covert QKE $\Pi_{\vec{r}'}$ is used to produce a fake key $k'$ aimed at achieving deniability, where $\vec{r}$ and $\vec{r}'$ are the respective vectors of real and fake private inputs.

\subsection{Running Covert and Non-Covert Quantum Key Exchange}

Operationally, consider a setting wherein the parties suspect\footnote{The honest parties do not have to and indeed are not expected to always run the faking component, thus the corresponding random input does not naturally crop up in the definition.} in advance that they might be coerced into revealing their private coins for a given run: their joint strategy consists of running both components in Protocol \ref{protocol:dc-qke} and claiming to have employed $\Pi_{\vec{r}'}$ to establish the fake key $k'$ using the fake private randomness $\vec{r}'$ (e.g. raw key bits in BB84) and provide these as input to the adversary upon termination of a session. Thus, for Eve to be able to produce a proof showing that the revealed values are fake, she would have to break the security of covert QKE to detect the presence of $\covpi{\vec{r},G}$, as shown in Theorem \ref{thm:den-covert-reduction}. Moreover, note that covert communication can be used for dynamically agreeing on a joint strategy for denial, further highlighting its relevance for deniability.

\begin{algorithm}
	\floatname{algorithm}{Protocol}
	\caption{DC-QKE for an $n$-bit key}
	\label{protocol:dc-qke}
	\begin{algorithmic}[1]
		\STATE \textbf{RandGen:} Let $\vec{r} = (r_1, \ldots, r_{\ell})$ be the vector of private random inputs, where $r_i \sample \{0,1\}^{|r_i|}$.
		\STATE \textbf{KeyGen:} Run $\Pi^{cov}_{\vec{r},G}$ to establish a random secret key $k \in \{0,1\}^{n}$.
	\end{algorithmic}
	Non-covert faking component $\mathcal{F}_{A,B}$:
	\begin{algorithmic}[1]
		\STATE \textbf{FakeRandGen:} Let $\vec{r}' = (r'_1, \ldots, r'_{\ell})$ be the vector of fake private random inputs, where $r'_i \sample \{0,1\}^{|r'_i|}$.
		\STATE \textbf{FakeKeyGen:} Run $\Pi_{\vec{r'}}$ to establish a separate fake key $k' \in \{0,1\}^n$.
	\end{algorithmic}
\end{algorithm}

\begin{remark}
	The original analysis in \cite{beaver2002deniability} describes an attack based solely on revealing fake raw key bits that may be inconsistent with the adversary's observations. An advantage of DC-QKE in this regard is that Alice's strategy for achieving coercer-deniability consists of revealing all the secret values of the non-covert QKE $\Pi_{\vec{r}'}$ honestly.
	This allows her to cover the full range of private randomness that could be considered in different variants of deniability as discussed in Remark \ref{remark:randomness-compromise}. A potential drawback is the extra communication cost induced by $\mathcal{F}_{A,B}$, which could, in principle, be mitigated using a less interactive solution such as QKE via UE.
\end{remark}

\begin{remark}
	If the classical channel is authenticated using an information-theoretically secure algorithm, the minimal entropy overhead in terms of pre-shared key (logarithmic in the input size) for $\Pi$ can be generated by $\covpi{\vec{r}}$.
\end{remark}

\begin{example}
	In the case of encryption, $A$ can send $c = m \oplus k$ over a covert channel to $B$, while for denying to $m'$, she can send $c' = m' \oplus k'$ over a non-covert channel. Alternatively, she can transmit a single ciphertext over a non-covert channel such that it can be opened to two different messages. To do so, given $c = m \oplus k$, Alice computes $k' = m' \oplus c = m' \oplus m \oplus k$, and she can then either encode $k'$ as a codeword, as described in Section \ref{sec:qke-and-ue}, and run $\Pi_{\vec{r}'}$ via uncloneable encryption, thus allowing her to reveal the entire transcript to Eve honestly, or she can agree with Bob on a suitable privacy amplification (PA) function (with PA being many-to-one) as part of their denying program in order to obtain $k'$.
\end{example}

\subsection{Reducing Deniability to Covert QKE}

As already mentioned, Alice and Bob can opt to make use of covert QKE to produce the real key and reveal the non-covert QKE session data to the adversary. Note that the honest parties are not expected to always run the non-covert component, which is why a priori there is no reason for the adversary to assume they have done so. Indeed, the case where the two parties do run the covert QKE component is indistinguishable from the case when they do not and only run a non-covert session.

\begin{theorem}\label{thm:den-covert-reduction}
	If $\covpi{\vec{r},G}$ is a covert QKE protocol, then DC-QKE given in Protocol \ref{protocol:dc-qke} is a coercer-deniable QKE protocol that satisfies Definition \ref{def:coercer-deniable-qke}.
\end{theorem}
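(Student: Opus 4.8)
The statement unpacks into the two bullets of Definition \ref{def:coercer-deniable-qke}: (i) DC-QKE must be a secure QKE protocol, and (ii) its real and fake coercer views must be computationally indistinguishable. Part (i) I would dispatch immediately: the real key $k$ in Protocol \ref{protocol:dc-qke} is the output of $\covpi{\vec{r},G}$, and by Theorem \ref{thm:covert-QKE} the key produced by covert QKE is information-theoretically secure, so $\Pi$ inherits session-key security — the auxiliary faking component $\mathcal{F}_{A,B}$ runs an entirely separate and independent instance $\Pi_{\vec{r}'}$ and never interacts with $k$. The plan is therefore to spend the rest of the proof on part (ii), and to do so by a reduction: from any efficient distinguisher $\dist_E$ that wins $\mathrm{CoercerDenQKE}^{\Pi}_{E,\C}(\kappa)$ (equivalently, separates $\view{Real}(k,\vec{t},\vec{r})$ from $\view{Fake}(k',\vec{t},\vec{r}')$) with non-negligible advantage, I would construct a $\qpt$ adversary against $\mathrm{CovertQKE}^{\Pi^{cov}}_{E,\C}(\kappa)$, which by Definition \ref{def:covert-QKE} and Theorem \ref{thm:covert-QKE} is as hard as breaking the PRNG $G$.

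The structural observation driving the reduction is that $\mathcal{F}_{A,B}$ performs a genuine honest run of the non-covert protocol $\Pi_{\vec{r}'}$, so the faked tuple $(k',\vec{r}')$ is by construction perfectly internally consistent with the non-covert portion of the transcript $\vec{t}=(\vec{c},\rho_E(k))$. Hence the only feature that could let $\dist_E$ tell $\view{Real}$ from $\view{Fake}$ is whether the channel observation recorded in $\vec{t}$ carries the signals of an additional $\covpi{\vec{r},G}$ session against which the revealed key checks out. I would make this precise with a short hybrid chain $\view{Real}\leadsto\mathcal{H}\leadsto\view{Fake}$, where in the intermediate world $\mathcal{H}$ no covert session is ever executed and the parties simply reveal the true outputs $(k',\vec{r}')$ of $\Pi_{\vec{r}'}$. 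The step $\mathcal{H}\leadsto\view{Fake}$ differs only by the presence of the covert channel activity inside $\vec{t}$ and is indistinguishable by the covertness of $\covpi{\vec{r},G}$; the step $\view{Real}\leadsto\mathcal{H}$ additionally swaps the revealed covert-session key $k$ for the non-covert key $k'$, which I would handle by invoking Theorem \ref{thm:covert-QKE} — both keys are statistically close to a fresh uniform string given the adversary's view — leaving again only the covert/non-covert channel swap that covertness absorbs.

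Concretely, the reduction $\adv'$ embeds its $\mathrm{CovertQKE}$ challenge as the covert component of DC-QKE and internally simulates the remainder of the coercer-deniability game (the non-covert run $\Pi_{\vec{r}'}$, yielding $k'$ and $\vec{r}'$); when $\C$'s hidden bit is $0$ it assembles the view $\view{Real}$ for $\dist_E$, substituting a uniformly sampled string for the unknown $k$ as justified above, and when the bit is $1$ it assembles $\view{Fake}$, finally forwarding $\dist_E$'s guess as its own. Its advantage then equals that of $\dist_E$ up to the negligible slack from the key substitution, contradicting covertness and hence PRNG security. I expect the main obstacle to be exactly this bookkeeping: arranging for $\adv'$ to present $\dist_E$ with a view of precisely the right distribution without ever gaining illegitimate access to the honest parties' covert randomness and without violating the freshness conditions of the embedded test session — the point where the uniformity of the QKE key (Theorem \ref{thm:covert-QKE}) and the covertness bound (Definition \ref{def:covert-QKE}) have to be combined carefully, together with a check that $\adv'$ stays within the admissible classical and quantum time bounds. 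Granting this, both bullets of Definition \ref{def:coercer-deniable-qke} hold, and the deniability of DC-QKE is seen to reduce exactly to the covertness of $\covpi{\vec{r},G}$, as claimed.
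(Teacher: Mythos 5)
Your proposal is correct and follows essentially the same route as the paper: a reduction in which any efficient distinguisher of the real and fake coercer views is turned into an adversary that detects the presence of the covert session, so that deniability of DC-QKE rests on the covertness of $\covpi{\vec{r},G}$ (and hence on the PRNG), with session-key security inherited from Theorem \ref{thm:covert-QKE}. Your intermediate hybrid and explicit key-substitution step merely spell out what the paper's sketch asserts in one line, namely that apart from the covert exchanges every revealed element is (statistically close to) uniform, so the only distinguishing feature is whether $(\vec{t}',\vec{t}^{cov})$ or $(\vec{t}',\varnothing)$ was produced.
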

\begin{proof}
	The main idea consists of showing that breaking the deniability property of DC-QKE amounts to breaking the security of covert QKE, such that coercer-deniability follows from the contrapositive of this implication, i.e., if there exists no efficient algorithm for compromising the security of covert QKE, then there exists no efficient algorithm for breaking the deniability of DC-QKE. We formalize this via a reduction, sketched as follows. Let $w' = \view{Fake}(k', \vec{t}_E, \vec{r}')$ and $w = \view{Real}(k, \vec{t}_E, \vec{r})$ denote the two views. Flip a coin $b$ for an attempt at denial: if $b=0$, then  $\vec{t}_E=(\vec{t}',\varnothing)$, else ($b=1$), $\vec{t}_E=
	(\vec{t}', \vec{t}^{cov})$, where $\vec{t}^{cov}$ and $\vec{t}'$ denote the transcripts of covert and non-covert exchanges from $\covpi{\vec{r},G}$ and $\Pi_{\vec{r}'}$.
	Now if DC-QKE is constructed from $\Pi^{cov}$, then given an efficient adversary $E$ that can distinguish $w$ from $w'$ with probability $p_1$, we can use $E$ to construct an efficient distinguisher $\dist$ to break the security of covert QKE with probability $p_2$ such that $p_1 \le p_2$. Indeed, given an instance of a DC-QKE security game and a corresponding fresh session, we construct a distinguisher $\dist$ that uses $E$ on input $w$ and $w'$, with the goal to win the game described in the Security Experiment \ref{sec-exp:coercer-deniable-qke}. The distinguisher $\dist$ would simply run $E$ (with negligible overhead) and observe whether $E$ succeeds at distinguishing $w$ from $w'$. Since the only element that is not sampled uniformly at random is in $\vec{t}^{cov}$ containing exchanges from the covert channel, which relies on a PRNG, the only way $E$ can distinguish $w$ from $w'$ is if she can distinguish $(\vec{t}', \vec{t}^{cov})$ from $(\vec{t}', \varnothing)$. If $E$ succeeds, then $\dist$ guesses that a covert QKE session has taken place, thereby winning the Security Experiment \ref{sec-exp:covert-qke} for covert QKE.
\end{proof}

\section{Further Applications of Covert Communication for Deniability}

As briefly mentioned in the previous discussions, covert communication can also be used in conjunction with standard non-covert, but deniable schemes, for agreeing on joint strategies in a dynamic fashion. In particular, one of the problems in the context of coercion-resistance or other closely related deniability settings is that when the honest parties try to decide on a specific strategy, they may need to communicate with one another via a secondary channel. Clearly, this act alone, when performed over a non-covert channel, would be detectable by the adversary, thus potentially rendering the existing deniability mechanism rather useless given that the adversary can demand that the parties reveal the private information used in their secondary channel as well.

A line of work worth pursuing would be to determine whether or not deniability with everlasting security can be obtained from covert QKE. More specifically, similar to the well-known everlasting security of QKE, given the quantum nature of the covert communication techniques considered here, it may be that once the window of opportunity for detecting message exchanges is closed for the adversary, she will not be able to discover this beyond the cutoff point of a given session, without any computational assumptions.

\subsection*{Acknowledgments}

We thank Mark M. Wilde and Ignatius William Primaatmaja for their comments.

\stopminichaptoc{\minichaptocenabled}

\chapter{Deniability via Entanglement Distillation}\label{chp:entanglement-distillation}

\printminichaptoc{\minichaptocenabled}

\section{Introduction}

We now consider the possibility of achieving information-theoretic deniability via entanglement distillation (ED). The approach that we proposed in Chapter \ref{chp:dcqke}, based on covert communication, provides deniability under the computational assumption that the security of the PRNG used for deciding the time-bins for sending qubit signals is not broken by a quantum adversary. Here we seek to move beyond that limitation and explore deniability with information-theoretic security.

In addition to our primary goal, namely achieving information-theoretic deniability, the approach we suggest here is also aimed at grounding deniability into quantum information-theoretic concepts. For instance, as explained in Sect. \ref{subsec:den-qke-via-ed-and-qt}, by using entanglement distillation for achieving deniable quantum key exchange, we can provide an operational meaning for deniability in terms of the amount of distillable entanglement, which is quantified by the von Neumann entropy or the entropy of entanglement, as detailed in Section \ref{sec:entanglement}.

Entanglement is an essential resource in quantum information that enables a wide range of classically impossible tasks such as superdense coding and quantum teleportation. As such, it is natural to ask how, and if, it can be used to achieve deniability.

The paradigm suggested here represents merely an initial attempt at taking the first steps towards formalizing the notion of deniable exchange using well-understood concepts in quantum information theory. While our analysis is limited to QKE, it is our hope that this will inspire future work dealing with other variants of deniability games, which go beyond key exchange.

\section{Entanglement Distillation}

In its most general form, ED allows two parties to distill maximally entangled pure states (\emph{ebits}) from an arbitrary sequence of entangled states at some positive rate using local operations and classical communication (LOCC), i.e. to move from a state
\[
\ket{\Phi_\theta}_{AB} \equiv cos(\theta)\ket{00}_{AB} + sin(\theta)\ket{11}_{AB}
\]
to a state
\[
\ket{\Phi^+}_{AB} = \frac{1}{\sqrt{2}} (\ket{00}_{AB} + \ket{11}_{AB}),
\]
where $0 < \theta < \pi/2$.

As explained in Section \ref{sec:entanglement} of Chapter \ref{chp:qip}, the resulting state is a Bell pair, which consists of maximally entangled qubits. The restriction to the LOCC paradigm, and thus imposing constraints on permissible operations, is motivated by the goal of discovering the fundamental limits for achieving such tasks with minimal resources and requirements, as otherwise the task would become trivial, i.e., if Alice and Bob were allowed to perform arbitrary global operations on their systems.

\subsection{Fundamental Properties and Limits}

In the noiseless model, $n$ independent identically distributed (i.i.d.) copies of the same partially entangled state $\rho$ can be converted into $\approx nH(\rho)$ Bell pairs in the limit $n \rightarrow \infty$, i.e., from $\rho_{AB}^{\otimes n}$ to $\ket{\Phi^+}_{AB}^{\otimes nH(\rho)}$, where $H(\rho) = -\mathrm{Tr}(\rho \mathrm{ln} \rho)$ denotes the von Neumann entropy of entanglement.

If the parties start out with pure states, local operations alone will suffice for distillation \cite{bennett1996concentrating,bennett1996purification}, otherwise the same task can be achieved via forward classical communication (one-way LOCC), as shown by the Devetak-Winter theorem \cite{devetak2005distillation}, to distill ebits from many copies of some bipartite entangled state. In the fundamental work of Devetak and Winter \cite{devetak2005distillation} on the distillation of secret key and entanglement from quantum states, the authors also derive information-theoretic formulas for the distillable secret key and fundamental key rate bounds in the presence of an adversary who is assumed to be in possession of a purification of Alice and Bob's joint state.

The pioneering work of Bennett et al. \cite{bennett1996mixed} on the relation between mixed state ED and quantum error correction count among the earliest developments in this area. In a relatively more recent work, Buscemi and Datta \cite{buscemi2010distilling} relax the i.i.d. assumption and provide a general formula for the optimal rate at which ebits can be distilled from a noisy and arbitrary source of entanglement via one-way and two-way LOCC.

\section{Deniability via Entanglement Distillation}\label{sec:entanglement-distillation}

Intuitively, the eavesdropping attack described in \cite{beaver2002deniability} and further detailed in Section \ref{subsec:state-injection-attack}, is enabled by the presence of noise in the channel as well as the fact that Bob cannot distinguish states sent by Alice from those prepared by Eve. As a result, attempting to deny to a different bit value encoded in a given quantum state - without knowing if this is a decoy state prepared by Eve - allows the adversary to detect such an attempt with non-negligible probability.

In terms of deniability, the intuition behind this idea is that while Alice and Bob may not be able to know which states have been prepared by Eve, they can instead remove her ``check'' decoy states from their set of shared entangled pairs by decoupling her system from theirs. Once they are in possession of maximally entangled states, they will have effectively factored out Eve's state such that the global system is given by the pure tensor product space $\ket{\Psi^+}_{AB} \otimes \ket{\phi}_E$. Thus the pure bipartite joint system between Alice and Bob cannot be correlated with any system under Eve's control, thereby foiling her cross-checking strategy. The singlet states can then be used to perform QKE via quantum teleportation \cite{bennett1993teleporting}.

\subsection{Deniable QKE via Entanglement Distillation and Teleportation}\label{subsec:den-qke-via-ed-and-qt}

We now argue why performing randomness distillation at the quantum level, thus requiring quantum computation, plays an important role w.r.t. deniability.

The subtleties alluded to in \cite{beaver2002deniability} arise from the fact that randomness distillation is performed in the classical post-processing step. This allows Eve to leverage her tampering in that she can verify the parties' claims against her decoy states. However, this attack can be countered by removing Eve's knowledge before the classical exchanges begin. Most security proofs of QKE \cite{lo1999unconditional,shor2000simple,mayers2001unconditional} are based on a reduction to an entanglement-based variant, such that the fidelity of Alice and Bob's final state with $\ket{\Psi^+}^{\otimes m}$ is shown to be exponentially close to 1.

Moreover, secret key distillation techniques involving ED and quantum teleportation \cite{bennett1996purification,devetak2005distillation} can be used to faithfully transfer qubits from $A$ to $B$ by consuming ebits. To illustrate the relevance of distillation for deniability in QKE, consider the generalized template shown in Protocol \ref{protocol:distillation-qke}, based on these well-known techniques.
\begin{algorithm}
	\floatname{algorithm}{Protocol}
	\caption{{\small Template for deniable QKE via entanglement distillation and teleportation}}
	\label{protocol:distillation-qke}
	\begin{algorithmic}[1]
		\STATE $A$ and $B$ share $n$ noisy entangled pairs (assume i.i.d. states for simplicity).
		\STATE They perform entanglement distillation to convert them into a state $\rho$ such that $F(\ket{\Psi^+}^{\otimes m},\rho)$ is arbitrarily close to 1 where $m < n$.
		\STATE Perform verification to make sure they share $m$ maximally entangled states $\ket{\Psi^+}^{\otimes m}$, and abort otherwise.
		\STATE $A$ prepares $m$ qubits (e.g. BB84 states) and performs quantum teleportation to send them to $B$ at the cost of consuming $m$ ebits and exchanging $2m$ classical bits.
		\STATE $A$ and $B$ proceed with standard classical distillation techniques to agree on a key based on their measurements.
	\end{algorithmic}
\end{algorithm}

By performing ED, Alice and Bob make sure that the resulting state cannot be correlated with anything else due to the monogamy of entanglement (see e.g. \cite{koashi2004monogamy,streltsov2012general}), thus factoring out Eve's system. The parties can open their records for steps $(2)$ and $(3)$ honestly, and open to arbitrary classical inputs for steps $(3), (4)$ and $(5)$: deniability follows from decoupling Eve's system, meaning that she is faced with a reduced density matrix on a pure bipartite maximally entangled state, i.e., a maximally mixed state $\rho_E = \mathbb{I}/2$, thus obtaining key equivocation.

Moreover, with the optimal rate of entanglement distillation being equal to the entropy of entanglement in the noiseless setting, distillation provides an operational interpretation of quantum entropy, which in this context can also be viewed as a measure of deniability.

In terms of the hierarchy of entanglement-based constructions mentioned in \cite{beaver2002deniability}, this approach mainly constitutes a generalization of such schemes. It should therefore be viewed more as a step towards a theoretical characterization of entanglement-based schemes for achieving information-theoretic deniability.

\subsection{Potential Pitfalls}

If the initial states are prepared and distributed by a malicious party (adversary), one can resort to techniques from device-independent quantum cryptography (DIQC). However, the results for secure key exchange from DIQC would not necessarily immediately translate into the deniability properties desired in our model.

Note that for a concrete instantiation, i.e., a protocol that implements this template, each step could potentially constitute an attack vector. For example, consider step $(3)$: although the verification and its statistical tests for Bell-type violations might pass, this would not immediately rule out similar attacks, for Eve could still demand that the parties reveal their inputs for a given choice of entangled states, which lies within the negligible fraction of noisy states she may have had access to. In terms of security and minimum entropy required for key material, no vital properties get violated, however when it comes to deniability, the security model is much stronger given that a single state allowing Eve to detect attempts at denial would suffice for defeating the security property.

\subsection{Deniability beyond QKE}

Going beyond QKE, note that quantum teleportation allows the transfer of an \emph{unknown} quantum state, meaning that even the sender would be oblivious as to what state is sent.

\paragraph{Anonymous Transfer and Traceless Exchange}

Additionally, ebits can enable uniquely quantum tasks (classically impossible) such as \emph{traceless exchange}, as shown by Christiandl and Wehner \cite{christandl2005quantum} in the context of quantum anonymous transmission (QAT), to achieve \emph{incoercible} protocols that allow parties to deny to any random input. More specifically, it allows a set of players to cast a yes/no vote in such a way that it is fundamentally impossible to trace a vote back to its voter.

\subsubsection{Entanglement Convertibility}

In a work by Nielsen \cite{nielsen1999conditions}, it is shown under what conditions, parties in possession of entanglement can have access to a class of transformations that allow them to transform a pure state of some composite bipartite system into another state using only LOCC techniques. These conditions establish a connection between the linear-algebraic theory of majorization and entanglement. We mention an important theorem derived in Nielsen's work that could potentially constitute a prerequisite for deniability when entanglement transformations would be of relevance, e.g., for equivocating to a different composite system.

Let $\ket{\psi}_{AB}$ denote the state of a bipartite composite system shared between Alice and Bob, and $\rho_{\ket{\psi}}$ Alice's system, where $\rho_{\ket{\psi}} \equiv \mathrm{tr}_B(\ket{\psi}\bra{\psi})$. Moreover, let $\lambda_{\ket{\psi}}$ be the vector of eigenvalues\footnote{Equivalently, Schmidt coefficients in a Schmidt decomposition of $\ket{\psi}$.} of $\rho_{\ket{\psi}}$. Finally, $\ket{\psi} \rightarrow \ket{\phi}$ means that $\ket{\psi}$ may be transformed into $\ket{\phi}$ using local operations and potentially unlimited two-way classical communication.

The theorem expresses a majorization requirement in terms of eigenvalues, stating that $\ket{\psi}$ can be transformed into $\ket{\phi}$ using LOCC if and only if $\lambda_{\psi}$ is majorized by $\lambda_{\phi}$
\[
\ket{\psi} \rightarrow \ket{\phi} \; \text{\emph{iff}} \; \lambda_{\ket{\psi}} \prec \lambda_{\ket{\phi}}.
\]
where $x \prec y$ denotes a majorization relation (partial order) between two real $d$-dimensional vectors $x=(x_1, \ldots, x_d)$ and $y = (y_1, \ldots, y_d)$ if $\forall k \in [1,d]$
\[
\sum_{i=1}^k x_{i}^{\downarrow} \le \sum_{i=1}^k y_{i}^{\downarrow},
\]
where $\downarrow$ indicates that elements are taken in descending order.

This result is of particular importance in the context of deniability. In effect, in entanglement-based schemes, parties that are in possession of shared ebits can transform their joint states into another system depending on various requirements. For instance, shared ebits can be used either in entanglement-assisted schemes, e.g. in quantum teleportation, or directly as carriers, meaning that they can be used directly for distilling shared randomness.

Note that the ability to change entangled systems to an arbitrary state can be done either prior to performing measurements (i.e. actually carrying out a transformation) or once measurements have been made in the sense that the parties can claim to have made a particular transformation. Again, the reason the latter works is that the adversary's local view of such joint systems is a maximally mixed state, thus allowing the parties to equivocate to another system.

\section{Quantum Bit Commitment and Deniability}

In a rather unexpected claim\footnote{We are not aware of any papers neither conjecturing, nor explicitly mentioning such a relation even in vague terms.} stating a folklore theorem in the quantum cryptography community regarding a relation between the impossibility of unconditionally secure quantum bit commitment by Mayers \cite{mayers1997unconditionally} and deniability, Beaver \cite{beaver2002deniability} argues why the impossibility result of Mayers for quantum bit commitment \cite{mayers1997unconditionally} does not simply carry over to QKE. Roughly speaking, the underlying belief that quantum protocols are deniable is claimed to have been due to the impossibility result for quantum bit commitment, i.e. if one is fundamentally unable to commit to a value, it is natural to expect deniability.

The belief that was claimed to have been shared by the community is that since committing to a bit is impossible in the quantum regime, then it stands to reason that deniability should follow naturally. This belief is then extended to coercer-deniable QKE and the author mentions some limitations, centered around three main criteria, to argue why the impossibility result for QBC does not carry over to deniability in QKE. While to the best of our knowledge, there are no studies or results to that effect in the printed literature, we briefly revisit this claim and address these limitations in light of our results based on covert communication and entanglement distillation.

\subsection{The Impossibility of Unconditional Quantum Bit Commitment}

The general structure of the formalism for QBC consists of three quantum systems modeled by their corresponding Hilbert spaces: $H_A$ for Alice, $H_B$ for Bob, and $H_{env}$ for the communication channel or the environment. The entire system is in a pure state and expressed in the tensor product space $H_A \otimes H_B \otimes H_{env}$. Thus, in the decoherence model, the two-party system $H_A \otimes H_B$ is a mixed state, the reduced state of $H_A \otimes H_B$ entangled with the environment $H_{env}$.

The idea for the impossibility of unconditional QBC is that given a commitment $\ket{\psi_b}$ by Alice, the entire system is in one of two states: $\ket{\psi_1}$ or $\ket{\psi_0}$ and the security of the QBC scheme demands that Bob knows nothing about $b$, i.e., the states $\ket{\psi_1}$ and $\ket{\psi_0}$ must look identical and be indistinguishable from each other (maximally mixed states). Mathematically, this means that the partial trace over $H_A$ of the global state, which is the view of Bob, remains unchanged, irrespective of the bit Alice chose during the commit phase:
\[
\mathrm{tr}_{H_A} (\ket{\psi_0}\bra{\psi_0}) = \mathrm{tr}_{H_A} (\ket{\psi_1}\bra{\psi_1}).
\]
This implies that $\ket{\psi_0}$ and $\ket{\psi_1}$ are purifications of the same state and hence have the same Schmidt Decomposition, meaning that Alice can move from $\ket{\psi_0}$ to $\ket{\psi_1}$ by using local operations. She prepares $\ket{\psi_0}$ as if she were committing to the bit 0, and upon opening the commitment, she does nothing if she wishes to open to 0, otherwise she applies a basis change operator on $H_A$, leading Bob to obtain $\ket{\psi_1}$ upon measurement.

The main idea is to ensure that Eve's view remains identical, when Alice and/or Bob try to deny (global state $\rho$), compared to when they behave honestly (global state $\sigma$). Mayers' equivocation transformation relies on the fact that the global state is pure, hence a Schmidt decomposition exists allowing Alice to perform local operations to change the bit she committed to. In the three-party case, where we want to achieve deniability there are two requirements: Eve's view should be indistinguishable in the honest and denial modes
\[
\mathrm{tr}_{A}(\mathrm{tr}_{B}(\rho_{ABE})) = \mathrm{tr}_{A}(\mathrm{tr}_{B}(\sigma_{ABE})),
\]
where $\rho$ and $\sigma$ are the global states when denying and being honest respectively, and Alice and Bob should be able to perform local operations to take $\rho_{ABE}$ to $\sigma_{ABE}$.

\subsection{Revisiting Limitations for Relating QBC to Deniability}

The three limitations mentioned in \cite{beaver2002deniability} are those of colocation, quantification and measurement, which we briefly recall below.
\begin{itemize}
	\item $(i)$ colocation: assuming the existence of a unitary $U_{S,R}$ where $(S,R)$ are considered to be the committer, the factorization of $U_{S,R}$ into local transforms $U_S$ and $U_R$ is not guaranteed. Moreover, colocation also characterizes the difficulty for the sender and receiver to dynamically agree on a joint strategy for performing a particular transformation $U_{deny}$, i.e., being able to consider the sender and receiver as a single entity.
	\item $(ii)$ quantification: the existence of a transformation for changing a commitment of $\ket{1}$ to $\ket{0}$ depends on having access to the adversary's program and knowledge on the joint system.
	\item $(iii)$ measurement: as opposed to the setting in QBC, delaying measurements is not typically expected in QKE, instead they need to take place at specific junctures of the protocol by both parties.
\end{itemize}

Limitations $(i)$ and $(ii)$ prevent us from being able to consider the global system as a pure state. This problem disappears after performing entanglement distillation as it will have decoupled Eve's system from the joint system of Alice and Bob, thus removing Eve's knowledge by denoising the originally partially entangled EPR pairs and also rendering Eve's program irrelevant. Therefore, the existence of an equivocation transformation similar to the one described by Mayers would in principle become possible.

Moreover, once we can consider the parties to be in possession of maximally entangled EPR pairs, results related to the convertibility of entanglement in the LOCC paradigm \cite{nielsen1999conditions} become applicable. Finally, regarding the problem of measurement, having access to persistent quantum memory along with using a one-shot protocol such as QKE via uncloneable encryption, one can lift the requirement for forced measurements. This relaxation will also enable performing various unitaries on the joint system.

In the attack described in Section \ref{subsec:state-injection-attack}, Eve sees an inconsistency in the transcript provided by Alice, i.e., her view $\rho_{E}$ is not the same as $\sigma_{E}$. However, in DC-QKE, Eve's view is indistinguishable as long as the covert channel is not broken. Similarly, in the entanglement distillation based approach, after the distillation procedure we know that the view of Eve is maximally mixed, allowing Alice to open to any random record.

\stopminichaptoc{\minichaptocenabled}

\chapter[Coercion and Quantum Resistant Voting via Fully Homomorphic Encryption]{Coercion-Resistant and Quantum-Secure Voting in Linear Time via Fully Homomorphic Encryption}\label{chp:cr-and-qsafe-voting}

\printminichaptoc{\minichaptocenabled}

\section{Introduction}

Over the past few decades, we have witnessed significant advances in cryptographic voting protocols. Yet, despite all the progress, see e.g., \cite{adida2008helios}, secure e-voting is still faced with a plethora of challenges and open questions, which largely arise as a result of the interplay between intricate properties such as vote privacy, individual and universal verifiability, receipt-freeness, and a notoriously difficult requirement, namely that of coercion-resistance.

Coercion-resistance can be viewed as a stronger form of privacy that should hold even against an adversary who may instruct honest parties to carry out certain computations while potentially even requiring that they reveal secrets in order to verify their behavior and ensure compliance. This property is typically enforced by providing honest parties with a mechanism that allows them to either deceive the coercer or to deny having performed a particular action. Due to limited space, we do not elaborate on the long series of works in this area and instead refer the reader to \cite{juels2005coercion,delaune2006coercion,kusters2012game,cortier2016sok}
and references therein for more details.

Since the breakthrough work of Gentry \cite{gentry2009fully} on fully homomorphic encryption (FHE), there has been a surge of interest in this line of research that remains very active to this day, with a series of recent advances including, but not limited to, a homomorphic evaluation of AES \cite{gentry2012homomorphic}.

Although the use of additively or multiplicatively homomorphic cryptosystems is common place in the e-voting literature, the relevance of FHE for potentially quantum-safe secure e-voting, with better voter verifiability, was only recently discussed by Gj\o steen and Strand \cite{gjosteen2017roadmap}. In our work, instead of designing an FHE-based protocol from scratch, we apply the machinery of FHE to a well-known, classical voting scheme, in order to improve its time complexity and to replace its reliance on the hardness assumption of solving the discrete logarithm problem with a quantum-resistant solution, namely lattice-based cryptography.

So far, no efficient quantum algorithms capable of breaking lattice-based FHE schemes have been discovered. Although provably quantum-secure solutions such as quantum key exchange do exist, here a quantum-secure construction merely captures the fact that efficient attacks based on quantum algorithms are yet to be discovered, which constitutes the central idea in post-quantum cryptography.

Although constructions with varying degrees of coercion-resistance do exist, the voting protocol introduced by Juels, Catalano, and Jakobsson \cite{juels2005coercion}, often referred to as the \emph{JCJ protocol}, is among the most well-known solutions in the context of coercion-resistant voting schemes. JCJ provides a reasonable level of coercion-resistance using a voter credential faking mechanism, and it was arguably the first proposal with a formal definition of coercion-resistance. However, JCJ suffers from a complexity problem due to the weeding steps in its tallying phase, which are required for eliminating invalid votes and duplicates. The pairwise comparison-based approach of JCJ using plaintext equivalence tests (PET) \cite{jakobsson2000mix} leads to a quadratic blow-up in the number of votes, which makes the tallying process rather impractical in realistic settings with a large number of voters or in the face of ballot-box stuffing. For instance, in the Civitas voting system \cite{DBLP:conf/sp/ClarksonCM08} based on JCJ, voters are grouped into blocks or virtual precincts to reduce the tallying time.

\subsection{Contributions and Structure}

Here we propose an enhancement of the JCJ protocol aimed at performing its tallying work in linear time, based on an approach that incorporates primitives from the realm of fully homomorphic encryption (FHE), which also paves the path towards making JCJ quantum-safe.

In Sect. \ref{sec:jcj}, we describe the JCJ protocol and cover some related work. Next, in Sect. \ref{sec:jcj-with-fhe}, we show how the weeding of ``bad'' votes can be done in linear time, with minimal change to JCJ, via an approach based on recent advances in various FHE primitives such as hashing, zero-knowledge (ZK) proofs of correct decryption, verifiable shuffles and threshold FHE. We also touch upon some of the advantages and remaining challenges of such an approach in Sect. \ref{subsec:open-questions} and in Sect. \ref{sec:security-considerations}, we discuss further security and post-quantum considerations.

\section{The JCJ Model and Voting Protocol in a Nutshell}\label{sec:jcj}

We first review the building blocks used in JCJ and then proceed to describing the protocol itself. Throughout, unless otherwise specified, it is assumed that the computations of the talliers and registrars are done in a joint, distributed threshold manner. We use $\in_U$ to denote an element that is sampled uniformly at random.

\subsection{Cryptographic Building Blocks}

JCJ relies on a modified version of ElGamal, a threshold public-key cryptosystem with re-encryption, secure under the hardness assumption of the Decisional Diffie-Hellman (DDH) problem in a multiplicative cyclic group $\mc{G}$ of order $q$. A ciphertext on message $m \in \mc{G}$ has the form $(\alpha, \beta, \gamma) = (mh^r, g_1^r, g_2^r)$ for $r \in_U \mathbb{Z}_q$, with $(g_1, g_2, h)$ being the public key where $g_1,g_2,h \in \mc{G}$, and the secret key consists of $x_1,x_2 \in \mathbb{Z}_q$ such that $h=g_1^{x_1}g_2^{x_2}$. The construction allows easy sharing of the secret key in a threshold way. We denote the probabilistic encryption of $m$ under key $k$ by $E_k(m)$ and omit the randomness for brevity.

The weeding steps make use of a plaintext equivalence test (PET), which is carried out by the secret key holders and takes as input two ciphertexts and outputs 1 if the underlying plaintexts are equal, and 0 otherwise. The PET produces publicly verifiable evidence with negligible information leakage about plaintexts.

Finally, JCJ uses non-interactive zero-knowledge (NIZK) proofs and mix-nets, which are aimed at randomly and secretly permuting and re-encrypting input ciphertexts such that output ciphertexts cannot be traced back to their corresponding ciphertexts. Throughout, it is assumed that the computations of the talliers and registrars are done in a joint, distributed threshold manner. We use $\in_U$ to denote an element that is sampled uniformly at random.

\paragraph{\textbf{Agents.}}

JCJ mainly consists of three sets of agents, described as follows.
\begin{enumerate}
	\item \textbf{Registrars}: A set $\mathcal{R} = \{R_1, R_2, \ldots, R_{n_R}\}$ of $n_R$ entities in charge of jointly generating and distributing credentials to voters.

	\item \textbf{Talliers}: A set $\mathcal{T} = \{T_1, T_2, \ldots, T_{n_T} \}$ of \emph{authorities} in charge of processing ballots, jointly counting the votes and publishing the final tally.

	\item \textbf{Voters}: A set of $n_V$ voters, $\mathcal{V} = \{V_1, V_2, \ldots, V_{n_V} \}$, participating in an election, where each voter $V_i$ is publicly identified by an index $i$.
\end{enumerate}

\paragraph{\textbf{Bulletin Board and Candidate Slate.}}

A \emph{bulletin board}, denoted by $\bb$, is an abstraction representing a publicly accessible, append-only, but otherwise immutable board, meaning that participants can only add entries to $\bb$ without overwriting or erasing existing items. A \emph{candidate slate}, $\vec{C}$, is an ordered set of $n_C$ distinct identifiers $\{c_1, c_2, \ldots, c_{n_C}\}$ capturing voter choices, which can represent a candidate or a party. A $\emph{tally}$ is defined under slate $\vec{C}$, as a vector $\vec{X} = \{x_1, x_2, \ldots, x_{n_C}\}$ of $n_C$ positive integers, where each $x_j$ indicates the number of votes cast for choice $c_j$.

\paragraph{\textbf{Assumptions for Coercion-Resistance.}}
No threshold set of agents in $\mc{T}$ should be corrupted, otherwise privacy is lost. In the registration phase, it is assumed that the distribution of voter credentials is done over an untappable channel and that no registration transcripts can be obtained, assuming that secure erasure is possible. Cast votes are transmitted via anonymous channels, which is a basic requirement for ruling out forced-abstention attacks.

\subsection{The JCJ Protocol}

\paragraph{\textbf{Setup and Registration.}}
The key pairs $(sk_{\mc{R}}, pk_{\mc{R}})$ and $(sk_{\mc{T}}, pk_{\mc{T}})$ are generated in a trustworthy manner, and the public keys, i.e., $pk_{\mc{T}}$ and $pk_{\mc{R}}$, are published with other public system parameters. The registrars $\mc{R}$ generate and transmit to eligible voter $V_i$ a random string $\sigma_i \in_U \mc{G}$ that serves as the credential of the voter. $\mc{R}$ adds an encryption of $\sigma_i$, $S_i = E_{pk_{\mc{T}}}(\sigma_i)$, to the voter roll $\vec{L}$, which is maintained on the bulletin board $\bb$ and digitally signed by $\mc{R}$.

\paragraph{\textbf{Voting.}} An integrity-protected candidate slate $\vec{C}$ containing the names and unique identifiers in $\mathcal{G}$ for $n_C$ candidates, along with a unique, random election identifier $\epsilon$ are published by the authorities. Voter $V_i$ generates a ballot in the form of a variant of ElGamal ciphertexts $(E_1, E_2)$, for candidate choice $c_j$ and voter credential $\sigma_i$, respectively, e.g., for $a_1, a_2 \in_U \mathbb{Z}_q$, we have $E_1 = (g_1^{a_1}, g_2^{a_1}, c_j h^{a_1})$ and $E_2 = (g_1^{a_2}, g_2^{a_2}, \sigma_i h^{a_2})$. $V_i$ computes NIZK proofs of knowledge and correctness of $\sigma_i$ and $c_j \in \vec{C}$, collectively denoted by $P_f$. These ensure non-malleability of ballots, also across elections by including $\epsilon$ in the hash of the Fiat-Shamir heuristic.

\paragraph{\textbf{Tallying.}} In order to compute the tally, duplicate votes and those with invalid credentials will have to be removed. The complexity problem crops up in steps \colorbox{lightgray}{2} and \colorbox{lightgray}{4} such that given $n$ votes, the tallying work has a time complexity of $\bigO{n^2}$. To tally the ballots posted to $\bb$, the authority $\mc{T}$ performs the following steps:
\begin{enumerate}
\item $\mc{T}$ verifies all proofs on $\bb$ and discards any ballots with invalid proofs.
Let $\vec{A_1}$ and $\vec{B_1}$ denote the list of remaining $E_1$ candidate choice ciphertexts, and $E_2$ credential ciphertexts, respectively.
\item $\mc{T}$ performs pairwise PETs on all ciphertexts in $\vec{B_1}$ and removes duplicates according to some fixed criterion such as the order of postings to $\bb$. For every element removed from $\vec{B_1}$, the corresponding element with the same index is also removed from $\vec{A_1}$, resulting in the ``weeded'' vectors $\vec{B'_1}$ and $\vec{A'_1}$.
\item $\mc{T}$ applies a mix-net to $\vec{A'_1}$ and $\vec{B'_1}$ using the same, secret permutation, resulting in the lists of ciphertexts $\vec{A_2}$ and $\vec{B_2}$.
\item $\mc{T}$ applies a mix-net to the encrypted list $\vec{L}$ of credentials from the voter roll and then compares each ciphertext of $\vec{B_2}$ to the ciphertexts of $\vec{L}$ using a PET. $\mathcal{T}$ keeps a vector $\vec{A_3}$ of all ciphertexts of $\vec{A_2}$ for which the corresponding elements of $\vec{B_2}$ match an element of $\vec{L}$, thus achieving the weeding of ballots with invalid voter credentials.
\item $\mc{T}$ decrypts all ciphertexts in $\mathbf{A_3}$ and tallies the final result.
\end{enumerate}

\paragraph{Properties.}

Vote \textbf{privacy} is maintained as long as neither a threshold set of talliers nor all the mixing servers are corrupted. A colluding majority of talliers can obviously decrypt everything and colluding mixing authorities could trace votes back to $\vec{L}$. Regarding \textbf{correctness}, voters can refer to $\bb$ to verify that their vote has been recorded as intended and that the tally is computed correctly. Similar attacks become possible in case of collusion by a majority of authorities. As for \textbf{verifiability}, anyone can refer to $\bb$, $P_f$ and $\vec{L}$ to verify the correctness of the tally produced by $\mc{T}$.
The \textbf{coercion-resistance} provided by JCJ essentially boils down to keeping voter credentials hidden throughout the election. A coerced voter can choose a random fake credential $\sigma'$ to cast a fake vote and present it as their real vote. Any vote cast with the fake credential will not be counted, and the voter can anonymously cast their real vote using their real credential.

\subsection{Related Work}\label{subsec:related-work}

We focus on the most closely-related works on improving the efficiency problem of the tallying work in JCJ. Smith \cite{smith2005new} and Weber et al. \cite{weber2007coercion,weber2008coercion} follow a similar approach in that they do away with comparisons using PETs, and instead, they raise the credentials to a jointly $\mc{T}$-shared secret value and store these blinded terms in a hash table such that collisions can be found in linear time.
The use of a single exponent means that a coercer can test if the voter has provided them with a fake or a real credential by submitting a ballot with the given credential and another with the credential raised to a known random value.

In \cite{araujo2008practical,araujo2010towards}, Araujo et al. move away from comparing entries in $\vec{L}$ with terms in the cast ballots to a setting in which duplicates are publicly identifiable and a majority of talliers use their private keys to identify legitimate votes, and in \cite{DBLP:conf/fc/AraujoBBT16} the authors use algebraic MACs.

Spycher et al. \cite{spycher2011new} use the same solution proposed by Smith and Weber to remove duplicates and apply targeted PETs only to terms in $\vec{L}$ and $\vec{A}$, identified via additional information provided by voters linking their vote to the right entry in $\vec{L}$.

In \cite{grontas2018towards}, publicly auditable conditional blind signatures are used to achieve coercion-resistance in linear time using a FOO-like \cite{fujioka1992practical} architecture, the downsides being the need for extra authorization requests for participation privacy and a double use of anonymous channels.

\section{JCJ in Linear Time via Fully Homomorphic Encryption}\label{sec:jcj-with-fhe}

Our proposal revolves around replacing the original cryptosystem of JCJ with a fully homomorphic one, thus allowing us to preserve the original design of JCJ. The main idea is to homomorphically evaluate hashes of the underlying plaintext of the FHE-encrypted voter credentials, perform FHE-decryption and post the hash values of the credentials to the bulletin board $\bb$. Now the elimination of invalid and duplicate entries can be done in linear time by using a hash table.

\subsection{FHE Primitives}

We provided a very brief introduction to fully homomorphic encryption in Section \ref{sec:fhe-primer}. Here we only enumerate the cryptographic primitives that will be required for the enhancement suggested below. We refer the reader to the cited sources throughout this chapter for further details.

Recall from Section \ref{sec:fhe-primer} that we use $\mc{E}_{pk}(m)$ to denote an FHE-encryption of a message $m \in \{0,1\}^n$ under the public key $pk$. For the purpose of understanding the enhancement suggested further below, it is important to remember that for $b_0, b_1 \in \{0,1\}$, given $\mc{E}_{pk}(b_0)$ and $\mc{E}_{pk}(b_1)$, FHE allows us to compute $\mc{E}_{pk}(b_0 \oplus b_1)$ and $\mc{E}_{pk}(b_0 \cdot b_1)$ by working over ciphertexts alone, without having access to the secret key, thus enabling the homomorphic evaluation of any boolean circuit, i.e., computing $\mc{E}_{pk}(f(m))$ from $\mc{E}_{pk}(m)$ for any computable function $f$.

In addition to using an FHE cryptosystem, see e.g., \cite{gentry2009fully,brakerski2014leveled}, we make use of the following FHE primitives, all of which are based on very recent advances. \colorbox{lightgray}{1} Fully homomorphic hashing: Fiore et al. \cite{fiore2014efficiently} introduce a family of universal one-way homomorphic hash functions, along with a one time use collision resistant homomorphic hash. \colorbox{lightgray}{2} Carr et al. \cite{carr2018zero} address the question of providing zero-knowledge proofs of correct decryption for FHE ciphertexts. \colorbox{lightgray}{3} Strand \cite{strand2018fhemixnet} tackles FHE mix-nets by proposing the first verifiable shuffle for FHE schemes, in particular for the GSW cryptosystem of Gentry, Sahai and Water \cite{gentry2013homomorphic}. \colorbox{lightgray}{4} Boneh et al. \cite{boneh2018threshold} provide a construction for a threshold FHE scheme based on the learning with errors (LWE) problem introduced by Regev \cite{regev2009lattices}. See Section \ref{subsec:open-questions} for more details on open questions and the state-of-the-art as the suggested primitives have indeed appeared only within the past year.

\subsection{Enhancing JCJ: FHE and Weeding in Linear Time}

We now describe how FHE primitives can be incorporated into JCJ while inducing minimal change in the original protocol. We assume threshold FHE throughout.

\paragraph{\textbf{Setup and Registration.}}

The setup and registration phases remain unchanged w.r.t. JCJ, except that $\mc{R}$ now adds an FHE-encryption of $\sigma_i$, $S_i = \mc{E}_{pk_{\mc{T}}}(\sigma_i)$, to the voter roll $\vec{L}$. We adopt the same assumptions mentioned earlier in Sect. \ref{sec:jcj}.

\paragraph{\textbf{Voting.}}

Instead of using ElGamal encryption, the credentials posted on the $\bb$ are encrypted under some FHE scheme, say BGV \cite{brakerski2014leveled}, with a key pair $(pk,sk)$. Each voter $V_i$ adds $\mc{E}_{pk_{\mc{T}}}(\sigma_i)$, along with the required NIZK proofs, to $\bb$.

\paragraph{\textbf{Tallying.}}

The tallying phase remains largely the same except that for removing duplicates and invalid votes, we leverage our use of FHE to carry out simple equality tests between hash digests of credentials. Since the concealed credentials are now stored in FHE ciphertexts, we can process them using an FHE hashing circuit. More precisely, for a jointly created $\mc{T}$-shared key $k$, published under encryption $\mc{E}_{pk}(k)$, the credentials $\sigma_i$ contained in the FHE-encrypted terms $\mc{E}_{pk}(\sigma_i)$ are homomorphically hashed (see \cite{fiore2014efficiently} by Fiore, Gennaro and Pastro and \cite{catalano2014authenticating} by Catalano et al.), under key $k$ resulting in $\mc{E}_{pk}(h_k(\sigma_i))$, such that upon decryption we obtain $h_k(\sigma_i)$.
A ZK proof of correct decryption is also posted to $\bb$ for verifiability, see \cite{carr2018zero} by Carr et al. for an approach to this. Once the hash values of the credentials are posted on the $\bb$, the weeding of duplicates can be done in $\bigO{n}$ using a simple hash table look-up, i.e., iterate, hash and check for collision in constant time, thus an overall linear-time complexity in the number of votes. Next, the registered credentials and the submitted vote/credential pairs are mixed \cite{strand2018fhemixnet} and the homomorphic hashing procedure is carried out again using a new secret key on all credential ciphertexts. Comparing the hashed registered credentials with those from the cast ballots allows us to remove invalid votes in $\bigO{n}$. Finally, the remaining valid votes are verifiably decrypted.

\subsection{Advantages, Potential Pitfalls and Open Questions}\label{subsec:open-questions}

Apart from the linear-time weeding algorithm, as already pointed out by Gj\o steen and Strand in \cite{gjosteen2017roadmap}, in addition to being a novel application of FHE to secure e-voting, obtaining better voter verifiability and a scheme believed to be quantum-resistant are among the noteworthy benefits of such an approach.

Clearly, in terms of real world FHE implementations, the state-of-the-art still suffers from efficiency issues. However, some significant progress has already been made in this area, e.g., the homomorphic evaluation of AES \cite{gentry2012homomorphic} or block ciphers designed for homomorphic evaluation \cite{DBLP:conf/eurocrypt/AlbrechtR0TZ15}. Moreover, it should be pointed out that some of the needed primitives, e.g., turning ZK proofs of correct decryption for FHE \cite{carr2018zero,luo2018verifiable} into NIZK proofs, are still not satisfactory and remain the subject of ongoing research and future improvements.

\section{Further Security Remarks}\label{sec:security-considerations}

A security analysis aimed at providing proofs of security for various properties such as correctness, verifiability and coercion-resistance will remain future work. One possibility would be to investigate whether the required security properties in our enhanced variant of JCJ hold against classical adversaries, under the same oracle access assumptions for mixing, PETs, threshold decryption and hashing. Post-quantum security will have to be proved in the quantum random oracle model.

\subsection{Eligibility Verifiability}

Assuming a majority of colluding authorities, apart from a compromise of vote privacy, another, perhaps more damaging problem with JCJ and its improved variants is that of \emph{eligibility verifiability}. A colluding majority would be able to retrieve voter credentials and submit valid votes for non-participating voters, i.e., perform ballot stuffing.

A solution in \cite{RoenneJCJ16} suggests performing the registration phase in such a way that only the voter would know the discrete logarithm of their credential. Votes are then cast with an anonymous signature in the form of a ZK proof of knowledge of the discrete logarithm of the encrypted credential, thus preventing ballot stuffing. A similar approach could be used here, with the potential downside of having inefficient proofs and a discrete logarithm hardness assumption, thus not being quantum secure.

\subsection{Post-Quantum Considerations}

For a relaxation of the trustworthiness assumption of $\mc{R}$, without assuming secure erasure, quantum-resistant designated verifier proofs \cite{sun2012toward,jao2014isogeny} could replace the classical ones suggested in the original JCJ \cite{juels2005coercion}.

To obtain post-quantum security for eligibility verifiability, future research will investigate the use of a quantum-resistant signature scheme that can be evaluated under FHE to preserve ballot anonymity. As a naive, but illustrative example that is one-time only and non-distributive, consider that the voter creates their credential as $\sigma_i = h(x)$, and that only the voter knows the preimage $x$. The voter now submits both $\mc{E}_{pk}(x)$ and $\mc{E}_{pk}(\sigma_i)$ to $\bb$. Before weeding, the hash is homomorphically evaluated on the ciphertext of the preimage, i.e., $\mc{E}_{pk}(h(x))$, followed by an equality test against the ciphertext of the credential $\mc{E}_{pk}(\sigma_i)$. A malicious authority can now cast only a valid ballot with a registered credential after the corresponding voter has cast a ballot, and an attempt to vote on their behalf is detectable in the weeding phase.

\stopminichaptoc{\minichaptocenabled}

\chapter{Concluding Remarks and Open Questions}\label{chp:conclusions}

\printminichaptoc{\minichaptocenabled}

We now conclude by stating some open questions, along with suggestions for approaching them in a systematic way.

\section{Entropy Extremizing Outputs in Deletion Channels}

\subsection{Estimating Expected Conditional Entropy}

In terms of estimating the expected leakage, as discussed in Subsection \ref{subsec:estimating-expected-leakage} of Chapter \ref{chp:information-leakage}, further developments in a characterization of the number of distinct subsequences can enable a more fine-grained estimation of the expected leakage.

\subsection{Finite Length Analysis}

In our finite length analysis presented in Chapter \ref{sec:chp-finite-length-deletion-conclusion}, we proved the entropy minimization case for single and double deletions. However, the other end of the spectrum, namely that of entropy maximization still remains unresolved. Moreover, it is not clear whether or not the techniques used in that approach lend themselves to a natural generalization. In fact, in its current form, since we are relying on an explicit clustering and enumeration of supersequences and their corresponding embedding weights, a simplistic extension of the same approach to higher or an arbitrary number of deletions does not seem viable. Yet, it may be possible to characterize the way the weights shift across the space of supersequences in such a way that would allow for a simpler generalization.

\subsection{Asymptotic Analysis}

In stark contrast, our approach in Chapter \ref{chp:hws} for resolving the entropy minimization case in the asymptotic limit, using methods from hidden word statistics, is completely different from the one used in our finite length analysis. We derived more general results for the case of minimal entropy for fixed output length $m$ and large input length $(n \rightarrow \infty)$. Deriving more precise results, for example for dealing with the case of the input length $n \rightarrow \infty$ and $n \sim m^2$ would be a natural continuation of this approach.

Such a solution, for the case where we let $m$ grow w.r.t $n$, would require a result similar to the original results for hidden word statistics, namely showing that the higher moments of the distribution converge to the corresponding moments of the normal distribution. Secondly, the rate of convergence would have to be accounted for, and we would have to obtain an estimate for the variance, showing that it is extremized by the intended strings when even $m$ is allowed to grow. Finally, one would need a calculation analogous to that in the proof of Theorem \ref{theorem:fixed-k}, showing that the errors are small even when terms depending on $m$ are tracked.

Clearly, in both cases, a proof for the minimization of the autocorrelation coefficient by the alternating strings remains open.

\section{Deniability in Quantum Cryptography}

\subsection{Deniability and Forward Secrecy in QKE}

Studying the deniability of public-key authenticated QKE both in our model and in the simulation paradigm, and the existence of an equivalence relation between our indistinguishability-based definition and a simulation-based one would be a natural continuation of this work.

Other lines of inquiry include forward deniability, deniable QKE in conjunction with forward secrecy, deniability using covert communication in stronger adversarial models, a further analysis of the relation between the impossibility of unconditional quantum bit commitment and deniability mentioned in \cite{beaver2002deniability}, and deniable QKE via uncloneable encryption.

\subsection{Going beyond Key Exchange}

Gaining a better understanding of entanglement distillation w.r.t. potential pitfalls in various adversarial settings and proposing concrete deniable protocols for QKE and other tasks beyond key exchange represent further research avenues.

\subsection{Coercion-Resistance in Quantum E-Voting}

In Section \ref{sec:quantum-e-voting} we briefly discussed the state-of-the-art in quantum voting protocols. A comprehensive overview of the classical and quantum literature would be a necessary step towards a systematization of knowledge in the general area of quantum e-voting protocols. This would consist of exploring and documenting the most relevant solutions that have been developed in the classical literature, covering feasibility and impossibility results, along with a formal classification of known solutions in terms of their requirements, functionalities and efficiency.

Although a considerable amount of definitional work has been done for classical deniability, its quantum counterpart suffers from a lack of rigorous formulations and definitions for various threat models, under varying computational and adversarial assumptions. The work presented in this thesis presents a first step towards developing rigorous definitions, paving the path for further formal definitions that capture the subtleties of quantum protocols. To capture the hybrid nature of QKE protocols, which is due to the juxtaposition of quantum and classical primitives, one would also have to rely on the body of knowledge that has been developed for classical solutions. This approach would make it possible to adopt a methodology that builds on existing and well-tested theories, which will form the foundation of further definitional work.

Finally, one can envisage a systematic analysis of quantum crypto primitives specifically aimed at identifying and classifying functionalities that are known to be possible thanks to uniquely quantum properties and as a result, known to be impossible to achieve using classical solutions. Such a classification of knowledge would group these quantum primitives in terms of their requirements, costs, computational and adversarial assumptions, functionalities and known use cases. This would make it less likely to reinvent primitives that may have already been discovered and it would also make it possible to systematically make use of known solutions in the design and analysis of deniability for QKE and more generally, for deniable quantum communication.

\subsection{FHE Primitives}

We introduced a classical coercion-resistant voting scheme, based on fully homomorphic encryption, in Chapter \ref{chp:cr-and-qsafe-voting} that is conjectured to be quantum-resistant. Our proposal makes use of FHE primitives that present a number of interesting open questions that currently represent active areas of research. These include, among other things, non-interactive ZK proofs of correct decryption for FHE ciphertexts, FHE mix-nets, threshold FHE and FHE hashing techniques. Finally, providing a proof of security for our scheme in the quantum random oracle model would be another natural follow-up work.

\stopminichaptoc{\minichaptocenabled}

\bibliographystyle{unsrt}
\bibliography{references.bib}

\appendix
\chapter{Appendix}

\section{Proof of Lemma \ref{TH:6.3}}\label{app:deletions}
\begin{proof}
The proof consists of two steps: first we show that $A-B > 0$ for all $k_1 \geq 1$ when $k_2 = \cdots = k_\ell = 1$; then we show that $\nabla(A-B)$ is positive along all directions others than the first one, so that an increase in any of the $k_i$ with $i\geq 2$, results in an increase of $A-B$.
We start by simplifying the expression.
To do so, we introduce the function $e(x) = -x \log_2 x$. We also use the fact that $e(xy) = xe(y)+ye(x)$, and develop the binomial coefficients: $e\left(\binom{a+b}{2} \right) = \binom{a+b}{2} + e((a+b)(a+b-1))$. Then we match the sum indexes. We also introduce the notation $e_i = e(k_i + 1)$.

Thus we can write:
\begin{align*}
A ={}
& \sum_{2 \leq i \leq l} e((k_1+1)(k_i+1)) +
 e\left(\binom{k_1+2}{2}\right) + e\left(\binom{k_2+2}{2}\right)\\
& +e(k_1+1) \sum_{1 \leq i \leq l}\widetilde{k_i}+e(k_1+k_2+1) \\
& + e(k_2+1) \sum_{1 \leq i \leq l}(\widetilde{k_i}-1)\\
& + e(k_2 + k_3+1) \\
={}
& \sum_{3 \leq i \leq \ell-1} e((k_1+1)(k_i+1)) + e((k_1+1)(k_2+1)) + e((k_1+1)(k_\ell+1)) \\
& + \binom{k_1+1}{2} + e((k_1+1)(k_1+2))\\
& + \binom{k_2+1}{2} + e((k_2+1)(k_2+2))\\
& +e(k_1+1) \sum_{3 \leq i \leq \ell-1}\widetilde{k_i}+e(k_1+k_2+1) + e(k_1+1)\widetilde{k_1} + e(k_1+1)\widetilde{k_2} + e(k_1+1)\widetilde{k_\ell}\\
& + e(k_2+1) \sum_{3 \leq i \leq \ell- 1}\widetilde{k_i}- \ell e(k_2+1) + e(k_2+1)\widetilde{k_1} + e(k_2+1)\widetilde{k_2} + e(k_2+1)\widetilde{k_\ell}\\
& + e(k_2 + k_3+1)\\
={}
& (k_1+1)\sum_{3 \leq i \leq \ell-1}e_i + e_1\sum_{3 \leq i \leq \ell-1}(k_i+1)  \\
& + (k_2+1)e_1 + (k_1+1)e_2 + (k_1+1)e_\ell + (k_\ell+1)e_1\\
& + \binom{k_1+1}{2} + (k_1+2)e_1 + (k_1+1)e(k_1+2)\\
& + \binom{k_2+1}{2} + (k_2+2)e_2 + (k_2+1)e(k_2+2) \\
& +e_1 \sum_{3 \leq i \leq \ell-1} k_i  - (\ell-3)e_1 +e(k_1+k_2+1) + e_1 k_1 + e_1 k_2 - e_1 + e_1 k_\ell\\
& + e_2 \sum_{3 \leq i \leq \ell- 1} k_i - (\ell-3)e_2 - \ell e_2 + e_2 k_1 + e_2 k_2 - e_2 + e_2 k_\ell\\
& + e(k_2 + k_3+1)
\end{align*}
At this point we regroup all terms in $e_i$ together:
\begin{align*}
A ={}
& \left( 2k_1 + 2k_2 + 2k_\ell - 3 + 2\sum_{3 \leq i \leq \ell-1} k_i \right)e_1 \\
& + \left( 2k_1 + 2k_2 + k_\ell - 2\ell - 1 + \sum_{3 \leq i \leq \ell- 1} k_i \right)e_2\\
& + (k_1+1)\sum_{3 \leq i \leq \ell-1}e_i \\
& + (k_1+1)e_\ell \\
& + e(k_1+k_2+1) + (k_1+1)e(k_1+2) + (k_2+1)e(k_2+2) + e(k_2 + k_3+1) \\
& + \binom{k_1+1}{2} + \binom{k_2+1}{2}
\end{align*}
We simplify the expression for $B$ in the same fashion:
\begin{align*}
B ={}
& 2e(k_1+k_2+1)\sum_{3 \leq i \leq \ell-1}k_i + k_1 e(k_1+k_2+1) + k_2 e(k_1+k_2+1) \\
& + k_\ell e(k_1+k_2+1) + e(k_1+ k_2 + k_3+1) \\
& +  (\ell - 3)e(k_1+k_2+1) + (k_1+k_2+2) \sum_{3 \leq i \leq \ell} e_i\\
& + \binom{k_1 +k_2+2}{2} + e((k_1+k_2+1)(k_1+k_2+2)) \\
={}
& (k_1+k_2+2) \sum_{3 \leq i \leq \ell-1} e_i  \\
& +\left(2k_1 + 2k_2 + k_\ell + \ell - 1 + 2\sum_{3 \leq i \leq \ell-1}k_i \right)e(k_1+k_2+1)\\
& + e(k_1+ k_2 + k_3+1) +  (k_1+k_2+1)e(k_1+k_2+2) + (k_1+k_2+2)e_\ell\\
& + \binom{k_1 +k_2+2}{2} \\
\end{align*}
so that we can now compute the difference:
\begin{align*}
A - B
={}
& \left( - 3 + 2\sum_{1 \leq i \leq \ell} k_i \right)e_1 \\
& + \left( k_1 + k_2 - 2\ell - 1 + \sum_{1 \leq i \leq \ell} k_i \right)e_2\\
& - (k_2+1)\sum_{3 \leq i \leq \ell}e_i  \\
& + (k_1+1)e(k_1+2) + (k_2+1)e(k_2+2) \\
& + 1 - k_1 k_2 \\
& -\left(-k_\ell + \ell + 2\sum_{1 \leq i \leq \ell}k_i \right)e(k_1+k_2+1)\\
& - (k_1+k_2-1)e(k_1+k_2+2) - e(k_1+ k_2 + k_3+1)  + e(k_2 + k_3+1) \\
={}
& P(\vec k)e_1 + Q(\vec k)e_2 - (k_2+1)\sum_{i=3}^{\ell} e_i + (k_1+1)e(k_1+2) + (k_2+1)e(k_2+2) \\
& + 1 - k_1k_2 - R(\vec k)e(k_1+k_2+1) - (k_1+k_2-1)e(k_1+k_2+2) \\
& - e(k_1+ k_2 + k_3+1) + e(k_2 + k_3+1).
\end{align*}
Where
\begin{align*}
P(\vec k) & = - 3 + 2\sum_{1 \leq i \leq \ell} k_i,\\
Q(\vec k) & =  k_1 + k_2 - 2\ell - 1 + \sum_{1 \leq i \leq \ell}k_i \\
R(\vec k) & = -k_\ell + \ell + 2\sum_{1 \leq i \leq \ell}k_i .
\end{align*}
We now compute $A-B$ where $k_i=1$ for $i \geq 2$ and show that it is positive. We get:
\begin{align*}
& (3 \ell - 3 + 2k_1)(k_1+2) \log_2 (k_1+2) + (k_1+1)(k_1+3)\log_2 (k_1+3) \\
& + 4(\ell - 2)(\log_2 2) +1 - [ (2k_1+2\ell -5)(k_1+1)\log_2 (k_1+1) \\
& + (k_1+1)(k_1+2)\log_2 (k_1+2) + 2(2k_1 - \ell -1)\log_2 2 + 9 \log_2 3 + k_1] \\
& = (2k_1 + 2 \ell -5)(k_1 + 1) \log (k_1+1) [\log_2 (k_1+2) -\log_2 (k_1+2) ] \\
& + (3 \ell - 3 + 2 k_1 + (\ell + 2)(k_1+2)) \log_2 (k_1+2) \\
& + (k_1+1)(k_1+2)[ \log_2 (k_1+2) - \log_2 (k_1+1) ] + (k_1+1) \log_2 (k_1+3) \\
& + 4(\ell - 2)(\log_2 2) +1 - [ 2(2k_1 - \ell -1)\log_2 2 + 9 \log_2 3 + k_1]
\end{align*}
Since $k_1 \geq 1$ and $\ell \geq 2$ we have $2k_1 \log_2 (k_1+2) + (k_1+1) \log_2 (k_1+3) \geq 2(2k_1 - \ell -1)\log_2 2$, $3 (\ell-1)\log_2 (k_1+2) + 2(k_1+2) \log_2 (k_1+2) \geq 9 \log_2 3$ and $\ell (k_1+2) \log_2 (k_1+2) \geq k_1$. This suffices to conclude that $A-B$ is positive when $k_i=1$ for $i\geq2$.\\
We now compute the partial derivatives for $i\geq2$ and show that they are positive.
The gradient can be computed term by term thanks to linearity, observing that for any polynomial $S(\vec k)$,
\begin{align*}
\partial_i e_i & = -\log_2(k_i+1) - \frac{1}{\ln(2)} \\
\partial_i e_j & = 0 \qquad (i \neq j)\\
\nabla S(\vec k)e_j & = \left(e_j\partial_i S(\vec k)+ S(\vec k)\partial_i e_j \right)_{i=1}^{\ell}
\end{align*}
Hence, by denoting $\vec u_1, \dotsc, \vec u_\ell$ the canonical basis, we have:
\begin{align*}
\nabla P(\vec k)e_1 & =  \left(e_1\partial_i P(\vec k)+ P(\vec k)\partial_i e_1 \right)_{i=1}^{\ell} = \partial_1 e_1 P(\vec k) \vec u_1 + e_1(\partial_i P(\vec k))_{i=1}^{\ell} \\
& = \partial_1 e_1 P(\vec k) \vec u_1 + 2(\vec u_1 + \cdots + \vec u_\ell) \\
& = (2 + \partial_1 e_1 P(\vec k))\vec u_1 + 2\vec u_2 + \cdots + 2\vec u_\ell \\
\nabla Q(\vec k)e_2 & = \left(e_2\partial_i Q(\vec k)+ Q(\vec k)\partial_i e_2 \right)_{i=1}^{\ell} = \partial_2 e_2 Q(\vec k)\vec u_2 + (\partial_i S(\vec k))_{i=1}^{\ell} \\
& = \partial_2 e_2 Q(\vec k)\vec u_2 + \vec u_1 + \vec u_2 + \vec u_1 + \cdots + \vec u_\ell \\
& = 2\vec u_1 + (2 + \partial_2 e_2 Q(\vec k))\vec u_2 + \vec u_3 + \cdots + \vec u_\ell \\
-\nabla \left((k_2+1)\sum_{i=3}^\ell e_i \right) & =
-(k_2+1)\nabla\sum_{i=3}^\ell e_i -(\nabla(k_2+1))\sum_{i=3}^\ell e_i \\
& = -((k_2+1)\partial_i e_i \vec u_i)_{i=3}^{\ell}-\left(\sum_{i=3}^{\ell}e_i\right)\vec u_2 \\
\nabla \left( (k_j+1)e(k_j+2)\right) & = -\left( \log_2(k_j+2) + \frac{1}{\ln(2)}\frac{k_j+1}{k_j+2}\right)\vec u_j \\
\nabla(1 - k_1k_2) & = -k_2\vec u_1 -k_1 \vec u_2 \\
- \nabla \left( R(\vec k)e(k_1+k_2+1) \right)
& = -R(\vec k) \nabla e(k_1+k_2+1) - e(k_1+k_2+1)\nabla R(\vec k) \\
& = R(\vec k) \left(\log_2(k_1+k_2+1) + \frac{1}{\ln(2)} \right)(\vec u_1 + \vec u_2) \\
& \qquad - e(k_1+k_2+1)(\partial_i R(\vec k))_{i=1}^{\ell} \\
& = R(\vec k) \left(\log_2(k_1+k_2+1) + \frac{1}{\ln(2)} \right)(\vec u_1 + \vec u_2) \\
& \qquad - e(k_1+k_2+1)(2\vec u_1 + \cdots + 2\vec u_{\ell-1} + \vec u_\ell) \\
- \nabla (k_1+k_2-1)e(k_1+k_2+2) & = -(k_1+k_2-1)\nabla e(k_1+k_2+2) - e(k_1+k_2+2) \nabla (k_1+k_2-1) \\
& = (k_1+k_2-1)\left( \log_2(k_1+k_2+2) + \frac{1}{\ln(2)} \right)(\vec u_1 + \vec u_2)  \\
& \qquad - e(k_1+k_2+2)(\vec u_1 + \vec u_2) \\
& = \left((k_1+k_2-1)\left( \log_2(k_1+k_2+2) + \frac{1}{\ln(2)} \right) - e(k_1+k_2+2)\right)\\
& \qquad (\vec u_1 + \vec u_2)\\
- \nabla e(k_1+ k_2 + k_3+1) & = \left(\log_2(k_1 + k_2 + k_3 + 1) + \frac{1}{\ln(2)}\right)(\vec u_1 + \vec u_2 + \vec u_3) \\
\nabla e(k_2 + k_3+1) & = -\left( \log_2(k_2+k_3+1) + \frac{1}{\ln(2)}\right)(\vec u_2 + \vec u_3)
\end{align*}
As is clearly visible from the above equations, we only need to consider the components along $\vec u_2$, $\vec u_3$, $\vec u_\ell$, and along $\vec u_i$ for any $3 < i < \ell$. For the latter, we have
\begin{align*}
\left( \nabla (A-B)\right)_i
& = 2 + 1 -(k_2+1)\partial_i e_i - 2e(k_1 + k_2 + 1) \\
& = 3 + 2(k_1 + k_2 + 1)\log_2(k_1 + k_2 + 1) +(k_2+1) \left(\log_2(k_i + 1) + \frac{1}{\ln(2)} \right) \\
& > 0.
\end{align*}
Now, along the very similar $\vec u_\ell$ axis,
\begin{align*}
\left( \nabla (A-B)\right)_\ell
& = 2 + 1 - (k_2+1)\partial_\ell e_\ell -e(k_1+k_2+1) \\
& = 3 + (k_1 + k_2 + 1)\log_2(k_1 + k_2 + 1) + (k_2+1) \left(\log_2(k_\ell + 1) + \frac{1}{\ln(2)} \right) \\
& > 0.
\end{align*}
Along $\vec u_3$,
\begin{align*}
\left( \nabla (A-B)\right)_3
& = 2 + 1 - (k_2+1)\partial_3 e_3 - 2e(k_1+k_2+1) + \log_2(k_1+k_2+k_3+1) + \frac{1}{\ln(2)} \\
& \qquad - \log_2(k_2+k_3+1) + \frac{1}{\ln(2)} \\
& = 3 + 2(k_1 + k_2 + 1)\log_2(k_1 + k_2 + 1) + (k_2+1) \left(\log_2(k_3 + 1) + \frac{1}{\ln(2)} \right) \\
& \qquad + \log_2(k_1+k_2+k_3+1) - \log_2(k_2+k_3+1) \\
& > 0.
\end{align*}
Along $\vec u_2$,
\begin{align*}
\left( \nabla (A-B)\right)_2
={} & 2 + 2 + Q(\vec k) \partial_2 e_2 - (k_2+1)\partial_2 e_2 \\
& - \sum_{i=3}^\ell e_i -\left(\log_2(k_2+2) + \frac{1}{\ln(2)}\frac{k_2+1}{k_2+2} \right) - k_1 \\
&  + R(\vec k)\left( \log_2(k_1+k_2+1) + \frac{1}{\ln(2)} \right) - 2e(k_1+k_2+1) \\
&  + \left((k_1+ k_2-1)\left( \log_2(k_1+k_2+2) + \frac{1}{\ln(2)}\right) -e(k_1+k_2+2) \right) \\
&  + \log_2(k_1+k_2+k_3+1) + \frac{1}{\ln(2)} - \log_2(k_2+k_3+1) - \frac{1}{\ln(2)}  \\
={} & 4 - (Q(\vec k) - k_2-1) \left( \log_2(k_2+1) + \frac{1}{\ln(2)}\right)   - \frac{1}{\ln(2)}\frac{k_2+1}{k_2+2} - k_1 \\
&  + \sum_{i=3}^\ell (k_i+1)\log_2(k_i + 1) \\
&  + R(\vec k)\left( \log_2(k_1+k_2+1) + \frac{1}{\ln(2)} \right) - 2e(k_1+k_2+1) \\
&  + (k_1+ k_2)\left( \log_2(k_1+k_2+2) + \frac{1}{\ln(2)}\right) -e(k_1+k_2+2)  \\
&  + \log_2(k_1+k_2+k_3+2) - \log_2(k_2+k_3+2) -\log_2(k_2+2) \\
\end{align*}
\begin{lemma}\label{lem:Q1}
$\left( \nabla (A-B)\right)_2 > 0$.
\end{lemma}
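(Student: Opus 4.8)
The plan is to establish Lemma~\ref{lem:Q1} by a careful lower-bound estimate on the expression for $\left(\nabla(A-B)\right)_2$ derived just above its statement. The core idea is that, after collecting terms, the positive contributions grow at least linearly in the $k_i$ while the negative contributions involve only a few logarithmic terms, so positivity should follow once the constants are handled. Concretely, I would first rewrite $Q(\vec k) - k_2 - 1 = k_1 - 2\ell - 2 + \sum_{i=1}^\ell k_i = \sum_{i=3}^\ell k_i + 2k_1 + k_2 - 2\ell - 2$, which since $k_i \geq 1$ equals $\sum_{i=3}^{\ell}(k_i - 1) + 2(k_1 - 1) + (k_2 - 1) - 1 \geq -1$ but can be negative only in a tightly controlled way; the point is that the coefficient multiplying $-\bigl(\log_2(k_2+1) + 1/\ln 2\bigr)$ is offset by the term $R(\vec k)\bigl(\log_2(k_1+k_2+1) + 1/\ln 2\bigr)$, where $R(\vec k) = -k_\ell + \ell + 2\sum_i k_i$ is always at least as large, and $\log_2(k_1+k_2+1) \geq \log_2(k_2+1)$. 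So the dangerous logarithmic terms telescope favourably.

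Next I would organize the remaining terms into a few groups that are each manifestly non-negative. The group $\sum_{i=3}^\ell (k_i+1)\log_2(k_i+1) \geq 0$ handles itself. The pair $(k_1+k_2)\bigl(\log_2(k_1+k_2+2) + 1/\ln 2\bigr) - e(k_1+k_2+2)$ should be shown positive: writing $e(x) = -x\log_2 x$, this is $(k_1+k_2)\log_2(k_1+k_2+2) + (k_1+k_2)/\ln 2 + (k_1+k_2+2)\log_2(k_1+k_2+2)$, i.e.\ a sum of non-negative quantities since $k_1+k_2+2 > 1$. Similarly $-2e(k_1+k_2+1) = 2(k_1+k_2+1)\log_2(k_1+k_2+1) > 0$. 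The three small logarithmic leftovers $\log_2(k_1+k_2+k_3+2) - \log_2(k_2+k_3+2) - \log_2(k_2+2)$ can be bounded below by noting $\log_2(k_1+k_2+k_3+2) \geq \log_2(k_2+k_3+2)$, leaving only $-\log_2(k_2+2)$, which is at most absorbed by the large positive log terms of order $\Theta\bigl((k_1+k_2)\log(k_1+k_2)\bigr)$. The leftover negative constants $4 - k_1 - 1/\ln 2 \cdot (k_2+1)/(k_2+2)$ and the small fractional terms are then easily dominated, perhaps after first checking the boundary case $k_1 = k_2 = \dots = k_\ell = 1$ numerically, exactly as was done for the full difference $A-B$ earlier in the proof.

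The main obstacle I anticipate is the bookkeeping: the expression for $\left(\nabla(A-B)\right)_2$ mixes terms of several different growth rates (constants, fractions like $(k_2+1)/(k_2+2)$ bounded by $1$, linear-times-log terms, and pure log terms), and the negative contributions are not individually small — $-(Q(\vec k) - k_2 - 1)\log_2(k_2+1)$ can be of order $\Theta\bigl(\ell \log k_2\bigr)$ or $\Theta\bigl((\sum k_i)\log k_2\bigr)$. So the positivity is genuinely a cancellation between the $Q$-term and the $R$-term, and I would need to be careful that $R(\vec k) - (Q(\vec k) - k_2 - 1) \geq 0$ with enough slack, using $R(\vec k) - Q(\vec k) + k_2 + 1 = (-k_\ell + \ell + 2\Sigma) - (k_1 + k_2 - 2\ell - 1 + \Sigma) + k_2 + 1 = \Sigma + 3\ell + 2 - k_1 - k_\ell \geq 0$ (using $\Sigma = \sum k_i \geq k_1 + k_\ell$ when $\ell \geq 2$, so $\Sigma - k_1 - k_\ell \geq 0$), plus the $\log_2(k_1+k_2+1) \geq \log_2(k_2+1)$ monotonicity and the additional $R(\vec k)/\ln 2$ slack. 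Once that structural inequality is pinned down, the rest is a routine — if tedious — domination argument, and I would delegate the final numerical verification of small cases to the software toolkit described in Section~\ref{sec:software}, mirroring the sanity-check methodology used throughout Chapter~\ref{chp:finite-deletions}.
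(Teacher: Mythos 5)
Your proposal is correct and takes essentially the same route as the paper's proof: you group the $-(Q(\vec k)-k_2-1)$ and $R(\vec k)$ logarithmic terms and exploit $R(\vec k)-Q(\vec k)+k_2+1=\sum_{i}k_i+3\ell+2-k_1-k_\ell>0$ together with $\log_2(k_1+k_2+1)\ge\log_2(k_2+1)$, then verify that every remaining group is non-negative or dominated by the large terms such as $2(k_1+k_2+1)\log_2(k_1+k_2+1)$, exactly as the paper does. If anything, comparing $R(\vec k)$ against $Q(\vec k)-k_2-1$ (rather than against $Q(\vec k)$ alone, as the paper's displayed justification does) is the more robust form of the key step, and the numerical check of boundary cases you hedge on is not actually needed.
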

\begin{proof}[Proof of \Cref{lem:Q1}]
Letting $\lambda = \frac{1}{\ln(2)}$
We first show that the following line is positive
\begin{align*}
& -(Q (\vec k) - k_2 -1)(\log_2(k_2+1) + \lambda) - \lambda \frac{k_2+1}{k_2+2} - k_1 +  \\
& + R(\vec k)(\lambda + \log_2(k_1+k_2+1)) \\
={} & \lambda \left(R(\vec k) - Q(\vec k) + k_2 + 1 - \frac{k_2+1}{k_2+2}\right) + R(\vec k)\log_2(k_1+k_2+1) \\
&  - Q(\vec k)\log_2(k_2+1) \\
={} & \lambda \left( \sum_{i=2}^{\ell-1} k_i + 3\ell + 1 - \frac{k_2+1}{k_2+2}\right) \\
& + R(\vec k)\log_2(k_1+k_2+1) - Q(\vec k)\log_2(k_2+1).
\end{align*}
The last line is positive since in particular $ R(\vec k)\log_2(k_1+k_2+1) - Q(\vec k)\log_2(k_2+1) > (R(\vec k) - Q(\vec(k))\log_2(k_2+1)>0$.
Note that $-e(k_1+k_2+2)-\log_2(k_2+2) > 0$, $\log_2(k_1+k_2+k_3+1)-\log_2(k_2+k_3+1)>0$ and the remaining quantities are positive.
\end{proof}
As a result, we have that $A - B > 0$ for all $\vec k$ such that $k_i \geq 1$, which establishes the theorem.
\end{proof}

\section{Proof of Remark \ref{REMARK:NUMSTRING-DOUBLE}}\label{app:2}
We prove that for all positive integer sequences $(k_i)_{i \in \{1, \dotsc , \ell\}}$ such that
$\sum_{i=1}^{\ell}k_i = m$ we have :
\begin{align*}
  \frac{\ell(\ell+1)}{2} + \sum_{1 \leq i < j \leq \ell}\widetilde{k_i}\widetilde{k_j} + \sum_{i=1}^\ell \binom{\widetilde{k_i} + 1}{2} + 1 + \ell(m-\ell -2)
  = \binom{m+2}{m} + \binom{m+2}{m+1} + \binom{m+2}{m+2}
\end{align*}
We fix $\ell$ and $m$, then proceed by induction on the sequences of $(k_i)_{i \in \{1, \dotsc , \ell\}}$. We first show the equality for $k_1 = m- \ell +1$, and $k_i = 1$ for all $i> 1$.
\begin{proof}
We have on the left hand side:
\begin{align*}
&\frac{\ell(\ell+1)}{2} + \sum_{1 \leq i < j \leq \ell}\widetilde{k_i}\widetilde{k_j} + \sum_{i=1}^\ell \binom{\widetilde{k_i} + 1}{2} + 1 + \ell(m-\ell -2) \\
& = \frac{\ell(\ell+1)}{2} + 1 + \ell(m-\ell -2) + (m- \ell + 1) \sum_{j=2}^{\ell} \widetilde{k_j} + \binom{2}{2} + \binom{m-\ell+2}{2} \\
& = \frac{1}{2}\left(\ell (\ell +1) + (m-\ell+2)(m-\ell+1)\right) + (\ell +1)(m-\ell+2) + 1 \\
& = \frac{1}{2}(m^2 + 3m + 2) + m + 3 \\
& = \binom{m+2}{m} + \binom{m+2}{m+1} + \binom{m+2}{m+2}
\end{align*}
which concludes the initialization.
\end{proof}
We now fix a sequence $(k_i)_{i \in \{1,\dotsc , \ell\}}$, and $i_0 \in \{1, \dotsc , \ell\}$. We assume that the equality holds for this sequence and show that it is true for the sequence $(k'_i)_{i \in \{1, \dotsc , \ell\}}$ defined as $k'_i = k_i$ if $i \neq i_0$ and $i \neq i_0 + 1$, $k'_{i_0} = k_{i_0} - 1$ and $k'_{i_0+1} = k_{i_0+1} + 1$.
\begin{proof}
We first note that only a part of the formula on the left hand side depends on $(k_i)_{i \in \{1, \dotsc , \ell\}}$. Letting $F\left((k_i)_{i \in \{1, \dotsc , \ell \}}\right) =  \sum_{1 \leq i < j \leq \ell}\widetilde{k_i}\widetilde{k_j} + \sum_{i=1}^\ell \binom{\widetilde{k_i} + 1}{2}$, we just have to prove that
\begin{equation*}
F\left((k_i)_{i \in \{1, \dotsc , \ell\}}\right) - F\left((k'_i)_{i \in \{1, \dotsc , \ell\}}\right) = 0.
\end{equation*}
Expanding the above difference, we have:
\begin{align*}
&(\widetilde{k_{i_0}} - \widetilde{k'_{i_0}})(\sum_{i=1}^{i_0-1}\widetilde{k_i})
+  (\widetilde{k_{i_0 + 1}} - \widetilde{k'_{i_0 + 1}})(\sum_{i=1}^{i_0-1}\widetilde{k_i}) + \widetilde{k_{i_0}} \widetilde{k_{i_0+1}} - \widetilde{k'_{i_0}}\widetilde{k'_{i_0+1}}
\\
& \qquad +
(\widetilde{k_{i_0}} - \widetilde{k'_{i_0}})(\sum_{i=i_0+1}^{\ell}\widetilde{k_i})
+  (\widetilde{k_{i_0 + 1}} - \widetilde{k'_{i_0 + 1}})(\sum_{i=i_0+2}^{\ell}\widetilde{k_i}) \\
& \qquad  + \binom{\widetilde{k_{i_0}}+1}{2} - \binom{\widetilde{k'_{i_0}}+1}{2} + \binom{\widetilde{k_{i_0+1}}+1}{2} - \binom{\widetilde{k_{i_0+1}}+1}{2}
\end{align*}
This is equal to
\begin{align*}
& \widetilde{k_{i_0}} \widetilde{k_{i_0+1}} - (\widetilde{k_{i_0}}-1)(\widetilde{k_{i_0+1}}+1) + \widetilde{k_{i_0+1}} + \frac{1}{2}\Big(\widetilde{k_{i_0}} (\widetilde{k_{i_0}}+1) - (\widetilde{k_{i_0}} +1)\widetilde{k_{i_0}}\Big) \\
& \qquad - \frac{1}{2}\Big(\widetilde{k_{i_0+1}} (\widetilde{k_{i_0+1}}+1) - (\widetilde{k_{i_0+1}} + 2)\widetilde{k_{i_0+1}}+1\Big) \\
& = - \widetilde{k_{i_0}} + \widetilde{k_{i_0+1}} + 1 + \frac{1}{2}(2\widetilde{k_{i_0}} - 2 \widetilde{k_{i_0+1}} - 2) \\
&= 0.
\end{align*}
This concludes the proof.
\end{proof}

\section{Proof of Remark \ref{REMARK:NUMWEIGHT:DOUBLE}}\label{app:3}
As in \Cref{app:2} we proceed by induction to show that
if there exist positive integers $(k_i)_{i \in \{1, \dotsc, \ell \} }$ such that $m = \sum_{i=1}^{\ell} k_i$, then we have
\begin{align*}
\sum_{i=1}^{\ell}&\binom{k_i+2}{2} + \sum_{1 \leq i < j \leq \ell}(k_i +1)(k_j+1) + \sum_{1\leq i < j \leq \ell}\widetilde{k_i}\widetilde{k_j} + \sum_{i=1}^{\ell} \binom{\widetilde{k_i}+1}{2} \\+ &\sum_{i=1}^{\ell}\left[(m-\ell+2)\times (k_i+1)+k_i+k_{i+1}+1\right] = 4 \binom{m+2}{m}
\end{align*}

We fix $\ell$ and $m$, then proceed by induction on the sequences of $(k_i)_{i \in \{1, \dotsc , \ell\}}$. We first show the equality for $k_1 = m- \ell +1$, and $k_i = 1$ for all $i> 1$.
\begin{proof}
We have on the left hand side of the equation:
\begin{align*}
& (m-\ell+1)+\binom{m-\ell+2}{2}+1+\binom{m-\ell+3}{2}+(\ell-1)*3+(\ell-1)(m-\ell+2)*2\\
& +2(\ell-2)(\ell-1)+(m-\ell+2)^2 + m-\ell+3+(\ell-1)(2(m-\ell+1)+3)-1 \\
& = 2m^2 + 6 m + 4 \\
& = 4 \binom{m+2}{m}
\end{align*}
which concludes the initialization.
\end{proof}
We now fix a sequence $(k_i)_{i \in \{1,\dotsc , \ell\}}$, and $i_0 \in \{1, \dotsc , \ell\}$. We assume that the equality holds for this sequence and show that it is true for the sequence $(k'_i)_{i \in \{1, \dotsc , \ell\}}$ defined as $k'_i = k_i$ if $i \neq i_0$ and $i \neq i_0 + 1$, $k'_{i_0} = k_{i_0} - 1$ and $k'_{i_0+1} = k_{i_0+1} + 1$.
\begin{proof}
First notice that as in \Cref{app:2} we can ignore all the terms that do not depends on the $k_i$. Furthermore we can reuse the result of \Cref{app:2} to remove
$\sum_{1 \leq i < j \leq \ell}\widetilde{k_i}\widetilde{k_j} + \sum_{i=1}^\ell \binom{\widetilde{k_i} + 1}{2}$.
We also note that
\[
\sum_{i=1}^{\ell}\left[(m-\ell+2)\times (k_i+1)+k_i+k_{i+1}+1\right] =
\sum_{i=1}^{\ell}\left[(m-\ell+2)\times (k'_i+1)+k'_i+k'_{i+1}+1\right]
\]
since $\sum_{i=1}^{\ell} k_i = \sum_{i=1}^{\ell}k'_i$. We therefore define
\[
F\left((k_i)_{i \in \{1, \dotsc , \ell \}}\right) = \sum_{i=1}^{\ell}\binom{k_i+2}{2} + \sum_{1 \leq i < j \leq \ell}(k_i +1)(k_j+1)
\]
and show that
\begin{equation*}
F\left((k_i)_{i \in \{1, \dotsc , \ell\}}\right) - F\left((k'_i)_{i \in \{1, \dotsc , \ell\}}\right) = 0.
\end{equation*}
Expanding the difference we get
\begin{align*}
 &\binom{k_{i_0}+2}{2} - \binom{k'_{i_0}+2}{2}  + \binom{k_{i_0+1}+2}{2} - \binom{k'_{i_0+1}+2}{2} \\
&  + k_{i_0}\sum_{j>i_0}^{\ell}(k_j+1) - k'_{i_0}\sum_{j>i_0}^{\ell}(k'_j+1)
+ k_{i_0+1}\sum_{j>i_0+1}^{\ell}(k_j+1) - k'_{i_0+1}\sum_{j>i_0+1}^{\ell}(k'_j+1) \\
& = k_{i_0}+1 - (k_{i_0+1} + 2) + k_{i_0}S - (k_{i_0}-1) \left(S+1 \right) +
k_{i_0+1} S' - (k_{i_0+1} +1) S' \\
& = 1 - k_{i_0} + S - 1 - S' \\
& = 0
\end{align*}
where $S = \sum_{j>i_0}^{\ell}(k_j+1)$ and $S'=\sum_{j>i_0+1}^{\ell}(k_j+1)$. This concludes the proof.
\end{proof}

\section{CSP Code for Clustering Analysis}\label{app:csp-code}

\begin{verbatim}

-- Authors: A. W. (Bill) Roscoe and Arash Atashpendar

wb(N,s,k) = if k+#s > N then 0 else wb'(N,s,k)

wb'(N,<>,k) = C(N,k)    -- just choose N from k

--all extra bits = 0
-- first bit of <0>^s is 0, so must that of S
wb'(N,<0>^s,0) = wb(N-1,s,0)
wb'(N,<1>^s,0) = wb(N-1,s,0)   -- first bit of S is 1
                + wb(N-1,<1>^s,0)   -- first bit of S is 0

-- otherwise, we have the following cases
-- first bit of S is 0,  means we only have to find s in S'
-- first bit of S is 1, must be part of padding
wb'(N,<0>^s,k) = wb(N-1,s,k) + wb(N-1,<0>^s,k-1)

wb'(N,<1>^s,k) = wb(N-1,s,k)     -- first bit of S is 1
                + wb(N-1,<1>^s,k)

C(N,M) = if 2*M<=N then C'(N,M) else C'(N,N-M)

C'(N,M) = F(N,N-M+1)/F(M,2)

F(N,M) = if M>N then 1 else N*F(N-1,M)

-- base cases
Y(0,s,h) =wb(0,s,h)
Y(N,<>,h) = wb(N,<>,h)
Y(N,s,0) = wb(N,s,0)

--Y(N,<x>^s,h) = Y(N-1,<x>^s,h) + Y(N-1,<x>^s,h-1) + Y(N-1,s,h-1)
Y(N,<x>^s,h) = Y(N-1,<x>^s,h) + Y(N-1,<x>^s,h-1) + Y(N-1,s,h)

-- Easier recursion for the proof

-- base cases
LC(0,s,r) = wb(0,s,r)
LC(N,<>,r) = wb(N,<>,r)
LC(N,s,0) = wb(N,s,0)

LC(N, <x>^s, r) = LC(N-1, <x>^s, r-(1-w(<x>))) + LC(N-1, s, r)

ILC(N, s, g) = if #s == N then 1 else ILC'(N-g, s, 0)

ILC'(N, s, g) = if #s > N then 0 else ILC'(N, s, g)
ILC'(N, <>, g) = if N == 0 then 1 else 0
--ILC'(N, s, g, z) = wb(N, s, 0)

ILC'(N, <x>^s, 0) = ILC(N-1, s, 0)+ ILC(N-1, <x>^s, 0)

--ILC'(N, <x>^s, g, z) = ILC(N-1, s, g, z) + ILC(N-1, <x>^s, g, z)

w(s) = head(s) + w'(tail(s))

w'(<>) = 0
w'(s) = w(s)
--t(a) = if a == 1 then 1 else 0

SCC(N,s,k) = let a = w(s)
                 b = #s - a
                 c = N - a - b - k
             within
               if c<0 then 0 else
               LS(<C(r-1,a-1)*C(N-r,k) | r <- <a..(a+c)>>)

LS(<>) = 0
LS(<x>^xs) = x + LS(xs)


-- Count classes

cl(n, s, r) = if r+#s > n then 0 else cl'(n, s, r)

cl'(n, <0>^s, 0) = cl(n-1, s, 0)
cl'(n, <1>^s, 0) = cl(n-1, s, 0) + cl(n-1, <1>^s, 0)

cl'(n, <>, r) = 1

--cl(n, s, r) = if r+#s > n then 0 else cl'(n, s, r)

--cl'(n, <0>^s, r) = cl(n-1, s, r) + cl(n-1, <0>^s, r-1)
--cl'(n, <1>^s, r) = cl(n-1, s, r) + cl(n-1, <1>^s, r)


--otherwise
cl'(N, <x>^s, r) = cl(N-1, <x>^s, r-(1-w(<x>))) + cl(N-1, s, r)

--cl'(N,<0>^s,k) = cl(N-1,s,k) + cl(N-1,<0>^s,k-1)

--cl'(N,<1>^s,k) = cl(N-1,s,k) + cl(N-1,<1>^s,k)

\end{verbatim}

\section{Software Toolkit for Binary Sequences}\label{app:python-code}

The source code of the ``BinSeqPy'' software toolkit can be found in a separate file\footnote{For more details, see \url{http://hdl.handle.net/10993/38864}}.

\end{document}